\newcites{Supp}{References for the Supplement}
\DeclareMathOperator*\lowlim{\underline{lim}}
\DeclareMathOperator*\uplim{\overline{lim}}
\DeclareMathOperator*{\argmin}{arg\,min}
\tikzset{
    -Latex,auto,node distance =1 cm and 1 cm,semithick,
    state/.style ={ellipse, draw, minimum width = 0.7 cm},
    state1/.style ={ draw, minimum width = 0.7 cm},
    point/.style = {circle, draw, inner sep=0.04cm,fill,node contents={}},
    bidirected/.style={Latex-Latex,dashed},
    el/.style = {inner sep=2pt, align=left, sloped}
}
\newtheorem{theorem}{Theorem}
\newtheorem{corollary}{Corollary}
\newtheorem{example}{Example}
\newtheorem{assumption}{Assumption}
\newtheorem{proposition}{Proposition}
\newtheorem{lemma}{Lemma}
\renewcommand{\algocf@captiontext}[2]{#1\algocf@typo. \AlCapFnt{}#2} 
\def\@algocf@capt@plain{top}
\renewcommand{\algocf@makecaption}[2]{%
  \addtolength{\hsize}{\algomargin}%
  \sbox\@tempboxa{\algocf@captiontext{#1}{#2}}%
  \ifdim\wd\@tempboxa >\hsize
  \hskip .5\algomargin%
  \parbox[t]{\hsize}{\algocf@captiontext{#1}{#2}}
  \else%
  \global\@minipagefalse%
  \hbox to\hsize{\box\@tempboxa}
  \fi%
  \addtolength{\hsize}{-\algomargin}%
}
\definecolor{dblue}{HTML}{0072B2}
\def\bmZ{\bm{Z}}
\def\bmX{\bm{X}}
\def\bmU{\bm{U}}
\def\red{\color{red}}
\def\bgamma{\bm{\gamma}}
\def\bGamma{\bm{\Gamma}}
\def\bbeta{\bm{\beta}}
\def\var{{\rm var}}
\def\cov{{\rm cov}}
\newcommand{\probP}{\text{I\kern-0.15em P}}
\newcommand{\blind}{1}
\newcommand{\zerodisplayskips}{%
  \setlength{\abovedisplayskip}{3pt}%
  \setlength{\belowdisplayskip}{3pt}%
  \setlength{\abovedisplayshortskip}{3pt}%
  \setlength{\belowdisplayshortskip}{3pt}}
\appto{\normalsize}{\zerodisplayskips}
\appto{\small}{\zerodisplayskips}
\appto{\footnotesize}{\zerodisplayskips}
\begin{document}

\def\spacingset#1{\renewcommand{\baselinestretch}%
{#1}\small\normalsize} \spacingset{1}


\if1\blind
{
	 \begin{center} 
	\spacingset{1.5} 	{\LARGE\bf  A More Robust Approach to Multivariable Mendelian Randomization} \\ \bigskip \bigskip
		\spacingset{1} 
		{\large Yinxiang Wu$ ^1 $, Hyunseung Kang$ ^2 $, and Ting Ye$ ^1 $ } \\ \bigskip
	 {$ ^1 $Department of Biostatistics, University of Washington, Seattle, Washington, U.S.A. \\ 
    $ ^2 $Department of Statistics, University of Wisconsin-Madison, Madison, Wisconsin, U.S.A.\\}
	\end{center}
} \fi

\if0\blind
{
  \bigskip
  \bigskip
  \bigskip
  \begin{center}
  \spacingset{1.5}
    {\LARGE\bf A More Robust Approach to Multivariable Mendelian Randomization}
\end{center}
  \medskip
} \fi

\bigskip
\begin{abstract}
Multivariable Mendelian randomization (MVMR) uses genetic variants as instrumental variables to infer the direct effects of multiple exposures on an outcome. However, unlike univariable Mendelian randomization, MVMR often faces greater challenges with many weak instruments, which can lead to bias not necessarily toward zero and inflation of type I errors. In this work, we introduce a new asymptotic regime that allows exposures to have varying degrees of instrument strength, providing a more accurate theoretical framework for studying MVMR estimators. Under this regime, our analysis of the widely used multivariable inverse-variance weighted method shows that it is often biased and tends to produce misleadingly narrow confidence intervals in the presence of many weak instruments. To address this, we propose a simple, closed-form modification to the multivariable inverse-variance weighted estimator to reduce bias from weak instruments, and additionally introduce a novel spectral regularization technique to improve finite-sample performance. We show that the resulting spectral-regularized estimator remains consistent and asymptotically normal under many weak instruments. Through simulations and real data applications, we demonstrate that our proposed estimator and asymptotic framework can enhance the robustness of MVMR analyses.
\end{abstract}

\noindent%
{\it Keywords:}  Causal inference; Genetic variation;  GWAS; Instrumental variable;  Weak instruments
\vfill

\newpage
\spacingset{1.8} 
\section{Introduction}
\label{sec: intro}

\subsection{Weak instruments in multivariable Mendelian randomization}
\label{subsec: mr}

Multivariable Mendelian randomization (MVMR) is an instrumental variable (IV) method that uses genetic variants, typically single-nucleotide polymorphisms (SNPs), as instruments to estimate the direct causal effects of multiple exposures on an outcome in the presence of unmeasured confounding \citep{ Davey-Smith:2003aa, 
sanderson2022Mendelian}. MVMR requires a set of SNPs that is greater than or equal to the number of exposures, and these SNPs must satisfy the following three IV assumptions (also illustrated in Figure \ref{fig:mvmr}) \citep{sanderson2019examination}:  (i) Relevance: the SNPs must be conditionally associated with each exposure, given the other exposures; (ii) Independence: the SNPs must be independent of all unmeasured confounders of the exposures and the outcome; (iii) Exclusion restriction: the SNPs must affect the outcome only through the exposures. 

In this paper, we focus on MVMR using summary data from genome-wide association studies and study the problem of weak IVs in this setup. In MVMR, weak IVs are SNPs that are weakly associated with some exposures, conditional on the other exposures. Weak IVs present greater challenges in MVMR compared to univariable MR \citep{patel2024weak}. First, even when the SNPs have strong associations with each exposure individually (which implies strong IVs in univariable MR), it is possible that their effects on each exposure are highly correlated, making them weakly associated with some exposures, conditional on the others \citep{sanderson2021testing}. This also means that the common practice of selecting SNPs marginally associated with each exposure, or even all exposures, using a p-value cutoff is insufficient to mitigate the weak IV problem in MVMR. Second, unlike in univariable MR, weak IVs in MVMR can bias the estimated direct effects of different exposures either toward or away from zero \citep{sanderson2021testing}, and can inflate type I error rates when testing the null hypothesis of no direct effect for an exposure. Third, while measuring IV strength in univariable MR using the first-stage $F$-statistic is well established \citep{Stock2001:ivtest, sanderson2022Mendelian}, there is little work on how to measure IV strength in MVMR with summary data in a way that accurately reflects the potential for bias or inflated type I error rates.

\begin{figure}[ht]
	\centering
		\begin{tikzpicture}
			\node[color=dblue] (1) { SNP 2};
   			\node[] (12) [below= of 1, yshift=1.25cm] { $\vdots$};
			\node[color=dblue, below=of 1,yshift=5mm] (6) { SNP $p$};
			\node[color=dblue, above=of 1,yshift=-5mm] (7) { SNP 1};
			\node[right= of 1, yshift=-1cm] (2) { Exposure $K$};
			\node[] (4) [right= of 1, yshift=1cm] { Exposure 1};
			\node[] (10) [below= of 4, yshift=.7cm] {\footnotesize $\vdots$};
			\node[] (3) [right =of 1,xshift=3cm] {Outcome};
			\path (2) edge[bend left=30,dashed] node[el,above] {} (4);
			\node[gray, align=left] (5) [below =of 2, xshift=1.55 cm, yshift=6mm] 			{ Unmeasured\\  confounders}; 
			\path[gray] (5) edge node[el,above] {} (2);
			\path[gray] (5) edge node[el,above] {} (3);
			\path[gray] (5) edge node[el,above] {} (4);
			\path[color=dblue] (1) edge node[el,above] {} (2);
			\path[color=dblue] (6) edge node[el,above] {} (2);
			\path[color=dblue] (1) edge node[el,above] {} (4);
			\path[color=dblue] (7) edge node[el,above] {} (4);
			\path[color=red] (2) edge node[el,above] {{\color{red}$ \beta_{0K} $}} (3);
			\path[color=red] (4) edge node[el,above] {{\color{red}$ \beta_{01} $}} (3);
	\end{tikzpicture}	
	\caption{   Directed acyclic graph (DAG) of IV assumptions in MVMR. The SNPs must be (i) conditionally associated with each exposure given the other exposures, (ii) independent of all unmeasured confounders, and (iii) affect the outcome only through the exposures.}\label{fig:mvmr}
\end{figure}

\subsection{Related work and our contributions}

In the econometrics literature, when individual-level data are available, the three most relevant approaches to studying bias from weak IVs and type I error inflation are (a) \emph{weak IV asymptotics} \citep{Staiger:1997aa}, (b) \emph{many weak IV asymptotics} \citep{Chao:2005aa}, and (c) \emph{many weak moment asymptotics} \citep{Newey:2009aa}. Weak IV asymptotics assume a finite number of IVs, with all IV-exposure associations being local to zero (i.e., in an $n^{-\frac{1}{2}}$ neighborhood of 0, with $n$ being the sample size). Many weak IV asymptotics, on the other hand, take the number of IVs to infinity as a function of $n$, but require that all exposures have the same degree of IV strength. Many weak moment asymptotics, developed for the generalized method of moments, allow linear combinations of the target parameters to have different degrees of identification. Under weak IV asymptotics, \cite{Stock2001:ivtest} proposed a test for weak IVs based on the Cragg-Donald statistic \citep{cragg1993testing}. \cite{sanderson2016weak} then proposed a conditional $F$-test under a different form of weak IV asymptotics, where the matrix of true IV-exposure associations is local to a rank reduction of one. This method was extended to summary-data MVMR by \cite{sanderson2021testing} to test whether each exposure is strongly predicted by genetic variants, conditional on the other exposures. An in-depth survey of weak IVs can be found in \cite{andrews2005inference} and \cite{andrews2019weak}.
 
 The work most closely related to this paper is that of \cite{sanderson2021testing}, which proposed the conditional $F$-statistic. As we discuss in Section 3, the conditional $F$-statistic for an exposure measures IV strength for a specific linear combination of exposures, but a large conditional $F$-statistic may still be insufficient to ensure reliable inference regarding the direct causal effect of that exposure.

Due to the increasing popularity of MVMR, many estimators have been proposed. Examples include the multivariable inverse-variance weighted (MV-IVW) method \citep{Burgess2015mvmr}, MV-Median   \citep{grant2021pleiotropy}, MV-Egger \citep{rees2017extending}, GRAPPLE \citep{Wang2021grapple}, and the minimizer of the heterogeneity $Q$-statistic  \citep{sanderson2021testing}. Among these, the MV-IVW method is the most widely used, and it is asymptotically equivalent to the two-stage least squares estimator with individual-level data \citep{sanderson2021testing}. However, the MV-IVW method is known to be biased in the presence of weak IVs \citep{sanderson2019examination}. \cite{lorincz2024mrbee} proposed MRBEE, a method that uses bias-corrected estimation equations to reduce the bias from weak IVs in the MV-IVW estimator. However, MRBEE assumes that genetic effects follow a random-effects model, implicitly requiring that the IV strengths for different exposures be of the same magnitude. \cite{sanderson2021testing}, \cite{Wang2021grapple}, and \cite{lin2023robust} proposed more robust estimators against weak IVs, but they can be computationally intensive and numerically unstable when applied to more than a few exposures. When there is a large number of potentially highly correlated exposures, \cite{zuber2020selecting} and \cite{grant2022efficient} proposed selecting exposures using a Bayesian approach and Lasso-type regularization, respectively. However, these methods still apply MV-IVW after selecting the exposures and thus can be biased due to weak IVs.

This paper makes three key contributions toward improving the robustness of MVMR analysis. First, we propose a general asymptotic regime that allows for varying degrees of IV strength across exposures, offering a more accurate asymptotic framework for approximating the finite-sample behavior of MVMR estimators with many weak IVs. Our framework can be viewed as a summary-data adaptation of the many weak moment asymptotics of \cite{Newey:2009aa}. 
Unlike the many weak IV framework of \citet{Chao:2005aa}, we do not make the restrictive assumption that all exposures have identical IV strength. 
Our numerical studies confirm that our asymptotic framework accurately predicts the behavior of several MVMR estimators in finite samples. Second, under our new framework, we formally show that the popular MV-IVW method often produces misleadingly narrow confidence intervals centered around biased estimates, which may lead to spurious findings. The magnitude and direction of this bias depend on the true causal effects of the exposures and the overall IV strength, for which we propose a scalar metric to quantify in practice. Third, we propose a novel spectral-regularized estimator of the causal effects of the exposures and show that it is consistent and asymptotically normal under our framework.

\section{Notation and setup}
\label{sec: notation and setup}

We are interested in estimating the direct causal effects of $K$ potentially correlated exposures, denoted as $\bmX = (X_1,\dots,X_K)^T$, on an outcome $Y$, which are confounded by unmeasured variables $\bmU$. Suppose there are $p$ mutually independent SNPs, denoted as $\bmZ = (Z_1,\dots,Z_p)^T$, obtained through linkage disequilibrium (LD) pruning or clumping \citep{Hemani:2018ab}. Following \cite{Wang2021grapple}, we adopt the following structural equation models 
\begin{align}
	  &  E(X_k|\bmZ, \bmU)  = f_k(\bmZ, \bm U), \quad k=1,\dots, K, \label{eq: exp}  \\ 
	  &  Y = \bm X^T \bbeta_0 + g(\bm U, E_Y),\label{eq: out}
\end{align}
where $\bm\beta_0$ is the vector of causal effects of interest, $E_Y$ represents random noise,
and $f_1,\dots,f_K, g$ are unspecified functions. The exposure model in \eqref{eq: exp} is unspecified, allowing full flexibility in how $\bmZ$ and $\bmU$ affect $\bmX$. For example, this model accommodates scenarios where (a) each SNP has null effects on certain exposures, (b) there are non-linear associations between $(\bmZ, \bmU)$ and discrete $\bmX$, (c) there are interactions between $\bmZ$ and $\bmU$, and (d) there are direct effects among the $K$ exposures. On the other hand, the outcome model in \eqref{eq: out} assumes linear causal effects of the exposures on the outcome, and from the third IV assumption, $\bmZ$ has no direct effect on $Y$ once we have conditioned on all the exposures. Additionally, from the second IV assumption, $\bm Z$ is independent of $(E_Y, \bmU)$.

In our main setting of two-sample summary-data MVMR, summary-level data on the SNP-exposure and SNP-outcome associations are obtained from two separate genome-wide association studies. Formally, we observe $ \{ \hat\bgamma_j, \hat \Gamma_j, \Sigma_{Xj}, \sigma_{Yj}^2; j=1,\dots, p \}$, where $\hat\bgamma_j \in \mathbb{R}^K$ is a vector of ordinary least squares estimates from the marginal regression of the exposures on $Z_j$, $\Sigma_{Xj} \in \mathbb{R}^{K\times K}$ is the variance-covariance matrix of $\hat\bgamma_j$, $\hat\Gamma_j \in \mathbb{R}$ is the ordinary least squares estimate from the marginal regression of the outcome on $Z_j$, and $\sigma_{Yj} \in \mathbb{R}$ is the standard error of $\hat\Gamma_j$. When $\Sigma_{Xj}$ is not available, we show how to estimate it based on summary-level data in Section S5.3 of the Supplement. 

A key implication of models \eqref{eq: exp}--\eqref{eq: out} is that $\Gamma_j =\bm\gamma_j^T \bm\beta_0$ holds for $j = 1,\dots,p$, where $\Gamma_j$ and $\bgamma_j$ represent the true marginal SNP-outcome and SNP-exposure associations, respectively. This relationship is derived from the properties of least squares, where $\Gamma_j = \var(Z_j)^{-1} \cov (Y, Z_j)$ and $\bgamma_j = \var(Z_j)^{-1} \cov (\bmX, Z_j)$. Then, based on the outcome model in \eqref{eq: out} and the independence between $\bmZ$ and $(E_Y, \bmU)$, we can show that 
\begin{align*}
    \Gamma_j = \var(Z_j)^{-1} \cov(Y, Z_j) = \var(Z_j)^{-1} \cov(\bmX^T \bbeta_0, Z_j) =  \var(Z_j)^{-1}\cov(\bmX^T , Z_j)  \bbeta_0  = \bgamma_j^T\bbeta_0. 
\end{align*}
We remark that the relationship $\Gamma_j = \bgamma_j^T \bbeta_0$ holds regardless of whether the SNP-exposure relationship is linear.
 Also, \cite{Wang2021grapple} showed that the relationship $\Gamma_j = \bgamma_j^T \bbeta_0$ remains approximately correct when $Y$ is binary and follows a logistic regression model.

We denote the sample sizes for the outcome and each exposure by $n_Y$ and $n_{Xk}$, for $k=1,\dots, K$, respectively. The samples for different exposures may have no overlap, partial overlap, or complete overlap, but they are assumed to be independent of the outcome sample. 

Let \( {\mathrm {diag}}(a_1, \dots, a_K) \) denote a \( K \times K \) diagonal matrix with entries \( a_1, \dots, a_K \). The following assumption formalizes the basic setup of two-sample summary-data MVMR.

\begin{assumption}\label{assump: 1}
   (i) The sample sizes $n_Y$ and $n_{Xk}$ for $k=1,\dots, K$, tend to infinity at the same rate. Let $n=\min (n_Y, n_{X1}, \dots,  n_{XK}) $. The number of exposures $K$ is a finite constant. The number of SNPs $p$ grows to infinity as $n $ increases. \\
   (ii) The $2p$ random elements in the set 
    $\{{\hat\bgamma_j, \hat\Gamma_j}, j=1,\dots, p \} $ are mutually  independent. For each $j$, $\hat\bgamma_j \sim N(\bgamma_j, \Sigma_{Xj}), \hat \Gamma_j \sim N(\Gamma_j, \sigma_{Yj}^2) $, and $  \Gamma_j =\bm\gamma_j^T \bm\beta_0  $, with $\bgamma_j = (\gamma_{j1}, \dots, \gamma_{jK})^T$. The $K\times p$ matrix $(\bgamma_1, \dots, \bgamma_{p})$ is of full rank and $K\leq p$. \\
    (iii)   The variance-covariance matrix $\Sigma_{Xj}  = 
	 {\mathrm {diag}}(\sigma_{Xj1}, \dots, \sigma_{XjK}) \ \Sigma \  {\mathrm {diag}}(\sigma_{Xj1}, \dots, \sigma_{XjK})$ and $\sigma_{Yj}^2$ are known, where $\sigma_{Xjk}$ is the standard error of $\hat \gamma_{jk}$, and $\Sigma $ represents a shared positive definite correlation matrix. Moreover, the variance ratios $\sigma_{Xjk}^2/\sigma_{Yj}^2$ are bounded away from zero and infinity for all $j =1,\dots, p$ and $k = 1,\dots,K$.
\end{assumption}
The normality assumption in Assumption 1(ii) is plausible in many MVMR studies, given the large sample sizes in modern genome-wide association studies.
The independence between the $\hat\bgamma_j$'s and $\hat \Gamma_j$'s in Assumption 1(ii) is guaranteed by the two-sample design, where the exposure and outcome samples are non-overlapping.  The independence among the $\hat\bgamma_j$'s and among the $\hat \Gamma_j$'s (as well as the boundedness of variance ratios in Assumption 1(iii)) are reasonable because the SNPs are mutually independent from LD pruning or clumping and each SNP accounts for only a small proportion of the total variance in the exposure and outcome variables; see Section S3.1 of the Supplement for further details. The conditions that $K\leq p$ and that the matrix $(\bgamma_1, \dots, \bgamma_{p})$ is of full rank are assumed for the identifiability of $\bbeta_0$, where the full rank condition corresponds to the relevance assumption described in Section \ref{sec: intro}.  

In Assumption \ref{assump: 1}(iii), we assume that the variances are known, meaning they can be estimated with negligible error. This is a standard assumption and is reasonable given large sample sizes \citep{Ye:2019}. We also assume the existence of a shared correlation matrix $\Sigma$ across SNPs, which is reasonable when each genetic effect is small. See Section S5.3 of the Supplement for details.  The condition that $\Sigma$ is positive definite holds in the absence of perfect collinearity among the exposure variables.

Throughout the paper, $\bgamma_j, \Gamma_j, \Sigma_{Xj}$, and $\sigma_{Yj}$, for  $j=1,\dots,p$, are sequences of real numbers or matrices that can depend on $n$ in order to model many weak IVs, but we omit the $n$ index for ease of notation. For example, many IVs with small effects can be modeled by letting $p\rightarrow\infty$ while shrinking the $\bgamma_j$ vectors toward zero as $n$ increases. We write $a = \Theta(b)$ if there exists a constant $c > 0$ such that $c^{-1} b \leq |a| \leq c b$. When only common variants (i.e., those with minor allele frequencies greater than 0.05 \citep{Gibson:2012aa, Cirulli:2010aa}) are used as IVs, $\var (Z_j) = \Theta(1)$, and each entry of $\Sigma_{Xj}$ and $ \sigma_{Yj}^2$ for $j=1,\dots, p$ is $\Theta (1/n)$ (see Section S3.1 of the Supplement for details). When rare variants (i.e., those with minor allele frequencies less than 0.05) are used as IVs, $\var (Z_j)$ can be very small, and the entries of $\Sigma_{Xj}$ and $ \sigma_{Yj}^2$ may converge to zero at a rate slower than $1/n$. 

\section{A general asymptotic regime for many weak IVs}
\label{sec: asymptotics}
\subsection{Asymptotic regime}

Our proposed asymptotic regime, stated in Assumption \ref{assump: 2}, is motivated by the insight that identifying the full vector $\bbeta_0$ is equivalent to identifying all linear
combinations of $\bbeta_0$. The latter requires sufficient IV strength for all linear combinations of the
exposures, and Assumption \ref{assump: 2} uses a particular rotation of the $K$-dimensional space of exposures to capture a complete, multi-dimensional summary of IV strength.

\begin{assumption}\label{assump: 2}
    There exists a non-random  $K\times K$ matrix $S_n = \tilde S_n {\mathrm {diag}}(\sqrt{\mu_{n1}}, \dots, \sqrt{\mu_{nK}})$ that satisfies the following conditions:
    (i) $\tilde S_n$ is bounded element-wise, and the smallest eigenvalue of $\tilde S_n \tilde S_n^T$ is bounded away from 0; and
    (ii) $S_n^{-1}\big(\sum_{j=1}^{p} \bgamma_j \bgamma_j^T \sigma_{Yj}^{-2}\big)S_n^{-T}$ is bounded element-wise, and its smallest eigenvalue is bounded away from 0.
\end{assumption}

The $S_n$ defined in Assumption 2 always exists; see Section S3.2 of the Supplement. The term $S_n^{-1}\big(\sum_{j=1}^{p} \bgamma_j \bgamma_j^T \sigma_{Yj}^{-2}\big)S_n^{-T}$ can be rewritten as 
$$
{\mathrm {diag}}(\mu_{n1}^{-\frac{1}{2}}, \dots, \mu_{nK}^{-\frac{1}{2}} )
 \bigg(\sum_{j=1}^{p} \tilde S_n^{-1} \bgamma_j \bgamma_j^T \tilde S_n^{-T} \sigma_{Yj}^{-2}\bigg){\mathrm {diag}}(\mu_{n1}^{-\frac{1}{2}}, \dots, \mu_{nK}^{-\frac{1}{2}} ), 
$$
where  $\tilde S_n^{-1} \bgamma_j = \tilde S_n^{-1} \var(Z_j)^{-1} \cov (\bmX, Z_j)  = \var(Z_j)^{-1} \cov (\tilde S_n^{-1} \bmX, Z_j) $ represents the marginal association between $Z_j$ and the rotated exposures $\tilde S_n^{-1} \bmX$. Then, the matrix $\sum_{j=1}^{p} \tilde S_n^{-1} \bgamma_j \bgamma_j^T \tilde S_n^{-T} \sigma_{Yj}^{-2}$ measures the IV strength for the rotated exposures $\tilde S_n^{-1} \bmX$, with strength rates $ \mu_{n1}, \dots, \mu_{nK}$. The vector $\mu_{n1},\dots,\mu_{nK}$  represents a complete, multi-dimensional summary of overall IV strength, and the scalar $\mu_{n,\min} = \min (\mu_{n1},\dots, \mu_{nK})$ represents the slowest rate among those linear combinations of exposures. Also, Section S3.5 of the Supplement shows that the minimum eigenvalue of the matrix $\sum_{j=1}^{p} \bgamma_j \bgamma_j^T \sigma_{Yj}^{-2}$ is $\Theta(\mu_{n,\min})$. Notably, the theorems we develop later (Theorem \ref{theo: ivw}-\ref{theo: divw}) will require  $\mu_{n,\min}$ to satisfy certain conditions because we are interested in estimating the entire vector $\bbeta_0$, which, by the Cramer-Wold device, is equivalent to accurately estimating the effects of all linear combinations of the exposures. Consequently, we consider $\mu_{n,\min}$ as a scalar measure of overall IV strength for summary-data MVMR. 

Assumption \ref{assump: 2} accommodates various weak IV scenarios in MVMR. Here, we highlight one example:
\begin{example}
Consider a simple scenario with two exposures, $X_1$ and $X_2$. For $j=1,\dots, p$, let $\gamma_{j1}$ and $\gamma_{j2}$ denote the marginal associations of SNP $j$ with $X_1$ and $X_2$, respectively. 
Suppose $\gamma_{j2} = \gamma_{j1} + c/\sqrt{n}$, where $c$ is a constant, $n=n_Y$, and $\gamma_{j1} \sim N(0, 1/p)$. Then,   $\sigma_{Yj}^2$ is approximately 
$ \var (Y)/n$, and the marginal IV strength for $X_1$ and $X_2$, measured by $\sum_{j=1}^{p}\gamma_{j1}^2\sigma_{Yj}^{-2} $ and $ \sum_{j=1}^{p}\gamma_{j2}^2\sigma_{Yj}^{-2}$, is on the order of $\Theta(n)$. 
However, since $\gamma_{j1}$ and $\gamma_{j2}$ are very similar for all $j$, 
the conditional IV strength for one exposure given the other is much weaker, and the conditional $F$-statistics for both $X_1$ and $X_2$ are $\Theta(1)$. Importantly, Assumption \ref{assump: 2} includes this scenario. By letting $\tilde S_n^{-1}$ be
$\begin{pmatrix}
-1 & 1\\
1 & 0
\end{pmatrix}$, where each row representing a linear combination of the exposures, 
and defining $\mu_{n1} = p, \mu_{n2} = n$ to represent the rates at which the IV strength for $X_2-X_1$ and $X_1$ grows, we have
\begin{align*}
    S_n^{-1} \bigg\{ \sum_{j=1}^{p} \bgamma_j \bgamma_j^T \sigma_{Yj}^{-2}\bigg\} S_n^{-T} 
 & = \frac{n}{\var(Y)}  \sum_{j=1}^{p} 
        \begin{pmatrix}
        \frac{\gamma_{j2}-\gamma_{j1}}{\sqrt{\mu_{n1}}}\\
        \frac{\gamma_{j1}}{\sqrt{\mu_{n2}}}
        \end{pmatrix}        \begin{pmatrix}
         \frac{\gamma_{j2}-\gamma_{j1}}{\sqrt{\mu_{n1}}} & 
        \frac{\gamma_{j1}}{\sqrt{\mu_{n2}}}
        \end{pmatrix} =
        \frac{1}{\var(Y)}
    \begin{pmatrix}
        \frac{c^2p }{\mu_{n1}} & \frac{c \sqrt{n} \sum_{j=1}^{p} \gamma_{j1}}{\sqrt{\mu_{n1}\mu_{n2}}} \\
        \frac{c\sqrt{n}  \sum_{j=1}^{p} \gamma_{j1}}{\sqrt{\mu_{n1}\mu_{n2}}} & \frac{n \sum_{j=1}^{p}\gamma_{j1}^2}{\mu_{n2}}
    \end{pmatrix}.
\end{align*}
This matrix is element-wise bounded, with its smallest eigenvalue bounded away from 0, satisfying Assumption \ref{assump: 2}; see Section S3.3 of the Supplement for more details.
\end{example}

\subsection{Connections to the existing literature}

The matrix $\sum_{j=1}^{p} \bgamma_j \bgamma_j^T \sigma_{Yj}^{-2}$ is related to the conditional $F$-statistic of \cite{sanderson2021testing}, which is commonly used in summary-data MVMR to test for weak instruments. Specifically, consider the oracle version of the conditional $F$-statistic for the $k$-th exposure, denoted as $F_k$:
\begin{align*}
    F_k = \frac{1}{p-(K-1)}\sum_{j=1}^p \frac{\delta_k^T\bgamma_j\bgamma_j^T\delta_k}{\delta_k^T\Sigma_{Xj}\delta_k}.
\end{align*}
The term $\delta_k$ is a vector in $\mathbb{R}^K$ whose $k$-th entry is -1. Its remaining entries are given by $\tilde \delta_{-k} = \argmin_{\delta_{-k} \in \mathbb{R}^{K-1}} \sum_{j=1}^p \sigma_{Xjk}^{-2} (\gamma_{jk} - \gamma_{j(-k)}^T \delta_{-k})^2$, where $\gamma_{j(-k)}$ denotes the vector
 $\bgamma_j$ with the $k$-th entry excluded. $ \tilde  \delta_{-k}$ is the coefficient from the weighted linear projection of the vector $(\gamma_{1k},\dots,\gamma_{pk})^T$ onto the column space of the matrix with rows $\gamma_{j(-k)}^T$, for $j=1,\dots,p$. Under Assumption 1(iii), we have $\delta_k^T \Sigma_{Xj} \delta_k = \Theta(\sigma_{Yj}^{2})$, and hence $p F_k = \Theta\big(\delta_k^T (\sum_{j=1}^p \bgamma_j \bgamma_j^T \sigma_{Yj}^{-2})\delta_k\big)$. In other words, the conditional $F$-statistic measures the IV strength for a particular linear combination of exposures along the direction defined by $\delta_k$. In Section S3.3 of the Supplement, we show that $\min_k F_k = \Theta(\mu_{n,\min}/p)$, meaning that the smallest conditional $F$-statistic across $K$ exposures has the same rate as $\mu_{n,\min}/p$. However, as demonstrated in Example 2, a large conditional $F$-statistic for an exposure may still be insufficient to ensure reliable inference for that exposure.

\begin{example}
    Consider three exposures, $X_1$, $X_2$, and $X_3$. For $j = 1,\dots,p$, let $\gamma_{j2} = \gamma_{j1} + c/\sqrt{n}$, $\gamma_{j1} \sim N(0, 1/p)$, and $\gamma_{j3} = 0.5\gamma_{j1} + 0.5\gamma_{j2} + \epsilon_j$, where $\epsilon_j \sim N(0, 1/p)$. All other notation remains the same as in Example 1. In this example, the marginal IV strength of the three exposures is on the order of $\Theta(n)$. However, due to the high correlation between $\gamma_{j1}$ and $\gamma_{j2}$, 
    the conditional IV strength for $X_1$ and $X_2$ is much weaker. In Section S3.3 of the Supplement, we show  that Assumption 2 holds for this example with $\mu_{n,\min}/p = \Theta(1)$, and that the conditional $F$-statistics for $X_1$, $X_2$, and $X_3$ are  $\Theta(1)$, $\Theta(1)$, and $\Theta(n/p)$, respectively. Thus, the conditional $F$-statistic for $X_3$ grows to infinity when $n/p \to \infty$. However, according to Corollary 1 (presented later in Section \ref{sec: IVW}), the coverage probability for the true causal effect of $X_3$ using the MV-IVW method can still approach 0, because $\mu_{n,\min}$ does not grow sufficiently fast; see Section S3.3 of the Supplement for more details.
\end{example} 

 Additionally, we note that calculating conditional $F$-statistics in practice requires the consistent estimation of the $\delta_k$'s, which may not be achievable when IV strength is very weak. However, as we demonstrate in the next subsection, estimators of $\mu_{n,\min}$ can be easily computed without the need to estimate the $\delta_k$ vectors.

Finally, we remark that the \emph{concentration parameter} (the noncentrality parameter of the first-stage $F$-statistic)  is commonly used to quantify IV strength based on individual-level data
 \citep{Stock:2002aa, Chao:2005aa}. As shown in Section S3.7 of the Supplement,  $\sum_{j=1}^{p} \bgamma_j \bgamma_j^T \sigma_{Yj}^{-2}$ has the same rate as the concentration parameter (denoted as $H_n$), in the sense that $S_n^{-1} H_n S_n^{-T}$ is also bounded, and its smallest eigenvalue is bounded away from 0 almost surely. This result suggests that $\mu_{n,\min}$ can be interpreted as the rate at which the minimum eigenvalue of the concentration parameter grows as $n \rightarrow \infty$. The minimum eigenvalue of the concentration parameter is also known as the Cragg-Donald statistic and has been used for assessing under-identification and weak IVs in IV regressions \citep{Stock2001:ivtest}.

\subsection{Measuring IV strength in practice}

In practical applications, however, the matrix $\sum_{j=1}^{p} \bgamma_j \bgamma_j^T \sigma_{Yj}^{-2}$ may not be ideal for assessing weak IVs for two reasons. First, it is not scale-invariant, meaning that it changes if the units of the exposures are changed. Second, it depends on the outcome. In Sections S3.4-S3.5 of the Supplement, we show that under mild assumptions, the scale-invariant matrix $\sum_{j=1}^{p} \Omega_{j}^{-1} \bgamma_j \bgamma_j^T \Omega_{j}^{-T}$ is asymptotically equivalent to $\sum_{j=1}^{p} \bgamma_j \bgamma_j^T \sigma_{Yj}^{-2}$, where $\Omega_{j} = {\mathrm {diag}}(\sigma_{Xj1},\dots,\sigma_{XjK})\Sigma^{\frac{1}{2}}$ and $A^{\frac{1}{2}}$ is the symmetric square root of a positive semi-definite matrix $A$.   The matrix $\sum_{j=1}^{p} \Omega_j^{-1} \bgamma_j \bgamma_j^T \Omega_j^{-T}$, referred to as the \emph{IV strength matrix} in the remainder of this paper, is a multivariable extension of the IV strength measure used in the univariable MR \citep{zhao2018statistical, Ye:2019}.  Notably, the minimum eigenvalue of 
$\sum_{j=1}^{p} \Omega_j^{-1} \bgamma_j \bgamma_j^T \Omega_j^{-T}$ is also $\Theta(\mu_{n,\min})$. Moreover, 
we can estimate $\sum_{j=1}^{p} \Omega_j^{-1} \bgamma_j \bgamma_j^T \Omega_j^{-T}$ unbiasedly using its sample version, $\sum_{j=1}^{p} \Omega_j^{-1} \hat \bgamma_j \hat \bgamma_j^T \Omega_j^{-T} - p I_K$, where  $I_{K}$ is a $K\times K$ identity matrix.  This allows us to use $\hat \lambda_{\min}$, the minimum eigenvalue of this sample IV strength matrix, to evaluate $\mu_{n,\min}$ and quantify the IV strength.

\section{IVW method in MVMR}
\label{sec: IVW}

Next, we theoretically study the widely used MV-IVW method using the asymptotic regime proposed in Section \ref{sec: asymptotics} and show that it tends to produce biased estimates and misleadingly narrow confidence intervals centered around these biased estimates, which may lead to spurious findings.

The MV-IVW estimator, proposed by \cite{Burgess2015mvmr}, is given by:
$
	\hat\bbeta_{\rm IVW}  = \big\{ \sum_{j=1}^{p} \hat\bgamma_j \hat\bgamma_j^T  \sigma_{Yj}^{-2} \big\}^{-1} \big\{ \sum_{j=1}^{p} \hat\bgamma_j \hat\Gamma_j \sigma_{Yj}^{-2} \big\}. $
Theorem \ref{theo: ivw} below describes the theoretical properties of the IVW estimator (with proof in Section S3.9 of the Supplement). In the following, $\xrightarrow{P}$ denotes convergence in probability, and $\xrightarrow{D}$ denotes convergence in distribution.

\begin{theorem} \label{theo: ivw}
    Under Assumptions \ref{assump: 1}-\ref{assump: 2}, and $\max_j (\gamma_{jk}^2{\sigma_{Xjk}^{-2}} ) /(\mu_{n,\min} + p) \rightarrow 0$ for every $k = 1,\dots, K$, as $n\to\infty$,
    \begin{equation}
        \mathbb{V}^{-\frac{1}{2}}\bigg\{ \sum_{j=1}^{p}(M_j+V_j) \bigg\}( \hat \bbeta_{\rm IVW} - \tilde{\bbeta}_{\rm  IVW}) \xrightarrow{D} N(\bm0, I_{K}),\label{eq: mvmr normal}
    \end{equation} 
    where $\tilde{\bbeta}_{\rm  IVW}= \left\{ \sum_{j=1}^{p} \hat M_j \right\}^{-1} \left\{ \sum_{j=1}^{p} (\hat M_j- V_j) \right\}  \bm{\beta}_0$,   $M_j = \bgamma_{j}\bgamma_{j}^T\sigma_{Yj}^{-2}$, $\hat M_j = \hat \bgamma_{j}\hat \bgamma_{j}^T\sigma_{Yj}^{-2}$, $V_j = \Sigma_{Xj}\sigma_{Yj}^{-2}$, and $\mathbb{V} = \sum_{j=1}^{p}\big\{  (1+ \bm{\beta}_0^TV_j\bm{\beta}_0) (M_j+V_j)+V_j\bm{\beta}_0\bm{\beta}_0^TV_j\big\}$.
\end{theorem}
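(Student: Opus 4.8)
The plan is to reduce the theorem to a central limit theorem for a sum of independent, mean-zero summands, with the pseudo-parameter $\tilde\bbeta_{\rm IVW}$ playing the role of the ``bias-corrected'' target that makes the relevant score exactly centered. Writing $A_n = \sum_{j=1}^p \hat M_j$ for the random denominator and recalling $\Gamma_j = \bgamma_j^T\bbeta_0$, I would first establish the exact algebraic identity
\begin{equation*}
\hat\bbeta_{\rm IVW} - \tilde\bbeta_{\rm IVW} = A_n^{-1}\sum_{j=1}^p \psi_j, \qquad \psi_j := \hat\bgamma_j\big(\hat\Gamma_j - \hat\bgamma_j^T\bbeta_0\big)\sigma_{Yj}^{-2} + V_j\bbeta_0,
\end{equation*}
which follows directly by substituting the definitions of $\hat\bbeta_{\rm IVW}$ and $\tilde\bbeta_{\rm IVW}$ and collecting terms. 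Writing $\hat\bgamma_j = \bgamma_j + \bm{\delta}_j$ and $\hat\Gamma_j = \Gamma_j + \eta_j$ with $\bm{\delta}_j\sim N(\bm0,\Sigma_{Xj})$, $\eta_j\sim N(0,\sigma_{Yj}^2)$ mutually independent, the residual simplifies to $\hat\Gamma_j - \hat\bgamma_j^T\bbeta_0 = \eta_j - \bm{\delta}_j^T\bbeta_0$, and a short computation gives $\E[\hat\bgamma_j(\eta_j - \bm{\delta}_j^T\bbeta_0)]\sigma_{Yj}^{-2} = -V_j\bbeta_0$. Hence each $\psi_j$ has mean zero and the summands are independent across $j$; the additive term $V_j\bbeta_0$ is precisely the weak-IV bias correction that the definition of $\tilde\bbeta_{\rm IVW}$ supplies.

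The next step is to verify that $\sum_{j=1}^p \var(\psi_j) = \mathbb{V}$. Because $\psi_j$ is a centered degree-two polynomial in the Gaussian vector $(\bm{\delta}_j,\eta_j)$, this is an Isserlis/Wick moment calculation; the only nonroutine ingredient is the fourth-moment identity $\E[(\bbeta_0^T\bm{\delta}_j)^2\,\bm{\delta}_j\bm{\delta}_j^T] = (\bbeta_0^T\Sigma_{Xj}\bbeta_0)\Sigma_{Xj} + 2\Sigma_{Xj}\bbeta_0\bbeta_0^T\Sigma_{Xj}$, after which the per-$j$ variance collapses to $(1+\bbeta_0^TV_j\bbeta_0)(M_j+V_j) + V_j\bbeta_0\bbeta_0^TV_j$, matching $\mathbb{V}$ term by term.

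I would then prove the CLT $\mathbb{V}^{-1/2}\sum_{j}\psi_j \xrightarrow{D} N(\bm0,I_K)$ via the Cram\'er--Wold device and the Lindeberg--Feller theorem for triangular arrays. For a fixed unit vector $u$, set $v = \mathbb{V}^{-1/2}u$; the scalars $v^T\psi_j$ are independent, mean zero, with total variance $v^T\mathbb{V}v = 1$. A Lyapunov/fourth-moment reduction then requires $\sum_j \E(v^T\psi_j)^4 \to 0$, which I would control by (a) Gaussian hypercontractivity, giving $\E(v^T\psi_j)^4 \le C\,(v^T\var(\psi_j)v)^2$ for degree-two Gaussian chaos, and (b) the bound $\max_j v^T\var(\psi_j)v \to 0$. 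For (b) I would use $\mathbb{V}\succeq \bar A_n := \sum_j(M_j+V_j)$, whose smallest eigenvalue is $\Theta(\mu_{n,\min}+p)$, so that $\|v\|^2 \le \lambda_{\min}(\mathbb{V})^{-1} = O\big((\mu_{n,\min}+p)^{-1}\big)$; combined with the boundedness of the variance ratios this turns the hypothesis $\max_j(\gamma_{jk}^2\sigma_{Xjk}^{-2})/(\mu_{n,\min}+p)\to0$ into exactly the statement $\max_j v^T\var(\psi_j)v\to0$.

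Finally I would handle the random denominator and assemble the pieces by Slutsky. Writing the target as $\mathbb{V}^{-1/2}\bar A_n\,A_n^{-1}\sum_j\psi_j = \big(\mathbb{V}^{-1/2}\bar A_n A_n^{-1}\mathbb{V}^{1/2}\big)\big(\mathbb{V}^{-1/2}\sum_j\psi_j\big)$, it suffices to show the first factor converges in probability to $I_K$. Since $A_n - \bar A_n = \sum_j(\hat M_j - \E\hat M_j)$ is a centered sum, a second-moment bound (again powered by the same maximal-strength hypothesis) gives the concentration $\bar A_n^{-1/2}(A_n-\bar A_n)\bar A_n^{-1/2}\xrightarrow{P}\bm0$ in operator norm; together with the two-sided comparability $\bar A_n \preceq \mathbb{V}\preceq C\bar A_n$ (which makes the conjugating matrices $\mathbb{V}^{-1/2}\bar A_n^{1/2}$ and $\bar A_n^{-1/2}\mathbb{V}^{1/2}$ bounded) this yields $\mathbb{V}^{-1/2}\bar A_n A_n^{-1}\mathbb{V}^{1/2}\xrightarrow{P} I_K$ and hence the claim. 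The main obstacle is the interplay between the heterogeneous instrument strengths $\mu_{n1},\dots,\mu_{nK}$ and the apparently isotropic normalization: both the Lindeberg verification and the denominator concentration must be carried out in the correct non-isotropic metric, and it is here that Assumption~\ref{assump: 2} (through the rotation $S_n=\tilde S_n\,{\rm diag}(\sqrt{\mu_{n1}},\dots,\sqrt{\mu_{nK}})$) is essential to guarantee the eigenvalue lower bound $\lambda_{\min}(\bar A_n)=\Theta(\mu_{n,\min}+p)$ and the comparability $\mathbb{V}\asymp\bar A_n$ uniformly across directions of differing strength.
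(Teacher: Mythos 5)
Your proposal is correct and follows essentially the same route as the paper's proof: the same exact algebraic identity $\hat\bbeta_{\rm IVW}-\tilde\bbeta_{\rm IVW}=\big(\sum_j\hat M_j\big)^{-1}\sum_j\big(\hat\bgamma_j\hat\Gamma_j\sigma_{Yj}^{-2}-\hat M_j\bbeta_0+V_j\bbeta_0\big)$ with mean-zero summands whose total covariance is $\mathbb{V}$, a Cram\'er--Wold/Lindeberg-type CLT for that score under the maximal-strength hypothesis, concentration of the random denominator, and Slutsky. The only differences are cosmetic: you verify the CLT via a Lyapunov fourth-moment bound with Gaussian hypercontractivity where the paper checks Lindeberg's condition directly by truncation and dominated convergence, and you phrase the denominator concentration and comparability $\bar A_n\asymp\mathbb{V}$ in the $\bar A_n^{1/2}$-metric where the paper conjugates by $T_n=\tilde S_n\,{\rm diag}(\sqrt{\mu_{n1}+p},\dots,\sqrt{\mu_{nK}+p})$ (its Lemmas S1--S2, S5), which are equivalent normalizations under Assumption 2.
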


 The condition $\max_j (\gamma_{jk}^2{\sigma_{Xjk}^{-2}} ) /(\mu_{n,\min} + p) \rightarrow 0$ for all $k$ is a mild requirement used to verify the Lindeberg's condition for the central limit theorem, and is satisfied when no single SNP dominates the IV strength. Unless $\bbeta_0=\bm 0$, Theorem \ref{theo: ivw} indicates that the MV-IVW estimator is not well-centered around the true causal effect $\bbeta_0$, but instead around a random target $\tilde {\bbeta}_{\rm IVW}$. Consequently, the asymptotic bias of $\hat \bbeta_{\rm IVW}$ is the probability limit of  $\tilde {\bbeta}_{\rm IVW} - \bbeta_0$ as $n \rightarrow \infty$. The extent of the bias depends on
whether the weighted sum of the estimated SNP-exposure effects, $ \sum_{j=1}^{p} \hat M_j$, is sufficiently large relative to its variance, $\sum_{j=1}^{p} V_j $, and whether $\bbeta_0 = \bm 0$. This is formally established in Proposition \ref{prop: ivw}.
\begin{proposition} \label{prop: ivw}
    Under Assumptions \ref{assump: 1}-\ref{assump: 2}, when $\bbeta_0 \neq \bm 0$, as $n\rightarrow \infty$, (i) $\tilde \bbeta_{\rm IVW} - \big\{ \sum_{j=1}^{p}(M_j+V_j) \big\}^{-1} \big\{ \sum_{j=1}^{p}M_j \big\} \bm{\beta}_0 \xrightarrow[]{P} \bm 0$ if $\mu_{n,\min}/p \rightarrow [0,\infty)$; (ii) $\tilde \bbeta_{\rm IVW} - \bbeta_0 \xrightarrow[]{P} \bm 0$ if $\mu_{n,\min}/p \rightarrow \infty$; and (iii) $\mathbb{V}^{-\frac{1}{2}}\big\{ \sum_{j=1}^{p}(M_j+V_j) \big\}( \tilde \bbeta_{\rm IVW} - \bbeta_0) \xrightarrow{P} \bm 0$ if $\mu_{n,\min}/p^2 \rightarrow \infty$. 
\end{proposition}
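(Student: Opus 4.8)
The plan is to reduce all three parts to a single concentration statement for the random matrix $A_n := \sum_{j=1}^p \hat M_j$ around its expectation $\bar A_n := \sum_{j=1}^p (M_j + V_j) = G_n + D_n$, where $G_n := \sum_{j=1}^p M_j$ and $D_n := \sum_{j=1}^p V_j$. Using $\sum_j (\hat M_j - V_j) = A_n - D_n$, the definition of $\tilde\bbeta_{\rm IVW}$ collapses to $\tilde\bbeta_{\rm IVW} = \bbeta_0 - A_n^{-1} D_n \bbeta_0$, and because $E[\hat M_j] = M_j + V_j$ the deterministic target in part (i) is exactly $\bar A_n^{-1} G_n \bbeta_0 = \bbeta_0 - \bar A_n^{-1} D_n \bbeta_0$. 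Hence every claim concerns the single object $A_n^{-1} D_n \bbeta_0$ and its surrogate $\bar A_n^{-1} D_n \bbeta_0$. At the outset I would record the scale facts used repeatedly: by Assumption 1(iii) the eigenvalues of $D_n$ are $\Theta(p)$; by Assumption 2 and Section S3.5 those of $G_n$ lie in $[\Theta(\mu_{n,\min}), \Theta(\mu_{n,\max})]$; and therefore, by Weyl's inequality, $\lambda_{\min}(\bar A_n) = \Theta(\mu_{n,\min}+p)$ and $\|\bar A_n^{-1} D_n\| = \Theta(p/(\mu_{n,\min}+p))$.

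The crux is to show $\bar A_n^{-1/2}(A_n - \bar A_n)\bar A_n^{-1/2} \xrightarrow{P} \bm 0$ in operator norm. Writing $\hat\bgamma_j = \bgamma_j + \bm e_j$ with $\bm e_j \sim N(\bm 0, \Sigma_{Xj})$ and expanding $\hat M_j - E[\hat M_j] = \sigma_{Yj}^{-2}(\bgamma_j \bm e_j^T + \bm e_j \bgamma_j^T + \bm e_j \bm e_j^T - \Sigma_{Xj})$, I would bound $E\|\bar A_n^{-1/2}(A_n - \bar A_n)\bar A_n^{-1/2}\|_F^2$ by summing the independent per-$j$ contributions, which is legitimate because $K$ is fixed so Frobenius and operator norms are equivalent. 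A Gaussian (Wick) fourth-moment computation reduces each contribution to scalars such as $\sigma_{Yj}^{-2}\bgamma_j^T \bar A_n^{-1}\bgamma_j$ and $\mathrm{tr}(\bar A_n^{-1} V_j)$, whose sums telescope through $\mathrm{tr}(\bar A_n^{-1} G_n) \le K$ and $\mathrm{tr}(\bar A_n^{-1} D_n) \le K$; together with the uniform smallness $\max_j \mathrm{tr}(\bar A_n^{-1} V_j) = O(1/(\mu_{n,\min}+p))$, the whole expectation is $O(1/(\mu_{n,\min}+p))$, which tends to $0$ since $p \to \infty$. This yields both that $A_n$ is invertible with probability tending to one and that $\bar A_n A_n^{-1} = O_P(1)$, so $A_n^{-1}$ may be replaced by $\bar A_n^{-1}$ in operator-norm bounds. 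I expect this fourth-moment bookkeeping, kept dimension-free in the rotated scale of Assumption 2, to be the main obstacle.

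For part (i), I would write $(A_n^{-1} - \bar A_n^{-1}) D_n \bbeta_0 = -A_n^{-1}(A_n - \bar A_n)\bm u_n$ with the deterministic vector $\bm u_n := \bar A_n^{-1} D_n \bbeta_0$ satisfying $\|\bm u_n\| = O(1)$. Since $\bm u_n$ is deterministic, $(A_n - \bar A_n)\bm u_n$ is a sum of independent mean-zero vectors, and the same Wick computation gives $\mathrm{tr}\,\mathrm{Var}(\bar A_n^{-1}(A_n - \bar A_n)\bm u_n) = O(1/(\mu_{n,\min}+p)) \to 0$; multiplying by the $O_P(1)$ factor $\bar A_n A_n^{-1}$ then gives $(A_n^{-1} - \bar A_n^{-1}) D_n \bbeta_0 \xrightarrow{P} \bm 0$, which is exactly (i). Part (ii) follows at once: when $\mu_{n,\min}/p \to \infty$ the surrogate vanishes because $\|\bar A_n^{-1} D_n \bbeta_0\| = O(p/\mu_{n,\min}) \to 0$, and adding the random error already controlled in (i) shows $\tilde\bbeta_{\rm IVW} - \bbeta_0 \xrightarrow{P} \bm 0$.

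For the sharper statement (iii) I would first show $\mathbb{V} = \Theta(\bar A_n)$ in the Loewner order: since $V_j = \Theta(1)$ element-wise the weights $1 + \bbeta_0^T V_j \bbeta_0$ are $\Theta(1)$, while $\sum_j V_j \bbeta_0 \bbeta_0^T V_j = O(p) \preceq C \bar A_n$; hence $\mathbb{V}^{-1/2}$ is interchangeable with $\bar A_n^{-1/2}$ up to constants. Splitting $\mathbb{V}^{-1/2}\bar A_n(\tilde\bbeta_{\rm IVW} - \bbeta_0) = -\mathbb{V}^{-1/2} D_n \bbeta_0 - \mathbb{V}^{-1/2}\bar A_n(A_n^{-1} - \bar A_n^{-1}) D_n \bbeta_0$, the deterministic term obeys $\|\mathbb{V}^{-1/2} D_n \bbeta_0\| = \Theta(p/\sqrt{\mu_{n,\min}+p})$, which tends to $0$ precisely when $\mu_{n,\min}/p^2 \to \infty$ --- this is where the tight rate $p^2$ enters. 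The random term is handled by the device of part (i), but now tracking the extra smallness $\|\bm u_n\| = O(p/\mu_{n,\min})$: the refined variance bound is of order $\|\bm u_n\|^2 = O(p^2/\mu_{n,\min}^2)$, which also vanishes under $\mu_{n,\min}/p^2 \to \infty$, completing (iii).
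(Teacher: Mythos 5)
Your algebraic reduction ($\tilde\bbeta_{\rm IVW}=\bbeta_0-A_n^{-1}D_n\bbeta_0$, target $=\bbeta_0-\bar A_n^{-1}D_n\bbeta_0$), your eigenvalue scale facts, and your Wick/Frobenius concentration bound $E\|\bar A_n^{-1/2}(A_n-\bar A_n)\bar A_n^{-1/2}\|_F^2=O(1/(\mu_{n,\min}+p))$ are all correct; the latter is the same statement as the paper's Lemma S2 up to the bounded change of normalization between $T_n$ and $\bar A_n^{1/2}$ (Lemma S1). The genuine gap is the claim that $\bar A_nA_n^{-1}=O_P(1)$, ``so $A_n^{-1}$ may be replaced by $\bar A_n^{-1}$ in operator-norm bounds,'' on which both your part (i) and the random term of part (iii) lean. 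Concentration holds only in the \emph{symmetrized} normalization: $\bar A_n^{-1/2}A_n\bar A_n^{-1/2}\xrightarrow{P}I_K$ gives $\bar A_n^{1/2}A_n^{-1}\bar A_n^{1/2}=O_P(1)$, but un-conjugating to a one-sided product costs a factor of the condition number $\sqrt{\lambda_{\max}(\bar A_n)/\lambda_{\min}(\bar A_n)}=\Theta\bigl(\sqrt{(\mu_{n,\max}+p)/(\mu_{n,\min}+p)}\bigr)$, which Assumptions 1--2 deliberately allow to diverge --- unequal strength rates are the whole point of the framework. Concretely, take $K=2$ with $\bar A_n=\mathrm{diag}(a,b)$, $a\asymp p$, $b\asymp n$, $n/p^2\to\infty$ (achievable under Assumptions 1--2, cf.\ Example 1 with $p=o(n^{1/2})$). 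The error entries scale as $\Delta_{11}=O_P(\sqrt a)$, $\Delta_{12}=O_P(\sqrt b)$, $\Delta_{22}=O_P(\sqrt b)$, $\det A_n=ab(1+o_P(1))$, and hence $[\bar A_nA_n^{-1}]_{21}=-b\,\Delta_{12}/\det A_n=O_P(\sqrt b/a)=O_P(\sqrt n/p)\to\infty$. So the factor you discard is genuinely unbounded; ``vanishing vector times $O_P(1)$ matrix'' is not available, and the product in fact only vanishes because the large entries of $A_n^{-1}\bar A_n$ multiply the disproportionately small components of $\bar A_n^{-1}(A_n-\bar A_n)\bm u_n$ --- a cancellation your argument never records.

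The repair is local and uses only quantities you already computed, but it must be done symmetrically; this trap is exactly why the paper keeps two-sided $T_n^{-1}(\cdot)T_n^{-T}$ normalizations throughout Section S3.9 rather than ever forming one-sided products. Write $A_n^{-1}=\bar A_n^{-1/2}H^{-1}\bar A_n^{-1/2}$ with $H:=\bar A_n^{-1/2}A_n\bar A_n^{-1/2}\xrightarrow{P}I_K$, so that for (i), $(A_n^{-1}-\bar A_n^{-1})D_n\bbeta_0=-\bar A_n^{-1/2}H^{-1}\bigl[\bar A_n^{-1/2}(A_n-\bar A_n)\bm u_n\bigr]$, and run your Wick computation on the half-power vector: $E\|\bar A_n^{-1/2}(A_n-\bar A_n)\bm u_n\|^2=\mathrm{tr}\bigl(\bar A_n^{-1}\mathrm{Var}((A_n-\bar A_n)\bm u_n)\bigr)$, whose dominant terms, $\max_j\mathrm{tr}(\bar A_n^{-1}V_j)\cdot\bm u_n^TG_n\bm u_n$ and $K\sup_j\lambda_{\max}(V_j)\|\bm u_n\|^2$, are $O\bigl(p^2/(\mu_{n,\min}+p)^2\bigr)=O(1)$; then the prefactor $\|\bar A_n^{-1/2}\|=O((\mu_{n,\min}+p)^{-1/2})\to0$ closes (i). Part (ii) stands as you wrote it, and for (iii) the same symmetric split together with your observation $\|\bm u_n\|=O(p/(\mu_{n,\min}+p))\to0$ makes the half-power vector $o_P(1)$, while for the deterministic term only the upper bound $\|\mathbb{V}^{-1/2}D_n\bbeta_0\|=O(p/\sqrt{\mu_{n,\min}+p})\to0$ is needed (your ``$\Theta$'' and ``precisely when'' are not justified, since the lower bound involves $\mu_{n,\max}$, but sufficiency is all the proposition asserts). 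With this repair your route is a valid and arguably tidier alternative to the paper's, which instead proves the three limits for $\hat\bbeta_{\rm IVW}$ with entry-wise $T_n$-rate bookkeeping and transfers them to $\tilde\bbeta_{\rm IVW}$ via Theorem 1.
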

When $\bbeta_0 \neq \bm 0$, Proposition 1 indicates that the asymptotic bias of the IVW estimator is nonzero unless $\mu_{n, \min}/p \rightarrow \infty$, and remains non-negligible compared to the asymptotic standard deviation unless a stronger condition $\mu_{n, \min}/p^2 \rightarrow \infty$ holds. The proofs can be found in Section S3.9 of the Supplement. These rate conditions on $\mu_{n, \min}$ are generally unrealistic in typical MVMR studies, except in cases involving a small number of exposures with weak genetic correlations. In most MR applications, a more realistic condition is $\mu_{n,\min}/p \rightarrow [0,\infty)$, under which the IVW estimator is asymptotically biased and converges to $\big\{ \sum_{j=1}^{p}(M_j+V_j) \big\}^{-1} \big\{ \sum_{j=1}^{p}M_j \big\} \bm{\beta}_0 $ as $n\to\infty$.

Unlike in two-sample univariable MR, where the IVW estimator is always biased toward zero in the presence of weak IVs, Theorem \ref{theo: ivw} and Proposition 1 suggest that the MV-IVW estimator is more adversely affected by weak IVs. First, the asymptotic bias of MV-IVW exists unless either $\bbeta_0 = \bm 0$ or $\mu_{n,\min}/p\rightarrow\infty$. This implies that $\hat \bbeta_{\rm IVW}$ remains biased when testing the composite null hypothesis that one or some exposures have no causal effects, and is only unbiased when testing the global null hypothesis that \emph{all} exposures have no causal effect, which is of less practical interest. Second, certain elements of $\big\{ \sum_{j=1}^{p}(M_j+V_j) \big\}^{-1} \big\{ \sum_{j=1}^{p}M_j \big\} \bm{\beta}_0$ may be amplified, while others are attenuated. This explains why the IVW estimator in MVMR can exhibit bias either toward or away from zero \citep{sanderson2021testing,sanderson2022estimation}; see a theoretical example in Section S3.9 of the Supplement and simulation results in Table S3.

Furthermore, Corollary \ref{coro: ivw} shows that the MV-IVW estimator tends to produce misleadingly narrow confidence intervals, which can result in spurious findings.
\begin{corollary} \label{coro: ivw}
    Under Assumptions 1-2, unless $\bbeta_0 = \bm 0$ or $\mu_{n,\min}/p^2 \rightarrow \infty$, $\Pr(\exists k \in \{1,\dots,K\} \text{ such that }\frac{|\hat \beta_{{\rm IVW},k} - \beta_{0k}|}{\sqrt{\mathcal{V}_{{\rm IVW},kk}}} > C) \rightarrow 1$ as $n \rightarrow \infty$, where $C$ is any finite constant and  $\mathcal{V}_{{\rm IVW},kk}$ is the $k$th diagonal element of the asymptotic variance matrix   ${\mathcal{V}}_{\rm IVW}  =  \big\{ \sum_{j=1}^{p}(M_j+V_j) \big\}^{-1}  \mathbb{V}\  \big\{ \sum_{j=1}^{p}(M_j+V_j) \big\}^{-1}$.
\end{corollary}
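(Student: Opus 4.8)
The plan is to split the studentized deviation for each coordinate into a stochastic (noise) part and a deterministic (bias) part, argue that the noise part is asymptotically bounded while the bias part diverges in at least one coordinate, and then combine the two. Throughout I would write $W_n=\sum_{j=1}^{p}(M_j+V_j)$, so that $\mathcal{V}_{\rm IVW}=W_n^{-1}\mathbb{V}W_n^{-1}$ and $\mathcal{V}_{{\rm IVW},kk}=e_k^T\mathcal{V}_{\rm IVW}e_k$. For each $k$ I would decompose
\[
\frac{\hat\beta_{{\rm IVW},k}-\beta_{0k}}{\sqrt{\mathcal{V}_{{\rm IVW},kk}}}=\frac{\hat\beta_{{\rm IVW},k}-\tilde\beta_{{\rm IVW},k}}{\sqrt{\mathcal{V}_{{\rm IVW},kk}}}+\frac{\tilde\beta_{{\rm IVW},k}-\beta_{0k}}{\sqrt{\mathcal{V}_{{\rm IVW},kk}}}.
\]
By Theorem \ref{theo: ivw}, $\hat\bbeta_{\rm IVW}-\tilde\bbeta_{\rm IVW}$ is asymptotically $N(\bm0,\mathcal{V}_{\rm IVW})$, so each coordinate standardized by its own asymptotic standard deviation $\sqrt{\mathcal{V}_{{\rm IVW},kk}}$ converges to $N(0,1)$; since $K$ is finite, the first (noise) term is $O_P(1)$ uniformly in $k$. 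It therefore suffices to show that the bias term satisfies $\max_k|\tilde\beta_{{\rm IVW},k}-\beta_{0k}|/\sqrt{\mathcal{V}_{{\rm IVW},kk}}\xrightarrow{P}\infty$ whenever $\bbeta_0\neq\bm0$ and $\mu_{n,\min}/p^2\not\to\infty$.

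Next I would obtain a tractable form of the bias. Substituting $\sum_{j=1}^{p}(\hat M_j-V_j)$ into the definition of $\tilde\bbeta_{\rm IVW}$ gives the exact identity $\tilde\bbeta_{\rm IVW}-\bbeta_0=-(\sum_{j=1}^{p}\hat M_j)^{-1}(\sum_{j=1}^{p}V_j)\bbeta_0$. Because $\E(\hat M_j)=M_j+V_j$, the matrix $\sum_{j=1}^{p}\hat M_j$ concentrates around $W_n$ under the Lindeberg-type condition already imposed in Theorem \ref{theo: ivw}, so I would replace the random target by its deterministic leading term $b_n^{\ast}=-W_n^{-1}(\sum_{j=1}^{p}V_j)\bbeta_0$ and control the remainder. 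The aggregate (Mahalanobis) studentized bias then equals $b_n^{\ast T}\mathcal{V}_{\rm IVW}^{-1}b_n^{\ast}=\|\mathbb{V}^{-1/2}W_n b_n^{\ast}\|^2=\|\mathbb{V}^{-1/2}(\sum_{j=1}^{p}V_j)\bbeta_0\|^2$, which is exactly the squared norm of the quantity controlled in Proposition \ref{prop: ivw}(iii).

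I would then establish the divergence rate. Working in the rotated basis provided by $S_n$ in Assumption \ref{assump: 2}, and using $\sum_{j=1}^{p}V_j=\Theta(p)$ from Assumption \ref{assump: 1}(iii) together with the fact that the eigenvalues of $\sum_{j=1}^{p}M_j$ scale like $\mu_{n1},\dots,\mu_{nK}$, a direct computation shows $b_n^{\ast T}\mathcal{V}_{\rm IVW}^{-1}b_n^{\ast}=\Theta\big(p^2/(\mu_{n,\min}+p)\big)$. This quantity diverges unless $\mu_{n,\min}$ grows at least as fast as $p^2$, which is the precise complement of the exception $\mu_{n,\min}/p^2\to\infty$ in the statement and the exact counterpart to Proposition \ref{prop: ivw}(iii): when $\mu_{n,\min}/p^2\not\to\infty$ the standardized bias does not vanish but instead blows up.

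Finally---and this is the step I expect to be the main obstacle---I must pass from divergence of the aggregate studentized bias to divergence of the coordinatewise maximum $\max_k|b^{\ast}_{n,k}|/\sqrt{\mathcal{V}_{{\rm IVW},kk}}$. Writing $D_n=\mathrm{diag}(\mathcal{V}_{\rm IVW})$, $c_n=D_n^{-1/2}b_n^{\ast}$, and $R_n=D_n^{-1/2}\mathcal{V}_{\rm IVW}D_n^{-1/2}$ for the asymptotic correlation matrix, one has $\|\mathbb{V}^{-1/2}W_n b_n^{\ast}\|^2=c_n^T R_n^{-1}c_n\le\lambda_{\max}(R_n^{-1})\,K\,\|c_n\|_\infty^2$, so the coordinatewise maximum diverges provided $\lambda_{\min}(R_n)$ stays bounded away from zero. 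Verifying that the limiting correlation structure is non-degenerate (i.e.\ no exact collinearity among the coordinatewise studentized estimators) is delicate precisely in the weak-IV regime where the genetic effects are highly correlated; I would handle it using the positive definiteness of $\Sigma$ in Assumption \ref{assump: 1}(iii) together with the lower eigenvalue bound in Assumption \ref{assump: 2}, which prevent the finite $K\times K$ correlation matrix from becoming singular. Combining the bounded noise term with the diverging bias term then yields $\Pr(\exists k:|\hat\beta_{{\rm IVW},k}-\beta_{0k}|/\sqrt{\mathcal{V}_{{\rm IVW},kk}}>C)\to1$ for every finite constant $C$.
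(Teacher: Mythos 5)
Your proof skeleton (the noise--bias split via Theorem \ref{theo: ivw}, the exact identity $\tilde\bbeta_{\rm IVW}-\bbeta_0=-\big(\sum_{j}\hat M_j\big)^{-1}\big(\sum_{j}V_j\big)\bbeta_0$, and the reduction to divergence of a standardized bias) coincides with the paper's, but your final step---the one you yourself flag as the main obstacle---has a genuine gap, and the fix you propose does not close it. The passage from the Mahalanobis form to the coordinatewise maximum goes in the wrong direction: Cauchy--Schwarz gives $(e_k^Tb)^2\le(e_k^T\mathcal{V}_{\rm IVW}e_k)(b^T\mathcal{V}_{\rm IVW}^{-1}b)$, i.e.\ $b^T\mathcal{V}_{\rm IVW}^{-1}b\ge\max_k b_k^2/\mathcal{V}_{{\rm IVW},kk}$, so divergence of the aggregate is \emph{necessary} but not \emph{sufficient} for divergence of the coordinatewise maximum, and reversing the inequality costs exactly the factor $\lambda_{\min}(R_n)$ you identified. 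That factor is \emph{not} bounded away from zero under Assumptions \ref{assump: 1}--\ref{assump: 2}; it degenerates precisely in the regime the corollary targets. Concretely, in Example 1 ($\mu_{n1}=p$ for the weak combination $X_2-X_1$, $\mu_{n2}=n$, $n/p\to\infty$), write $\mathcal{V}_{\rm IVW}=\tilde S_n^{-T}G\tilde S_n^{-1}$ with $G=\mathrm{diag}\big((\mu_{n1}+p)^{-1/2},(\mu_{n2}+p)^{-1/2}\big)\,C\,\mathrm{diag}\big((\mu_{n1}+p)^{-1/2},(\mu_{n2}+p)^{-1/2}\big)$ and $C$ bounded positive definite (this follows from Lemma \ref{lemma1}); writing $g_1=\Theta(1/p)$, $g_2=\Theta(1/n)$, $g_{12}=O(1/\sqrt{np})$ and $\tilde S_n^{-1}=\begin{pmatrix}-1&1\\ 1&0\end{pmatrix}$, a direct computation gives $\mathcal{V}_{{\rm IVW},11}\approx\mathcal{V}_{{\rm IVW},22}\approx g_1$ and $\mathcal{V}_{{\rm IVW},12}\approx -g_1$, so the asymptotic correlation of $(\hat\beta_{{\rm IVW},1},\hat\beta_{{\rm IVW},2})$ tends to $-1$ and $\lambda_{\min}(R_n)\to0$. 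Positive definiteness of $\Sigma$ and the eigenvalue bound in Assumption \ref{assump: 2} cannot rescue this: the degeneracy is generated by the disparity of the rates $\mu_{nk}$ interacting with the rotation $\tilde S_n$, not by any singularity of $\Sigma$ or of the IV strength matrix.

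There is a second, related problem in your rate claim $b_n^{\ast T}\mathcal{V}_{\rm IVW}^{-1}b_n^{\ast}=\Theta\big(p^2/(\mu_{n,\min}+p)\big)$: only the upper bound is generic. Since $b_n^{\ast T}\mathcal{V}_{\rm IVW}^{-1}b_n^{\ast}=\Theta\big(\sum_k v_{nk}^2/(\mu_{nk}+p)\big)$ with $v_n=\tilde S_n^{-1}\big(\sum_jV_j\big)\bbeta_0$, the matching lower bound requires $v_n$ to load non-trivially on the weakest coordinate, which the assumptions do not guarantee; e.g.\ in Example 1 with $V_j\propto I_K$ and $\bbeta_0=(1,1)^T$ one gets $v_n\propto p\,(0,1)^T$, so the aggregate is $\Theta(p^2/n)$ and fails to diverge when $n/p^2\to\infty$, even though $\mu_{n,\min}/p^2\to0$. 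The paper sidesteps both issues by never forming the Mahalanobis quantity: it bounds every coordinate variance by $\lambda_{\max}(\mathcal{V}_{\rm IVW})=\Theta\big(1/(\mu_{n,\min}+p)\big)$ (Lemma \ref{lemma 4} together with the argument of Section S3.5), so that $\max_k|\tilde\beta_{{\rm IVW},k}-\beta_{0k}|/\sqrt{\mathcal{V}_{{\rm IVW},kk}}\ge\|\tilde\bbeta_{\rm IVW}-\bbeta_0\|_2\big/\sqrt{K\lambda_{\max}(\mathcal{V}_{\rm IVW})}$ holds automatically, and then argues that $\sqrt{\mu_{n,\min}+p}\,(\tilde\bbeta_{\rm IVW}-\bbeta_0)$ cannot be $O_p(1)$ unless $\mu_{n,\min}$ grows at least like $p^2$, working in the rotated basis and keeping the random matrix $\big(\sum_j\hat M_j\big)^{-1}$ intact rather than replacing it by $W_n^{-1}$. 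If you rework your argument, replace the correlation-matrix step by this $\lambda_{\max}$ standardization: the quantity you actually need to show diverges is $\|\tilde\bbeta_{\rm IVW}-\bbeta_0\|_2^2\,(\mu_{n,\min}+p)$, which is smaller than the Mahalanobis form and makes the coordinatewise conclusion immediate for finite $K$.
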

As stated in Corollary 1, even in the oracle case where the true variance is known, the $(1-\alpha)$ confidence interval $(\hat{\beta}_{{\rm IVW}, k} \pm z_{1-\alpha/2}\sqrt{\mathcal{V}_{{\rm IVW},kk}})$ may not cover the true causal effect $\beta_{0k}$ with probability approaching 1 for a given significance level $\alpha$. Note that this is not due to multiplicity in hypothesis testing, but rather because the bias of $\hat \beta_{{\rm IVW},k}$ is non-negligible relative to its standard deviation. This is particularly concerning when $\beta_{0k}=0$, i.e., the $k$th exposure has no causal effect, as the confidence interval may fail to include 0 with high probability, potentially inflating the type I error rate and leading to spurious conclusions. Details of the derivations are provided in Section S3.9 of the Supplement.

\section{Spectral-regularized IVW estimator in MVMR}
\label{sec: dIVW}

\subsection{IVW with spectral regularization}

In this section, we propose a spectral-regularized inverse-variance weighted (SRIVW) estimator, which retains a closed-form expression similar to the MV-IVW estimator but achieves improved performance and remains consistent and asymptotically normal in the presence of many weak IVs. 

As shown in Theorem \ref{theo: ivw}, the MV-IVW estimator fails due to non-negligible measurement errors when $\sum_{j=1}^{p} \hat\bgamma_j \hat\bgamma_j^T \sigma_{Yj}^{-2}$ is used to approximate $\sum_{j=1}^{p} \bgamma_j \bgamma_j^T \sigma_{Yj}^{-2}$ in the presence of many weak IVs. A natural solution is to replace $\sum_{j=1}^{p} \hat\bgamma_j \hat\bgamma_j^T \sigma_{Yj}^{-2}$ in the MV-IVW formulation with an unbiased estimator of $\sum_{j=1}^{p} \bgamma_j \bgamma_j^T \sigma_{Yj}^{-2}$, such as $\sum_{j=1}^{p} (\hat\bgamma_j \hat\bgamma_j^T -\Sigma_{Xj}) \sigma_{Yj}^{-2}$, also denoted as $\sum_{j=1}^{p} \hat M_j - V_j$.

However, in our preliminary investigation, we observed that this simple modification can become highly unstable when IVs are very weak.
The instability arises from inverting the matrix $\sum_{j=1}^{p} \hat M_j - V_j $, which can become nearly singular when SNPs are weakly associated with one or more exposures conditional on the others, resulting in unstable estimates. To stabilize the inversion, we propose the following spectral regularization approach:
\begin{align}\label{eq: adjustment}
R_{\phi}\bigg(\sum_{j=1}^{p} \hat M_j - V_j\bigg) = \sum_{j=1}^{p} \hat M_j - V_j + \phi \bigg(\sum_{j=1}^{p} \hat M_j - V_j\bigg)^{-1}, 
\end{align}
assuming no eigenvalues are exactly zero (an event with probability zero), where  $R_{\phi}$ denotes a matrix operator for regularization and $\phi\geq 0$ is a tuning parameter that may depend on the observed data. With this regularization, our SRIVW estimator takes the following form: 
\begin{align}\label{eq: SRIVW}
    \hat\bbeta_{{\rm SRIVW}, \phi} &= \bigg\{ R_{\phi}\bigg(\sum_{j=1}^{p} \hat M_j - V_j\bigg) \bigg\}^{-1}\bigg\{\sum_{j=1}^{p} \hat \bgamma_j \hat \Gamma_j \sigma_{Yj}^{-2} \bigg\}. 
\end{align}

The spectral regularization is motivated by the spectral decomposition of $    \sum_{j=1}^{p}  \hat M_j - V_j = \hat U\hat \Lambda \hat U^T $, 
where $\hat \Lambda$ is a diagonal matrix with eigenvalues $\hat \lambda_1,\dots,\hat \lambda_K$, and $\hat U$ is the matrix of eigenvectors. This leads to the following expression:
\begin{align}
R_{\phi}\bigg(\sum_{j=1}^{p} \hat M_j - V_j\bigg)= \sum_{j=1}^{p}  \hat M_j - V_j + \phi\bigg(\sum_{j=1}^{p}  \hat M_j - V_j\bigg)^{-1} = \hat U(\hat \Lambda + \phi \hat \Lambda^{-1})\hat U^T , \nonumber
\end{align} 
where $\hat \Lambda + \phi \hat \Lambda^{-1}$ is a diagonal matrix with entries  $\hat \lambda_1 + \phi\hat\lambda_1^{-1}, \dots, \hat \lambda_K +{\phi}\hat\lambda_K^{-1}$. By appropriately selecting $\phi$, the regularization can damp the explosive effect of the inversion of small eigenvalues, and this damping effect adapts based on eigenvalue magnitudes, with small eigenvalues affected most and large eigenvalues nearly unaffected. In any case, the regularization ensures that all eigenvalues of the matrix $\sum_{j=1}^{p}  \hat M_j - V_j + \phi(\sum_{j=1}^{p}  \hat M_j - V_j)^{-1}$ have absolute values bounded below by  $2\sqrt{\phi}$. Therefore, 
by appropriately choosing $\phi$, the SRIVW estimator is expected to have reduced bias and improved stability. 
 We remark that this type of regularization is related to the Tikhonov regularization used in solving integral equations \citep{kress1989linear}. A similar regularization was considered by \cite{van2016ridge} in the context of estimating high-dimensional precision matrices, and our regularization can be viewed as a first-order Taylor expansion of their estimator; see Section S3.8 of the Supplement for details.

\subsection{Theoretical properties of the SRIVW estimator}

In Theorem \ref{theo: divw}, we provide a sufficient condition for selecting $\phi$ and establish the consistency and asymptotic normality of the SRIVW estimator.

\begin{theorem} \label{theo: divw}
Assume Assumptions 1-2, ${\mu_{n,\min}}/{\sqrt{p}}\rightarrow \infty$, and $\max_j \gamma_{jk}^2\sigma_{Xjk}^{-2}/(\mu_{n,\min} + p) \rightarrow 0$ for all $k$ as $n\to\infty$. If $\phi = O_P(\mu_{n,\min}+p)$, where $a_n = O_P (b_n)$ means $a_n/b_n$ is bounded in probability,
    then $\hat{\bbeta}_{\rm SRIVW, \phi} $ in \eqref{eq: SRIVW} is consistent and asymptotically normal, i.e., as $n\to \infty$,
\begin{align}
    \mathbb{V}^{-\frac{1}{2}}\bigg\{ \sum_{j=1}^{p}M_j \bigg\}( \hat{\bbeta}_{{\rm SRIVW}, \phi} - \bbeta_0) \xrightarrow[]{d} N(\bm0, I_{K}), \label{eq: dmvmr normal}
\end{align} 
where $M_j = \bgamma_{j}\bgamma_{j}^T\sigma_{Yj}^{-2}$, $V_j = \Sigma_{Xj}\sigma_{Yj}^{-2}$,   and $\mathbb{V} = \sum_{j=1}^{p}\big\{  (1+ \bm{\beta}_0^TV_j\bm{\beta}_0) (M_j+V_j)+V_j\bm{\beta}_0\bm{\beta}_0^TV_j\big\}$. Hence, the asymptotic variance matrix of $\hat{\bbeta}_{\rm SRIVW, \phi} $ is 
$
        {\mathcal{V}}_{\rm SRIVW}  =  \big\{ \sum_{j=1}^{p}M_j \big\}^{-1}  \mathbb{V}\  \big\{ \sum_{j=1}^{p}M_j \big\}^{-1} .$
\end{theorem}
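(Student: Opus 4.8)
The plan is to reduce the statement to a central limit theorem for a sum of independent mean-zero vectors, premultiplied by a matrix factor that I will show converges, after the stated normalization, to the identity. Write $D=\sum_{j=1}^{p}M_j$ and $\hat D=\sum_{j=1}^{p}(\hat M_j-V_j)$, and note $\E(\hat D)=D$ since $\E(\hat M_j)=M_j+V_j$. Setting $\eta_j=\hat\Gamma_j-\Gamma_j$ and $\bm\delta_j=\hat\bgamma_j-\bgamma_j$ and using $\Gamma_j=\bgamma_j^T\bbeta_0$, a direct rearrangement of $\hat\bbeta_{{\rm SRIVW},\phi}-\bbeta_0=R_\phi(\hat D)^{-1}\{\sum_j\hat\bgamma_j\hat\Gamma_j\sigma_{Yj}^{-2}-R_\phi(\hat D)\bbeta_0\}$ gives
\begin{equation*}
\hat\bbeta_{{\rm SRIVW},\phi}-\bbeta_0=R_\phi(\hat D)^{-1}\Big(\sum_{j=1}^{p}\bm\xi_j-\phi\hat D^{-1}\bbeta_0\Big),\qquad \bm\xi_j=\big\{\bgamma_j\eta_j-\bgamma_j\bm\delta_j^T\bbeta_0+\bm\delta_j\eta_j-(\bm\delta_j\bm\delta_j^T-\Sigma_{Xj})\bbeta_0\big\}\sigma_{Yj}^{-2}.
\end{equation*}
The three ingredients are then (a) a CLT for $\mathbb{V}^{-1/2}\sum_j\bm\xi_j$; (b) consistency of the regularized inverse in the sense that $\mathbb{V}^{-1/2}D\,R_\phi(\hat D)^{-1}\mathbb{V}^{1/2}\xrightarrow{P}I_K$; and (c) negligibility of the regularization bias $\mathbb{V}^{-1/2}D\,R_\phi(\hat D)^{-1}\phi\hat D^{-1}\bbeta_0\xrightarrow{P}\bm0$. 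Slutsky's theorem then yields \eqref{eq: dmvmr normal}.

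For (a), each $\bm\xi_j$ has mean zero, and because $\eta_j\indep\bm\delta_j$ are Gaussian, every cross-covariance among the four summands vanishes (each surviving cross term contains an odd moment of $\eta_j$ or of $\bm\delta_j$). Evaluating the four diagonal covariances --- using $\E(\eta_j^2)=\sigma_{Yj}^2$, $\cov(\bm\delta_j)=\Sigma_{Xj}$, and the Gaussian identity $\E\{(\bm\delta_j\bm\delta_j^T-\Sigma_{Xj})\bbeta_0\bbeta_0^T(\bm\delta_j\bm\delta_j^T-\Sigma_{Xj})\}=\Sigma_{Xj}\bbeta_0\bbeta_0^T\Sigma_{Xj}+(\bbeta_0^T\Sigma_{Xj}\bbeta_0)\Sigma_{Xj}$ --- reproduces exactly $\sum_j\var(\bm\xi_j)=\mathbb{V}$. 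I then apply the Cramer--Wold device and Lindeberg's CLT to $\sum_j u^T\mathbb{V}^{-1/2}\bm\xi_j$ for an arbitrary unit vector $u$; the variance is $1$ by construction, and Lindeberg's condition follows from $\var(\bm\xi_j)\preceq C(M_j+I_K)$ together with the two-sided comparison $c(D+pI_K)\preceq\mathbb{V}\preceq C(D+pI_K)$ (itself a consequence of $V_j\succeq cI_K$, so $\sum_jV_j\succeq cpI_K$, and $\mathbb{V}\succeq D$), which bounds $\max_j\|\mathbb{V}^{-1/2}\var(\bm\xi_j)\mathbb{V}^{-1/2}\|$ by a multiple of $\{\max_j\mathrm{tr}(M_j)+1\}/(\mu_{n,\min}+p)$; this vanishes precisely under the stated condition $\max_j\gamma_{jk}^2\sigma_{Xjk}^{-2}/(\mu_{n,\min}+p)\to0$.

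For (b) and (c), the engine is concentration of $\hat D$ around $D$ in the rotation-normalized metric of Assumption 2. A second-moment computation shows $\|S_n^{-1}(\hat D-D)S_n^{-T}\|=O_P(\sqrt p/\mu_{n,\min})$, the dominant contribution coming from the quadratic fluctuation $\sum_j(\bm\delta_j\bm\delta_j^T-\Sigma_{Xj})\sigma_{Yj}^{-2}$ of $\hat D$; this is $o_P(1)$ exactly under $\mu_{n,\min}/\sqrt p\to\infty$. Since Assumption 2 makes $S_n^{-1}DS_n^{-T}$ bounded with smallest eigenvalue bounded away from $0$, it follows that $\hat D$ is invertible with probability tending to one and $S_n^TR_\phi(\hat D)^{-1}S_n\xrightarrow{P}(S_n^{-1}DS_n^{-T})^{-1}$, because the regularization term satisfies $\|S_n^{-1}\phi\hat D^{-1}S_n^{-T}\|=O_P(\phi/\mu_{n,\min}^2)=O_P((\mu_{n,\min}+p)/\mu_{n,\min}^2)=o_P(1)$. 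Re-expressing the matrix factor and the bias in the $S_n$-basis and bounding them against $\mathbb{V}^{1/2}$ via $c(D+pI_K)\preceq\mathbb{V}$ then delivers $\mathbb{V}^{-1/2}D\,R_\phi(\hat D)^{-1}\mathbb{V}^{1/2}\xrightarrow{P}I_K$ and $\|\mathbb{V}^{-1/2}D\,R_\phi(\hat D)^{-1}\phi\hat D^{-1}\bbeta_0\|\lesssim\max(\mu_{n,\min},p)^{1/2}/\mu_{n,\min}\xrightarrow{P}0$.

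The main obstacle is the anisotropy of the normalization in (b)--(c): $D$ scales like $\mu_{n,\min}$ in the weakest identified direction while $\mathbb{V}$ can scale like $p$ there (when $p\gg\mu_{n,\min}$), so crude operator-norm bounds such as $\mathbb{V}\succeq D$ alone are too lossy --- for instance they would leave a residual of order $p^2/\mu_{n,\min}^3$ in the bias that need not vanish under $\mu_{n,\min}/\sqrt p\to\infty$. The fix is to carry the sharp two-sided comparison $c(D+pI_K)\preceq\mathbb{V}\preceq C(D+pI_K)$ and to do all perturbation bookkeeping in the $S_n$-coordinates of Assumption 2, so that the inverse-perturbation error and the regularization bias are controlled direction-by-direction rather than through a single condition number. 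A secondary technical point is the fourth-moment Gaussian bound needed to establish the $O_P(\sqrt p/\mu_{n,\min})$ concentration of $\hat D$ uniformly over the $p\to\infty$ summands.
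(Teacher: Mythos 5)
Your architecture coincides with the paper's own proof (Section S3.10): your $\sum_j\bm\xi_j$ is exactly the paper's centered numerator $\sum_j(\hat\bgamma_j\hat\Gamma_j\sigma_{Yj}^{-2}-\hat M_j\bbeta_0+V_j\bbeta_0)$, your step (a) is the paper's Lemma S5 (Cramer--Wold plus Lindeberg, with the identical variance identity $\sum_j{\rm var}(\bm\xi_j)=\mathbb{V}$), and your steps (b) and (c) are the paper's claims $A_1\xrightarrow{P}I_K$ and $A_3\xrightarrow{P}\bm 0$. Step (a) is correct as sketched (with the caveat that negligibility of individual normalized variances implies Lindeberg here only because the summands are Gaussian polynomials with controlled fourth moments, which the paper supplies via a dominated-convergence argument), and your step (c), granted (b), gives the right rate $\sqrt{\mu_{n,\min}+p}/\mu_{n,\min}$.

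The genuine gap is in step (b). The only concentration estimate you record is the uniform rate $\|S_n^{-1}(\hat D-D)S_n^{-T}\|=O_P(\sqrt p/\mu_{n,\min})$, and this single worst-case number, even combined with the two-sided comparison $c(D+pI_K)\preceq\mathbb{V}\preceq C(D+pI_K)$, cannot deliver $\mathbb{V}^{-1/2}D\,R_\phi(\hat D)^{-1}\mathbb{V}^{1/2}\xrightarrow{P}I_K$. Writing $G=S_n^{-1}(\hat D-D)S_n^{-T}$ and $H=S_n^{-1}DS_n^{-T}$, the dominant error term is $\mathbb{V}^{-1/2}(\hat D-D)D^{-1}\mathbb{V}^{1/2}=[\mathbb{V}^{-1/2}S_n]\,G\,H^{-1}\,[S_n^{-1}\mathbb{V}^{1/2}]$; here $\|\mathbb{V}^{-1/2}S_n\|$ and $\|H^{-1}\|$ are $O(1)$, but $\|S_n^{-1}\mathbb{V}^{1/2}\|=\Theta\big(\sqrt{1+p/\mu_{n,\min}}\big)$, so your ingredients only yield a bound of order $(\sqrt p/\mu_{n,\min})\sqrt{p/\mu_{n,\min}}=p/\mu_{n,\min}^{3/2}$ --- precisely the kind of residual you warn about --- which diverges whenever $\sqrt p\ll\mu_{n,\min}\ll p^{2/3}$ (e.g., $\mu_{n,\min}=p^{0.6}$ gives $p^{0.1}\to\infty$), a regime squarely allowed by the theorem. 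So the two-sided comparison is \emph{not} the fix; the proof needs anisotropic information about $\hat D-D$ itself, not just about $\mathbb{V}$. The paper closes this by the direction-pair-specific bound ${\rm var}(G_{[t,l]})=O\big((\mu_{nt}+\mu_{nl}+p)/(\mu_{nt}\mu_{nl})\big)$ (Lemma S6): rows of the error attached to strongly identified directions fluctuate far less than the worst entry, and this decay aligns with the left damping factors $\sqrt{\mu_{nt}/(\mu_{nt}+p)}$ so as to offset the right inflation $\sqrt{(\mu_{ns}+p)/\mu_{ns}}$ entry by entry. Equivalently, you could keep purely uniform norms but measure concentration in the $T_n$-metric, $T_n=\tilde S_n\,{\rm diag}(\sqrt{\mu_{n1}+p},\dots,\sqrt{\mu_{nK}+p})$: then $\|T_n^{-1}(\hat D-D)T_n^{-T}\|=O_P\big(1/\sqrt{\mu_{n,\min}+p}\big)$ while $\|T_n^TD^{-1}T_n\|=O\big((\mu_{n,\min}+p)/\mu_{n,\min}\big)$, and their product $\sqrt{\mu_{n,\min}+p}/\mu_{n,\min}$ vanishes exactly under $\mu_{n,\min}/\sqrt p\to\infty$. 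One of these refinements must replace your $S_n$-metric bound; as written, (b) does not go through. A smaller point: the paper sidesteps second-order inverse-perturbation terms by proving that $A_1^{-1}$, which is affine in $\hat D$, tends to $I_K$ and then invoking continuous mapping, whereas your route through $S_n^TR_\phi(\hat D)^{-1}S_n\xrightarrow{P}H^{-1}$ would force you to control those higher-order terms with the same anisotropic care.
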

The proof can be found in Section S3.10 of the Supplement. In addition to Assumptions 1 and 2, Theorem \ref{theo: divw} assumes three additional conditions. First, ${\mu_{n,\min}}/{\sqrt{p}} \rightarrow \infty$ is a condition on the minimum IV strength, which is much weaker than the condition for the asymptotic normality of MV-IVW. Second, 
 $\max_j \gamma_{jk}^2\sigma_{Xjk}^{-2}/(\mu_{n,\min} + p) \rightarrow 0$ for all $k$ is the same assumption as in Theorem \ref{theo: ivw}. Third, $\phi = O_P(\mu_{n,\min}+p)$ is a rate condition on the tuning parameter $\phi$, which is allowed to be data-dependent. The selection of this tuning parameter will be discussed later in Section \ref{subsec: tuning}.

 Theorem \ref{theo: divw} ensures that as long as the overall IV strength is not too weak, specifically in the sense that ${\mu_{n,\min}}/{\sqrt{p}} \to \infty$, the SRIVW estimator has the stated theoretical guarantees.  
Notably, Theorem \ref{theo: divw} is established under the general asymptotic regime proposed in Section \ref{sec: asymptotics}, which allows full flexibility in the IV strength of different exposures. For example, it accommodates weak IVs arising either from weak marginal associations with each exposure or from strong marginal but weak conditional associations with each exposure.

As shown in Section S3.10 of the Supplement, 
the result in \eqref{eq: dmvmr normal} still holds if we replace $\mathbb{V}$ and $
\sum_j M_j$ with $\hat{\mathbb{V}}_\phi = \sum_{j=1}^p\{(1+\hat{\bm\beta}_{{\rm SRIVW}, \phi}^TV_j\hat{\bm\beta}_{{\rm SRIVW}, \phi})\hat{M_j}+V_j\hat{\bm\beta}_{{\rm SRIVW}, \phi}\hat{\bm\beta}_{{\rm SRIVW}, \phi}^TV_j\}$ and $R_{\phi}(\sum_{j=1}^{p}\hat M_j - V_j)$, respectively. This leads to the following consistent variance estimator for the SRIVW estimator: 
\begin{align}\label{SE: adivw}
    \hat{\mathcal{V}}_{{\rm SRIVW}, \phi} = & \left\{R_{\phi}\bigg(\sum_{j=1}^{p}\hat M_j - V_j\bigg)\right\}^{-1} \hat{\mathbb{V}}_\phi \left\{R_{\phi}\bigg(\sum_{j=1}^{p}\hat M_j - V_j\bigg)\right\}^{-1} .
\end{align}

\subsection{Selection of tuning parameter $\phi$}
\label{subsec: tuning}
Our asymptotic analysis in  Theorem \ref{theo: divw} requires that the tuning parameter satisfies $\phi = O_P(\mu_{n,\min}+p)$. In practice, 
to choose a $\phi$ satisfying this rate condition, we propose minimizing the following objective function (also known as \textit{Q}-statistic):
\begin{align}\label{J obj: no pleiotropy}
    Q(\phi) = \sum_{j=1}^{p} \frac{(\hat \Gamma_j - \hat \bgamma_j^T\hat\bbeta_{{\rm SRIVW}, \phi})^2}{\sigma_{Yj}^2 + \hat\bbeta_{{\rm SRIVW}, \phi}^T \Sigma_{Xj} \hat\bbeta_{{\rm SRIVW}, \phi}},\ \text{ subject to $\phi \in B$},
\end{align}
where $\hat\bbeta_{{\rm SRIVW}, \phi}$ is defined in \eqref{eq: SRIVW} and $B$ represents a pre-specified interval in which every $\phi\in B$ satisfies $\phi = O_P(\mu_{n,\min}+p)$. The choice of $B$ can influence the performance of the estimator; a suitable choice of $B$ will be discussed below. In \cite{sanderson2021testing} and GRAPPLE \citep{Wang2021grapple}, the objective function $\sum_{j=1}^{p} \{(\hat \Gamma_j - \hat \bgamma_j^T\bbeta)^2/(\sigma_{Yj}^2 + \bbeta^T \Sigma_{Xj} \bbeta)\}$ is minimized with respect to the entire vector $\bbeta$, but the solution has no closed-form expression. Moreover, this minimization process can be computationally intensive and numerically unstable, especially when dealing with more than a few exposures and weak IVs. In contrast, minimizing $Q(\phi)$ with respect to a single parameter $\phi$ is computationally more tractable. The minimizer of $Q(\phi)$ falls within the class of SRIVW estimators, which offers valid statistical inference under the conditions of Theorem \ref{theo: divw}. As shown in our simulation studies, the proposed estimator performs comparably to or even better than GRAPPLE and other state-of-the-art methods.

Now we discuss our choice of the interval $B$. As previously mentioned in Section \ref{sec: notation and setup}, 
we can use $\hat \lambda_{\min}/\sqrt{p}$ to approximate $\mu_{n, \min}/\sqrt{p}$, where $\hat \lambda_{\min}$ is
the smallest eigenvalue of the sample IV strength matrix $\sum_{j=1}^{p} \Omega_j^{-1} \hat \bgamma_j \hat \bgamma_j^T \Omega_j^{-T} - p I_K$.  Our choice of $B$ is $[0, e^{c - \hat \lambda_{\min}/\sqrt{p}}]$, where $c$ is a positive constant. The rationale behind this upper bound is as follows. First, it is clear that this upper bound approaches 0 in probability as $\mu_{n,\min}/\sqrt{p} \rightarrow \infty$ and thus satisfies the rate condition $O_P(\mu_{n,\min}+p)$. Second, when $\hat \lambda_{\min}/\sqrt{p} < c$, the upper bound of $B$ increases exponentially. This ensures that when IV strength is low, a wide range of $\phi$ values is considered, enabling a data-driven selection. Based on our observation that the SRIVW estimator is nearly unbiased when $\hat\lambda_{\min}/\sqrt{p} > 17$, even without any regularization, we set $c=17$ in our implementation  and minimize $Q(\phi)$  over a grid of values from $B$, specifically $0 \cup \{e^{i - \hat \lambda_{\min}/\sqrt{p}} : i=0,0.5, \dots, 17\}$.

\subsection{Extensions}
\label{subsec: bhp}

Although MVMR can adjust for known pleiotropic pathways by including additional exposure variables, it does not eliminate the possibility of residual pleiotropic effects. To address this, we extend the SRIVW estimator to accommodate a specific form of pleiotropy in MR, known as balanced horizontal pleiotropy \citep{Bowden:2017aa, Hemani:2018aa, zhao2018statistical}. Specifically, we generalize the outcome model in \eqref{eq: out} to 
 $ Y = \ \bm X^T \bm\beta_0 + \bmZ^T \bm\alpha + g(\bm U,E_Y)$, where $\bm\alpha= (\alpha_1,\dots, \alpha_p)^T$ denotes the vector of unknown direct effects of the SNPs on the outcome, and each $\alpha_j$ is assumed to follow a mean-zero normal distribution with unknown variance. 
  In Section S2, we show that the SRIVW estimator is still consistent and asymptotically normal under balanced horizontal pleiotropy. However, it generally has larger variance due to the additional uncertainty introduced by the random effects $\alpha_j$. We refer to the SRIVW estimator, along with its adjusted variance estimator under this setting, as the SRIVW-pleio method. In practice, SRIVW-pleio and the standard SRIVW produce nearly identical results when there is no pleiotropy. However, in the presence of pleiotropy, SRIVW-pleio may select a different tuning parameter by minimizing a modified \textit{Q}-statistic that accounts for pleiotropic effects, potentially leading to different point estimates and standard errors. See Section S2 in the Supplement for more details.

Finally, the SRIVW estimator can also be extended to accommodate overlapping exposure and outcome datasets; see Section S6 of the Supplement for details.

\section{Simulations}
\label{sec: simu}

To evaluate the finite-sample performance of our proposed estimator in the presence of many weak IVs, we conduct a simulation study using summary statistics from the \textsf{rawdat\_mvmr} dataset, available in the \textsf{MVMR} package \citep{sanderson2021testing} in \textsf{R}. The dataset includes  $p=145$ uncorrelated SNPs. The three exposures---low-density lipoprotein cholesterol, high-density lipoprotein cholesterol, and triglycerides---are obtained from the Global Lipids Genetics Consortium\nocite{global2013discovery}. The observed effects and estimated variances from these data are used as the true SNP-exposure associations $\bgamma_j$ and variances $\sigma_{Xjk}^2$ for $j=1,\dots, p$ and $k=1,2,3$. The outcome is systolic blood pressure from the UK Biobank, with its estimated variances used as $\sigma_{Yj}^2$ for $j=1,\dots, p$. To generate the true $\Gamma_j$'s, we let $\Gamma_j = \bgamma_j^T\bbeta_0$ for $j = 1,\dots,p$, where $\bbeta_0 = (\beta_{01},\beta_{02},\beta_{03})$ denotes the vector of true causal effects. The shared correlation matrix $\Sigma$ for the estimated SNP-exposure associations also comes from the \textsf{MVMR} package and is used to compute $\Sigma_{Xj}$ for all $j$. Further details are provided in Section S4.1 of the Supplement.

Our simulation study resembles a factorial design, with the following factors:
\begin{enumerate}
    \item {Factor 1:} True causal effects. (i) $\bbeta_0 = (0.8, 0.4, 0)$; (ii) $\bbeta_0 = (0.1, -0.5, -0.9)$.
    \item  {Factor 2:}  Relative IV strength across exposures. (i) the first exposure has weaker IV strength; (ii) all exposures have similar IV strength.
    \item {Factor 3:} IV strength parameter  ${\lambda_{\min}}/{\sqrt{p}}$: different values of  ${\lambda_{\min}}/{\sqrt{p}}$ are achieved by
 dividing the true SNP-exposure associations by a different value $D$. For Factor 2(i), we divide the $\gamma_{j1}$'s by $D$, where  $D = (2.5, 5.5, 9.25)$.  For Factor 2(ii),  all the true SNP-exposure associations are divided by $D$,  where $D = (2, 4.5, 7.5)$. As such, the values of $D$ depend on Factor 2 and are chosen to ensure comparable ${\lambda_{\min}}/{\sqrt{p}}$ across all settings. 
\end{enumerate}

For each setting, we generate $\hat \Gamma_j$'s and $\hat \bgamma_j$'s according to Assumption 1. We compare SRIVW in \eqref{eq: SRIVW} with five other MVMR methods: MV-IVW \citep{Burgess2015mvmr}, MV-Median   \citep{grant2021pleiotropy}, MV-Egger \citep{rees2017extending}, MRBEE \citep{lorincz2024mrbee}, and GRAPPLE \citep{Wang2021grapple}.  Throughout the simulations, MV-Median and MV-Egger are implemented using the \textsf{mr\_mvmedian} and \textsf{mr\_mvegger} functions from the \textsf{MendelianRandomization} package; MRBEE using the \textsf{MRBEE.IMRP} function from the \textsf{MRBEE} package;
GRAPPLE using the \textsf{grappleRobustEst} function from the \textsf{GRAPPLE} package---all with their default settings. 
We compare the simulation mean, standard deviation, average estimated standard errors, and the coverage probability of 95\% confidence intervals from normal approximation. Table 1 shows the simulation results with 10,000 repetitions when $\bbeta_0 = (0.8, 0.4, 0)$,  and the first exposure has weaker IV strength than the other two. When $D = 2.5, 5.5, 9.25$, the true ${ \lambda_{\min}}/{\sqrt{p}} \approx 103.6, 21.9, 7.7$, and the simulation mean ${\hat \lambda_{\min}}/{\sqrt{p}} \approx 103.2, 21.7, 7.6$, respectively. Below is a summary of the results.

    Across all scenarios, MV-IVW is biased and has inadequate coverage probabilities. In particular, even if the third exposure's true causal effect is 0, and the conditional $F$-statistic for that exposure is above 10, the coverage probabilities based on MV-IVW can be as low as 30\%. The poor coverage probabilities are likely due to both estimation bias and overly narrow confidence intervals. This is consistent with the theoretical analysis of MV-IVW, because $\lambda_{\rm min}/p^2 < 1$ and $\lambda_{\rm min}/p < 10$ in all settings, indicating that the estimation bias is non-negligible compared to the standard error.  MV-Egger and MV-Median have similar performance compared to MV-IVW, with slightly larger standard deviations.  Unlike these three traditional MVMR methods, GRAPPLE and MRBEE perform well in estimating $\beta_{02}$ and $\beta_{03}$. However, when estimating $\beta_{01}$ under very weak IV strength (the mean ${\hat \lambda_{\min}}/{\sqrt{p}}$ is 7.6), MRBEE is noticeably biased, and GRAPPLE has slight under-coverage. MRBEE also has the largest standard deviations among all the methods and exhibits over-coverage. In comparison, SRIVW has the best overall performance in these scenarios. It has the smallest bias and the coverage probabilities are close to the nominal level. When the mean ${\hat \lambda_{\min}}/{\sqrt{p}}$ is 7.6, the slight difference between the empirical standard deviation and the average estimated standard error for SRIVW in estimating $\beta_{01}$ is due to outliers in some simulation runs.

Section S4.1 of the Supplement contains simulation results for the remaining settings. Generally, the patterns we observe in Table 1 remain consistent across different settings. We highlight two additional findings. First, unlike univariable MR, the bias in MV-IVW can be either toward or away from zero. For instance, in the Supplemental Table S3, MV-IVW shows bias away from 0 for $\beta_{01}$ and bias toward 0 for $\beta_{02}$ and $\beta_{03}$. Second, when all exposures are weakly predicted, SRIVW remains valid, consistent with the theoretical analysis. Overall, GRAPPLE and SRIVW show comparable performance and consistently outperform the other methods.

In Section S4.3 of the Supplement, we conduct another simulation study in which we first generate individual-level data, then perform IV selection and estimate all summary statistics including \{$\Sigma_{Xj}$, $\sigma_{Yj}^2, j = 1,\dots, p$\}. The results are in the Supplemental Table S5. Again, SRIVW performs the best in terms of all metrics. Notably, its standard errors adequately capture the variability of the estimates, confirming that ignoring the uncertainty in estimating $\Sigma_{Xj}$ and $\sigma_{Yj}^2$ has negligible impact on inference.

\section{Real data applications}
\label{sec: real}

To evaluate the performance of our proposed method and compare it against existing approaches---including MV-IVW, MV-Median, MV-Egger, GRAPPLE, and MRBEE---we apply these methods to 
five real data examples that represent a range of realistic scenarios. Example 1 investigates the causal effects of lipoprotein(a) and low-density lipoprotein cholesterol on coronary artery disease. Example 2 examines the effects of body fat mass and fat-free mass on atrial fibrillation. Example 3 evaluates the effects of adult and childhood body mass index (BMI) on breast cancer. Example 4 explores the effects of visceral adipose tissue and gluteofemoral adipose tissue on coronary artery disease, with each analysis controlling for BMI. Finally, Example 5 estimates the effects of 19 lipoprotein subfraction traits on coronary artery disease, with each analysis controlling for traditional lipid profiles. These examples are ordered by overall IV strength, with Example 1 having the strongest IV strength. 

In the first four examples, we adopt a two-sample design, while Example 5 follows a three-sample design as described in \cite{zhao2021Mendelian}. We identify SNPs associated with at least one of the exposures using a p-value threshold of 5e-8 (except for Example 5, where we use a threshold of 1e-4 as in \cite{zhao2021Mendelian}), and apply LD clumping (minimum distance of 10 megabase pairs and LD $r^2 \le 0.001$) to select a set of uncorrelated SNPs. Finally, the exposure and outcome datasets are harmonized to include only the selected instruments. In all analyses, we assume independence between the exposure and outcome datasets. Detailed information on the data sources can be found in Section S5 of the Supplement.

Figure 2 presents the main results from the five examples, with the remaining results presented in Section S5 of the Supplement. We report both $\hat \lambda_{\min}/\sqrt{p}$ and conditional $F$-statistics to assess IV strength, where $\hat \lambda_{\min}$ is the minimum eigenvalue of the sample IV strength matrix, as defined in Section \ref{sec: asymptotics}. We highlight several key observations below. 

First, the proposed IV strength metric $\hat \lambda_{\min}/\sqrt{p}$ can provide a good assessment of overall IV strength and serve as a useful guide for applying MVMR methods.

Second, when IV strength is moderate to large (e.g., Examples 1-2), estimates from all methods are similar and generally point in the same direction. 

Third, when IV strength weakens (as observed progressively from Example 3 to Example 5), the point estimates of MV-IVW, MV-Egger, and MV-Median diverge notably from those obtained by
GRAPPLE, MRBEE, and SRIVW. While the true effects are unknown, this divergence likely indicates bias due to weak IVs, 
as suggested by our simulation results with similar IV strength. Notably, this bias does not necessarily trend toward zero and can sometimes lead to estimates in the opposite direction, for instance, in the estimated effects of adult BMI in Examples 3 and 4. 

Fourth, confidence intervals from MV-IVW are significantly narrower than those from GRAPPLE, MRBEE, and SRIVW, supporting our theoretical results that MV-IVW confidence intervals can be overly tight. In Example 3, the IVW method's significant protective effect of adult BMI on breast cancer is likely a spurious result. Similarly, confidence intervals from MV-Egger and MV-Median exhibit the same issue as MV-IVW.

Finally, across different applications, the point estimates of GRAPPLE, MRBEE, and SRIVW are generally similar, but their significance levels may differ. For example, in Example 4, GRAPPLE suggests a significant causal effect of BMI, whereas MRBEE and SRIVW do not. Additionally, MRBEE typically produces wider confidence intervals than the other two methods. Furthermore, SRIVW-pleio generally produces wider confidence intervals than SRIVW, better reflecting the additional variability introduced by pleiotropy.

The conclusions obtained from SRIVW in these applications align with previous findings. For instance, the results for Example 1 are consistent with \cite{burgess2018association}, Example 2 with \cite{tikkanen2019body}, Example 3 with \cite{richardson2020use, hao2023reassessing, wu2024causal}, Example 4 with \cite{chen2022effect}, and Example 5 with \cite{zhao2021Mendelian}.

\section{Discussion}
\label{sec: discussion}

The implications of our developments extend beyond MVMR and apply more broadly to IV approaches with multiple exposures, such as the two-stage least squares method. Furthermore, our proofs address several technical challenges arising from the generality of our asymptotic framework, including the handling of varying and unspecified growth rates of the $\mu_{nk}$'s, providing insights that could inform the theoretical analysis of future methods.  Our methods are implemented in the \textsf{mvmr.srivw} function in the \textsf{R} package \textsf{mr.divw}, available at https://github.com/tye27/mr.divw.

Motivated by our theoretical analysis, we introduce $\sum_{j=1}^{p} \Omega_j^{-1} \bgamma_j \bgamma_j^T \Omega_j^{-T}$  as an IV strength matrix for two-sample summary-data MVMR. In practice, we recommend applying the proposed SRIVW method only when the minimum eigenvalue of its sample analog, scaled by the square root of the number of SNPs, denoted by $\hat\lambda_{\min}/\sqrt{p}$, is at least 7, to ensure that the promised theoretical guarantees hold. However, this recommendation is based on empirical evidence from our simulation study and real data applications, and further research is needed to assess its robustness.
Moreover, both our theoretical analysis in Section 3 and the simulation results indicate that a conditional $F$-statistic above 10 for a single exposure is insufficient to ensure reliable inference of its causal effect in MVMR, even when the true causal effect is zero. Therefore, assessing overall IV strength is crucial in practice, and our metric, $\hat \lambda_{\min}/\sqrt{p}$, provides a simple and effective measure.

\spacingset{1.2}
\bibliographystyle{apalike}
\bibliography{reference}

\begin{table}[h]

\caption{Simulation results for six MVMR estimators in the main simulation study with 10,000 repetitions, when $\bbeta_0 = (\beta_{01}, \beta_{02}, \beta_{03}) = (0.8, 0.4, 0)$,  and the first exposure has weaker IV strength than the other two. The number of SNPs is $p = 145$. $\hat \lambda_{\min}$ represents the minimum eigenvalue of the sample IV strength matrix $\sum_{j=1}^{p} \Omega_j^{-1} \hat\bgamma_j \hat\bgamma_j^T \Omega_j^{-T} - pI_K$. Abbreviations: Est = estimated causal effect; SD = standard deviation; SE = standard error; CP = coverage probability.}
\centering
\resizebox{\linewidth}{!}{
\begin{tabular}[t]{c|cccc|cccc|cccc}
\toprule
\multicolumn{1}{c}{ } & \multicolumn{4}{c}{$\beta_{01}$ = 0.8} & \multicolumn{4}{c}{$\beta_{02}$ = 0.4} & \multicolumn{4}{c}{$\beta_{03}$ = 0} \\
\cmidrule(l{3pt}r{3pt}){2-5} \cmidrule(l{3pt}r{3pt}){6-9} \cmidrule(l{3pt}r{3pt}){10-13}
Estimator & Est & SD & SE & CP & Est & SD & SE & CP & Est & SD & SE & CP\\
\midrule
\addlinespace[0.3em]
\multicolumn{13}{l}{\textit{Mean $\hat \lambda_{\min}/\sqrt{p}$ = 103.4, mean conditional $F$-statistics = 9.8, 38.4, 23.5}}\\
\hspace{1em}\cellcolor{gray!6}{MV-IVW} & \cellcolor{gray!6}{0.720} & \cellcolor{gray!6}{0.026} & \cellcolor{gray!6}{0.028} & \cellcolor{gray!6}{0.179} & \cellcolor{gray!6}{0.395} & \cellcolor{gray!6}{0.009} & \cellcolor{gray!6}{0.009} & \cellcolor{gray!6}{0.909} & \cellcolor{gray!6}{-0.015} & \cellcolor{gray!6}{0.012} & \cellcolor{gray!6}{0.013} & \cellcolor{gray!6}{0.781}\\
\hspace{1em}MV-Egger & 0.739 & 0.035 & 0.037 & 0.620 & 0.395 & 0.009 & 0.008 & 0.885 & -0.016 & 0.013 & 0.013 & 0.773\\
\hspace{1em}\cellcolor{gray!6}{MV-Median} & \cellcolor{gray!6}{0.723} & \cellcolor{gray!6}{0.036} & \cellcolor{gray!6}{0.042} & \cellcolor{gray!6}{0.556} & \cellcolor{gray!6}{0.395} & \cellcolor{gray!6}{0.011} & \cellcolor{gray!6}{0.013} & \cellcolor{gray!6}{0.956} & \cellcolor{gray!6}{-0.014} & \cellcolor{gray!6}{0.017} & \cellcolor{gray!6}{0.019} & \cellcolor{gray!6}{0.914}\\
\hspace{1em}GRAPPLE & 0.798 & 0.031 & 0.032 & 0.953 & 0.400 & 0.009 & 0.009 & 0.956 & -0.001 & 0.014 & 0.014 & 0.955\\
\hspace{1em}\cellcolor{gray!6}{MRBEE} & \cellcolor{gray!6}{0.805} & \cellcolor{gray!6}{0.033} & \cellcolor{gray!6}{0.034} & \cellcolor{gray!6}{0.946} & \cellcolor{gray!6}{0.400} & \cellcolor{gray!6}{0.010} & \cellcolor{gray!6}{0.010} & \cellcolor{gray!6}{0.938} & \cellcolor{gray!6}{0.001} & \cellcolor{gray!6}{0.014} & \cellcolor{gray!6}{0.015} & \cellcolor{gray!6}{0.950}\\
\hspace{1em}SRIVW & 0.803 & 0.033 & 0.033 & 0.954 & 0.400 & 0.009 & 0.010 & 0.951 & 0.000 & 0.014 & 0.014 & 0.953\\
\addlinespace[0.3em]
\multicolumn{13}{l}{\textit{Mean $\hat \lambda_{\min}/\sqrt{p}$ = 21.7, mean conditional $F$-statistics = 2.9, 28.3, 14.7}}\\
\hspace{1em}\cellcolor{gray!6}{MV-IVW} & \cellcolor{gray!6}{0.524} & \cellcolor{gray!6}{0.044} & \cellcolor{gray!6}{0.047} & \cellcolor{gray!6}{0.000} & \cellcolor{gray!6}{0.393} & \cellcolor{gray!6}{0.008} & \cellcolor{gray!6}{0.008} & \cellcolor{gray!6}{0.858} & \cellcolor{gray!6}{-0.023} & \cellcolor{gray!6}{0.011} & \cellcolor{gray!6}{0.011} & \cellcolor{gray!6}{0.440}\\
\hspace{1em}MV-Egger & 0.563 & 0.064 & 0.066 & 0.056 & 0.393 & 0.008 & 0.008 & 0.864 & -0.023 & 0.011 & 0.012 & 0.480\\
\hspace{1em}\cellcolor{gray!6}{MV-Median} & \cellcolor{gray!6}{0.510} & \cellcolor{gray!6}{0.064} & \cellcolor{gray!6}{0.064} & \cellcolor{gray!6}{0.008} & \cellcolor{gray!6}{0.395} & \cellcolor{gray!6}{0.010} & \cellcolor{gray!6}{0.011} & \cellcolor{gray!6}{0.942} & \cellcolor{gray!6}{-0.025} & \cellcolor{gray!6}{0.015} & \cellcolor{gray!6}{0.016} & \cellcolor{gray!6}{0.659}\\
\hspace{1em}GRAPPLE & 0.787 & 0.077 & 0.076 & 0.939 & 0.400 & 0.009 & 0.009 & 0.957 & -0.001 & 0.014 & 0.014 & 0.954\\
\hspace{1em}\cellcolor{gray!6}{MRBEE} & \cellcolor{gray!6}{0.829} & \cellcolor{gray!6}{0.104} & \cellcolor{gray!6}{0.105} & \cellcolor{gray!6}{0.970} & \cellcolor{gray!6}{0.400} & \cellcolor{gray!6}{0.010} & \cellcolor{gray!6}{0.010} & \cellcolor{gray!6}{0.946} & \cellcolor{gray!6}{0.003} & \cellcolor{gray!6}{0.015} & \cellcolor{gray!6}{0.016} & \cellcolor{gray!6}{0.960}\\
\hspace{1em}SRIVW & 0.818 & 0.099 & 0.098 & 0.960 & 0.400 & 0.010 & 0.010 & 0.955 & 0.002 & 0.015 & 0.015 & 0.957\\
\addlinespace[0.3em]
\multicolumn{13}{l}{\textit{Mean $\hat \lambda_{\min}/\sqrt{p}$ = 7.6, mean conditional $F$-statistics = 1.7, 29.3, 17.2}}\\
\hspace{1em}\cellcolor{gray!6}{MV-IVW} & \cellcolor{gray!6}{0.330} & \cellcolor{gray!6}{0.053} & \cellcolor{gray!6}{0.054} & \cellcolor{gray!6}{0.000} & \cellcolor{gray!6}{0.393} & \cellcolor{gray!6}{0.007} & \cellcolor{gray!6}{0.007} & \cellcolor{gray!6}{0.827} & \cellcolor{gray!6}{-0.023} & \cellcolor{gray!6}{0.009} & \cellcolor{gray!6}{0.009} & \cellcolor{gray!6}{0.303}\\
\hspace{1em}MV-Egger & 0.363 & 0.083 & 0.082 & 0.001 & 0.393 & 0.007 & 0.007 & 0.855 & -0.023 & 0.009 & 0.010 & 0.371\\
\hspace{1em}\cellcolor{gray!6}{MV-Median} & \cellcolor{gray!6}{0.310} & \cellcolor{gray!6}{0.075} & \cellcolor{gray!6}{0.069} & \cellcolor{gray!6}{0.000} & \cellcolor{gray!6}{0.395} & \cellcolor{gray!6}{0.009} & \cellcolor{gray!6}{0.010} & \cellcolor{gray!6}{0.944} & \cellcolor{gray!6}{-0.026} & \cellcolor{gray!6}{0.013} & \cellcolor{gray!6}{0.014} & \cellcolor{gray!6}{0.557}\\
\hspace{1em}GRAPPLE & 0.771 & 0.158 & 0.154 & 0.917 & 0.400 & 0.009 & 0.009 & 0.961 & -0.002 & 0.015 & 0.015 & 0.946\\
\hspace{1em}\cellcolor{gray!6}{MRBEE} & \cellcolor{gray!6}{0.912} & \cellcolor{gray!6}{0.280} & \cellcolor{gray!6}{0.419} & \cellcolor{gray!6}{0.975} & \cellcolor{gray!6}{0.396} & \cellcolor{gray!6}{0.031} & \cellcolor{gray!6}{0.013} & \cellcolor{gray!6}{0.939} & \cellcolor{gray!6}{0.006} & \cellcolor{gray!6}{0.021} & \cellcolor{gray!6}{0.028} & \cellcolor{gray!6}{0.976}\\
\hspace{1em}SRIVW & 0.785 & 0.159 & 0.221 & 0.953 & 0.400 & 0.009 & 0.010 & 0.965 & -0.001 & 0.015 & 0.017 & 0.957\\
\bottomrule
\end{tabular}}
\end{table}

\begin{figure}[h]
    \centering\includegraphics[width=1\textwidth]{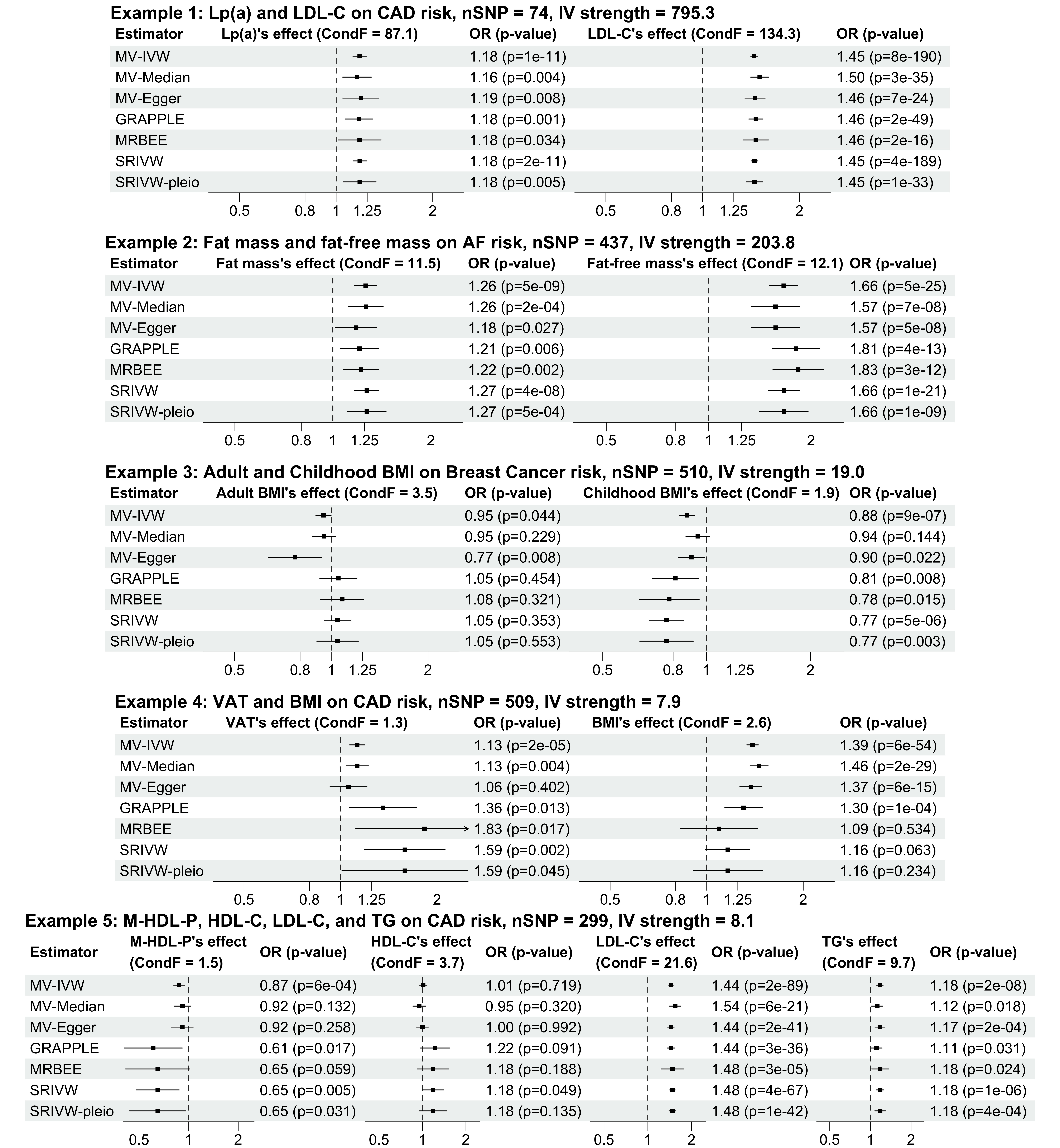}
    \caption{{Estimated effects of exposures on outcomes from Examples 1-5 across different estimators. The odds ratios (OR) and corresponding p-values are reported, with horizontal bars representing 95\% confidence intervals. IV strength is calculated as $\hat \lambda_{\mathrm{min}}/\sqrt{p}$, where $\hat \lambda_{\mathrm{min}}$ represents the minimum eigenvalue of the sample IV strength matrix. Abbreviations: Lp(a) (lipoprotein(a)), LDL-C (low-density lipoprotein cholesterol), HDL-C (high-density lipoprotein cholesterol), TG (Triglycerides), CAD (coronary artery disease), AF (atrial fibrillation),  VAT (visceral adipose tissue),  M-HDL-P (medium HDL particle), nSNP (number of SNPs as IVs), CondF (conditional $F$-statistic).}}
\end{figure}

\clearpage

\pagenumbering{arabic}
\setcounter{equation}{0}
\setcounter{table}{0}
\setcounter{section}{0}
\setcounter{lemma}{0}
\setcounter{assumption}{2}
\setcounter{theorem}{2}
\renewcommand{\theequation}{S\arabic{equation}}
\renewcommand{\thetable}{S\arabic{table}}
\renewcommand{\thelemma}{S\arabic{lemma}}
\renewcommand{\thesection}{S\arabic{section}}

\def\spacingset#1{\renewcommand{\baselinestretch}%
{#1}\small\normalsize} \spacingset{1}

\if1\blind
{
	 \begin{center} 
	\spacingset{1.5} 	{\LARGE\bf  Supplement to ``A More Robust Approach to Multivariable Mendelian Randomization''} \\ \bigskip \bigskip
	\spacingset{1} 
	{\large Yinxiang Wu$ ^1 $, Hyunseung Kang$ ^2 $, and Ting Ye$ ^1 $ } \\ \bigskip
	 {$ ^1 $Department of Biostatistics, University of Washington, Seattle, Washington, U.S.A. \\ 
    $ ^2 $Department of Statistics, University of Wisconsin-Madison, Madison, Wisconsin, U.S.A.\\}
\end{center}
} \fi

\if0\blind
{
  \bigskip \spacingset{1.5} 
  \begin{center}
     {\LARGE \bf Supplement to ``A More Robust Approach to Multivariable Mendelian Randomization''}
\end{center}
  \medskip
} \fi

\section{Notations, useful results and lemmas}\label{sec: notation}

Throughout the Supplement, $\bm0$ denotes a vector or a matrix of all zeros, $I_{K}$ denotes an $K \times K$ identity matrix, a matrix $A \ge 0$ mean that $A$ is a positive semi-definite matrix, with strict inequality if $A$ is positive definite. We refer to the entry located at $t$-th row and $s$-th column of a matrix $A$ as the $(t,s)$ element of that matrix, denoted as $A_{[t,s]}$. We let $\lambda_{\min}$ and $\lambda_{\max}$ denote the smallest and largest eigenvalue of a square matrix. We use $c$ and $C$ to denote generic positive constants. For two sequences of real numbers $a_n$ and $b_n$ indexed by $n$, we write $a_n = O(b_n)$ if $|a_n| \le c|b_n|$ for all $n$ and a constant $c$, $a_n = o(b_n)$ if $a_n/b_n \rightarrow 0$ as $n \rightarrow \infty$, $a_n = \Theta (b_n)$ if $c^{-1}b_n \le |a_n| \le c b_n$ for all $n$ and a constant $c$. We say a sequence of matrices is $O(1)$ or $\Theta(1)$ if every entry is so. We use $\xrightarrow{P}$ to denote convergence in probability and $\xrightarrow{D}$ to denote convergence in distribution. For random variables $X$ and $Y$, we denote $X = o_p(Y)$ if $X/Y \xrightarrow[]{P} 0$, $X = O_p(Y)$ if $X/Y$ is bounded in probability. To ease the notation, unless otherwise specified, we use $\sum_j$ to denote $\sum_{j = 1}^{p}$.

Also recall the following definitions: $M_j = \bgamma_j\bgamma_j^T\sigma_{Yj}^{-2}$, $\hat M_j = \hat \bgamma_j\hat \bgamma_j^T\sigma_{Yj}^{-2}$, $V_j = \Sigma_{Xj}\sigma_{Yj}^{-2}$, $\mathbb{V} = \sum_j \{(1 + \bm{\beta}_0^TV_j\bm{\beta}_0)(M_j + V_j) + V_j\bm{\beta}_0\bm{\beta}_0^TV_j\}$.

In the following, we present some useful results and establish lemmas that will be repeatedly used in the proofs. We start with some useful results that can be directly derived using probability theory and algebras:
\begin{align}\label{eq: Vs}
    E\left \{\sum_j \hat M_j \right \} = \sum_j (M_j + V_j) 
\end{align}
\begin{align} \label{eq: bias}
    E\left \{\sum_{j}(\hat \bgamma_j\hat \Gamma_j\sigma_{Yj}^{-2} - \hat M_j\bbeta_0)\right \} = -\sum_j V_j\bbeta_0
\end{align}
and 
\begin{align}
    Cov\left \{\sum_{j}(\hat \bgamma_j\hat \Gamma_j\sigma_{Yj}^{-2} - \hat M_j\bbeta_0)\right \} = \mathbb{V} \label{eq: Vt}.
\end{align}

Next, we establish several useful lemmas.

\begin{lemma}\label{lemma0}
 Let $A \in \mathbb{R}^{p\times q}$ be a real-valued matrix and $B \in \mathbb{R}^{q\times q}$ be a positive definite matrix. Then, we have $\lambda_{\min}(ABA^T) \ge \lambda_{\min}(B) \lambda_{\min}(AA^T)$, and $\lambda_{\max}(ABA^T) \le \lambda_{\max}(B) \lambda_{\max}(AA^T)$.
\end{lemma}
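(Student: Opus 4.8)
The plan is to prove both inequalities through the Rayleigh quotient (Courant--Fischer) characterization of the extreme eigenvalues of a symmetric matrix. First I would observe that $ABA^T$ is symmetric: since $B$ is positive definite it is symmetric, so $(ABA^T)^T = A B^T A^T = ABA^T$. It is also positive semidefinite, because $x^T ABA^T x = (A^T x)^T B (A^T x) \ge 0$ for every $x$. Hence its eigenvalues are real and nonnegative, and the variational identities $\lambda_{\min}(ABA^T) = \min_{\|x\|=1} x^T ABA^T x$ and $\lambda_{\max}(ABA^T) = \max_{\|x\|=1} x^T ABA^T x$ are available. The same identities applied to the symmetric matrices $B$ and $AA^T$ supply the spectral bounds $y^T B y \ge \lambda_{\min}(B)\|y\|^2$, $y^T B y \le \lambda_{\max}(B)\|y\|^2$, and the analogous statements for $AA^T$, which are the only tools needed.

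For the lower bound I would fix a unit vector $x$ and set $y = A^T x$, so that $x^T ABA^T x = y^T B y$ while $\|y\|^2 = x^T AA^T x$. The spectral bound for $B$ then gives $x^T ABA^T x = y^T B y \ge \lambda_{\min}(B)\, x^T AA^T x$. Because $B$ is positive definite, $\lambda_{\min}(B) > 0$, so multiplying the inequality $x^T AA^T x \ge \lambda_{\min}(AA^T)\|x\|^2$ through by $\lambda_{\min}(B)$ preserves its direction and yields $x^T ABA^T x \ge \lambda_{\min}(B)\lambda_{\min}(AA^T)$. Minimizing the left-hand side over unit vectors $x$ delivers $\lambda_{\min}(ABA^T) \ge \lambda_{\min}(B)\lambda_{\min}(AA^T)$.

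The upper bound is entirely symmetric. With the same substitution $y = A^T x$ I would use $x^T ABA^T x = y^T B y \le \lambda_{\max}(B)\, x^T AA^T x$, and since $\lambda_{\max}(B) > 0$ this may be chained with $x^T AA^T x \le \lambda_{\max}(AA^T)\|x\|^2$ to give $x^T ABA^T x \le \lambda_{\max}(B)\lambda_{\max}(AA^T)$ for every unit $x$; maximizing over $x$ gives the second claim.

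The only point requiring care---and the sole place where positive definiteness of $B$ is used beyond symmetry---is the sign bookkeeping when chaining the two inequalities: the multipliers $\lambda_{\min}(B)$ and $\lambda_{\max}(B)$ must be nonnegative for the multiplications to preserve the inequality directions, which positive definiteness guarantees. Consequently I do not expect a genuine obstacle here; the lemma follows from a double application of the Rayleigh quotient linked by the substitution $y = A^T x$.
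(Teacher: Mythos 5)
Your proof is correct and takes essentially the same route as the paper's: both arguments use the Rayleigh quotient characterization of the extreme eigenvalues and chain the spectral bound $y^T B y \gtrless \lambda(B)\|y\|^2$ with $y = A^T x$ against the corresponding bound for $AA^T$, using positivity of $\lambda_{\min}(B)$ and $\lambda_{\max}(B)$ to preserve the inequality directions. Your explicit remark on the sign bookkeeping is a point the paper leaves implicit, but the substance of the two proofs is the same.
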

\begin{proof}
    By definition, the minimum eigenvalue of a symmetric $p\times p$ matrix $M$ satisfies
    \begin{align*}
        \lambda_{\min}(M) = {\rm min}_{x\in \mathbb{R}^p}(\frac{x^T M x}{x^Tx}), 
    \end{align*}
where $x \in \mathbb{R}^p$. Take any $x \in \mathbb{R}^p$, then
    \begin{align*}
        x^T A B A^T x \ge \lambda_{\min}(B) x^T A A^T x \ge \lambda_{\min}(B)\lambda_{\min}(AA^T)x^Tx
    \end{align*}
Hence, $\lambda_{\min}(A B A^T) = {\rm min}_{x\in \mathbb{R}^p}(x^T A B A^T x/x^Tx) \ge \lambda_{\min}(B)\lambda_{\min}(AA^T)$

On the other hand, by definition, the maximum eigenvalue of a symmetric $p\times p$ matrix $M$ satisfies
    \begin{align*}
        \lambda_{\max}(M) = {\rm max}_{x\in \mathbb{R}^p}(\frac{x^T M x}{x^Tx}), 
    \end{align*}
where $x \in \mathbb{R}^p$. Take any $x \in \mathbb{R}^p$,
    \begin{align*}
        x^T A B A^T x \le \lambda_{\max}(B) x^T A A^T x \le \lambda_{\max}(B)\lambda_{\max}(AA^T)x^Tx
    \end{align*}
Hence, $\lambda_{\max}(ABA^T) = {\rm max}_{x\in \mathbb{R}^p}(x^T A B A^T x/x^Tx) \le \lambda_{\max}(B)\lambda_{\max}(AA^T)$
\end{proof}

\begin{lemma}\label{lemma1}
    Let $T_n = \tilde S_n {\rm diag}(\sqrt{\mu_{n1} + p},...,\sqrt{\mu_{nK} + p})$. Under Assumptions 1-2, $T_n^{-1}(\sum_j M_j + V_j)T_n^{-T}$ and $T_n^{-1} \mathbb{V} T_n^{-T}$ are bounded element-wise with minimum eigenvalues bounded away from 0.
\end{lemma}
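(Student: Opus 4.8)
The plan is to leverage Assumption \ref{assump: 2}, which controls the rescaled matrix $S_n^{-1}(\sum_j M_j)S_n^{-T}$, and to show that the only change from $S_n$ to $T_n$ --- replacing the scaling $\sqrt{\mu_{nk}}$ by $\sqrt{\mu_{nk}+p}$ --- is exactly absorbed by the contribution of $\sum_j V_j$, whose eigenvalues turn out to be $\Theta(p)$. Write $D_\mu={\rm diag}(\sqrt{\mu_{n1}},\dots,\sqrt{\mu_{nK}})$ and $D_{\mu+p}={\rm diag}(\sqrt{\mu_{n1}+p},\dots,\sqrt{\mu_{nK}+p})$, so $S_n^{-1}=D_\mu^{-1}\tilde S_n^{-1}$ and $T_n^{-1}=D_{\mu+p}^{-1}\tilde S_n^{-1}$, and set $A=\tilde S_n^{-1}(\sum_j M_j)\tilde S_n^{-T}$ and $B=\tilde S_n^{-1}(\sum_j V_j)\tilde S_n^{-T}$. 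Since $K$ is fixed, a $K\times K$ positive semi-definite matrix is bounded element-wise if and only if its largest eigenvalue is bounded, so I move freely between the two notions.

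First I would record two preliminary facts. By Assumption \ref{assump: 2}(i), $\tilde S_n\tilde S_n^T$ has smallest eigenvalue bounded away from zero and $\tilde S_n$ is element-wise bounded with $K$ fixed, so $\tilde S_n^{-1}$ is bounded in operator norm and $\lambda_{\min}(\tilde S_n^{-1}\tilde S_n^{-T})$ is bounded away from zero. Second, writing $V_j=\tilde D_j\Sigma\tilde D_j$ with $\tilde D_j={\rm diag}(\sigma_{Xj1}/\sigma_{Yj},\dots,\sigma_{XjK}/\sigma_{Yj})$, the bounded variance ratios in Assumption \ref{assump: 1}(iii), the positive definiteness of $\Sigma$, and Lemma \ref{lemma0} yield constants $0<c\le C$ with $cI_K\preceq V_j\preceq CI_K$ for every $j$; summing gives $cp\,I_K\preceq\sum_j V_j\preceq Cp\,I_K$. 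Combining these two facts with Lemma \ref{lemma0} gives $B\succeq c_1 p\,I_K$ and $\lambda_{\max}(B)\le C_1 p$ for positive constants $c_1,C_1$.

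For $\sum_j(M_j+V_j)$ I would establish the two conclusions separately. For element-wise boundedness, Assumption \ref{assump: 2}(ii) gives $|A_{[k,l]}|\le C\sqrt{\mu_{nk}\mu_{nl}}$, so the $(k,l)$ entry of $T_n^{-1}(\sum_j M_j)T_n^{-T}=D_{\mu+p}^{-1}AD_{\mu+p}^{-1}$ is at most $|A_{[k,l]}|/\sqrt{\mu_{nk}\mu_{nl}}\le C$ because $\mu_{nk}+p\ge\mu_{nk}$, while the entries of $D_{\mu+p}^{-1}BD_{\mu+p}^{-1}$ are at most $C_1p/p=C_1$. The delicate conclusion is the lower bound on the smallest eigenvalue. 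Rewriting Assumption \ref{assump: 2}(ii) as $D_\mu^{-1}AD_\mu^{-1}\succeq c'I_K$ and conjugating by $D_\mu$ gives $A\succeq c'D_\mu^2=c'\,{\rm diag}(\mu_{n1},\dots,\mu_{nK})$; together with $B\succeq c_1 p\,I_K$ this yields $A+B\succeq{\rm diag}(c'\mu_{nk}+c_1 p)$, whence
\begin{align*}
T_n^{-1}\left(\sum_j (M_j+V_j)\right)T_n^{-T}=D_{\mu+p}^{-1}(A+B)D_{\mu+p}^{-1}\succeq {\rm diag}\!\left(\frac{c'\mu_{nk}+c_1 p}{\mu_{nk}+p}\right)\succeq \min(c',c_1)\,I_K.
\end{align*}
The mechanism is the elementary inequality $\tfrac{c'\mu_{nk}+c_1 p}{\mu_{nk}+p}\ge\min(c',c_1)$: in directions where $\mu_{nk}$ is small relative to $p$, the $M$-part is over-damped by the $\sqrt{\mu_{nk}+p}$ scaling, and it is precisely the uniform $\Theta(p)$ mass of $\sum_j V_j$ that keeps the smallest eigenvalue away from zero.

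Finally, I would transfer both conclusions to $\mathbb{V}$ by a sandwich. Since $0\le\bbeta_0^T V_j\bbeta_0\le C$ (from $V_j\preceq CI_K$ and the fixed $\bbeta_0$ with $\|\bbeta_0\|=O(1)$), the factor $1+\bbeta_0^T V_j\bbeta_0$ lies in $[1,C']$, so $\sum_j(M_j+V_j)\preceq\mathbb{V}\preceq C'\sum_j(M_j+V_j)+W$ with $W=\sum_j V_j\bbeta_0\bbeta_0^T V_j$. The lower bound gives $\lambda_{\min}(T_n^{-1}\mathbb{V}T_n^{-T})\ge\min(c',c_1)$ from the previous step. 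For the upper bound, $W=\sum_j (V_j\bbeta_0)(V_j\bbeta_0)^T$ has $\lambda_{\max}(W)\le\sum_j\|V_j\bbeta_0\|^2\le Cp$, and since $\|T_n^{-T}x\|^2\le C\sum_k x_k^2/(\mu_{nk}+p)\le (C/p)\|x\|^2$, one gets $x^T T_n^{-1}W T_n^{-T}x\le\lambda_{\max}(W)\|T_n^{-T}x\|^2\le C\|x\|^2$; combined with the boundedness of $T_n^{-1}(\sum_j(M_j+V_j))T_n^{-T}$ this bounds $T_n^{-1}\mathbb{V}T_n^{-T}$. I expect the main obstacle to be the smallest-eigenvalue bound of the third paragraph, since it cannot be obtained from the $M$-part alone and hinges on the interplay between the rescaling by $\sqrt{\mu_{nk}+p}$ and the uniform $\Theta(p)$ strength supplied by $\sum_j V_j$; the remaining estimates are routine consequences of Assumptions \ref{assump: 1}--\ref{assump: 2} and Lemma \ref{lemma0}.
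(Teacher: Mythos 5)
Your proof is correct, and it takes a genuinely different route from the paper's. The paper proves this lemma by a trichotomy on the relative rates of the $\mu_{nk}$'s versus $p$ (all $\mu_{nk}/p$ bounded away from zero, all tending to zero, or a mixed case), analyzing $T_n^{-1}(\sum_j M_j)T_n^{-T}$ and $T_n^{-1}(\sum_j V_j)T_n^{-T}$ separately in each regime and invoking Weyl's inequality plus a block-diagonal structure in the mixed case. You instead give a single unified argument: Assumption~2(ii) rewritten as $A \succeq c' D_\mu^2$, the uniform bound $c I_K \preceq V_j \preceq C I_K$ giving $B \succeq c_1 p I_K$, and then the scalar inequality $\frac{c'\mu_{nk}+c_1 p}{\mu_{nk}+p} \ge \min(c',c_1)$ closing the eigenvalue bound after conjugation by $D_{\mu+p}^{-1}$. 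This buys two things: (i) no case analysis at all, and (ii) no implicit assumption that the limits $\mu_{nk}/p$ exist --- the paper's three scenarios tacitly presume convergent ratios (strictly one would need a subsequence argument there), whereas your inequality holds uniformly in $n$ regardless of how the ratios behave. Your treatment of $\mathbb{V}$ is also somewhat more explicit than the paper's: where the paper decomposes $T_n^{-1}\mathbb{V}T_n^{-T}$ into three summands and asserts boundedness "by similar arguments," you give the concrete sandwich $\sum_j(M_j+V_j) \preceq \mathbb{V} \preceq C'\sum_j(M_j+V_j)+W$ together with $\lambda_{\max}(W)\le Cp$ and $\|T_n^{-T}x\|^2 \le (C/p)\|x\|^2$. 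What the paper's regime-by-regime analysis buys in exchange is interpretive transparency --- it makes visible which of the two terms dominates in which asymptotic regime (a structure that is reused in later lemmas of the supplement) --- but as a proof of this particular statement your argument is cleaner and slightly more general.
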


\begin{proof}
Define \(\mu_{n,\text{min}} = \min\{\mu_{n1},\ldots,\mu_{nK}\}\) and \(\mu_{n,\text{max}} = \max\{\mu_{n1},\ldots,\mu_{nK}\}\). We first show that \(T_n^{-1}(\sum_j M_j + V_j)T_n^{-T}\) is element-wise bounded, and its minimum eigenvalue is bounded away from 0. To establish this, it suffices to consider three scenarios: (1) \(\frac{\mu_{n,\text{min}}}{p} \rightarrow \tau\) for some \(\tau \in (0, \infty]\), (2) \(\frac{\mu_{n,\text{max}}}{p} \rightarrow 0\), and (3) assuming, without loss of generality, there exists an index \(l\) (where \(1 \leq l \leq K-1\)) such that for all \(k \leq l\), \(\frac{\mu_{nk}}{p} \rightarrow \tau_k\) with \(\tau_k \in (0,\infty]\), and for all \(k > l\), \(\frac{\mu_{nk}}{p} \rightarrow 0\). In the third scenario, we categorize the indices based on the convergence rates of their respective \(\mu_k\)'s.

   Let $\tilde \bgamma_j = \tilde S_n^{-1} \bgamma_j$ and $\tilde M_j = \tilde \bgamma_j \tilde \bgamma_j^T \sigma_{Y_j}^{-2}$. Then, $S_n^{-1}(\sum_j M_j) S_n^{-T}$ being bounded (Assumption 2(iii)) is equivalent to that 
\begin{equation*}
 {\rm {\rm diag}}(\frac{1}{\sqrt{\mu_{n1}}},...,\frac{1}{\sqrt{\mu_{nK}}})\ (\sum_j \tilde M_j)  {\rm {\rm diag}}(\frac{1}{\sqrt{\mu_{n1}}},...,\frac{1}{\sqrt{\mu_{nK}}})
\end{equation*} 
is bounded. Thus, the $(t,s)$ entry of the matrix $\sum_j \tilde M_j$ satisfies $\sum_j \tilde \gamma_{jt}\tilde \gamma_{js}\sigma_{Yj}^{-2} = O(\sqrt{\mu_{nt}\mu_{ns}})$.
    
    For the scenario (1), note that
    \begin{align*}
        T_n^{-1} (\sum_j M_j) T_n^{-T} =  {\rm {\rm diag}}(\frac{1}{\sqrt{\mu_{n1} + p}},...,\frac{1}{\sqrt{\mu_{nk} + p}}) (\sum_j \tilde M_j) {\rm {\rm diag}}(\frac{1}{\sqrt{\mu_{n1} + p}},...,\frac{1}{\sqrt{\mu_{nk} + p}})    
    \end{align*}
    Then, using Lemma \ref{lemma0}, we have
    \begin{align*}
        & \lambda_{\min}(T_n^{-1} (\sum_j M_j) T_n^{-T}) \\
        & = \lambda_{\min}({\rm diag}(\sqrt{\frac{\mu_{n1}}{\mu_{n1} + p}},..., \sqrt{\frac{\mu_{n1}}{\mu_{n1} + p}}) \{S_n^{-1} (\sum_j M_j) S_n^{-T}\} {\rm diag}(\sqrt{\frac{\mu_{n1}}{\mu_{n1} + p}},..., \sqrt{\frac{\mu_{n1}}{\mu_{n1} + p}})) \\
        & \ge \Theta(\lambda_{\min}(S_n^{-1} (\sum_j M_j) S_n^{-T})) > 0
    \end{align*}
    Similarly, by applying Lemma \ref{lemma0}, we have
    \begin{align*}
        & \lambda_{\max}(T_n^{-1} (\sum_j M_j) T_n^{-T}) \le \Theta(\lambda_{\max}(S_n^{-1} (\sum_j M_j) S_n^{-T})) < \infty
    \end{align*}
    To analyze the term $T_n^{-1} (\sum_j V_j) T_n^{-T}$, we let $\tilde V_j = \tilde S_n^{-1} V_j \tilde S_n^{-T}$. Based on Assumptions 1-2, we know $\tilde V_j$ is also positive definite and bounded. {In particular, the minimum and maximum eigenvalue of $\tilde V_j$ are $\Theta(1)$. Thus, $\sum_j \tilde V_j$ is always positive definite with every entry of rate $O(p)$. Furthermore, we know, by Weyl's inequality, the maximum and minimum eigenvalue of $\sum_j \tilde V_j$ are $\Theta(p)$}. Then, note that
    \begin{align*}
        & T_n^{-1} (\sum_j V_j) T_n^{-T} = {\rm {\rm diag}}(\frac{1}{\sqrt{\mu_{n1} + p}},...,\frac{1}{\sqrt{\mu_{nk} + p}}) (\sum_j \tilde V_j) {\rm {\rm diag}}(\frac{1}{\sqrt{\mu_{n1} + p}},...,\frac{1}{\sqrt{\mu_{nk} + p}})    .
    \end{align*}
    Under the current scenario that $p = O(\mu_{nk})$ for every $k$, $T_n^{-1} (\sum_j  V_j) T_n^{-T}$ is bounded and positive semi-definite.
    Thus, $T_n^{-1}(\sum_j M_j + V_j)T_n^{-T}$ is bounded with eigenvalues bounded away from 0 as $n \rightarrow \infty$.

In scenario (2), where \(\mu_{n,\text{max}} = o(p)\), it is observed that \(T_n^{-1} (\sum_j M_j) T_n^{-T}\) converges to a zero matrix. However, \(T_n^{-1} (\sum_j V_j) T_n^{-T}\) becomes the dominant factor and remains bounded, with its eigenvalues bounded away from zero. Consequently, as \(n \rightarrow \infty\), \(T_n^{-1}(\sum_j M_j + V_j)T_n^{-T}\) is also bounded, with eigenvalues bounded away from zero.

    Lastly, we study scenario (3). For the term $T_n^{-1} (\sum_j M_j) T_n^{-T}$, based on similar reasoning as above,  we know the upper left block with row and column indices from 1 to $l$ is bounded and its smallest and maximum eigenvalues are bounded between 0 and $\infty$. For the entries not appearing in the upper left block, we know they will converge to 0, as $\sum_j \tilde \gamma_{jt}\tilde \gamma_{js}\sigma_{Yj}^{-2}/{\sqrt{(\mu_{nt} + p)(\mu_{ns} + p)}} = O(\sqrt{\frac{\mu_{nt}\mu_{ns}}{(\mu_{nt} + p)(\mu_{ns} + p)}}) = o(1)$ for at least one of $t$ and $s$ belonging to the index set $\{l+1,...,K\}$. {For the term $T_n^{-1} (\sum_j V_j) T_n^{-T}$, we know from scenarios (1) and (2) that it is always bounded and positive semi-definite.} Therefore, $T_n^{-1} (\sum_j M_j + V_j) T_n^{-T}$ is a block {\rm diag}onal matrix that is bounded and has eigenvalues bounded between 0 and $\infty$ as $n \rightarrow \infty$.

    To show that $T_n^{-1} \mathbb{V} T_n^{-T}$ is also bounded and its mimimum eigenvalue is bounded away from 0, note that 
    \begin{align*}
        & T_n^{-1} \mathbb{V} T_n^{-T} \\
        &= T_n^{-1}(\sum_j M_j + V_j) T_n^{-T} + T_n^{-1}(\sum_j (\bbeta_0^T V_j \bbeta_0)(M_j + V_j)) T_n^{-T} + T_n^{-1} (\sum_j V_j \bbeta_0 \bbeta_0^T V_j ) T_n^{-T}
    \end{align*}
    We have shown that $T_n^{-1}(\sum_j M_j + V_j)T_n^{-T}$ is bounded with minimum eigenvalue bounded away from 0.
    Following similar arguments as above, we can show both\\ $T_n^{-1} \left\{\sum_j (\bbeta_0^T V_j \bbeta_0)(M_j + V_j) \right\} T_n^{-T}$ and $T_n^{-1} (\sum_j V_j \bbeta_0 \bbeta_0^T V_j ) T_n^{-T}$ are bounded and positive semi-definite. Because the sum of positive definite matrices and positive semi-definite matrices is positive definite, we conclude that $T_n^{-1} \mathbb{V} T_n^{-T}$ is also bounded with minimum eigenvalue bounded away from 0.
\end{proof}

\begin{lemma}
    \label{lemma2} Under Assumptions 1-2, $T_n^{-1}(\sum_j \hat M_j - (M_j + V_j)) T_n^{-T} \xrightarrow[]{P} \bm 0$ regardless of the relative  rates between $\mu_{nk}$'s and $p$, where $T_n$ is defined in Lemma \ref{lemma1}.
\end{lemma}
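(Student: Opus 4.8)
The plan is to prove entrywise convergence in probability and then appeal to the finiteness of $K$. Since \eqref{eq: Vs} gives $\E\{\sum_j \hat M_j\} = \sum_j (M_j + V_j)$, the matrix $W_n := \sum_j \hat M_j - (M_j + V_j) = \sum_j(\hat M_j - \E[\hat M_j])$ is a sum of independent, mean-zero random matrices, so $\E\{T_n^{-1}W_nT_n^{-T}\} = \bm 0$. By Chebyshev's inequality it therefore suffices to show that the variance of each entry of $T_n^{-1}W_nT_n^{-T}$ tends to $0$, and because $K$ is a finite constant this forces convergence of the whole matrix.

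First I would diagonalize the transformation. Writing $T_n = \tilde S_n D$ with $D = {\rm diag}(\sqrt{\mu_{n1}+p},\dots,\sqrt{\mu_{nK}+p})$ and setting $\hat{\tilde\bgamma}_j = \tilde S_n^{-1}\hat\bgamma_j \sim N(\tilde\bgamma_j,\tilde\Sigma_{Xj})$ with $\tilde\bgamma_j = \tilde S_n^{-1}\bgamma_j$ and $\tilde\Sigma_{Xj} = \tilde S_n^{-1}\Sigma_{Xj}\tilde S_n^{-T}$, the $(t,s)$ entry of $T_n^{-1}W_nT_n^{-T}$ equals
\[
\frac{1}{\sqrt{(\mu_{nt}+p)(\mu_{ns}+p)}}\sum_j \sigma_{Yj}^{-2}\big(\hat{\tilde\gamma}_{jt}\hat{\tilde\gamma}_{js} - \E[\hat{\tilde\gamma}_{jt}\hat{\tilde\gamma}_{js}]\big),
\]
whose variance, by independence across $j$, is $\{(\mu_{nt}+p)(\mu_{ns}+p)\}^{-1}\sum_j \sigma_{Yj}^{-4}\,\var(\hat{\tilde\gamma}_{jt}\hat{\tilde\gamma}_{js})$.

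Next I would invoke the variance formula for a product of two jointly Gaussian variables. Writing $v_{tt},v_{ss},v_{ts}$ for the corresponding entries of $\tilde\Sigma_{Xj}$,
\[
\var(\hat{\tilde\gamma}_{jt}\hat{\tilde\gamma}_{js}) = v_{tt}v_{ss} + v_{ts}^2 + \tilde\gamma_{jt}^2 v_{ss} + \tilde\gamma_{js}^2 v_{tt} + 2\tilde\gamma_{jt}\tilde\gamma_{js}v_{ts}.
\]
By the bounded variance ratios in Assumption 1(iii), each entry of $\tilde\Sigma_{Xj}$ is $\Theta(\sigma_{Yj}^2)$ (using that $\tilde V_j = \tilde S_n^{-1}V_j\tilde S_n^{-T}$ has eigenvalues $\Theta(1)$, as established in the proof of Lemma \ref{lemma1}). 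Hence the pure-noise terms contribute $\sum_j \sigma_{Yj}^{-4}(v_{tt}v_{ss}+v_{ts}^2) = \Theta(p)$, while the signal terms, bounded via $\sum_j\tilde\gamma_{jt}^2\sigma_{Yj}^{-2} = O(\mu_{nt})$ and $\sum_j\tilde\gamma_{js}^2\sigma_{Yj}^{-2} = O(\mu_{ns})$ (both from the proof of Lemma \ref{lemma1}) together with Cauchy--Schwarz for the cross term, contribute $O(\mu_{nt}+\mu_{ns})$. Combining, the variance of the $(t,s)$ entry is $O\big((p+\mu_{nt}+\mu_{ns})/\{(\mu_{nt}+p)(\mu_{ns}+p)\}\big)$; since $(\mu_{nt}+p)(\mu_{ns}+p)\ge p^2$ while $\mu_{nt}\le \mu_{nt}+p$ and $\mu_{ns}\le\mu_{ns}+p$, every piece is $O(1/p)$, which vanishes as $p\to\infty$.

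The crucial point, and the reason the conclusion holds \emph{regardless of the relative rates} of the $\mu_{nk}$'s and $p$, is that $T_n$ normalizes by $\sqrt{\mu_{nk}+p}$ rather than $\sqrt{\mu_{nk}}$, so the denominator simultaneously absorbs both the $\Theta(p)$ noise contribution and the $\Theta(\mu_{nt}),\Theta(\mu_{ns})$ signal contributions; this is precisely what lets us avoid the case split by rate scenarios used in Lemma \ref{lemma1}. I expect the main obstacle to be the bookkeeping in the variance formula for correlated Gaussians, specifically ensuring that the mixed signal/noise terms $\tilde\gamma_{jt}^2 v_{ss}$ and $2\tilde\gamma_{jt}\tilde\gamma_{js}v_{ts}$ are bounded uniformly by the rate estimates from Lemma \ref{lemma1}, rather than any conceptual difficulty; once those rates are in hand, Chebyshev's inequality closes the argument.
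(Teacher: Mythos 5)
Your proposal is correct and takes essentially the same route as the paper's proof: entrywise Chebyshev after rotating by $\tilde S_n^{-1}$, the Gaussian product-moment variance formula, Cauchy--Schwarz on the cross term $\tilde\gamma_{jt}\tilde\gamma_{js}\tilde\sigma_{jst}^2$, and the rate $\sum_j \tilde\gamma_{jk}^2\sigma_{Yj}^{-2} = \Theta(\mu_{nk})$, yielding the identical bound $O\big((\mu_{nt}+\mu_{ns}+p)/\{(\mu_{nt}+p)(\mu_{ns}+p)\}\big)$. Your closing observation that each piece is $O(1/p)$ because the $\sqrt{\mu_{nk}+p}$ normalization absorbs both the noise and signal contributions simply makes explicit the final step the paper leaves implicit.
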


\begin{proof}

Recall the definitions $\tilde \bgamma_j = \tilde S_n^{-1} \bgamma_j$ and $\tilde M_j = \tilde \bgamma_j \tilde \bgamma_j^T \sigma_{Y_j}^{-2}$. Let  $\hat{\tilde {\bgamma_j}} = \tilde S_n^{-1} \hat\bgamma_j$, which satisfies  $\hat{\tilde {\bgamma_j}} \sim N(\tilde \bgamma_j, \tilde \Sigma_{Xj})$ and
\begin{align*}
    \tilde \Sigma_{Xj} & = \tilde S_n^{-1} \Sigma_{Xj} \tilde S_n^{-T} = \begin{pmatrix}
\tilde \sigma_{j11}^2 & \dots & \tilde \sigma_{j1K}^2\\
\vdots & \ddots & \vdots \\
\tilde \sigma_{jK1}^2 & \dots & \tilde \sigma_{jKK}^2 
\end{pmatrix}.
\end{align*}
Note that
\begin{flalign*}
    & T_n^{-1} \{\sum_j \hat M_j - (M_j + V_j)\} T_n^{-T} \\
    & = 
    {\rm diag}(\sqrt{\frac{1}{\mu_{n1}+p}},...,\sqrt{\frac{1}{\mu_{nK}+p}}) \tilde S_n^{-1} \{\sum_j \hat M_j - (M_j + V_j)\} \tilde S_n^{-T} {\rm diag}(\sqrt{\frac{1}{\mu_{n1}+p}},...,\sqrt{\frac{1}{\mu_{nK}+p}}) \\
    & = {\rm diag}(\sqrt{\frac{1}{\mu_{n1}+p}},...,\sqrt{\frac{1}{\mu_{nK}+p}}) \{\sum_j \hat{ \tilde{ \bgamma}}_j \hat{\tilde{ \bgamma}}_j^T \sigma_{Yj}^{-2} -  \sum_j (\tilde \bgamma_j \tilde \bgamma_j^T + \tilde \Sigma_{Xj})\sigma_{Yj}^{-2} \}
    {\rm diag}(\sqrt{\frac{1}{\mu_{n1}+p}},...,\sqrt{\frac{1}{\mu_{nK}+p}}) .
\end{flalign*}

Consider the $(t,s)$ element of the above matrix. For any $\epsilon > 0$, applying the Chebyshev's inequality, we have
\begin{align*}
    & P(|\frac{1}{\sqrt{(\mu_{nt}+p)(\mu_{ns}+p)}}\sum_{j}\{(\hat{\tilde{\gamma}}_{jt}\hat{\tilde{\gamma}}_{js} -  \tilde \gamma_{jt} \tilde \gamma_{js} + \tilde \sigma_{jts}^2)\sigma_{Yj}^{-2}\}| \ge \epsilon) \\
& \le \frac{1}{\epsilon^2}Var(\frac{1}{\sqrt{(\mu_{nt}+p)(\mu_{ns}+p)}}\sum_{j}\hat{\tilde{\gamma}}_{jt}\hat{\tilde{\gamma}}_{js}\sigma_{Yj}^{-2}) 
\end{align*}
Using the moment properties of normal distributions, we have
\begin{align*}
    {Var(\hat{\tilde{\gamma}}_{jt}\hat{\tilde{\gamma}}_{js}\sigma_{Yj}^{-2}) }= \left\{\tilde {\gamma}_{jt}^2\tilde \sigma_{jss}^2 + \tilde {\gamma}_{js}^2\tilde \sigma_{jtt}^2 + \tilde \sigma_{jss}^2\tilde \sigma_{jtt}^2 + 2\tilde {\gamma}_{jt}\tilde {\gamma}_{js}\tilde \sigma_{jst}^2 +  \tilde \sigma_{jst}^4\right \}\sigma_{Yj}^{-4}.
\end{align*} 
By the boundedness of $\tilde S_n$ and the boundedness of ratios between $\sigma_{Xjk}^2$ and $\sigma_{Yj}^2$ for every $j$ and $k$, we have $\tilde \sigma_{jst}^2\sigma_{Yj}^{-2} = O(1)$. Then, applying Cauchy-Schwarz inequality to the term $\tilde \gamma_{jt}\tilde\gamma_{js}\tilde \sigma_{jst}^2$, we can show that
\begin{align*}
  & Var(\frac{1}{\sqrt{(\mu_{nt}+p)(\mu_{ns}+p)}}\sum_{j}\hat{\tilde{\gamma}}_{jt}\hat{\tilde{\gamma}}_{js}\sigma_{Yj}^{-2}) \le  O(\frac{p(\tilde \kappa_t + \tilde \kappa_s)+p}{(\mu_{nt}+p)(\mu_{ns}+p)})
\end{align*}
where $p \tilde \kappa_k = \sum_j \tilde \gamma_{jk}^2\sigma_{Yj}^{-2}$.
Because $S_n^{-1}(\sum_j M_j) S_n^{-T}$ is bounded with the minimum eigenvalue bounded away from 0, we have $\sum_j \tilde \gamma_{jk}^2\sigma_{Yj}^{-2} = \Theta(\mu_{nk})$ for every $k$. Hence, $O(\frac{p(\tilde \kappa_t + \tilde \kappa_s)+p}{(\mu_{nt}+p)(\mu_{ns}+p)})$ can be further simplified as 
$O(\frac{\mu_{nt}+\mu_{ns}+p}{(\mu_{nt}+p)(\mu_{ns}+p)})$, {which converges to 0 as $p \rightarrow \infty$}. This implies that  $Var(\sum_{j}\hat{\tilde{\gamma}}_{jt}\hat{\tilde{\gamma}}_{js}\sigma_{Yj}^{-2}/\sqrt{(\mu_{nt}+p)(\mu_{ns}+p)})\rightarrow 0$ for every $(t,s)$ entry, completing the proof.
\end{proof}

\begin{lemma} \label{lemma 3}
    Under Assumptions 1-2, $(\sum_j V_j) \mathbb{V}^{-1} (\sum_j V_j) \rightarrow \bm 0$ if $\mu_{n,min}/p^2\rightarrow \infty$.
\end{lemma}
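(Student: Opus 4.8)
The plan is to reduce convergence of the matrix $(\sum_j V_j)\mathbb{V}^{-1}(\sum_j V_j)$ to the zero matrix to a statement about a single scalar, its largest eigenvalue, and then control that eigenvalue with crude bounds on the extreme eigenvalues of $\sum_j V_j$ and of $\mathbb{V}$. Write $W = \sum_j V_j$. Since each $V_j = \Sigma_{Xj}\sigma_{Yj}^{-2}$ is positive definite by Assumption 1(iii) and $\mathbb{V}$ is positive definite by Lemma \ref{lemma1}, the matrix $W\mathbb{V}^{-1}W = (\mathbb{V}^{-1/2}W)^T(\mathbb{V}^{-1/2}W)$ is symmetric positive semi-definite, so it converges to $\bm 0$ if and only if $\lambda_{\max}(W\mathbb{V}^{-1}W)\to 0$. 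For any unit vector $x$, $x^TW\mathbb{V}^{-1}Wx = (Wx)^T\mathbb{V}^{-1}(Wx) \le \lambda_{\max}(\mathbb{V}^{-1})\,\|Wx\|^2 \le \lambda_{\max}(W)^2/\lambda_{\min}(\mathbb{V})$, so it suffices to show $\lambda_{\max}(W)^2/\lambda_{\min}(\mathbb{V}) \to 0$.

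For the numerator I would bound $\lambda_{\max}(W) = O(p)$. Writing $V_j = D_j\Sigma D_j$ with $D_j = {\rm diag}(\sigma_{Xj1}/\sigma_{Yj},\dots,\sigma_{XjK}/\sigma_{Yj})$, Lemma \ref{lemma0} gives $\lambda_{\max}(V_j) \le \lambda_{\max}(\Sigma)\max_k(\sigma_{Xjk}^2/\sigma_{Yj}^2) = O(1)$ uniformly in $j$, using the bounded variance ratios in Assumption 1(iii). Summing the $p$ positive semi-definite terms, $\lambda_{\max}(W) \le \sum_j\lambda_{\max}(V_j) = O(p)$, hence $\lambda_{\max}(W)^2 = O(p^2)$.

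For the denominator I would show $\lambda_{\min}(\mathbb{V}) \ge c\,(\mu_{n,\min}+p)$ for some constant $c>0$. Lemma \ref{lemma1} states that $T_n^{-1}\mathbb{V}T_n^{-T}$ has smallest eigenvalue bounded away from $0$, where $T_n = \tilde S_n\,{\rm diag}(\sqrt{\mu_{n1}+p},\dots,\sqrt{\mu_{nK}+p})$; substituting $y = T_n^T x$ in the Rayleigh quotient yields $x^T\mathbb{V}x \ge c_1\|T_n^Tx\|^2 \ge c_1\lambda_{\min}(T_nT_n^T)\|x\|^2$, so $\lambda_{\min}(\mathbb{V}) \ge c_1\lambda_{\min}(T_nT_n^T)$. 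Applying Lemma \ref{lemma0} with $A = \tilde S_n$ and $B = {\rm diag}(\mu_{n1}+p,\dots,\mu_{nK}+p)$ gives $\lambda_{\min}(T_nT_n^T) \ge \lambda_{\min}(\tilde S_n\tilde S_n^T)\min_k(\mu_{nk}+p) \ge c_2(\mu_{n,\min}+p)$, where $c_2 = \lambda_{\min}(\tilde S_n\tilde S_n^T)$ is bounded away from $0$ by Assumption 2(i). Combining, $\lambda_{\min}(\mathbb{V}) \ge c_1c_2\,\mu_{n,\min}$.

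Putting the pieces together, $\lambda_{\max}(W\mathbb{V}^{-1}W) = O(p^2/\mu_{n,\min})$, which tends to $0$ precisely under the hypothesis $\mu_{n,\min}/p^2 \to \infty$, so $W\mathbb{V}^{-1}W\to\bm 0$ and the lemma follows. The only mildly delicate step is the lower bound on $\lambda_{\min}(\mathbb{V})$, but this is essentially supplied by Lemma \ref{lemma1} combined with the eigenvalue inequality of Lemma \ref{lemma0}. The main conceptual point is simply recognizing that crude operator-norm bounds on the extreme eigenvalues of $W$ and $\mathbb{V}$ are already sharp enough to close the gap, so that no fine-grained control of the full matrices is required.
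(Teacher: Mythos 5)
Your proof is correct, and it reaches the paper's conclusion by a different packaging of the same essential inputs. The paper's own proof is a sandwich factorization: it writes $(\sum_j V_j)\mathbb{V}^{-1}(\sum_j V_j) = (\sum_j V_j)T_n^{-T}\,\big(T_n^{-1}\mathbb{V}T_n^{-T}\big)^{-1}\,T_n^{-1}(\sum_j V_j)$, invokes Lemma \ref{lemma1} to bound the middle factor, and then kills the outer factors entrywise, since every entry of $\sum_j V_j$ is $O(p)$ while every entry of $T_n^{-1}$ is $O(1/\sqrt{\mu_{n,\min}+p})$, so that $T_n^{-1}(\sum_j V_j)\rightarrow \bm 0$ when $\mu_{n,\min}/p^2\rightarrow\infty$. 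You instead reduce matrix convergence to the scalar bound $\lambda_{\max}(W\mathbb{V}^{-1}W)\le \lambda_{\max}(W)^2/\lambda_{\min}(\mathbb{V})$, and then supply the two spectral estimates $\lambda_{\max}(W)=O(p)$ (from the bounded variance ratios in Assumption 1(iii) together with Lemma \ref{lemma0}) and $\lambda_{\min}(\mathbb{V})\ge c(\mu_{n,\min}+p)$ (from Lemma \ref{lemma1} combined with Lemma \ref{lemma0} applied to $T_nT_n^T$); the closing rate arithmetic $p^2/(\mu_{n,\min}+p)\rightarrow 0$ is then identical to the paper's. What your route buys is a self-contained operator-norm argument and, as a by-product, the explicit bound $\lambda_{\min}(\mathbb{V})=\Omega(\mu_{n,\min}+p)$, a fact the paper effectively re-derives later when proving Corollary 1 (where it argues $\lambda_{\max}(\mathcal{V}_{\rm IVW})=\Theta(1/(\mu_{n,\min}+p))$); what the paper's route buys is uniformity, since the same $T_n$-factorization pattern is reused in Lemmas \ref{lemma 4} and \ref{lemma7} and in the proofs of Theorems 1 and 2. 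One cosmetic caveat: your phrase ``tends to $0$ precisely under the hypothesis'' overstates matters, since your argument is a one-sided bound and establishes sufficiency of $\mu_{n,\min}/p^2\rightarrow\infty$, not necessity; the lemma only asks for sufficiency, so nothing is lost.
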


\begin{proof}
    Recall that $T_n = \tilde S_n {\rm diag}(\sqrt{\mu_{n1} + p},...,\sqrt{\mu_{nk} + p})$. Note that
    \begin{align*}
        (\sum_j V_j) \mathbb{V}^{-1} (\sum_j V_j) & = (\sum_j V_j) T_n^{-T} (T_n^{-1} \mathbb{V} T_n^{-T})^{-1} T_n^{-1} (\sum_j V_j)
    \end{align*}
    By Lemma \ref{lemma1}, $T_n^{-1} \mathbb{V} T_n^{-T}$ is bounded with eigenvalues bounded away from 0. Because $\tilde S_n$ is bounded, $\mu_{n,min}/p^2\rightarrow \infty$, and every entry of $\sum_j V_j$ is of rate $O(p)$, we have $T_n^{-1}(\sum_j V_j) \rightarrow \bm 0$. Hence, we have the desired result. 
\end{proof}

\begin{lemma} \label{lemma 4}
    Under Assumptions 1-2, $\mathcal{V}_{\rm IVW} = (\sum_j M_j + V_j)^{-1}\mathbb{V}(\sum_j M_j + V_j)^{-1} \rightarrow \bm0$ regardless of the relative  rates between $\mu_{nk}$'s and $p$. Furthermore, $T_n^{-1} \mathcal{V}_{\rm IVW}^{-1} T_n^{-T}$ is bounded element-wise with its minimum eigenvalues bounded away from 0.
\end{lemma}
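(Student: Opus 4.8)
The plan is to reduce both claims to the behavior of the $T_n$-normalized matrices already established in Lemma \ref{lemma1}. Write $W = \sum_j (M_j + V_j)$ and set $A = T_n^{-1} W T_n^{-T}$ and $B = T_n^{-1}\mathbb{V}T_n^{-T}$. By Lemma \ref{lemma1}, both $A$ and $B$ are symmetric, bounded element-wise, and have minimum eigenvalues bounded away from $0$; in particular both are invertible, so $W$ and $\mathbb{V}$ are invertible (hence $\mathcal{V}_{\rm IVW}$ and its inverse are well defined), and $A^{-1}, B^{-1}$ are themselves bounded with minimum eigenvalues bounded away from $0$. The two conjugation identities I will use throughout are $W^{-1} = T_n^{-T} A^{-1} T_n^{-1}$ and $T_n^T \mathbb{V}^{-1} T_n = B^{-1}$, both read off directly from $W = T_n A T_n^T$ and $\mathbb{V} = T_n B T_n^T$.

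For the first claim, I will substitute the identity for $W^{-1}$ into $\mathcal{V}_{\rm IVW} = W^{-1}\mathbb{V}W^{-1}$ and cancel the internal $T_n^{-1}T_n$ factors to get
\begin{align*}
\mathcal{V}_{\rm IVW} = T_n^{-T}\big(A^{-1} B A^{-1}\big) T_n^{-1}.
\end{align*}
The middle factor $A^{-1}BA^{-1}$ is a product of bounded matrices, hence bounded. Since $T_n^{-1} = {\rm diag}\big((\mu_{n1}+p)^{-1/2},\dots,(\mu_{nK}+p)^{-1/2}\big)\tilde S_n^{-1}$ with $\tilde S_n^{-1}$ bounded and $\mu_{nk}+p \ge p \to \infty$, every entry of $T_n^{-1}$ tends to $0$. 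Sandwiching the bounded middle factor between these vanishing outer factors forces $\mathcal{V}_{\rm IVW}\to \bm 0$; this holds regardless of the relative rates of the $\mu_{nk}$'s and $p$, which is automatic because Lemma \ref{lemma1} itself holds for all such rates.

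For the second claim, I will use $\mathcal{V}_{\rm IVW}^{-1} = W\mathbb{V}^{-1}W$ together with the second conjugation identity to obtain
\begin{align*}
T_n^{-1}\mathcal{V}_{\rm IVW}^{-1} T_n^{-T} = A B^{-1} A.
\end{align*}
Boundedness is again immediate from Lemma \ref{lemma1}. For the minimum-eigenvalue lower bound I will invoke Lemma \ref{lemma0}: since $A$ is symmetric and $B^{-1}$ is positive definite, $\lambda_{\min}(A B^{-1} A^T) \ge \lambda_{\min}(B^{-1})\lambda_{\min}(AA^T) = \lambda_{\min}(B^{-1})\,\lambda_{\min}(A)^2$, and the right-hand side is bounded away from $0$ because $\lambda_{\min}(B^{-1}) = 1/\lambda_{\max}(B)$ and $\lambda_{\min}(A)$ are each bounded away from $0$ by Lemma \ref{lemma1}.

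Because Lemma \ref{lemma1} supplies all the analytic content, I do not anticipate a serious obstacle; the only care required is the algebraic bookkeeping of the two conjugation identities and noting that the minimum-eigenvalue guarantees of Lemma \ref{lemma1} are precisely what makes $W$, $\mathbb{V}$, $A$, and $B$ invertible so that every inverse written above exists.
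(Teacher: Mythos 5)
Your proof is correct and takes essentially the same route as the paper: conjugate by $T_n$, invoke Lemma \ref{lemma1} for boundedness and eigenvalue control of $A = T_n^{-1}(\sum_j M_j+V_j)T_n^{-T}$ and $B = T_n^{-1}\mathbb{V}T_n^{-T}$, let the vanishing outer factors $T_n^{-1}$ kill $\mathcal{V}_{\rm IVW}$, and use Lemma \ref{lemma0} for the minimum-eigenvalue bound on $AB^{-1}A$. The only difference is that you spell out the identity $T_n^{-1}\mathcal{V}_{\rm IVW}^{-1}T_n^{-T} = AB^{-1}A$ and the resulting eigenvalue inequality explicitly, where the paper compresses this second part into a one-line citation of its Lemmas S1 and S2.
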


\begin{proof}
    Recall that $T_n = \tilde S_n {\rm diag}(\sqrt{\mu_{n1} + p},...,\sqrt{\mu_{nk} + p})$. Note that
    \begin{align*}
    & (\sum_j M_j + V_j)^{-1}\mathbb{V}(\sum_j M_j + V_j)^{-1} \\
    & = T_n^{-T} \left(T_n^{-1} (\sum_j M_j + V_j) T_n^{-T}\right)^{-1} \left(T_n^{-1} \mathbb{V} T_n^{-T}\right) \left(T_n^{-1} (\sum_j M_j + V_j) T_n^{-T}\right)^{-1} T_n^{-1}
    \end{align*}
    By Lemma \ref{lemma1}, both $\left(T_n^{-1} (\sum_j M_j + V_j) T_n^{-T}\right)^{-1}$ and $T_n^{-1} \mathbb{V} T_n^{-T}$ are bounded. The desired result follows because $T_n^{-1}$ converges to a zero matrix.

    Furthermore, from above derivations and by Lemma S1 and Lemma S2, we know that $T_n^{-1} \mathcal{V}_{\rm IVW}^{-1} T_n^{-T}$ is bounded with minimum eigenvalue bounded away from 0.
\end{proof}

\begin{lemma} \label{lemma5}
    Under Assumptions 1-2, if $\max_j (\gamma_{jk}^2 \sigma_{Xjk}^{-2}) /(\mu_{n,min} + p) \rightarrow 0$ for every $k = 1, ..., K$,
    $\mathbb{V}^{-\frac{1}{2}}(\sum_j \hat \bgamma_j \hat \Gamma_j \sigma_{Yj}^{-2} - \hat M_j\bbeta_0 + V_j\bbeta_0) \xrightarrow[]{D} N(0, I_{K})$. 
\end{lemma}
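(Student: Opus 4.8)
The plan is to recognize the sum as a normalized sum of independent, mean-zero random vectors and to establish the result by the Cram\'er--Wold device together with the Lindeberg--Feller central limit theorem. Write $\xi_j = \hat\bgamma_j\hat\Gamma_j\sigma_{Yj}^{-2} - \hat M_j\bbeta_0 + V_j\bbeta_0$, so that the target is $\mathbb{V}^{-1/2}\sum_j\xi_j$. The $\xi_j$ are independent across $j$ by Assumption 1(ii). By \eqref{eq: bias} each $\xi_j$ has mean zero, and by \eqref{eq: Vt} the total covariance is $\cov(\sum_j\xi_j)=\mathbb{V}$, the added constant $V_j\bbeta_0$ leaving the covariance unchanged. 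Hence $w_j := \mathbb{V}^{-1/2}\xi_j$ are independent, mean-zero, with $\sum_j\cov(w_j)=I_K$ exactly, and it remains to show $\sum_j w_j\xrightarrow{D}N(\bm 0,I_K)$.

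By Cram\'er--Wold, fix a unit vector $a\in\mathbb{R}^K$ and set $s_j=a^Tw_j=b^T\xi_j$ with $b=\mathbb{V}^{-1/2}a$. These are independent mean-zero scalars with $\sum_j\var(s_j)=a^Ta=1$, so by Lindeberg--Feller it suffices to verify the Lyapunov condition $\sum_j E[s_j^4]\to0$. The key reduction is that $b^T\xi_j$ is a mean-zero quadratic polynomial in the jointly Gaussian vector $(\hat\bgamma_j,\hat\Gamma_j)$; by the moment equivalence for fixed-degree Gaussian polynomials (equivalently, a direct Isserlis computation), $E[s_j^4]\le C\,(E[s_j^2])^2$ for a universal constant $C$. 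Therefore
\begin{align*}
\sum_j E[s_j^4]\;\le\; C\sum_j\bigl(E[s_j^2]\bigr)^2\;\le\; C\Bigl(\max_j E[s_j^2]\Bigr)\sum_j E[s_j^2]\;=\;C\,\max_j E[s_j^2],
\end{align*}
since $\sum_j E[s_j^2]=b^T\mathbb{V}b=a^Ta=1$. It thus remains to prove that $\max_j\var(b^T\xi_j)\to0$.

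This last step is the main obstacle and is where both Lemma \ref{lemma1} and the no-dominance hypothesis enter. Writing $\cov(\xi_j)=(1+\bbeta_0^TV_j\bbeta_0)(M_j+V_j)+V_j\bbeta_0\bbeta_0^TV_j$, I would bound $\var(b^T\xi_j)=b^T\cov(\xi_j)b$ termwise. Lemma \ref{lemma1} gives that $T_n^{-1}\mathbb{V}T_n^{-T}$ has smallest eigenvalue bounded away from $0$; combined with Lemma \ref{lemma0} and the fact that $\lambda_{\min}(\tilde S_n\tilde S_n^T)$ is bounded below, this yields $\lambda_{\min}(\mathbb{V})\ge c(\mu_{n,\min}+p)$, so $\|b\|^2=a^T\mathbb{V}^{-1}a = O\!\bigl(1/(\mu_{n,\min}+p)\bigr)$. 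Under Assumption 1(iii) the matrices $V_j$ and the scalars $\bbeta_0^TV_j\bbeta_0$ are $O(1)$, so the $V_j$-only contributions are $O(\|b\|^2)\to0$ uniformly in $j$. The only delicate term is $(b^T\bgamma_j)^2\sigma_{Yj}^{-2}\le\|b\|^2\sum_k\gamma_{jk}^2\sigma_{Yj}^{-2}$, which by the boundedness of $\sigma_{Xjk}^2/\sigma_{Yj}^2$ and $\|b\|^2=O(1/(\mu_{n,\min}+p))$ is at most $C\sum_k\gamma_{jk}^2\sigma_{Xjk}^{-2}/(\mu_{n,\min}+p)$. Taking the maximum over $j$ and invoking the hypothesis $\max_j\gamma_{jk}^2\sigma_{Xjk}^{-2}/(\mu_{n,\min}+p)\to0$ for each of the finitely many $k$ shows $\max_j\var(b^T\xi_j)\to0$. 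This closes the Lyapunov bound, gives $\sum_j s_j\xrightarrow{D} N(0,1)$ for every $a$, and hence the claimed multivariate convergence.
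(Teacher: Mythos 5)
Your proof is correct, and its skeleton matches the paper's: Cram\'er--Wold reduction, the observation that the summands $\xi_j$ are independent, mean zero, with total covariance $\mathbb{V}$ (the identities the paper records before its lemmas), and a final reduction to showing $\max_j \var\bigl(a^T\mathbb{V}^{-1/2}\xi_j\bigr)\to 0$, which you establish exactly as the paper does -- via Lemma \ref{lemma1} (giving $\lambda_{\min}(\mathbb{V})\geq c\,(\mu_{n,\min}+p)$ through Lemma \ref{lemma0}) plus the no-dominance hypothesis; indeed your quantity $\var(b^T\xi_j)=b^T\mathbb{V}_j b$ is precisely the paper's $\bm{d}_j^T\bm{d}_j$. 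Where you genuinely depart is the device that converts this into the CLT's tail condition. The paper verifies Lindeberg's condition directly: it truncates, applies Cauchy--Schwarz, and then kills each truncated second moment $E\bigl(\hat{\bm{r}}_j\hat{\bm{r}}_j^T\mathbbm{1}\{\cdot\}\bigr)$ by dominated convergence as the truncation level $\epsilon/\max_j\bm{d}_j^T\bm{d}_j$ diverges. You instead verify Lyapunov's condition with $\delta=2$, using the moment equivalence $E[s_j^4]\leq C\,(E[s_j^2])^2$ for centered degree-two polynomials of a Gaussian vector (hypercontractivity on Wiener chaos of order at most two, or a direct Isserlis computation), which collapses the whole tail estimate to $\sum_j E[s_j^4]\leq C\max_j E[s_j^2]\to 0$. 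Your route buys a cleaner, fully quantitative step: the paper's dominated-convergence argument is stated ``for every $j$'' and then maximized over $j$, so it implicitly relies on uniform-in-$j$ (and in $n$) integrability of the standardized quadratic forms -- exactly the uniform higher-moment control you invoke explicitly with a universal constant. The trade-off is that your argument uses the Gaussianity in Assumption 1(ii) in an essential way, whereas the Lindeberg/truncation template is the more generic one and would survive relaxations of exact normality given uniform integrability; under the paper's assumptions both are valid, and yours is arguably the tighter write-up of this step.
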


\begin{proof}
    The proof uses the Lindeberg Central Limit Theorem and Cramer-Wald device.
    Pick any $\bm{c} \in \mathbb{R}^K$ such that $\bm{c}^T\bm{c} = 1$. Write $\mathbb{V} = \sum_j \mathbb{V}_j$. 
    Let 
    \begin{align*}
    s_p & = \bm{c}^T\mathbb{V}^{-\frac{1}{2}}(\sum_j \hat \bgamma_j \hat \Gamma_j \sigma_{Yj}^{-2} - \hat M_j\bbeta_0 + V_j\bbeta_0) \\
            & = \sum_{j}\bm{c}^T\mathbb{V}^{-\frac{1}{2}}\{(\hat\bgamma_j\hat\bGamma_j-\hat\bgamma_j\hat\bgamma_j^T\bm{\beta}_0)\sigma_{Yj}^{-2} + V_j\bm{\beta}_0\} \\
            & = \sum_{j}\bm{c}^T\mathbb{V}^{-\frac{1}{2}}\mathbb{V}_j^{\frac{1}{2}}\mathbb{V}_j^{-\frac{1}{2}}\{(\hat\bgamma_j\hat\bGamma_j-\hat\bgamma_j\hat\bgamma_j^T\bm{\beta}_0)\sigma_{Yj}^{-2} + V_j\bm{\beta}_0\} \\
            & = \sum_{j}\bm{d}_j^T\hat{\bm{r}}_j \in \mathbb{R}
    \end{align*}
    where for every $j$, $\bm{d}_j^T = \bm{c}^T\mathbb{V}^{-\frac{1}{2}}\mathbb{V}_j^{\frac{1}{2}}$, $\hat{\bm{r}}_j = \mathbb{V}_j^{-\frac{1}{2}}\{(\hat\bgamma_j\hat\bGamma_j-\hat\bgamma_j\hat\bgamma_j^T\bm{\beta}_0)\sigma_{Yj}^{-2} + V_j\bm{\beta}_0\}$.

    Note that $E(s_p) = \sum_{j}\bm{d}_j^TE(\hat{\bm{r}}_j) = 0$ and $Var(s_p) = \sum_{j}\bm{d}_j^TCov(\hat{\bm{r}}_j)\bm{d}_j = \sum_{j}\bm{c}^T\mathbb{V}^{-\frac{1}{2}}\mathbb{V}_j^{\frac{1}{2}}\mathbb{V}_j^{\frac{1}{2}}\mathbb{V}^{-\frac{1}{2}}\bm{c} = \bm{c}^T\mathbb{V}^{-\frac{1}{2}}(\sum_{j}\mathbb{V}_j)\mathbb{V}^{-\frac{1}{2}}\bm{c} = \bm{c}^T\bm{c} = 1$ because $Cov(\hat{\bm{r}}_j) = I_K$ for every $j$.
    To verify Lindeberg's condition, take any $\epsilon > 0$,
    \begin{align*}
        \sum_{j}E((\bm{d}_j^T\hat{\bm{r}}_j)^2\mathbbm{1}\{|\bm{d}_j^T\hat{\bm{r}}_j| \ge \epsilon\})
      & =    \sum_{j}\bm{d}_j^TE(\hat{\bm{r}}_j\hat{\bm{r}}_j^T\mathbbm{1}\{|\bm{d}_j^T\hat{\bm{r}}_j| \ge \epsilon\})\bm{d}_j \\
        & \le \sum_{j}\bm{d}_j^TE(\hat{\bm{r}}_j\hat{\bm{r}}_j^T\mathbbm{1}\{\hat{\bm{r}}_j^T\hat{\bm{r}}_j > \frac{\epsilon}{\bm{d}_j^T\bm{d}_j})\bm{d}_j \ \text{by Cauchy Schwartz inequality}\\
        & \le \max_j\lambda_{\max}E(\hat{\bm{r}}_j\hat{\bm{r}}_j^T\mathbbm{1}\{\hat{\bm{r}}_j^T\hat{\bm{r}}_j > \frac{\epsilon}{\bm{d}_j^T\bm{d}_j})\sum_{j}\bm{d}_j^T\bm{d}_j \\
        & \le \max_j\lambda_{\max}E(\hat{\bm{r}}_j\hat{\bm{r}}_j^T\mathbbm{1}\{\hat{\bm{r}}_j^T\hat{\bm{r}}_j > \frac{\epsilon}{\max_j\bm{d}_j^T\bm{d}_j})
    \end{align*}
    Let $T_n = \tilde S_n {\rm diag}(\sqrt{\mu_{n1} + p},...,\sqrt{\mu_{nk} + p})$. Note that 
    \begin{align*}
        \bm{d}_j^T\bm{d}_j & = \bm{c}^T\mathbb{V}^{-\frac{1}{2}}\mathbb{V}_j\mathbb{V}^{-\frac{1}{2}}\bm{c} \\
        & \le \lambda_{\max}(\mathbb{V}^{-\frac{1}{2}}\mathbb{V}_j\mathbb{V}^{-\frac{1}{2}}) \\
        & = \lambda_{\max}(\mathbb{V}^{-\frac{1}{2}}T_nT_n^{-1}\mathbb{V}_jT_n^{-T}T_n^T\mathbb{V}^{-\frac{1}{2}}) \\
        & \le \lambda_{\max}\{(\mathbb{V}^{\frac{1}{2}}T_n^{-T}T_n^{-1}\mathbb{V}^{\frac{1}{2}})^{-1}\}\lambda_{\max}\{T_n^{-1}\mathbb{V}_jT_n^{-T}\} \\ 
        & {\le {\rm trace}\{(\mathbb{V}^{\frac{1}{2}}T_n^{-T}T_n^{-1}\mathbb{V}^{\frac{1}{2}})^{-1}\} \lambda_{\max}\{T_n^{-1}\mathbb{V}_jT_n^{-1}\}} \\
        & {  = {\rm trace}\{\mathbb{V}^{-\frac{1}{2}}T_nT_n^{T}\mathbb{V}^{-\frac{1}{2}}\} \lambda_{\max}\{T_n^{-1}\mathbb{V}_jT_n^{-1}\}} \\
        & {  = {\rm trace}\{T_n^T \mathbb{V}^{-1} T_n\} \lambda_{\max}\{T_n^{-1}\mathbb{V}_jT_n^{-1}\}} \\
        & {  = {\rm trace}\{(T_n^{-1} \mathbb{V} T_n^{-T})^{-1}\} \lambda_{\max}\{T_n^{-1}\mathbb{V}_jT_n^{-1}\}} \\
        &  { \le {\rm trace}\{(T_n^{-1} \mathbb{V} T_n^{-T})^{-1}\} \lambda_{\max}(\mathbb{V}_j)/(\mu_{n,min} + p)}
    \end{align*}
    where the first inequality uses the definition of maximum eigenvalue, and the second and the last inequality use Lemma \ref{lemma0}, and the third inequality uses the fact that $\mathbb{V}^{\frac{1}{2}}T_n^{-T}T_n^{-1}\mathbb{V}^{\frac{1}{2}}$ is also positive definite. Because $T_n^{-1} \mathbb{V} T_n^{-T}$ is bounded by Lemma \ref{lemma1}, ${\rm trace}\{(T_n^{-1} \mathbb{V} T_n^{-T})^{-1}\}$ is bounded between 0 and $\infty$. $\lambda_{\max}(\mathbb{V}_j)/(\mu_{n,min} + p) \rightarrow 0$ by the assumption that $\max_j \frac{\gamma_{jk}^2}{\sigma_{Xjk}^2}/(\mu_{n,min} + p) \rightarrow 0$ for every $k = 1,..., K$ i.e. all diagonal entries of $\mathbb{V}_j$ grow at a rate slower than $\mu_{n,min} + p$. Thus, $\max_j\bm{d}_j^T\bm{d}_j  = o(1)$. Since $E(\hat{\bm{r}_j}\hat{\bm{r}_j}^T) = I$, we have
    \begin{align*}
    E(\hat{\bm{r}}_j\hat{\bm{r}}_j^T\mathbbm{1}\{|\hat{\bm{r}}_j^T\hat{\bm{r}}_j| > \frac{\epsilon}{\max_j\bm{d}_j^T\bm{d}_j}) \rightarrow \bm{0}
    \end{align*} for every $j$ by the dominated convergence theorem. Because eigenvalues of a matrix are continuous functions of its entries, we have 
    \begin{align*}
    { \max_j\lambda_{\max}E(\hat{\bm{r}}_j\hat{\bm{r}}_j^T\mathbbm{1}\{|\hat{\bm{r}}_j^T\hat{\bm{r}}_j| > \frac{\epsilon}{\max_j\bm{d}_j^T\bm{d}_j}) \rightarrow 0 } 
    \end{align*} 
    Therefore, by the Lindeberg-CLT theorem, we have $s_p \xrightarrow[]{D} N(0,1)$ holds as $p \rightarrow \infty$ for any $\bm{c} \in \mathbb{R}^K$ such that $\bm{c}^T\bm{c} = 1$.
    The desired result follows by applying the Cramer-Wald device theorem.
\end{proof}

\begin{lemma}\label{lemma6}
     Under Assumptions 1-2, if $\mu_{n,min}/\sqrt{p} \rightarrow \infty$,
     $S_n^{-1}(\sum_j \hat M_j - (M_j + V_j))S_n^{-T} \xrightarrow[]{P} \bm 0$.
\end{lemma}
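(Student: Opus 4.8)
The plan is to transfer the statement of Lemma \ref{lemma2} from the $T_n$-normalization to the $S_n$-normalization, keeping careful track of where the stronger hypothesis $\mu_{n,\min}/\sqrt{p}\to\infty$ enters. First I would work in the same rotated coordinates used in the proof of Lemma \ref{lemma2}: set $\tilde\bgamma_j=\tilde S_n^{-1}\bgamma_j$, $\hat{\tilde\bgamma}_j=\tilde S_n^{-1}\hat\bgamma_j$, and $\tilde\Sigma_{Xj}=\tilde S_n^{-1}\Sigma_{Xj}\tilde S_n^{-T}$ with entries $\tilde\sigma_{jts}^2$. Since $S_n^{-1}={\rm diag}(\mu_{n1}^{-1/2},\dots,\mu_{nK}^{-1/2})\,\tilde S_n^{-1}$, the $(t,s)$ entry of $S_n^{-1}\{\sum_j \hat M_j-(M_j+V_j)\}S_n^{-T}$ is
\begin{align*}
\frac{1}{\sqrt{\mu_{nt}\mu_{ns}}}\sum_j\left(\hat{\tilde\gamma}_{jt}\hat{\tilde\gamma}_{js}-\tilde\gamma_{jt}\tilde\gamma_{js}-\tilde\sigma_{jts}^2\right)\sigma_{Yj}^{-2}.
\end{align*}
This differs from the corresponding entry in Lemma \ref{lemma2} only in that the denominator is $\sqrt{\mu_{nt}\mu_{ns}}$ rather than $\sqrt{(\mu_{nt}+p)(\mu_{ns}+p)}$. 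By the unbiasedness relation $E[\hat{\tilde\gamma}_{jt}\hat{\tilde\gamma}_{js}]=\tilde\gamma_{jt}\tilde\gamma_{js}+\tilde\sigma_{jts}^2$, this entry has mean zero.

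Next I would reuse the second-moment computation from the proof of Lemma \ref{lemma2}. Using the fourth-moment formula for the normal distribution, the boundedness of $\tilde S_n$ and of the ratios $\sigma_{Xjk}^2/\sigma_{Yj}^2$, and the identity $\sum_j\tilde\gamma_{jk}^2\sigma_{Yj}^{-2}=\Theta(\mu_{nk})$ (a consequence of Assumption \ref{assump: 2}), one obtains $\mathrm{Var}\big(\sum_j\hat{\tilde\gamma}_{jt}\hat{\tilde\gamma}_{js}\sigma_{Yj}^{-2}\big)=O(\mu_{nt}+\mu_{ns}+p)$. Dividing by $\mu_{nt}\mu_{ns}$, the variance of the $(t,s)$ entry above is
\begin{align*}
O\!\left(\frac{\mu_{nt}+\mu_{ns}+p}{\mu_{nt}\mu_{ns}}\right)=O\!\left(\frac{1}{\mu_{nt}}+\frac{1}{\mu_{ns}}+\frac{p}{\mu_{nt}\mu_{ns}}\right).
\end{align*}
Then I would check that each summand vanishes under $\mu_{n,\min}/\sqrt{p}\to\infty$: this hypothesis forces $\mu_{n,\min}\to\infty$, so $1/\mu_{nt},1/\mu_{ns}\le 1/\mu_{n,\min}\to0$, and it also gives $\mu_{n,\min}^2/p\to\infty$, hence $p/(\mu_{nt}\mu_{ns})\le p/\mu_{n,\min}^2\to0$. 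An application of Chebyshev's inequality yields convergence to zero in probability of each entry, and since $K$ is a fixed constant the whole matrix converges to $\bm0$ in probability.

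The crux of the argument, and the reason the extra hypothesis $\mu_{n,\min}/\sqrt{p}\to\infty$ is needed here but not in Lemma \ref{lemma2}, is the cross term $p/(\mu_{nt}\mu_{ns})$. Under the $T_n$-normalization the denominator carries the offset $+p$, so this term is automatically $O(p/p^2)=o(1)$ regardless of the relative rates of $\mu_{nk}$ and $p$; under the $S_n$-normalization it is controlled only when $\mu_{n,\min}^2\gg p$. Everything else in the argument is bookkeeping layered on top of the variance bound already established in Lemma \ref{lemma2}, so the only genuine step is pinning down this rate and matching it against the hypothesis.
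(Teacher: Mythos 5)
Your proposal is correct and follows essentially the same route as the paper's proof: rotate by $\tilde S_n^{-1}$, reduce to the entrywise quantity $\frac{1}{\sqrt{\mu_{nt}\mu_{ns}}}\sum_j(\hat{\tilde\gamma}_{jt}\hat{\tilde\gamma}_{js}-\tilde\gamma_{jt}\tilde\gamma_{js}-\tilde\sigma_{jts}^2)\sigma_{Yj}^{-2}$, invoke the variance bound $O\big((\mu_{nt}+\mu_{ns}+p)/(\mu_{nt}\mu_{ns})\big)$ carried over from Lemma \ref{lemma2}, and finish with Chebyshev. Your explicit accounting of why $\mu_{n,\min}/\sqrt{p}\to\infty$ kills the cross term $p/(\mu_{nt}\mu_{ns})$ (which the $+p$ offset handles automatically in Lemma \ref{lemma2}) makes the final step slightly more detailed than the paper's, but the argument is the same.
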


\begin{proof}
The proof is similar to that of Lemma \ref{lemma2}. 
Recall the definitions $\tilde \bgamma_j = \tilde S_n^{-1} \bgamma_j$ and $\tilde M_j = \tilde \bgamma_j \tilde \bgamma_j^T \sigma_{Y_j}^{-2}$, and  $\hat{\tilde {\bgamma_j}} = \tilde S_n^{-1} \hat\bgamma_j$, which satisfies  $\hat{\tilde {\bgamma_j}} \sim N(\tilde \bgamma_j, \tilde \Sigma_{Xj})$ and
\begin{align*}
    \tilde \Sigma_{Xj} & = \tilde S_n^{-1} \Sigma_{Xj} \tilde S_n^{-T} = \begin{pmatrix}
\tilde \sigma_{j11}^2 & \dots & \tilde \sigma_{j1K}^2\\
\vdots & \ddots & \vdots \\
\tilde \sigma_{jK1}^2 & \dots & \tilde \sigma_{jKK}^2 
\end{pmatrix}.
\end{align*}

Note that 
\begin{flalign*}
    & S_n^{-1}(\sum_j \hat M_j - (M_j + V_j))S_n^{-T} = \\
    & {\rm diag}(\sqrt{\frac{1}{\mu_{n1}}},...,\sqrt{\frac{1}{\mu_{nK}}}) \tilde S_n^{-1} \{\sum_j \hat M_j - (M_j + V_j))\} \tilde S_n^{-T} {\rm diag}(\sqrt{\frac{1}{\mu_{n1}}},...,\sqrt{\frac{1}{\mu_{nK}}}) \\
    & = {\rm diag}(\sqrt{\frac{1}{\mu_{n1}}},...,\sqrt{\frac{1}{\mu_{nK}}}) \{\sum_j \hat{ \tilde{ \bgamma}}_j \hat{\tilde{ \bgamma}}_j^T \sigma_{Yj}^{-2} -  \sum_j (\tilde \bgamma_j \tilde \bgamma_j^T + \tilde \Sigma_{Xj}) \}{\rm diag}(\sqrt{\frac{1}{\mu_{n1}}},...,\sqrt{\frac{1}{\mu_{nK}}}) .
\end{flalign*}

Consider its $(t,s)$ element of the above matrix. For any $\epsilon > 0$, applying the Chebyshev's inequality, we have
\begin{align*}
    & P(|\frac{1}{\sqrt{\mu_{nt}\mu_{ns}}}\sum_{j}\{\hat{\tilde{\gamma}}_{jt}\hat{\tilde{\gamma}}_{js} - ( \tilde \gamma_{jt} \tilde \gamma_{js} + \tilde \sigma_{jts}^2\sigma_{Yj}^{-2}\}| \ge \epsilon)
\le \frac{1}{\epsilon^2}Var(\frac{1}{\sqrt{\mu_{nt}\mu_{ns}}}\sum_{j}\hat{\tilde{\gamma}}_{jt}\hat{\tilde{\gamma}}_{js}\sigma_{Yj}^{-2}) 
\end{align*}
Using the moment properties of normal distributions and Cauchy-Schwarz inequality (see Lemma \ref{lemma2} for more details), we can show that
\begin{align*}
  & Var(\frac{1}{\sqrt{\mu_{nt}\mu_{ns}}}\sum_{j}\hat{\tilde{\gamma}}_{jt}\hat{\tilde{\gamma}}_{js}\sigma_{Yj}^{-2}) \le  O(\frac{p(\tilde \kappa_t + \tilde \kappa_s)+p}{\mu_{nt}\mu_{ns}})
\end{align*}
where $p\tilde \kappa_t = \sum_j \tilde \gamma_{jt}^2\sigma_{Yj}^{-2}$. Because $p\tilde \kappa_k = \Theta(\mu_{nk})$ for every $k = 1,...,K$ (for the same reasoning as in the proof of Lemma \ref{lemma2}), $O(\frac{p(\tilde \kappa_t + \tilde \kappa_s)+p}{\mu_{nt}\mu_{ns}})$ can be simplified to $O(\frac{\mu_{nt} + \mu_{ns} + p}{\mu_{nt}\mu_{ns}})$. Now because $\mu_{n,min}$ grows faster than $\sqrt{p}$, we have $Var(\sum_{j}\hat{\tilde{\gamma}}_{jt}\hat{\tilde{\gamma}}_{js}\sigma_{Yj}^{-2}/\sqrt{\mu_{nt}\mu_{ns}})\rightarrow 0$ for every $(t,s)$ entry. This completes the proof.
\end{proof}

\begin{lemma} \label{lemma7}
    Under Assumptions 1-2, if $\mu_{n,min}/\sqrt{p} \rightarrow \infty$, $\mathcal{V}_{\rm SRIVW} = (\sum_j M_j)^{-1}\mathbb{V}(\sum_j M_j)^{-1} \rightarrow \bm0$ 
\end{lemma}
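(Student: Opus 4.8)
The plan is to control the largest eigenvalue of the symmetric positive semidefinite matrix $\mathcal{V}_{\rm SRIVW}$ by sandwiching $\mathbb{V}$ between multiples of $T_nT_n^T$ and then tracking exactly how the two scaling matrices --- $S_n$, which governs $\sum_j M_j$ through Assumption 2(ii), and $T_n$, which governs $\mathbb{V}$ through Lemma \ref{lemma1} --- diverge from one another under the rate condition $\mu_{n,\min}/\sqrt{p}\to\infty$. Since $\mathbb{V}$ is a covariance matrix (hence positive semidefinite) and $(\sum_j M_j)^{-1}$ is symmetric, $\mathcal{V}_{\rm SRIVW}$ is symmetric positive semidefinite, so it suffices to show $\lambda_{\max}(\mathcal{V}_{\rm SRIVW})\to 0$.

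First I would invoke Lemma \ref{lemma1}, which gives $\lambda_{\max}(T_n^{-1}\mathbb{V} T_n^{-T})\le C$ for a constant $C$, i.e. $\mathbb{V}\preceq C\,T_nT_n^T$. Setting $B_n=(\sum_j M_j)^{-1}T_n$, this yields $\mathcal{V}_{\rm SRIVW}\preceq C\,B_nB_n^T$, so it remains to prove $\lambda_{\max}(B_nB_n^T)=\|B_n\|^2\to 0$ in spectral norm; because $B_n$ has fixed dimension $K\times K$, this reduces to showing every entry of $B_n$ is $o(1)$. Next I would compute $B_n$ explicitly. By Assumption 2(ii) the matrix $G_n:=\big(S_n^{-1}(\sum_j M_j)S_n^{-T}\big)^{-1}$ is bounded element-wise, since its largest eigenvalue is the reciprocal of a minimum eigenvalue bounded away from $0$; hence $(\sum_j M_j)^{-1}=S_n^{-T}G_nS_n^{-1}$. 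Using $S_n=\tilde S_n\,{\rm diag}(\sqrt{\mu_{n1}},\dots,\sqrt{\mu_{nK}})$ and $T_n=\tilde S_n\,{\rm diag}(\sqrt{\mu_{n1}+p},\dots,\sqrt{\mu_{nK}+p})$, the diagonal factors collapse to $S_n^{-1}T_n={\rm diag}\big(\sqrt{(\mu_{nk}+p)/\mu_{nk}}\big)$, so that $B_n=\tilde S_n^{-T}D_n$, where the $(t,s)$ entry of $D_n$ is $(G_n)_{ts}\sqrt{1/\mu_{nt}+p/(\mu_{nt}\mu_{ns})}$.

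Finally I would feed in the rate condition. Since $\mu_{nk}\ge\mu_{n,\min}$ for every $k$ and $\mu_{n,\min}/\sqrt{p}\to\infty$ (with $p\to\infty$ by Assumption 1), both $1/\mu_{nt}\le 1/\mu_{n,\min}\to 0$ and $p/(\mu_{nt}\mu_{ns})\le p/\mu_{n,\min}^2\to 0$; as $(G_n)_{ts}=O(1)$, each entry of $D_n$ is $o(1)$, so $\|D_n\|\to 0$ by finite dimensionality. Because $\lambda_{\min}(\tilde S_n\tilde S_n^T)$ is bounded away from $0$ (Assumption 2(i)), $\|\tilde S_n^{-T}\|=O(1)$, and therefore $\|B_n\|\le\|\tilde S_n^{-T}\|\,\|D_n\|\to 0$, giving $\mathcal{V}_{\rm SRIVW}\to\bm 0$.

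The main subtlety, which is also the conceptual point of the lemma, is why $\mu_{n,\min}/\sqrt{p}\to\infty$ is genuinely needed here, whereas $\mathcal{V}_{\rm IVW}$ vanishes under \emph{any} relative rate of $\mu_{nk}$ and $p$ (Lemma \ref{lemma 4}). The inflation factor relating $\sum_j M_j$ to $\sum_j (M_j+V_j)$, namely $S_n^{-1}(\sum_j V_j)S_n^{-T}$ whose entries are of order $p/\sqrt{\mu_{nt}\mu_{ns}}$, need not be bounded, so one cannot simply compare $\mathcal{V}_{\rm SRIVW}$ with $\mathcal{V}_{\rm IVW}$; instead one must resolve the competition between the $O(p)$ growth of $\sum_j V_j$ and the $\Theta(\mu_{nk})$ growth of $\sum_j M_j$ coordinate by coordinate. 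The two surviving terms $1/\mu_{n,\min}$ and $p/\mu_{n,\min}^2$ are precisely the quantities that the condition $\mu_{n,\min}/\sqrt{p}\to\infty$ forces to zero, so the proof hinges on carrying the $+p$ shift inside $T_n$ through the calculation rather than discarding it prematurely.
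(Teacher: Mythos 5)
Your proof is correct and follows essentially the same route as the paper's: your matrix $D_n$ is exactly the paper's inner factor $A_1$, whose $(t,s)$ entry is $O\big(\sqrt{(\mu_{ns}+p)/(\mu_{nt}\mu_{ns})}\big)$, and both arguments combine Assumption 2 with the boundedness of $T_n^{-1}\mathbb{V}T_n^{-T}$ from Lemma \ref{lemma1}. The only difference is cosmetic: you first replace $\mathbb{V}$ by the Loewner upper bound $C\,T_nT_n^T$ and bound a spectral norm, whereas the paper keeps the full sandwich $\tilde S_n^{-T}A_1(T_n^{-1}\mathbb{V}T_n^{-T})A_1^T\tilde S_n^{-1}$ and argues each factor is bounded or $o(1)$.
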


\begin{proof}
    Let $T_n = \tilde S_n {\rm diag}(\sqrt{\mu_{n1} + p},...,\sqrt{\mu_{nk} + p})$. Note that
    \begin{align*}
    & (\sum_j M_j)^{-1}\mathbb{V}(\sum_j M_j)^{-1} \\
    & = S_n^{-T} \left(S_n^{-1} (\sum_j M_j) S_n^{-T}\right)^{-1} \left(S_n^{-1} \mathbb{V} S_n^{-T}\right) \left(S_n^{-1} (\sum_j M_j ) S_n^{-T}\right)^{-1} S_n^{-1} \\
    & = \tilde S_n^{-T} \underbrace{{\rm{{\rm diag}}}(\frac{1}{\sqrt{\mu_{n1}}},...,\frac{1}{\sqrt{\mu_{nK}}}) \left(S_n^{-1} (\sum_j M_j) S_n^{-T}\right)^{-1} {\rm{{\rm diag}}}(\sqrt{\frac{\mu_{n1}+p}{\mu_{n1}}},...,\sqrt{\frac{\mu_{nK}+p}{\mu_{nK}}})}_{A_1} \left(T_n^{-1} \mathbb{V} T_n^{-T}\right) \\
    & \underbrace{{\rm{{\rm diag}}} (\sqrt{\frac{\mu_{n1}+p}{\mu_{n1}}},...,\sqrt{\frac{\mu_{nK}+p}{\mu_{nK}}}) \left(S_n^{-1} (\sum_j M_j) S_n^{-T}\right)^{-1} {\rm{{\rm diag}}}(\frac{1}{\sqrt{\mu_{n1}}},...,\frac{1}{\sqrt{\mu_{nK}}})}_{A_1^T} \tilde S_n^{-1} 
    \end{align*}
    We know $\tilde S_n^{-T}$, $\tilde S_n^{-1}$, and $(S_n^{-1} (\sum_j M_j) S_n^{-T})^{-1}$ are all bounded by Assumption 2. By Lemma \ref{lemma1}, $T_n^{-1}\mathbb{V}T_n^{-T}$ is also bounded. It then sufficies to show $A_1$ and $A_1^T$ are $o(1)$. Consider any $(t,s)$ entry of the matrix $A_1$, for $t,s = 1,...,K$. $A_{1[t,s]} = O(\sqrt{(\mu_{ns} + p)/(\mu_{ns}\mu_{nt})})$, which is $o(1)$ if $\mu_{n,min}/\sqrt{p}\rightarrow\infty$. Thus, $(\sum_j M_j)^{-1}\mathbb{V}(\sum_j M_j)^{-1} \rightarrow \bm 0$.
\end{proof}

\section{Extension to balanced horizontal pleiotropy}\label{sec: extension to bhp}

 In this section, we extend SRIVW to accommodate one specific form of pleiotropy in MR, known as balanced horizontal pleiotropy. The resulting estimator is referred to as SRIVW-pleio. Specifically, we generalize our outcome model to 
 $ Y = \ \bm X^T \bm\beta_0 + \bmZ^T \bm\alpha + g(\bm U,E_Y)$, where $\bm\alpha= (\alpha_1,\dots, \alpha_p)^T$ is the vector of the unknown direct effects of instruments on the outcome and each $\alpha_j$ is assumed to follow a mean-zero normal distribution with unknown variance $\tau_0^2$. This model implies that $ \Gamma_j = \bm \gamma_j^T\bm\beta_0 + \alpha_j  $ holds for every $j$. We further make the following assumption which modifies the Assumption 1 to accommodate balanced horizontal pleiotropy.
\begin{assumption}\label{assump: 3}
Suppose Assumption 1 holds, except that $\alpha_j \sim N(0, \tau_0^2)$ and conditional on $\alpha_j$, $\hat{\Gamma}_j \sim N(\alpha_j+\bgamma_j^T\bbeta_0, \sigma_{Yj}^2)$ for every $j$,  and there exists a positive constant $c_{+}$ such that $\tau_0 \le c_{+}\sigma_{Yj}$ across $j$. 
\end{assumption}

 The theorem below establishes the consistency and asymptotic normality of SRIVW-pleio under balanced horizontal pleiotropy.
\begin{theorem} \label{theo: divw}
Assume Assumptions 2-3, ${\mu_{n,\min}}/{\sqrt{p}}\rightarrow \infty$, and $\max_j \gamma_{jk}^2\sigma_{Xjk}^{-2}/(\mu_{n,\min} + p) \rightarrow 0$ for all $k$ as $n\to\infty$. If $\phi = O_P(\mu_{n,\min}+p)$, 
    then $\hat{\bbeta}_{\rm SRIVW-pleio, \phi} $ is consistent and asymptotically normal, i.e., as $n\to \infty$,
\begin{align*}
    \mathbb{U}^{-\frac{1}{2}}\bigg\{ \sum_{j=1}^{p}M_j \bigg\}( \hat{\bbeta}_{{\rm SRIVW-pleio}, \phi} - \bbeta_0) \xrightarrow[]{d} N(\bm0, I_{K}),
\end{align*}
where $\mathbb{U} = \sum_{j=1}^p\{(1 + \tau_0^2\sigma_{Yj}^{-2} + \bm{\beta}_0^TV_j\bm{\beta}_0)(M_j + V_j) + V_j\bm{\beta}_0\bm{\beta}_0^TV_j\}$.
\end{theorem}
The proof is provided in the section S3.11. This theorem suggests that the asymptotic variance of SRIVW-pleio is given by $\bigg \{\sum_{j=1}^p M_j\bigg\}^{-1} \mathbb{U} \bigg \{\sum_{j=1}^p M_j\bigg\}^{-1}$. We show in the section S3.11 that a consistent variance estimator is given by
\begin{align*}
    \hat {\mathcal{V}}_{\rm SRIVW-pleio} = \bigg\{R_{\phi}(\sum_j \hat M_j - V_j) \bigg\}^{-1}\hat{\mathbb{U}}_{\phi} \bigg\{R_{\phi}(\sum_j \hat M_j - V_j)\bigg\}^{-1}
\end{align*}
where 
\begin{align*}
     &\hat{\mathbb{U}}_{\phi} = \sum_{j=1}^p\left\{\left(1+\hat{\tau}_{\phi}^2\sigma_{Yj}^{-2}+\hat{\bm\beta}_{{\rm SRIVW-pleio},\phi}^TV_j\hat{\bm\beta}_{{\rm SRIVW-pleio},\phi}\right)\hat{M_j}+V_j\hat{\bm\beta}_{{\rm SRIVW-pleio},\phi}\hat{\bm\beta}_{{\rm SRIVW-pleio},\phi}^TV_j\right\}, \quad \text{and} \\
    &  \hat{\tau}^2_{\phi}   = \frac{1}{\sum_{j=1}^p\sigma_{Yj}^{-2}}\left\{\sum_{j=1}^{p} \left( (\hat \Gamma_j - \hat{\bgamma_j}^T\hat{\bbeta}_{{\rm SRIVW-pleio}, \phi})^2 - \sigma_{Yj}^2 - \hat{\bm\beta}_{{\rm SRIVW-pleio},\phi}^T\Sigma_{Xj}\hat{\bm\beta}_{{\rm SRIVW-pleio},\phi} \right)\sigma_{Yj}^{-2} \right\}.
\end{align*}

To choose $\phi$ under balanced horizontal pleiotropy, we use the modified \textit{Q}-statistic (\citeSupp{sanderson2021testing}), denoted as $Q_+$,  as the objective function:
\begin{align*}
    Q_+ (\phi) = \sum_{j=1}^{p} \frac{(\hat \Gamma_j - \hat \bgamma_j^T\hat \bbeta_{\rm SRIVW-pleio, \phi})^2}{\sigma_{Yj}^2 + \hat \bbeta_{\rm SRIVW-pleio,\phi}^T \Sigma_{Xj} \hat \bbeta_{\rm SRIVW-pleio,\phi} + \hat \tau_{\phi}^2} ,\ \text{ subject to $\phi \in B$}
\end{align*}
where $\hat \tau_{\phi}$ depends on $\phi$ through $\hat \bbeta_{\rm SRIVW-pleio,\phi} $, and in our implementation, $B$ take the same form as that provided in the Section 5.3 of the main text.

A simulation study evaluating the performance of the SRIVW-pleio estimator under balanced horizontal pleiotropy is available in the Section S4.2.

\section{Proofs}\label{sec: proofs}

\subsection{The rate of $\sigma_{Xjk}^2$ and $\sigma_{Yj}^2$, and the  boundedness of variance ratios $\sigma_{Xjk}^2/\sigma_{Yj}^2$}\label{sec: rate sigma}

In Section 2 of the main text, we state that when only common variants are used, every entry in $\Sigma_{Xj}$ and $ \sigma_{Yj}^2, j=1,\dots, p$ are $\Theta (1/n)$, where $n=\min (n_Y, n_{X1}, \dots,  n_{XK}) $. We now provide justifications. Similar argument can be used to show that when rare variants are used, entries in $\Sigma_{Xj}$ and $\sigma_{Yj}^2$ converge to 0 at a rate slower than $1/n$.

From marginal regression, we have
\begin{align}\label{eq: sigma xjk}
    \sigma_{Yj}^2 = \frac{Var(Y) - \Gamma_j^2 Var(Z_j)}{n_Y Var(Z_j)},\ \sigma_{Xjk}^2 = \frac{Var(X_k) - \gamma_{jk}^2 Var(Z_j)}{n_{Xk} Var(Z_j)}
\end{align}
Because $Var(Y) \ge \sum_j \Gamma_{j}^2 Var(Z_j)$ and $Var(X_k) \ge \sum_j \gamma_{jk}^2 Var(Z_j)$ holds for every $k = 1,...,K$, the approximations $\sigma_{Yj}^2 \approx \frac{Var(Y)}{n_Y Var(Z_j)}$ and $\sigma_{Xjk}^2 \approx \frac{Var(X_k)}{n_{Xk} Var(Z_j)}$ for $k = 1,..., K$ are reasonable when $p$ is large and no single IV dominates the summation over $j$. 

When only common variants are used, $Var(Z_j) = \Theta(1)$. $Var(Y)$ and $Var(X_k), k = 1,...,K$ are also $\Theta(1)$. Provided that the sample sizes $n_Y, n_{Xk}, k=1,\dots, K$ diverge to infinity at the same rate,  which is assumed in Assumption 1(i), we get $\sigma_{Yj}^2 = \Theta(1/n)$ and $\sigma_{Xjk}^2 = \Theta(1/n)$. In the general case with either common or rare variants, we can see that the ratio between $  \sigma_{Yj}^2 $ and $\sigma_{Xjk}^2$ is bounded between 0 and infinity, provided that $Var(Y)$ and $Var(X_k), k = 1,...,K$ are bounded and the sample sizes $n_Y, n_{Xk}, k=1,...,K$ diverge to infinity at the same rate. This makes it reasonable to assume the boundedness of the variance ratios $\sigma_{Xjk}^2/\sigma_{Yj}^2$.

\subsection{Existence of an $S_n$ satisfying Assumption 2}

In this subsection, we want to show there is always an $S_n$ matrix that satisfies Assumption 2.

Take the eigen-decomposition of $\sum_j \bgamma_j \bgamma_j^T \sigma_{Yj}^{-2} = U_n \Lambda_n U_n^T$, where $\Lambda_n = {\rm diag}(\lambda_{n1},...,\lambda_{nK})$ and $\lambda_{n1}\ge ...\ge \lambda_{nK}$. We argue that $S_n = U_n {\rm diag}(\sqrt{\lambda_{n1}},...,\sqrt{\lambda_{nK}})$ satisfies the two conditions in Assumption 2.

First, $\tilde S_n = U_n$ is bounded elementwise because $U_n$ is the matrix of eigenvectors. Because $U_n U_n^T = I_K$, the smallest eigenvalue of $\tilde S_n \tilde S_n^T$ is always 1, and hence is bounded away from 0.

Second, $S_n^{-1} \sum_j \bgamma_j \bgamma_j^T \sigma_{Yj}^{-2} S_n^{-T} = I_K$. Hence, it is bounded and its minimum eigenvalue is bounded away from 0.

\subsection{Proof of claims related to the illustrative Example 1 and 2, and the minimum conditional $F$-statistic}

\textbf{Positive definiteness of $S_n^{-1} \bigg\{ \sum_{j=1}^{p} \bgamma_j \bgamma_j^T \sigma_{Yj}^{-2} \bigg\} S_n^{-T}$ in the illustrative Example 1}

In Example 1 of the main text, two exposures $X_1$ and $X_2$ are both strongly predicted by a common set of SNPs, but either one of them has weak conditional associations with the SNPs given the other. To complete this example, we show that when $\mu_{n1} =p$ and $\mu_{n2} = n$, 
\[   S_n^{-1} \bigg\{ \sum_{j=1}^{p} \bgamma_j \bgamma_j^T \sigma_{Yj}^{-2}\bigg\} S_n^{-T}=\frac{1}{\var(Y)}
\begin{pmatrix}
        \frac{c^2p }{\mu_{n1}} & \frac{c \sqrt{n} \sum_{j=1}^{p} \gamma_{j1}}{\sqrt{\mu_{n1}\mu_{n2}}} \\
        \frac{c\sqrt{n}  \sum_{j=1}^{p} \gamma_{j1}}{\sqrt{\mu_{n1}\mu_{n2}}} & \frac{n \sum_{j=1}^{p}\gamma_{j1}^2}{\mu_{n2}}
    \end{pmatrix}= 
    \frac{1}{\var(Y)}
\begin{pmatrix}
        c^2 & \frac{c  \sum_{j=1}^{p} \gamma_{j1}}{\sqrt{p}} \\
       \frac{c  \sum_{j=1}^{p} \gamma_{j1}}{\sqrt{p}} &\sum_{j=1}^{p}\gamma_{j1}^2
    \end{pmatrix} 
\]
is bounded element-wise with the smallest eigenvalue bounded away from 0. 

The element-wise boundedness is from the condition that $\gamma_{j1}\sim N(0,1/p)$. 
Moreover, the determinant of $S_n^{-1} \bigg\{ \sum_{j=1}^{p} \bgamma_j \bgamma_j^T \sigma_{Yj}^{-2} \bigg\} S_n^{-T}$ is 
$$
\frac{c^2}{\var(Y)}
\left[
\sum_{j=1}^p\gamma_{j1}^2 - \frac{1}{p} (\sum_{j=1}^p\gamma_{j1})^2
\right]\geq 0
$$
by the Cauchy-Schwarz inequality, with equality holds when all $\gamma_{j1}$'s are equal to each other, which cannot be true under the condition that $\gamma_{j1}$'s are drawn from $N(0,1/p)$. Hence, the determinant must be positive, implying that the smallest eigenvalue of that matrix is bounded away from 0.

\textbf{Positive definiteness of $S_n^{-1} \bigg\{ \sum_{j=1}^{p} \bgamma_j \bgamma_j^T \sigma_{Yj}^{-2} \bigg\} S_n^{-T}$ in the illustrative Example 2}

For this example, we let $\mu_{n1} = p$, $\mu_{n2} = n$ and $\mu_{n3} = n$,
\begin{align*}
    \tilde S_n^{-1} = 
    \begin{bmatrix}
        1 & -1 & 0 \\
        1 & 1 & 1 \\
        1 & 1 & -2
    \end{bmatrix}.
\end{align*}

Then, we have
\begin{align*}
    & S_n^{-1} \bigg\{ \sum_{j=1}^{p} \bgamma_j \bgamma_j^T \sigma_{Yj}^{-2}\bigg\} S_n^{-T} \\
    & =\frac{1}{\var(Y)}
    \begin{bmatrix}
        \frac{c^2p}{\mu_{n1}} & -\frac{c\sqrt{n}\sum_j (3\gamma_{j1}+\epsilon_j)}{\sqrt{\mu_{n1}\mu_{n2}}} + \frac{1.5c^2}{\sqrt{\mu_{n1}\mu_{n2}}} & -\frac{c\sqrt{n}\sum_j \epsilon_j}{\sqrt{\mu_{n1}\mu_{n3}}} \\
        -\frac{c\sqrt{n}\sum_j (3\gamma_{j1}+\epsilon_j)}{\sqrt{\mu_{n1}\mu_{n2}}} + \frac{1.5c^2}{\sqrt{\mu_{n1}\mu_{n2}}} & \frac{n\sum_j(3\gamma_{j1}+\epsilon_j+1.5\frac{c}{\sqrt{n}})^2}{\mu_{n2}} & \frac{n\sum_j(3\gamma_{j1} + \epsilon_j) \epsilon_j}{\sqrt{\mu_{n2}\mu_{n3}}} + \frac{c\sqrt{n}\sum_j \epsilon_j}{\sqrt{\mu_{n2}\mu_{n3}}} \\
        -\frac{c\sqrt{n}\sum_j \epsilon_j}{\sqrt{\mu_{n1}\mu_{n3}}} & \frac{n\sum_j(3\gamma_{j1} + \epsilon_j) \epsilon_j}{\sqrt{\mu_{n2}\mu_{n3}}} + \frac{c\sqrt{n}\sum_j \epsilon_j}{\sqrt{\mu_{n2}\mu_{n3}}} & \frac{n\sum_j \epsilon_j^2}{\mu_{n3}}
    \end{bmatrix} \\
    & = \frac{1}{\var(Y)}\begin{bmatrix}
        c^2 & o(1) & o(1) \\
        o(1) & \sum_j (3\gamma_{j1} + \epsilon_j + o(1))^2 & \sum_j (3\gamma_{j1} +\epsilon_j)\epsilon_j + o(1) \\
        o(1) & \sum_j (3\gamma_{j1} +\epsilon_j)\epsilon_j + o(1) & \sum_j \epsilon_j^2
    \end{bmatrix}
\end{align*}
Under the conditions that $\sum_j \gamma_{j1} = O(1)$, $\sum_j \epsilon_j = O(1)$, $\sum_j \gamma_j\epsilon_j = O(1)$,$\sum_j \gamma_{j1}^2 = \Theta(1)$, $\sum_j \epsilon_j^2 = \Theta(1)$, we have $\sum_j(3\gamma_{j1} + \epsilon_j + o(1))^2 = \Theta(1)$, $\sum_j(3\gamma_{j1}+\epsilon_j)\epsilon_j = \Theta(1)$. These conditions are implied by the assumptions that $\gamma_{j1}$'s and $\epsilon_j$'s are drawn independently from $N(0, 1/p)$. Therefore, $S_n^{-1} \bigg\{ \sum_{j=1}^{p} \bgamma_j \bgamma_j^T \sigma_{Yj}^{-2}\bigg\} S_n^{-T}$ is element-wise bounded. Moreover, the determinant of $S_n^{-1} \bigg\{ \sum_{j=1}^{p} \bgamma_j \bgamma_j^T \sigma_{Yj}^{-2}\bigg\} S_n^{-T}$ is
\begin{align*}
    \frac{c^2}{\var(Y)}\left[(\sum_j\epsilon_j^2)(\sum_j(3\gamma_{j1}+\epsilon_j)^2) - ( \sum_j(3\gamma_{j1}+\epsilon_j)\epsilon_j)^2 \right] \ge 0
\end{align*}
by Cauchy-Schwartz inequality, with equality holds when all $3\gamma_{j1} + \epsilon_j$ is proportional to $\epsilon_j$ for all $j$, which cannot be true under the given conditions. Hence, the determinant must be positive, implying that the smallest eigenvalue is bounded away from 0.

\textbf{Conditional $F$-statistics in Example 2}

In Example 2, because $\gamma_{j2} = \gamma_{j1} + c/\sqrt{n}$ and $\gamma_{j3} = 0.5 \gamma_{j1} + 0.5 \gamma_{j2} + \epsilon_{j}$, it is easy to see that $\delta_{1} = (-1,1,0)$ $\delta_{2} = (1,-1,0)$ and $\delta_3 = (0.5,0.5,-1)$. Notice that $\delta_1$ and $\delta_2$ are in the same direction as the weakest direction corresponding to the first row of $\tilde S_n^{-1}$. Thus,
$F_1 = \Theta(\delta_1^T (\sum_j \bgamma_j \bgamma_j^T \sigma_{Yj}^{-2}) \delta_1)/p = \Theta(1)$ and $F_2 = \Theta(\delta_2^T (\sum_j \bgamma_j \bgamma_j^T \sigma_{Yj}^{-2}) \delta_2)/p = \Theta(1)$. Since $\delta_3$ is in the same direction as the last row $\tilde S_n^{-1}$, we have $F_3 = \Theta(\delta_3^T (\sum_j \bgamma_j \bgamma_j^T \sigma_{Yj}^{-2}) \delta_3)/p = \Theta(n/p) \to \infty$ when $n/p \to \infty$.

\textbf{Lack of coverage of $\beta_{03}$ by the IVW method in Example 2.}

Suppose $\bbeta_0 = (\beta_{01},\beta_{02},\beta_{03}) = (1,1,0)$, and three exposures are measured in independent samples so that the shared correlation matrix $\Sigma = I_K$. Let $C$ be any positive constant. From the proof of Corollary 1 (section S3.9), we have, for any $\epsilon >0$, there exists a large enough constant $C_{\epsilon}$ as $n \to \infty$,
\begin{align*}
         \Pr(\frac{|\hat \beta_{{\rm IVW},3} - \beta_{03}|}{\sqrt{\mathcal{V}_{{\rm IVW},33}}} > C) \ge 1 - \epsilon - \Pr( \frac{|\tilde \beta_{{\rm IVW},3} - \beta_{03}|}{\sqrt{\mathcal{V}_{{\rm IVW},33}}} \le C_{\epsilon} + C)\\
         \ge 1 - \epsilon - \Pr( \frac{|\tilde \beta_{{\rm IVW},3} - \beta_{03}|}{\sqrt{\lambda_{\rm max, IVW}}} \le C_{\epsilon} + C)
    \end{align*} where $\mathcal{V}_{{\rm IVW},33}$ denotes the third diagonal entry of $\mathcal{V}_{{\rm IVW}}$ and
$\lambda_{\rm max, IVW}$ denotes the maximum eigenvalue of $\mathcal{V}_{\rm IVW}$. From the proof of Corollary 1, we know $\lambda_{\rm max, IVW} = \Theta(1/(\mu_{n,min}+p))$. We have just shown that $\mu_{n,min} = p$ in this example. Thus, $\lambda_{\rm max, IVW} = \Theta(1/p)$.

It remains to determine the order of $\tilde \beta_{IVW,3} - \beta_{03}$. Note that by Proposition 1,
\begin{align*}
    \tilde \bbeta_{\rm IVW} - \bbeta_0 = \left(\{ \sum_{j=1}^{p}(M_j+V_j)\}^{-1}\{\sum_{j=1}^p M_j\}\bbeta_0 - \bbeta_0\right) + o_p(r_n) = -\{\sum_{j=1}^{p}(M_j+V_j)\}^{-1}\{ \sum_{j=1}^p V_j \}\bbeta_0 + o_p(r_n)
\end{align*}
where $r_n \to 0$ represents the rate at which $\tilde \bbeta_{\rm IVW}$ converges to $\bbeta_0$ as $n \to \infty$.
We know $\sum_j V_j = {\rm diag}(\Theta(p))$, and
\begin{align*}
    \sum_j M_j = \frac{n}{\var(Y)}
    \begin{bmatrix}
        \sum_j \gamma_{j1}^2 & \sum_j \gamma_{j1} \gamma_{j2} & \sum_j \gamma_{j1} \gamma_{j3} \\
        \sum_j \gamma_{j1}\gamma_{j2} & \sum_j \gamma_{j2}^2  & \sum_j \gamma_{j2} \gamma_{j3} \\
        \sum_j \gamma_{j1}\gamma_{j3} & \sum_j \gamma_{j2}\gamma_{j3}  & \sum_j \gamma_{j3}^2
    \end{bmatrix}
    = \Theta(n\begin{bmatrix}
        1 & 1 & 1 \\
        1 & 1 + s & 1 \\
        1 & 1 & 2
    \end{bmatrix}),
\end{align*} where $s = p/n$. Hence 
\begin{align*}
    \sum_{j=1}^{p}(M_j+V_j) = \Theta(n
    \begin{bmatrix}
        1+s & 1 & 1 \\
        1 & 1+2s & 1 \\
        1 & 1 & 2 + s
    \end{bmatrix}
    ).
\end{align*}
Some algebras show that
\begin{align*}
    \left( \sum_{j=1}^{p}(M_j+V_j) \right)^{-1} = \Theta(\frac{1}{p(3+7s+2s^2)} \begin{bmatrix}
        1+5s +2s^2 & -(1+s) & -2s \\
        -(1+s) & 1+3s+s^2 & -s \\
        -2s & -s & 3s+s^2
    \end{bmatrix}).
\end{align*}
Therefore, $\tilde \beta_{\rm IVW,3} - \beta_{03} = \Theta(\frac{-2sp -sp}{p(3+7s+2s^2)}) + o_p(r_n) = \Theta(s) + o_p(r_n)$, and $\frac{|\tilde \beta_{{\rm IVW},3} - \beta_{03}|}{\sqrt{\lambda_{\rm max, IVW}}} = \Theta(s\sqrt{p}) + o_p(\sqrt{p}r_n) = \Theta(p^{3/2}/n) + o_p(\sqrt{p}r_n)$. This means that even if $\sqrt{p}r_n \to 0$ as $n \to \infty$, $\frac{|\tilde \beta_{{\rm IVW},3} - \beta_{03}|}{\sqrt{\lambda_{\rm max, IVW}}}$ is still not bounded in probability when $p^{3/2}/n \to \infty$ so that we have $\Pr(\frac{|\hat \beta_{{\rm IVW},3} - \beta_{03}|}{\sqrt{\mathcal{V}_{{\rm IVW},33}}} > C)$ approaches 1 as $n \to \infty$. We remark that in the above argument, for simplicity, we replace $\mathcal{V}_{\rm IVW,33}$ by $\lambda_{\max, IVW}$ which is only $\Theta(1/p)$. Because $X_3$ has the conditional $F$-statistic going to $\infty$, it may be even reasonable to assume $\mathcal{V}_{\rm IVW,33} = O(1/n)$. In that case, we only require $p^2/n \to \infty$ to establish the unboundedness of $\tilde \beta_{\rm IVW,3} - \beta_{03}$ in probability.

\textbf{The order of minimum conditional $F$-statistic is $\Theta(\mu_{n,\min}/p)$}

For each $n$ and $k$, let
\begin{align*}
    \delta_{nk} & = \argmin_{\delta_{k} \in \mathbb{R}^{K}} \delta_k^T \left(\sum_j  \bgamma_j \bgamma_j^T \sigma_{Xjk}^{-2} \right)\delta_k\quad \text{subject to $\delta_{k,k} = -1$}
\end{align*}
where $\delta_{k,k}$ denotes the $k$th element of $\delta_{k}$. By Assumption 1(iii), we have
\begin{align*}
    \delta_{nk} \approx \frac{\var(Y)}{\var(X_k)}\argmin_{\delta_{k} \in \mathbb{R}^{K}} \delta_k^T \left(\sum_j  \bgamma_j \bgamma_j^T \sigma_{Yj}^{-2} \right)\delta_k\quad \text{subject to $\delta_{k,k} = -1$}
\end{align*}
The reasonableness of this approximation is discussed in Section S3.1. To solve this optimization problem, we consider the following loss function:
\begin{align*}
    L(\delta_k, \alpha) = \delta_k^T A_n \delta_k + \alpha(\delta_{k,k} +1)
\end{align*}
where $A_n = \sum_j \bgamma_j \bgamma_j^T \sigma_{Yj}^{-2}$, $\alpha$ is the Lagrangian muliplier. Taking the derivative and setting it to zero, we have the minimizer
\begin{align*}
    \delta_{nk} = -\frac{A_n^{-1}e_k}{[A_n^{-1}]_{kk}}
\end{align*}
where $e_k$ is the unit vector with $k$th element being 1. Then, we can compute $\delta_{nk}^T A_n \delta_{nk}$, denoted as $b_{nk}$:
\begin{align*}
    b_{nk} & = \delta_{nk}^T A_n \delta_{nk} = \frac{e_k^T A_n^{-1} e_k}{[A_n^{-1}]_{kk}^2}
\end{align*}
Because $e_k^T A_n^{-1} e_k = [A_n^{-1}]_{kk}^2$, we have $b_{nk} = \frac{1}{[A_n^{-1}]_{kk}} \ge \lambda_{nK}$, where $\lambda_{n1}\ge \lambda_{n2} \ge \dots \ge \lambda_{nK}$ are eigenvalues of $A_n$. From this, we have
\begin{align*}
    \min_k b_{nk} \ge \lambda_{nK}
\end{align*}
On the other hand, we want to show that
\begin{align*}
    \min_k b_{nk} \le \frac{\lambda_{nK}}{C}
\end{align*}
for some constant $C$. Note that
\begin{align*}
    [A_n^{-1}]_{kk} & = e_k^T A_n^{-1} e_k = \sum_{l=1}^K \frac{u_{nl,k}^2}{\lambda_{nl}} \ge \frac{u_{nK,k}^2}{\lambda_{nK}} = \max \{0, \frac{u_{nK,k}^2}{\lambda_{nK}}\}
\end{align*}
where $u_{nl,k}$ is the $k$th component of the $l$-th eigenvector of $A_n$. Thus,
\begin{align*}
    b_{nk} = \min \{\infty, \frac{\lambda_{nK}}{u_{nK,k}^2} \}
\end{align*}
and hence
\begin{align*}
    \min_k b_{nk} & = \min_k \min\{\infty,  \frac{\lambda_{nK}}{u_{nK,k}^2}\} \\
    & = \min\{\infty,  \min_k \frac{\lambda_{nK}}{u_{nK,k}^2}\}
\end{align*}
Now, note that the eigenvectors are normalized so that $\sum_{k=1}^K u_{nK, k}^2 = 1$. There must be at least one $k$ for which $u_{nK,k}^2 \ge C$ for some constant $0 <C \le 1$ for every $n$. Then, we have
\begin{align*}
    \min_k b_{nk} & \le \frac{\lambda_{nK}}{C}.
\end{align*}
Thus, $\min_k b_{nk} = \Theta(\lambda_{nK})$. Finally, because $F_k = \frac{b_{nk}}{p-(K-1)}$ and $\mu_{n,\min} = \Theta(\lambda_{nK})$ as shown in Section S3.5, this implies that $\mu_{n,\min}/p$ can be interpreted as the rate at which the minimum conditional $F$-statistic across $K$ exposures grows as $n \rightarrow \infty$.

\subsection{Connection between $\sum_j \bgamma_j \bgamma_j^T\sigma_{Yj}^{-2}$ and $\sum_j \Omega_j^{-1} \bgamma_j\bgamma_j^T\Omega_j^{-T}$}\label{connection two matrices}

We establish the connection between $\sum_j \bgamma_j \bgamma_j^T\sigma_{Yj}^{-2}$ and $\sum_j \Omega_j^{-1} \bgamma_j\bgamma_j^T\Omega_j^{-T}$. The latter is referred to as the IV strength matrix in Section 3 of the main text. Specifically, we aim to show that for any $S_n$ satisfying Assumption 2, if $S_n^{-1}\left(\sum_j \bgamma_j \bgamma_j^T\sigma_{Yj}^{-2}\right) S_n^{-T}$ is bounded element-wise and its minimum eigenvalue is bounded away from zero, then $Q_n^{-1}\left(\sum_j\Omega_j^{-1} \bgamma_j\bgamma_j^T\Omega_j^{-T}\right)Q_n^{-T}$ is also bounded element-wise and its minimum eigenvalue is bounded away from 0, where 
\begin{align*}
  Q_n = \tilde Q_n{\rm {\rm diag}}(\sqrt{\mu_{n1}},...,\sqrt{\mu_{nK}}) = \Sigma^{-\frac{1}{2}}D_n^{-1}\tilde S_n {\rm {\rm diag}}(\sqrt{\mu_{n1}},...,\sqrt{\mu_{nK}})
\end{align*}
and
\begin{align*}
    D_n = {\rm {\rm diag}}(\sqrt{\frac{n_YVar(X_1)}{n_{X1}Var(Y)}},...,\sqrt{\frac{n_YVar(X_K)}{n_{XK}Var(Y)}}),
\end{align*}
and $\Sigma$ is the shared correlation matrix.

We now justify this claim. First, the matrix $\tilde Q_n$ satisfies Assumption 2(i). $\tilde Q_n$ is bounded because $\Sigma^{-\frac{1}{2}}$, $D_n^{-1}$ and $\tilde S_n$ are all bounded. The boundedness of $D_n^{-1}$ is by Assumption 1 that all sample sizes diverge to infinity at the same rate. Furthermore, $\tilde Q_n \tilde Q_n^{T}$ has the minimum eigenvalue bounded away from 0. This is because,
\begin{align*}
    \lambda_{\min}(\tilde Q_n \tilde Q_n^{T}) & = \lambda_{\min}(\Sigma^{-\frac{1}{2}}D_n^{-1}\tilde S_n \tilde S_n^{T}D_n^{-1}\Sigma^{-\frac{1}{2}}) \\
    & \ge \lambda_{\min}(\tilde S_n \tilde S_n^{T})\lambda_{\min}(D_n^{-1}D_n^{-1})\lambda_{\min}(\Sigma^{-1}) > 0
\end{align*}
where we use Lemma \ref{lemma0} twice and Assumption 2(i) that $\tilde S_n \tilde S_n^{T}$ has minimum eigenvalue bounded away from 0, and that $D_n$ and $\Sigma$ are both positive definite and bounded.

Next, we show that $Q_n^{-1}\left(\sum_j\Omega_j^{-1} \bgamma_j\bgamma_j^T\Omega_j^{-T}\right)Q_n^{-T}$ is bounded element-wise and its minimum eigenvalue is bounded away from 0. Note that
\begin{align*}
    Q_n^{-1}\left(\sum_j \Omega_j^{-1}\bgamma_j \bgamma_j^T \Omega_j^{-T}\right) Q_n^{-T} & = Q_n^{-1} \left(\sum_j (\Omega_j\sigma_{Yj}^{-1})^{-1} \bgamma_j \bgamma_j^T\sigma_{Yj}^{-2} (\Omega_j\sigma_{Yj}^{-1})^{-1}\right) Q_n^{-T} \\
    & = S_n^{-1}D_n\Sigma^{\frac{1}{2}} \left(\sum_j \Sigma^{-\frac{1}{2}} D_j^{-1} \bgamma_j \bgamma_j^T\sigma_{Yj}^{-2} D_j^{-1} \Sigma^{-\frac{1}{2}}\right) \Sigma^{\frac{1}{2}}D_n S_n^{-T} \\
    & = S_n^{-1} D_n\left(\sum_j D_j^{-1} \bgamma_j \bgamma_j^T\sigma_{Yj}^{-2} D_j^{-1} \right) D_n S_n^{-T}
\end{align*}
where $D_j = {\rm {\rm diag}}(\sigma_{Xj1}/\sigma_{Yj},...,\sigma_{XjK}/\sigma_{Yj})$.
By Assumption 1 and the result from Section \ref{sec: rate sigma}, $\sigma_{Xjk}/\sigma_{Yj} \approx \sqrt{(n_YVar(X_k))/(n_{Xk}Var(Y))}$ and it is bounded between 0 and infinity for every $j$ and $k$. That is, we have $D_j \approx D_n$ for every $j$. Hence, 
\begin{align*}
    Q_n^{-1}\left(\sum_j \Omega_j^{-1}\bgamma_j \bgamma_j^T \Omega_j^{-T}\right) Q_n^{-T} & = S_n^{-1} D_n\left(\sum_j D_j^{-1} \bgamma_j \bgamma_j^T\sigma_{Yj}^{-2} D_j^{-1} \right) D_n S_n^{-T} \\
    & \approx S_n^{-1} \left(\sum_j \bgamma_j \bgamma_j^T\sigma_{Yj}^{-2} \right) S_n^{-T}
\end{align*}
The boundedness of entries in $Q_n^{-1}\left(\sum_j \Omega_j^{-1}\bgamma_j \bgamma_j^T \Omega_j^{-T}\right) Q_n^{-T}$ and the boundedness of its minimum eigenvalue follows from the condition that $S_n^{-1}\left(\sum_j \bgamma_j \bgamma_j^T\sigma_{Yj}^{-2}\right) S_n^{-T}$ is bounded element-wise and its minimum eigenvalue is bounded away from zero.

Note that the sequence $\mu_{n1},...,\mu_{nK}$ are shared in the definition of $S_n$ and $Q_n$. Therefore, the above result suggests that $\sum_j \bgamma_j \bgamma_j^T\sigma_{Yj}^{-2}$ and $\sum_j \Omega_j^{-1} \bgamma_j\bgamma_j^T\Omega_j^{-T}$ have the same rates of divergence.

\subsection{Proof of the minimum eigenvalues of $\sum_j \bgamma_j \bgamma_j^T \sigma_{Yj}^{-2}$ 
 and\\ $\sum_j \Omega_j^{-1} \bgamma_j\bgamma_j^T\Omega_j^{-T}$ are $\Theta(\mu_{n,min})$} \label{proof: minimum eigenvalue}

In this subsection, we prove that the minimum eigenvalue $\lambda_{n,min}$ of $\sum_j \bgamma_j \bgamma_j^T \sigma_{Yj}^{-2}$ is $\Theta(\mu_{n,min})$ by contradiction.

First take the eigen-decomposition of $\sum_j \bgamma_j \bgamma_j^T \sigma_{Yj}^{-2} = U_n\Lambda_n U_n^T$, where $\Lambda_n = {\rm {\rm diag}}(\lambda_{n1},...,\lambda_{nK})$ and $\lambda_{n1}\ge...\ge\lambda_{nK} = \lambda_{n,min}$. Then,   
\begin{align*}
    S_n^{-1} \left( \sum_j \bgamma_j \bgamma_j^T \sigma_{Yj}^{-2}\right) S_n^{-T} & = {\rm {\rm diag}}(\frac{1}{\sqrt{\mu_{n,1}}},...,\frac{1}{\sqrt{\mu_{n,K}}})\tilde S_n^{-1}U_n\Lambda_nU_n^{T}\tilde S_n^{-T}{\rm {\rm diag}}(\frac{1}{\sqrt{\mu_{n,1}}},...,\frac{1}{\sqrt{\mu_{n,K}}}) \\
    & = {\rm {\rm diag}}(\frac{1}{\sqrt{\mu_{n,1}}},...,\frac{1}{\sqrt{\mu_{n,K}}}) \left(\sum_{k=1}^K \lambda_{nk} \bm v_k \bm v_k^T \right) {\rm {\rm diag}}(\frac{1}{\sqrt{\mu_{n,1}}},...,\frac{1}{\sqrt{\mu_{n,K}}})
\end{align*} 
where $\bm v_k = (v_{k1},...,v_{kK})$ corresponds to the $k$-th column of the matrix $\tilde S_n^{-1} U_n$. We further simplify the above display as
\begin{align*}
    S_n^{-1} \left( \sum_j \bgamma_j \bgamma_j^T \sigma_{Yj}^{-2}\right) S_n^{-T} = \sum_{k=1}^K \lambda_{nk} \bm {\bar v}_k \bm {\bar v}_k^T
\end{align*}
where $\bm {\bar v}_k = (v_{k1}/\sqrt{\mu_{n1}},...,v_{kK}/\sqrt{\mu_{nK}})$.

Now consider two scenarios: (1) $\lambda_{n,min}/\mu_{n,min} \rightarrow \infty$ and (2) $\lambda_{n,min} = o(\mu_{n,min})$. Let $m^*$ denote the index such that $\mu_{nm^*} = \mu_{n,min}$
 
For scenario (1), the $(m^*,m^*)$-entry of $S_n^{-1} \left( \sum_j \bgamma_j \bgamma_j^T \sigma_{Yj}^{-2}\right) S_n^{-T}$ is $\sum_{k=1}^K \lambda_{nk} v_{km^*}^2/\mu_{nm^*}$. By the boundedness of $\tilde S_n$ in Assumption 2 and the boundedness of eigenvectors $U_n$, $\bm v_{k}$ is bounded for all $k$. Now, because we assume $\lambda_{n,min}/\mu_{n,mim} \rightarrow \infty$ (i.e., $\lambda_{nk}/\mu_{nm^*} \rightarrow \infty$ for all $k$), we have $\sum_{k=1}^K \lambda_{nk} v_{km^*}^2/\mu_{nm^*} \rightarrow \infty$, which contradicts Assumption 2 that $S_n^{-1} \left( \sum_j \bgamma_j \bgamma_j^T \sigma_{Yj}^{-2}\right) S_n^{-T}$ is bounded.

For scenario (2), if $\lambda_{n,min} = o(\mu_{n,min})$ (i.e., $\lambda_{nK} = o(\mu_{n,min})$), $\lambda_{nK}/\mu_{nk} \rightarrow 0$ for all $k$. Because $\bm v_{k}$ is bounded for all $k$, we have $\lambda_{nK}\bm {\bar v}_K \bm {\bar v}_K^T \rightarrow \bm 0$. Then,
\begin{align*}
    S_n^{-1} \left( \sum_j \bgamma_j \bgamma_j^T \sigma_{Yj}^{-2}\right) S_n^{-T} = {\sum_{k=1}^{K-1} \lambda_{nk} \bm {\bar v}_k \bm {\bar v}_k^T + o(1)}
\end{align*}
{which is the sum of $K-1$ rank-one $K\times K$ matrices and a remainder matrix whose entries are $o(1)$. We know the minimum eigenvalue of $\sum_{k=1}^{K-1} \lambda_{nk} \bm {\bar v}_k \bm {\bar v}_k^T$ is 0 and the maximum eigenvalue of the remainder matrix is  $o(1)$. As a result, by Weyl's inequality, the minimum eigenvalue of  $S_n^{-1} \left( \sum_j \bgamma_j \bgamma_j^T \sigma_{Yj}^{-2}\right) S_n^{-T}$ goes to 0 as $p \rightarrow \infty$, which contradicts Assumption 2 that the minimum eigenvalue of $S_n^{-1} \left( \sum_j \bgamma_j \bgamma_j^T \sigma_{Yj}^{-2}\right) S_n^{-T}$ is bounded away from 0.}
Therefore, it has to be true that $\lambda_{n,min} = \Theta(\mu_{n,min})$.

To show the minimum eigenvalue of $\sum_j \Omega_j^{-1} \bgamma_j\bgamma_j^T\Omega_j^{-T}$ is $\Theta(\mu_{n,min})$, we utilize the result from \ref{connection two matrices}, and then the proof is almost identical and hence omitted. 

\subsection{Proof of $\mu_{n,min} \rightarrow \infty$ implying that $nh^2_k \to \infty$ for every $k$, where $h_k^2$ is the proportion of variance of $X_k$ explained by the $p$ SNPs} \label{proof: heritability}

As discussed after Assumption 2 and justified in Section \ref{proof: minimum eigenvalue},  the minimum eigenvalue $\lambda_{\min}$ of $\sum_j \bgamma_j \bgamma_j^T \sigma_{Yj}^{-2}$ is $\Theta(\mu_{n,min})$. By the definition of the minimum eigenvalue of a symmetric matrix (see Lemma \ref{lemma0}), we know $\lambda_{\min} \le e_k^T \left(\sum_j \bgamma_j \bgamma_j^T \sigma_{Yj}^{-2} \right) e_k = \sum_j \gamma_{jk}^2 \sigma_{Yj}^{-2}$ for every $k$, where $e_k$ is a vector of zeros with the $k$-th entry being 1. Thus, $\sum_j \gamma_{jk}^2\sigma_{Yj}^{-2} \ge \Theta(\mu_{n,min})$ for every $k$. Then, because $\sigma_{Xjk}^2/\sigma_{Yj}^2$ is bounded away from 0 and $\infty$, $\sum_j \gamma_{jk}^2/\sigma_{Xjk}^2 = 
\Theta(\sum_j \gamma_{jk}^2\sigma_{Yj}^{-2})\rightarrow \infty$ as $\mu_{n,min} \rightarrow \infty$ holds for every $k$.

Next, using the equations in $\eqref{eq: sigma xjk}$, we have for every $k$
\begin{align*}
    \sum_j \frac{\gamma_{jk}^2}{\sigma_{Xjk}^2} = n_{Xk}\sum_j \frac{\gamma_{jk}^2Var(Z_j)}{Var(X_k) - \gamma_{jk}^2 Var(Z_j)} \approx n_{Xk} \sum_j \frac{\gamma_{jk}^2Var(Z_j)}{Var(X_k)} = n_{Xk} h_k^2 = 
    \Theta(n) h_k^2
\end{align*}
Therefore, $\mu_{n,min} \rightarrow \infty$ implies $nh^2_k \to \infty$ for every $k$. The approximation is reasonable because each 
$\gamma_{jk}^2Var(Z_j)$ is small compared with $Var(X)$.

\subsection{Connection of Assumption 2 to the concentration parameter in the individual-level data setting}

To draw the connection, we assume the following simultaneous equations model (\citeSupp{Chao:2005aa}):
\begin{align}
	{X}_n &=   Z_n \Pi_n + {V}_n\label{eq: chaoexp}\\ 
	y_n&=  {X}_n \bbeta_0 + u_n  \label{eq: chaoout} 
\end{align}
where $y_n$ is an $n\times 1$ vector of  observations on the outcome, ${X}_n$ is an $n \times K$ matrix of observations of the exposures, ${Z}_n$ is an $n\times p$ matrix of observations on the $p$ instrument variables, $\Pi_n$ is an $p \times K$ coefficient matrix, and $u_n$ and ${V}_n$ are, respectively, an $n \times 1$ vector and an $n \times K$ matrix of random noises. We assume that the vector $(u_i, v_i)^T$, where $u_i$ and $v_i^T$ are respectively the $i$th component of $u_n$ and the $i$th row of ${V}_n$, are independent and identically distributed with mean $ \bm 0 $ and the positive definite variance-covariance matrix which can be partitioned as

\[\begin{pmatrix}
\sigma_{uu} & \sigma_{uV}^T \\
\sigma_{Vu} & \Sigma_{VV} 
\end{pmatrix}.\]

We want to establish that for any $S_n$ satisfying Assumption 2, if $S_n^{-1}\sum_j \bgamma_j \bgamma_j^T \sigma_{Yj}^{-2} S_n^{-T}$ is element-wise bounded with smallest eigenvalue bounded away from 0, then $S_n^{-1}\Pi_n^T Z_n^T Z_n \Pi_n S_n^{-T}$ is also bounded with smallest eigenvalue bounded away from 0 almost surely. This can further imply that the smallest eigenvalue of the concentration parameter $\Sigma_{VV}^{-\frac{1}{2}} \Pi_n^T Z_n^T Z_n \Pi_n \Sigma_{VV}^{-\frac{1}{2}}$ grows at a rate of $\mu_{n,\min}$ (defined in Assumption 2) almost surely. Thus, we can interpret $\mu_{n,min}$ as the rate at which the smallest eigenvalue of the concentration parameter grows as $n$ increases.

Next, we prove the above claims under model \eqref{eq: chaoexp}-\eqref{eq: chaoout} and  the following conditions:
\begin{enumerate}
    \item[(1)] For $j = 1,...,p$, $\sigma_{Yj}^2$ is known and $\sigma_{Yj}^2 \approx \frac{Var(Y)}{nVar(Z_j)}$ as shown in Section \ref{sec: rate sigma}. 
    \item[(2)] $Var(Y) = \Theta(1)$
    \item[(3)] $p$ IVs are independent
    \item[(4)] There exists constants $C_1$ and $C_2$ such that $0 < C_1 \le \lowlim_{n\to \infty}({\rm min}_{j \in [p]}Var(Z_j)) \le \uplim_{n \to \infty}({\rm max}_{j \in [p]}Var(Z_j)) \le C_2 < \infty$, where $[p]$ denotes the set $\{1,...,p\}$
    \item[(5)] There exists constants $D_1$ and $D_2$ such that $0 < D_1 \le \lowlim_{n \to \infty}\lambda_{\min}(\frac{Z_n^TZ_n}{n}) \le \uplim_{n \to \infty}\lambda_{\max}(\frac{Z_n^TZ_n}{n}) \le D_2 < \infty$ almost surely.
\end{enumerate}

First, when all IVs are independent, $\Pi_n^T = (\bgamma_1, \dots, \bgamma_p)$, and thus $S_n^{-1}\sum_j \bgamma_j \bgamma_j^T \sigma_{Yj}^{-2} S_n^{-T}=S_n^{-1}\Pi_n^T W \Pi_n S_n^{-T}$, where $W = {\rm {\rm diag}}(\sigma_{Y1}^{-2},...,\sigma_{Yp}^{-2})$ denotes a {\rm diag}onal matrix with {\rm diag}onal entries being $\sigma_{Yj}^{-2}, j = 1,...,p$. According to Assumption 2 that $S_n^{-1}\Pi_n^T W \Pi_n S_n^{-T}$ has the smallest eigenvalue bounded away from 0, we know, there exist constants $D_3, D_4$ such that $0 < D_3 \le \lowlim_{n \to \infty}\lambda_{\min}(S_n^{-1}\Pi_n^T W \Pi_nS_n^{-T}) \le \uplim_{n \to \infty}\lambda_{\max}(S_n^{-1}\Pi_n^T W \Pi_nS_n^{-T}) \le D_4 < \infty$. 

Next, note that $S_n^{-1}\Pi_n^TZ_n^TZ_n\Pi_nS_n^{-T} = S_n^{-1}\Pi_n^TW^{\frac{1}{2}}W^{-\frac{1}{2}}Z_n^TZ_nW^{-\frac{1}{2}}W^{\frac{1}{2}}\Pi_nS_n^{-T}$. Consider the inner matrix $W^{-\frac{1}{2}}Z_n^TZ_nW^{-\frac{1}{2}}$ which is of size $p \times p$. The $(t, s)$ element of it is $\sum_{i=1}^nZ_{it}Z_{is}\sigma_{Yt}\sigma_{Ys}$, where $t,s \in \{1,...,p\}$. Since  $\sigma_{Yj}^2 \approx \frac{Var(Y)}{nVar(Z_j)}$ as shown in Section \ref{sec: rate sigma}, we have that  $W^{-\frac{1}{2}}Z_n^TZ_nW^{-\frac{1}{2}}$ can be approximated by $Var(Y)W_z^{-\frac{1}{2}}(\frac{1}{n}{Z}_n^T{Z}_n)W_z^{-\frac{1}{2}}$, where
\begin{align*}
	W_z = 
	\left(
	\begin{array}{ccc}
		Var({Z_1}) &&\\
		&	\ddots&\\
		&&Var(Z_{p})  \\
	\end{array}
	\right).
\end{align*}
By the condition (4) and (5), we know for any $0<\epsilon < min(C_1, C_2,D_1, D_2)$, when $n$ is large enough, we have
\begin{align*}
    \lambda_{\min}(W_z^{-\frac{1}{2}}(\frac{1}{n}Z_n^TZ_n)W_z^{-\frac{1}{2}}) &\ge \lambda_{\min}(\frac{Z_n^TZ_n}{n})\lambda_{\min}(W_z^{-1})\\
    & = \lambda_{\min}(\frac{Z_n^TZ_n}{n})/ \lambda_{\max}(W_z) \\
    \ & > (D_1 - \epsilon)/(C_2 + \epsilon) > 0 \ \text{a.s.},
\end{align*}
and 
\begin{align*}
    \lambda_{\max}(W_z^{-\frac{1}{2}}(\frac{1}{n}Z_n^TZ_n)W_z^{-\frac{1}{2}}) &\le \lambda_{\max}(\frac{Z_n^TZ_n}{n})\lambda_{\max}(W_z^{-1})\\
    & = \lambda_{\max}(\frac{Z_n^TZ_n}{n})/\lambda_{\min}(W_z) \\
    \ & < (D_2 + \epsilon)/(C_1 - \epsilon) < \infty \ \text{a.s.}.
\end{align*}
The above derivations use Lemma \ref{lemma0}. Since $\epsilon$ is arbitrary and $Var(Y) = \Theta(1)$, it follows that there exist constants $D_1^{'}$ and $D_2^{'}$ such that, 
\begin{align*}
0< D_1^{'} & \le \lowlim_{n \to \infty}\lambda_{\min}(W^{-\frac{1}{2}}Z_n^TZ_nW^{-\frac{1}{2}}) \\ & \le \uplim_{n \to \infty}\lambda_{\max}(W^{-\frac{1}{2}}Z_n^TZ_nW^{-\frac{1}{2}}) \le D_2^{'} < \infty  \ \text{a.s.} 
\end{align*}

Then, for any $\epsilon >0$, $\epsilon < min(D_1^{'}, D_3)$, when $n$ is large enough, we have, 
\begin{align*}
	\lambda_{\min}(S_n^{-1}\Pi_n^TZ_n^TZ_n\Pi_nS_n^{-T}) &= \lambda_{\min}(S_n^{-1}\Pi_n^TW^{\frac{1}{2}}W^{-\frac{1}{2}}Z_n^TZ_nW^{-\frac{1}{2}}W^{\frac{1}{2}}\Pi_nS_n^{-T}) \\
        \ &\ge \lambda_{\min}(S_n^{-1}\Pi_n^T W \Pi_nS_n^{-T})\lambda_{\min}(W^{-\frac{1}{2}}Z_n^TZ_nW^{-\frac{1}{2}}) \\
        \ &> (D_1^{'} - \epsilon)(D_3 - \epsilon) > 0  \ \text{a.s.}
\end{align*}
where the first inequality uses Lemma \ref{lemma0}. It follows that $S_n^{-1}\Pi_n^TZ_n^TZ_n\Pi_nS_n^{-T}$ has smallest eigenvalue bounded away from 0 almost surely.

We next show that $S_n^{-1}\Pi_n^TZ_n^TZ_n\Pi_nS_n^{-T}$ is bounded almost surely. We use the fact that $\lambda_{\max}(S_n^{-1}\Pi_n^T W \Pi_nS_n^{-T})$ is bounded and $\lambda_{\max}(W^{-\frac{1}{2}}Z_n^TZ_nW^{-\frac{1}{2}})$ is bounded almost surely. Note that
\begin{align*}
	\lambda_{\max}(S_n^{-1}\Pi_n^TZ_n^TZ_n\Pi_nS_n^{-T}) &= \lambda_{\max}(S_n^{-1}\Pi_n^TW^{\frac{1}{2}}W^{-\frac{1}{2}}Z_n^TZ_nW^{-\frac{1}{2}}W^{\frac{1}{2}}\Pi_nS_n^{-T}) \\
        \ &\le \lambda_{\max}(S_n\Pi_n^T W \Pi_nS_n^{-T})\lambda_{\max}(W^{-\frac{1}{2}}Z_n^TZ_nW^{-\frac{1}{2}}) \\
        \ &< \infty 
\end{align*}
holds almost surely for large enough $n$.

Let $e_t$ be a vector of length K with $t$th entry being 1 and the other entries being 0. We write the $(t,s)$ entry of $S_n^{-1}\Pi_n^TZ_n^TZ_n\Pi_nS_n^{-T}$ as $   e_t^TS_n^{-1}\Pi_n^TZ_n^TZ_n\Pi_n S_n^{-T} e_s $, and 
\begin{align*}
    |e_t^TS_n^{-1}\Pi_n^TZ_n^TZ_n\Pi_n S_n^{-T} e_s| &\le \sqrt{e_t^T(S_n^{-1}\Pi_n^TZ_n^TZ_n\Pi_nS_n^{-T}) e_t}\sqrt{e_s^T(S_n^{-1}\Pi_n^TZ_n^TZ_n\Pi_nS_n^{-T}) e_s} \\
    \ &\le \lambda_{\max}(S_n^{-1}\Pi_n^TZ_n^TZ_n\Pi_nS_n^{-T}) \\
    \ &<\infty
\end{align*}
holds almost surely for large enough $n$. The first inequality follows from the Cauchy-Schwartz inequality. 

Next, we want to establish that the concentration parameter $\Sigma_{VV}^{-\frac{1}{2}}\Pi_n^TZ_n^TZ_n\Pi_n\Sigma_{VV}^{-\frac{1}{2}}$ has the same rates of divergence as the $\sum_j \bgamma_j \bgamma_j^T \sigma_{Yj}^{-2}$, in the sense that there exists $\tilde Q_n$ such that $Q_n = \tilde Q_n  {\rm {\rm diag}}(\sqrt{\mu_{n1}},...,\sqrt{\mu_{nK}})$ and $Q_n^{-1} \Sigma_{VV}^{-\frac{1}{2}}\Pi_n^TZ_n^TZ_n\Pi_n\Sigma_{VV}^{-\frac{1}{2}} Q_n^{-T}$ is bounded with smallest eigenvalue bounded away from 0 almost surely. To show this, we let $\tilde Q_n = \Sigma_{VV}^{-\frac{1}{2}} \tilde S_n $. Note that
\begin{align*}
    Q_n^{-1} \Sigma_{VV}^{-\frac{1}{2}}\Pi_n^TZ_n^TZ_n\Pi_n\Sigma_{VV}^{-\frac{1}{2}} Q_n^{-T} = S_n^{-1}\Pi_n^TZ_n^TZ_n\Pi_nS_n^{-T}, 
\end{align*}
where the right-hand side is bounded with the smallest eigenvalue bounded away from 0 almost surely as showed above. 

Finally, following the same arguments as in \ref{proof: minimum eigenvalue} (i.e., first taking eigen-decomposition on $\Sigma_{VV}^{-\frac{1}{2}}\Pi_n^TZ_n^TZ_n\Pi_n\Sigma_{VV}^{-\frac{1}{2}}$ and then giving proof by contradiction), we can show that
the smallest eigenvalue of $\Sigma_{VV}^{-\frac{1}{2}}\Pi_n^TZ_n^TZ_n\Pi_n\Sigma_{VV}^{-\frac{1}{2}}$ and $\mu_{n,min}$ have the same rate of divergence.

\subsection{Connection of our adjustment to {\cite{van2016ridge}}}

In the estimation of high dimensional precision matrices, when the dimension of the covariates is close to or exceeds the sample size, inverting the sample covariance matrix faces a similar singularity issue as inverting $\sum_j \hat M_j - V_j$. \citeSupp{van2016ridge} introduced a precision matrix estimator $S/2 + \sqrt{S^2/4 + \phi I}$, where $S$ is a positive semi-definite sample covariance matrix and $I$ is the identity matrix. An eigen-decomposition demonstrates that their estimator essentially adjusts each eigenvalue $\lambda$ of $S$ to $\lambda/2 + \sqrt{\lambda^2/4 + \phi}$. In our case, the matrix $\sum_j \hat M_j - V_j$ takes the role of $S$. Interestingly, if we view this adjustment as a function of $\phi$, our proposed adjustment (adjusting each eigenvalue $\lambda$ to $\lambda +  {\phi}{\lambda^{-1}}$) can be viewed as a first-order Taylor expansion of their adjustment. 

\subsection{Proof of Theorem 1 and claims related to MV-IVW}

\vspace{3mm}
\textbf{Proof of Theorem 1}\label{proof: theorem 1a}

\begin{proof}
    We aim to show that $\mathbb{V}^{-\frac{1}{2}} (\sum_j M_j + V_j)(\hat \bbeta_{\rm IVW} - \tilde \bbeta_{\rm IVW}) \xrightarrow[]{D} N(\bm 0, I_K)$. Note that
    \begin{align*}
         \mathbb{V}^{-\frac{1}{2}} (\sum_j M_j + V_j)(\hat \bbeta_{\rm IVW} - \tilde \bbeta_{\rm IVW}) & = \mathbb{V}^{-\frac{1}{2}} (\sum_j M_j + V_j) (\sum_j \hat M_j)^{-1}(\sum_j \hat \bgamma_j \hat \Gamma_j \sigma_{Yj}^{-2} - \hat M_j\bbeta_0 + V_j\bbeta_0) \\
        & = \mathbb{V}^{-\frac{1}{2}} (\sum_j M_j + V_j) (\sum_j \hat M_j)^{-1}\mathbb{V}^{\frac{1}{2}}\mathbb{V}^{-\frac{1}{2}}(\sum_j \hat \bgamma_j \hat \Gamma_j \sigma_{Yj}^{-2} - \hat M_j\bbeta_0 + V_j\bbeta_0)
    \end{align*}
    
    We have shown in Lemma \ref{lemma5} that $\mathbb{V}^{-\frac{1}{2}}(\sum_j \hat \bgamma_j \hat \Gamma_j \sigma_{Yj}^{-2} - \hat M_j\bbeta_0 + V_j\bbeta_0)\xrightarrow[]{D} N(\bm 0,I_K)$. We now show that 
    \begin{align}\label{converge in prob result M hat}
        \mathbb{V}^{-\frac{1}{2}} (\sum_j M_j + V_j) (\sum_j \hat M_j)^{-1}\mathbb{V}^{\frac{1}{2}} \xrightarrow[]{P} I_K.
    \end{align}
     Note that
    \begin{align*}
       \mathbb{V}^{-\frac{1}{2}} (\sum_j \hat M_j) (\sum_j M_j + V_j)^{-1}\mathbb{V}^{\frac{1}{2}} & =  \mathbb{V}^{-\frac{1}{2}} (\sum_j \hat M_j - (M_j + V_j))(\sum_j M_j + V_j)^{-1} \mathbb{V}^{\frac{1}{2}} + I_K
    \end{align*}
    It suffices to show the first term on the right of above equation converges in probability to $\bm 0$. Expand it as follows:
    \begin{align*}
       \mathbb{V}^{-\frac{1}{2}}T_n \bigg\{T_n^{-1} (\sum_j \hat M_j - (M_j + V_j))T_n^{-T}\bigg\}\bigg\{T_n^{-1}(\sum_j M_j + V_j)T_n^{-T}\bigg\}^{-1} T_n^{-1}\mathbb{V}^{\frac{1}{2}}
    \end{align*}
    We know from Lemma \ref{lemma1} that $(T_n^{-1}(\sum_j M_j + V_j)T_n^{-T})^{-1}$ is bounded, from Lemma \ref{lemma2} that $T_n^{-1} (\sum_j \hat M_j - (M_j + V_j))T_n^{-T}$ converges in probability to $\bm 0$. Because $T_n^{-1}\mathbb{V}T_n^{-T}$ is bounded with eigenvalues bounded away from 0 by Lemma \ref{lemma1}, $\mathbb{V}^{-\frac{1}{2}}T_n$ and $T_n^{-1}\mathbb{V}^{\frac{1}{2}}$ are also bounded. Hence, $\mathbb{V}^{-\frac{1}{2}} (\sum_j \hat M_j) (\sum_j M_j + V_j)^{-1}\mathbb{V}^{\frac{1}{2}}$ converges in probability to $I_K$. By the continuous mapping theorem, we have $\mathbb{V}^{-\frac{1}{2}} (\sum_j M_j + V_j) (\sum_j \hat M_j)^{-1}\mathbb{V}^{\frac{1}{2}}$ also converges in probability to $I_K$. Lastly, by Slutsky's theorem, we have the desired result.
\end{proof}

\textbf{Proof of Proposition 1}

In the following, we prove claims directly related to $\hat \bbeta_{\rm IVW}$. The claims in Proposition 1 then follow by noting that $\hat \bbeta_{\rm IVW} - \tilde \bbeta_{\rm IVW} \xrightarrow[]{P} \bm 0$ by Theorem 1.

\vspace{3mm}
\textbf{Asymptotic normality and consistency of MV-IVW when ${\mu_{n,min}}/{p^2} \rightarrow \infty$}

We aim to show that $\hat \bbeta_{IVW}$ is consistent and asymptotically normal under the conditions of Theorem 1 and the condition ${\mu_{n,min}}/{p^2} \rightarrow \infty$. In particular, it holds that
\begin{align*}
    \mathbb{V}^{-\frac{1}{2}} (\sum_j M_j + V_j)(\hat \bbeta_{\rm IVW} - \bbeta_0) \xrightarrow[]{D} N(\bm 0, I_K)
\end{align*}
\begin{proof}
    Note that \begin{align*}
         \mathbb{V}^{-\frac{1}{2}} (\sum_j M_j + V_j)(\hat \bbeta_{\rm IVW} - \bbeta_0) & = \mathbb{V}^{-\frac{1}{2}} (\sum_j M_j + V_j) (\sum_j \hat M_j)^{-1}(\sum_j \hat \bgamma_j \hat \Gamma_j \sigma_{Yj}^{-2} - \hat M_j\bbeta_0 + V_j\bbeta_0) \\
        & - \mathbb{V}^{-\frac{1}{2}}(\sum_j M_j + V_j)(\sum_j \hat M_j)^{-1} (\sum_j V_j) \bbeta_0 \\
        & = \underbrace{\mathbb{V}^{-\frac{1}{2}} (\sum_j M_j + V_j) (\sum_j \hat M_j)^{-1}\mathbb{V}^{\frac{1}{2}}\mathbb{V}^{-\frac{1}{2}}(\sum_j \hat \bgamma_j \hat \Gamma_j \sigma_{Yj}^{-2} - \hat M_j\bbeta_0 + V_j\bbeta_0)}_{A_1} \\
        & - \underbrace{\mathbb{V}^{-\frac{1}{2}}(\sum_j M_j + V_j)(\sum_j \hat M_j)^{-1} (\sum_j V_j) \bbeta_0}_{A_2} 
    \end{align*}
    We have shown that $A_1$ converges in distribution to $N(\bm 0, I_K)$. It remains to show $A_2$ converges in probability to $\bm 0$. Then, the asymptotic normality of the MV-IVW estimator follows by the Slutsky's theorem.

    For $A_2$, note that
    \begin{align*}
       \mathbb{V}^{-\frac{1}{2}}(\sum_j M_j + V_j)(\sum_j \hat M_j)^{-1} (\sum_j V_j) =  \mathbb{V}^{-\frac{1}{2}} (\sum_j M_j + V_j) (\sum_j \hat M_j)^{-1} \mathbb{V}^{\frac{1}{2}} \mathbb{V}^{-\frac{1}{2}} (\sum_j V_j)
    \end{align*}
    We have just shown that $\mathbb{V}^{-\frac{1}{2}} (\sum_j M_j + V_j) (\sum_j \hat M_j)^{-1} \mathbb{V}^{\frac{1}{2}}$ converges in probability to $I_K$. By Lemma \ref{lemma 3}, we know $\mathbb{V}^{-\frac{1}{2}} (\sum_j V_j)$ converges to $\bm 0$. Thus, the entire second term converges in probability to $\bm 0$.
    
    Applying the Slutsky's theorem, we get the desired asymptotic normality result. Consistency follows directly from Lemma \ref{lemma 4} i.e., $\mathcal{V}_{\rm IVW} =  (\sum_j M_j + V_j)^{-1}\mathbb{V}(\sum_j M_j + V_j)^{-1} \rightarrow \bm0$.
\end{proof}

\textbf{Consistency of variance estimators}

    We next show that under the conditions of Theorem 1 and an additional condition that $\mu_{n,min}/p \rightarrow \infty$, $\mathbb{V}^{-\frac{1}{2}} (\sum_j M_j + V_j)(\hat \bbeta_{\rm IVW} - \tilde \bbeta_{\rm IVW}) \xrightarrow[]{D} N(\bm 0, I_K)$ still holds if we replace $\sum_j M_j + V_j$ and $\mathbb{V}$ respectively by $\sum_j \hat M_j$ and
    \begin{align*}
        \hat{\mathbb{V}}  = \sum_{j}\{(1+\hat{\bm\beta}_{\rm IVW}^TV_j\hat{\bm\beta}_{\rm IVW})\hat{M_j}+V_j\hat{\bm\beta}_{\rm IVW}\hat{\bm\beta}_{\rm IVW}^TV_j\}
    \end{align*}
    \begin{proof}
    Note that
  \begin{align*}
      & \hat{\mathbb{V}}^{-1/2} (\sum_{j}\hat{M}_j) \left(\bm{\hat{\bbeta}}_{\rm IVW} - \tilde\bbeta_{\rm IVW}\right)\\
      & = \hat{\mathbb{V}}^{-1/2} (\sum_{j}\hat{M}_j) (\sum_j M_j + V_j)^{-1} \mathbb{V}^{\frac{1}{2}} \mathbb{V}^{-\frac{1}{2}}
        ( \sum_{j}M_j+V_j)\left(\bm{\hat{\bbeta}}_{\rm IVW} - \tilde\bbeta_{\rm IVW}\right) \\
     & = \underbrace{\hat{\mathbb{V}}^{-1/2} \mathbb{V}^{\frac{1}{2}}}_{B_1} \underbrace{ \mathbb{V}^{-\frac{1}{2}}(\sum_{j}\hat{M}_j) (\sum_j M_j+V_j)^{-1} \mathbb{V}^{\frac{1}{2}}}_{B_2} \underbrace{\mathbb{V}^{-\frac{1}{2}}
        ( \sum_{j}M_j+V_j)\left(\bm{\hat{\bbeta}}_{\rm IVW} - \tilde\bbeta_{\rm IVW}\right)}_{B_3}
  \end{align*}
  We have just shown in the proof of Theorem 1 that $B_2 \xrightarrow[]{P} I_K$ and $B_3 \xrightarrow[]{D} N(\bm 0, I_K)$. It remains to show $B_1 \xrightarrow[]{P} I_K$. 

    Before that, we first show that $\mathbb{V}^{-\frac{1}{2}} \hat{\mathbb{V}} \mathbb{V}^{-\frac{1}{2}}$ converges in probability to $I_K$.  
    Let $\hat{\mathbb{V}}^{'} = \sum_{j}\{(1+\bbeta_0^TV_j\bbeta_0)\hat{M_j}+V_j\bbeta_0 \bbeta_0^TV_j\}$. It is sufficient to show that $\mathbb{V}^{-\frac{1}{2}} \hat{\mathbb{V}}^{'} \mathbb{V}^{-\frac{1}{2}}$ converges in probability to $I_K$, because the desired result follows by replacing $\bbeta_0$ with its consistent estimator $\hat \bbeta_{\rm IVW}$, the consistency of which will be established in the next subsection.
    We can expand $\mathbb{V}^{-\frac{1}{2}} \hat{\mathbb{V}}^{'} \mathbb{V}^{-\frac{1}{2}}$ as follows:
    \begin{align*}
        \mathbb{V}^{-\frac{1}{2}} \hat{\mathbb{V}}^{'} \mathbb{V}^{-\frac{1}{2}} = 
        \mathbb{V}^{-\frac{1}{2}}T_n \bigg\{T_n^{-1} \hat{\mathbb{V}}^{'} T_n^{-T}\bigg\} T_n^T  \mathbb{V}^{-\frac{1}{2}}
    \end{align*}
    Note that
    \begin{align*}
        T_n^{-1} \hat{\mathbb{V}}^{'} T_n^{-T} = T_n^{-1} \mathbb{V} T_n^{-T} + T_n^{-1}  \{\sum_j (1 + \bbeta_0^T V_j \bbeta_0)(\hat M_j - (M_j+V_j))\} T_n^{-T}
    \end{align*}
    Then,
    \begin{align*}
        \mathbb{V}^{-\frac{1}{2}} \hat{\mathbb{V}}^{'} \mathbb{V}^{-\frac{1}{2}} = 
        I_K + \mathbb{V}^{-\frac{1}{2}}T_n \bigg\{T_n^{-1}  \{\sum_j (1 + \bbeta_0^T V_j \bbeta_0)(\hat M_j - (M_j+V_j))\} T_n^{-T}\bigg\} T_n^T  \mathbb{V}^{-\frac{1}{2}}
    \end{align*}
    Because $(1 + \bbeta_0^T V_j \bbeta_0) = \Theta(1)$ by Assumption 1 (i.e., $V_j = O(1)$), there exists a positive constant $c$ such that $1 \le 1 + \bbeta_0^T V_j \bbeta_0 \le c$ for every $j$. Then, consider any $(t,s)$ entry of $T_n^{-1}  \{\sum_j(\hat M_j - (M_j+V_j))\} T_n^{-T}$ and $T_n^{-1}  \{\sum_j (1 + \bbeta_0^T V_j \bbeta_0)(\hat M_j - (M_j+V_j))\} T_n^{-T}$, denoted respectively as $D_{1[t,s]}$ and $D_{2[t,s]}$. We know
    {$|D_{1[t,s]}| \le |D_{2[t,s]}| \le c|D_{1[t,s]}|$.}
     By Lemma \ref{lemma2}, $D_{1[t,s]} \xrightarrow[]{P} 0$ for every $t, s = 1,...,K$. Thus, $T_n^{-1}  \{\sum_j (1 + \bbeta_0^T V_j \bbeta_0)(\hat M_j - (M_j+V_j))\} T_n^{-T}$ converges in probability to $\bm 0$ as well. Finally, because $\mathbb{V}^{-\frac{1}{2}}T_n$ is bounded, $\mathbb{V}^{-\frac{1}{2}} \hat{\mathbb{V}}^{'} \mathbb{V}^{-\frac{1}{2}}$ converges in probability to $I_K$. Hence, $\mathbb{V}^{-\frac{1}{2}} \hat{\mathbb{V}} \mathbb{V}^{-\frac{1}{2}} \xrightarrow[]{P} I_K$.
  
  Then, applying the continuous mapping theorem, we have $\mathbb{V}^{\frac{1}{2}}\hat{\mathbb{V}}^{-1}\mathbb{V}^{\frac{1}{2}} \xrightarrow[]{P} I_K$. That is, $B_1 B_1^T \xrightarrow[]{P} I_K$. In other words, $B_1$ is an orthogonal matrix with probability tending 1. As a result, $B_1^{-1} = B_1^T$ with probability tending 1. This implies
  \begin{align*}
      B_1^{-1} =  \mathbb{V}^{-\frac{1}{2}} \hat{\mathbb{V}}^{1/2} = \mathbb{V}^{\frac{1}{2}} \hat{\mathbb{V}}^{-1/2} = B_1^T
  \end{align*}
  with probability tending 1. From this equality, we conclude that $\mathbb{V} = \hat{\mathbb{V}}$ with probability tending 1. Therefore, $B_1 \xrightarrow[]{P} I_K$. This concludes the proof.
\end{proof}

\textbf{Consistency of MV-IVW when $\mu_{n,min}/p \rightarrow \infty$}

    We aim to show $\hat{\bm{\beta}}_{\rm IVW} - \bm{\beta}_0 \xrightarrow[]{P} 0$ when $\mu_{n,min}/p \to \infty$. 
\begin{proof}
    Note that
    \begin{align*}
        \hat{\bm{\beta}}_{\rm IVW} - \bm{\beta}_0 & = (\sum_j M_j + V_j)^{-1}\mathbb{V}^{\frac{1}{2}}\mathbb{V}^{-\frac{1}{2}} (\sum_j M_j + V_j)(\hat{\bm{\beta}}_{\rm IVW} - \bm{\beta}_0) \\
        & = \underbrace{(\sum_j M_j + V_j)^{-1}\mathbb{V}^{\frac{1}{2}} \mathbb{V}^{-\frac{1}{2}} (\sum_j M_j + V_j) (\sum_j \hat M_j)^{-1}\mathbb{V}^{\frac{1}{2}}\mathbb{V}^{-\frac{1}{2}}(\sum_j \hat \bgamma_j \hat \Gamma_j \sigma_{Yj}^{-2} - \hat M_j\bbeta_0 + V_j\bbeta_0)}_{A_1}\\
        & - \underbrace{(\sum_j \hat M_j)^{-1} (\sum_j V_j) \bbeta_0}_{A_2}
    \end{align*}
    We want to show that both $A_1$ and $A_2$ converge in probability to $\bm 0$. For $A_1$, from \eqref{converge in prob result M hat} in the proof of Theorem 1, $\mathbb{V}^{-\frac{1}{2}} (\sum_j M_j + V_j) (\sum_j \hat M_j)^{-1}\mathbb{V}^{\frac{1}{2}}$ converges in probability to $I_K$. Let $\hat{\bm \theta} = \mathbb{V}^{-\frac{1}{2}}(\sum_j \hat \bgamma_j \hat \Gamma_j \sigma_{Yj}^{-2} - \hat M_j\bbeta_0 + V_j\bbeta_0)$. We know $E(\hat{\bm \theta}) = \bm 0$ and $Var(\hat{\bm \theta}) = I_K$. Then, for any $\epsilon > 0$, there exists a large  enough constant $M$ such that
    \begin{align*}
        P(||\hat {\bm \theta}||_2^2 > M) \le \frac{E(\hat {\bm \theta}^T \hat {\bm \theta})}{M} = \frac{K}{M} < \epsilon
    \end{align*}
    where the first inequality uses the Markov inequality, so $\hat{\bm \theta}$ is bounded in probability, i.e., $O_p(1)$. Because $\mathcal{V}_{\rm IVW} \rightarrow \bm 0$ by Lemma \ref{lemma 4}, we have $(\sum_j M_j + V_j)^{-1}\mathbb{V}^{\frac{1}{2}} \rightarrow \bm 0$ and hence $A_1$ converges in probability to $\bm 0$.
    
    To show $A_2$ also converges to $\bm 0$. first note that $(\sum_j V_j) \bbeta_0/p = O(1)$ by the boundedness of variance ratios between $\sigma_{Xjk}^2$ and $\sigma_{Yj}^2$ from Assumption 1. Hence, it suffices to show that $p (\sum_j \hat M_j)^{-1}\xrightarrow[]{P} \bm 0$. To this end, we write $p (\sum_j \hat M_j)^{-1}$ as 
    \begin{align*}
        p (\sum_j \hat M_j)^{-1} & = p T_n^{-T} \bigg\{ T_n^{-1} (\sum_j \hat M_j) T_n^{-T}\bigg\}^{-1} T_n^{-1} \\
        & = (\sqrt{p} T_n^{-T}) \bigg\{ T_n^{-1} (\sum_j \hat M_j - (M_j + V_j)) T_n^{-T} + T_n^{-1}(\sum_j M_j + V_j)T_n^{-T}\bigg\}^{-1} (\sqrt{p}T_n^{-1})
    \end{align*}
    By Lemma \ref{lemma1}, we know $T_n^{-1}(\sum_j M_j + V_j)T_n^{-T}$ is bounded with minimum eigenvalue bounded away from 0. By Lemma \ref{lemma2}, we know
    $T_n^{-1} (\sum_j \hat M_j - (M_j + V_j)) T_n^{-T}$ converges in probability to $\bm 0$. Hence, $T_n^{-1} (\sum_j \hat M_j) T_n^{-T}$ is bounded in probability with the smallest eigenvalue bounded away from 0 with probability tending 1. Lastly, both terms $\sqrt{p} T_n^{-T}$ and $\sqrt{p}T_n^{-1}$ go to $\bm 0$ when $\mu_{n,min}/p \to \infty$, because,
    \begin{align}\label{proof theorem1: pT}
        \sqrt{p} T_n^{-T} = \tilde S_n^{-T} {\rm{{\rm diag}}}(\sqrt{\frac{p}{\mu_{n1} + p}},...,\sqrt{\frac{p}{\mu_{nK} + p}})
    \end{align}
Therefore, $p (\sum_j \hat M_j)^{-1}\xrightarrow[]{P} \bm 0$, and hence $A_2$ also converges in probability to $\bm 0$. This establishes the consistency.
\end{proof}

\textbf{Asymptotic bias of $\hat \bbeta_{\rm IVW}$ when ${\mu_{n,\min}}/{p} \rightarrow \tau \in[0,\infty)$}

We now show that when ${\mu_{n,\min}}/{p} \rightarrow \tau \in[0,\infty)$, $\hat{\bm{\beta}}_{\rm IVW} - (\sum_j M_j + V_j)^{-1}(\sum_j M_j)\bbeta_0 \xrightarrow[]{P} \bm 0$. 
\begin{proof}
Note that
\begin{align*}
    & \hat{\bm{\beta}}_{\rm IVW} - (\sum_j M_j + V_j)^{-1}(\sum_j M_j)\bbeta_0 \\  & =  \hat{\bm{\beta}}_{\rm IVW} - \bm{\beta}_0 + (\sum_j \hat M_j)^{-1}(\sum_j V_j)\bm{\beta}_0  - (\sum_j \hat M_j)^{-1} (\sum_j V_j) \bbeta_0  + (\sum_j M_j + V_j)^{-1} (\sum_j V_j) \bbeta_0 \\
    & = \underbrace{\hat{\bm{\beta}}_{\rm IVW} - \bm{\beta}_0 + (\sum_j \hat M_j)^{-1}(\sum_j V_j)\bm{\beta}_0}_{A_1}  + \underbrace{\bigg\{(\sum_j M_j + V_j)^{-1} - (\sum_j \hat M_j)^{-1}\bigg\}(\sum_j V_j) \bbeta_0}_{A_2}
\end{align*}
From the above proof of consistency of $\hat \bbeta_{\rm IVW}$, we know $A_1$ converges in probability to $\bm 0$. This is true even when ${\mu_{n,\min}}/{p} \rightarrow \tau \in[0,\infty)$. It then suffices to show $A_2$ converges in probability to $\bm 0$. Since $(\sum_j V_j)\bbeta_0/p = O(1)$, we want to establish
\begin{align*}
    p\bigg\{(\sum_j M_j + V_j)^{-1} - (\sum_j \hat M_j)^{-1}\bigg\} \xrightarrow[]{P} \bm 0
\end{align*}
Expand the left hand side as follows:
\begin{align*}
    (\sqrt{p} T_n^{-T})\bigg\{(T_n^{-1}(\sum_j M_j + V_j)T_n^{-T})^{-1} - (T_n^{-1}\sum_j \hat M_j T_n^{-T})^{-1} \bigg\}(\sqrt{p}T_n^{-1})
\end{align*}
Note that $T_n^{-1}\sum_j \hat M_j T_n^{-T} = T_n^{-1}\sum_j (\hat M_j - (M_j + V_j))T_n^{-T} + T_n^{-1}(\sum_j M_j + V_j) T_n^{-T}$. We known from Lemma \ref{lemma1} that $T_n^{-1}(\sum_j M_j + V_j) T_n^{-T}$ is bounded with the smallest eigenvalue bounded away from 0 and from Lemma \ref{lemma2} that $T_n^{-1}\sum_j (\hat M_j - (M_j + V_j))T_n^{-T}$ converges in probability to $\bm 0$. Thus, $(T_n^{-1}(\sum_j M_j + V_j)T_n^{-T})^{-1} - (T_n^{-1}\sum_j \hat M_j T_n^{-T})^{-1}$ converges in probability to $\bm 0$. We further know that $\sqrt{p} T_n^{-T} = O(1)$ in light of \eqref{proof theorem1: pT}. Hence, $A_2$ converges in probability to $\bm 0$. Consequently, $\hat{\bm{\beta}}_{\rm IVW} - (\sum_j M_j + V_j)^{-1}(\sum_j M_j)\bbeta_0 \xrightarrow[]{P} \bm 0$.
\end{proof}

\textbf{Direction of bias for $\hat \bbeta_{\rm IVW}$}

We construct a numerical example to show that the asymptotic bias for $\hat \bbeta_{\rm IVW}$ can be either towards or away from the null.

Suppose there are only two SNPs and two exposures which are measured in the same sample and have a correlation of $0.8$. This implies that the shared correlation matrix $P = \begin{bmatrix}
    1 & 0.8 \\
    0.8 & 1
\end{bmatrix}$. Let $\gamma_{11} = \gamma_{12} = 0.071$, and $\gamma_{21} = 0.071$ and $\gamma_{22} = 0.142$. $\sigma_{Xj1}^2 = \sigma_{Xj2}^2 = \sigma_{Yj}^2 = 1/N$ with $N = 5000$ for $j = 1, 2$. In this setting, two exposures have heritability of approximately 1\% and 1.5\%. Then, we have
\begin{align*}
     M_1 = \begin{bmatrix}
        25 & 25 \\
        25 & 25 
    \end{bmatrix}, \quad
    M_2 = \begin{bmatrix}
        25 & 50 \\
        50 & 100 
    \end{bmatrix}, \quad V_1 = V_2 = \begin{bmatrix}
        1   & 0.8 \\
        0.8 & 1
    \end{bmatrix}
\end{align*}
Then, some algebra shows that
\begin{align*}
    (\sum_j M_j + V_j)^{-1}(\sum_j M_j)\bbeta_0 = \begin{bmatrix}
        0.754 \\
        1.120
    \end{bmatrix}, \quad \text{when $\bbeta_0 = (1,1)$}
\end{align*}
\begin{align*}
    (\sum_j M_j + V_j)^{-1}(\sum_j M_j)\bbeta_0 = \begin{bmatrix}
        0.822 \\
        0.095
    \end{bmatrix}, \quad \text{when $\bbeta_0 = (1,0)$}
\end{align*}
Clearly, the second component of $\bbeta_0$ is magnified, while the first component of $\bbeta_0$ is attenuated.

\vspace{3mm}
\textbf{Proof of Corollary 1}
\begin{proof}
    We only consider the case when $\bbeta_0 \neq \bm 0$, because when $\bbeta_0 = \bm 0$, $\hat \bbeta_{\rm IVW}$ is consistent and asymptotically normal and thus the confidence intervals based on the MV-IVW estimator are valid. 

    When $\bbeta_0 \neq \bm 0$, by Theorem 1, we know, for any $\epsilon > 0$, there exists a large enough constant $C_{\epsilon}$, such that as $n \rightarrow \infty$,
    \begin{align*}
        \Pr(\forall k, \frac{|\hat \beta_{{\rm IVW},k} - \tilde \beta_{{\rm IVW},k}|}{\sqrt{\mathcal{V}_{{\rm IVW},kk}}} > C_{\epsilon}) \le \epsilon
    \end{align*}
    
    Then, we have
    \begin{align*}
        & \Pr(\exists k \text{ s.t. }\frac{|\hat \beta_{{\rm IVW},k} - \beta_{0k}|}{\sqrt{\mathcal{V}_{{\rm IVW},kk}}} > C) \\
        & = 1 - \Pr(\forall k, \ \frac{|\hat \beta_{{\rm IVW},k} - \beta_{0k}|}{\sqrt{\mathcal{V}_{{\rm IVW},kk}}} \le C) \\
        & \ge 1 - \Pr(\forall k, \frac{|\hat \beta_{{\rm IVW},k} - \tilde \beta_{{\rm IVW},k}|}{\sqrt{\mathcal{V}_{{\rm IVW},kk}}} \ge \frac{|\tilde \beta_{{\rm IVW},k} - \beta_{0k}|}{\sqrt{\mathcal{V}_{{\rm IVW},kk}}} - C) \\
        & \ge 1 - \Pr(\forall k, \frac{|\hat \beta_{{\rm IVW},k} - \tilde \beta_{{\rm IVW},k}|}{\sqrt{\mathcal{V}_{{\rm IVW},kk}}} \ge \frac{|\tilde \beta_{{\rm IVW},k} - \beta_{0k}|}{\sqrt{\mathcal{V}_{{\rm IVW},kk}}} - C \text{ and } \frac{|\hat \beta_{{\rm IVW},k} - \tilde \beta_{{\rm IVW},k}|}{\sqrt{\mathcal{V}_{{\rm IVW},kk}}} \le C_{\epsilon}) \\
        & - \Pr(\forall k, \frac{|\hat \beta_{{\rm IVW},k} - \tilde \beta_{{\rm IVW},k}|}{\sqrt{\mathcal{V}_{{\rm IVW},kk}}} \ge \frac{|\tilde \beta_{{\rm IVW},k} - \beta_{0k}|}{\sqrt{\mathcal{V}_{{\rm IVW},kk}}} - C \text{ and } \frac{|\hat \beta_{{\rm IVW},k} - \tilde \beta_{{\rm IVW},k}|}{\sqrt{\mathcal{V}_{{\rm IVW},kk}}} > C_{\epsilon}) \\
        & \ge 1 - \epsilon - \Pr(\forall k,  \frac{|\tilde \beta_{{\rm IVW},k} - \beta_{0k}|}{\sqrt{\mathcal{V}_{{\rm IVW},kk}}} \le C_{\epsilon} + C) \\
        & \ge 1 - \epsilon - \Pr(\underbrace{\forall k, \frac{|\tilde \beta_{{\rm IVW},k} - \beta_{0k}|}{\sqrt{\lambda_{\rm max, IVW}}} \le C_{\epsilon} + C}_{E_1})
    \end{align*}
    where $\lambda_{\rm max, IVW}$ denotes the maximum eigenvalue of $\mathcal{V}_{\rm IVW}$.
    The first inequality is because
    \begin{align*}
        \frac{|\tilde \beta_{{\rm IVW},k} - \beta_{0k}|}{\sqrt{\mathcal{V}_{{\rm IVW},kk}}} \le \frac{|\hat \beta_{{\rm IVW},k} - \beta_{0k}|}{\sqrt{\mathcal{V}_{{\rm IVW},kk}}}  + \frac{|\hat \beta_{{\rm IVW},k} - \tilde \beta_{{\rm IVW},k}|}{\sqrt{\mathcal{V}_{{\rm IVW},kk}}}
    \end{align*}
    by triangle inequality, and therefore the event $\frac{|\hat \beta_{{\rm IVW},k} - \beta_{0k}|}{\sqrt{\mathcal{V}_{{\rm IVW},kk}}} \le C$ implies the event $\frac{|\tilde \beta_{{\rm IVW},k} - \beta_{0k}|}{\sqrt{\mathcal{V}_{{\rm IVW},kk}}} \le C + \frac{|\hat \beta_{{\rm IVW},k} - \tilde \beta_{{\rm IVW},k}|}{\sqrt{\mathcal{V}_{{\rm IVW},kk}}}$.

For the rest of the proof, we show that if $\mu_{n,min}$ diverges to infinity at a rate slower than $p^2$, the probability that $E_1$ happens will tend to 0. Because $\epsilon > 0$ is arbitrary, it implies that $\Pr(\exists k \text{ s.t. }\frac{|\hat \beta_{{\rm IVW},k} - \beta_{0k}|}{\sqrt{\mathcal{V}_{{\rm IVW},kk}}} > C)$ approaches 1, concluding the proof.

It remains to derive the probability of the event $E_1$. Note that 
a necessary condition for the event $E_1$ to happen is that $(\tilde \bbeta_{\rm IVW} - \bbeta_0)/\sqrt{\lambda_{\rm max, IVW}} = O_p(1)$. From Lemma \ref{lemma 4} and the same argument in the section S3.5, we know that the minimum eigenvalue of $\mathcal{V}_{\rm IVW}^{-1}$ is $\Theta(\mu_{n,min} + p)$, which implies that $\lambda_{\rm max, IVW} = \Theta(1/(\mu_{n,min}+p))$. Thus, the necessary condition for the event $E_1$ to happen becomes $\sqrt{\mu_{n,min} + p}(\tilde \bbeta_{\rm IVW} - \tilde \bbeta_0) = O_p(1)$.
    
   Note that
    \begin{align*}
        & \sqrt{\mu_{n,min} + p}(\tilde \bbeta_{\rm IVW} - \bbeta_0) = (\mu_{n,min} + p)(\sum_j \hat M_j)^{-1} (\sum_j V_j)  \bbeta_0 \\
        & = \sqrt{\mu_{n,min} + p} (\sum_j M_j + V_j)^{-1} \mathbb{V}^{\frac{1}{2}} \mathbb{V}^{-\frac{1}{2}} (\sum_j M_j + V_j) (\sum_j \hat M_j)^{-1} \mathbb{V}^{\frac{1}{2}} \mathbb{V}^{-\frac{1}{2}} (\sum_j V_j)  \bbeta_0 \\
        & = \sqrt{\mu_{n,min} + p} T_n^{-T} \underbrace{T_n^T (\sum_j M_j + V_j)^{-1} \mathbb{V}^{\frac{1}{2}} \mathbb{V}^{-\frac{1}{2}} (\sum_j M_j + V_j) (\sum_j \hat M_j)^{-1} \mathbb{V}^{\frac{1}{2}} \mathbb{V}^{-\frac{1}{2}} T_n}_{\hat A} T_n^{-1}p (\sum_j V_j/p)  \bbeta_0
    \end{align*}
    By the result S8, we know $\mathbb{V}^{-\frac{1}{2}} (\sum_j M_j + V_j) (\sum_j \hat M_j)^{-1} \mathbb{V}^{\frac{1}{2}} \xrightarrow[]{P} I_K$. By Lemma \ref{lemma 4}, we known $T_n^{T} (\sum_j M_j + V_j)^{-1} \mathbb{V}^{\frac{1}{2}}$ is bounded. We further know that $\sum_j V_j/p = O(1)$ and $\mathbb{V}^{-\frac{1}{2}} T_n$ is bounded by Lemma 2. Thus, $\hat A$ is bounded in probability. Note that, 
    \begin{align*}
    \sqrt{\mu_{n,min}+p} T_n^{-T} & =  \tilde S_n^{-T} {\rm diag}(\frac{\sqrt{\mu_{n,min}+p}}{\sqrt{\mu_{n1}+p}},\dots, \frac{\sqrt{\mu_{n,min}+p}}{\sqrt{\mu_{nK}+p}}),
    \end{align*}
    where $1$ corresponds to the index $m$ such that $\mu_{nm} = \mu_{n,min}$, and
    \begin{align*}
        T_n^{-1} p=  {\rm diag}(\frac{p}{\sqrt{\mu_{n1} +p}},\dots,\frac{p}{\sqrt{\mu_{nK} +p}}) \tilde S_n^{-1}
    \end{align*}
    
    Therefore, we have
    \begin{align*}
        & \sqrt{\mu_{n,min} + p}(\tilde \bbeta_{\rm IVW} - \bbeta_0) \\
        & = \tilde S_n^{-T} \underbrace{{\rm diag}(\frac{\sqrt{\mu_{n,min}+p}}{\sqrt{\mu_{n1}+p}},\dots, \frac{\sqrt{\mu_{n,min}+p}}{\sqrt{\mu_{nK}+p}}) \hat A {\rm diag}(\frac{p}{\sqrt{\mu_{n1} +p}},\dots,\frac{p}{\sqrt{\mu_{nK} +p}})}_{\hat B} \tilde S_n^{-1} (\sum_j V_j/p) \bbeta_0.
    \end{align*}
    Let $\hat B_{[t,s]}$ denote the $(t,s)$ entry of the matrix $B$. We have
    \begin{align*}
        \hat B_{[t,s]} = O_p(\frac{\sqrt{\mu_{n,min}+p}}{\sqrt{\mu_{nt}+p}} \frac{p}{\sqrt{\mu_{ns} +p}}).
    \end{align*}

    In order for $\sqrt{\mu_{n,min} + p}(\tilde \bbeta_{\rm IVW} - \bbeta_0)$ to be bounded in probability, a necessary condition is that $\hat B_{[t,s]} = O_p(1)$ for every $t, s$. Because for every $t$, $\mu_{nt}$ diverges to infinity as least as fast as $\mu_{n,min}$, we only need to make sure $\hat B_{[m,m]}$ is $O_p(1)$. That is, we require $p/\sqrt{\mu_{nm}+p} = O(1)$, or equivalently, $p/\sqrt{\mu_{n,min}+p} = O(1)$, which implies that $\mu_{n,min}$ must diverge to infinity at a rate of at least $O(p^2)$. In other words, if $\mu_{n,min}$ diverges to infinity at a rate slower than $p^2$, the probability that $E_1$ happens will tend to 0. 
\end{proof}

\subsection{Proof of Theorem 2}\label{proof: theorem 2}

\vspace{3mm}
\textbf{Main proof of Theorem 2}
\begin{proof}
Note that
\begin{align*}
      & \mathbb{V}^{-\frac{1}{2}}
        ( \sum_{j}M_j)\left(\bm{\hat{\bbeta}}_{{\rm SRIVW},\phi} - \bm\bbeta_0\right) \\
 & = \underbrace{\mathbb{V}^{-\frac{1}{2}}
        ( \sum_{j}M_j)
{R_{\phi}^{-1}\left(\sum_{j}\hat{M}_j - V_j\right)}\mathbb{V}^{\frac{1}{2}}}_{A_1} \underbrace{\mathbb{V}^{-\frac{1}{2}}\left( \sum_j \hat \bgamma_j \hat \Gamma_j \sigma_{Yj}^{-2} - \hat M_j\bbeta_0 + V_j\bbeta_0\right)}_{A_2} \\
& - \underbrace{\mathbb{V}^{-\frac{1}{2}}
        ( \sum_{j}M_j)
{R_{\phi}^{-1}\left(\sum_{j}\hat{M}_j - V_j\right)}\mathbb{V}^{\frac{1}{2}}}_{A1}\underbrace{ \mathbb{V}^{-\frac{1}{2}}\phi(\sum_{j}\hat{M}_j - V_j)^{-1}}_{A_3}\bbeta_0
\end{align*}
where $R_\phi^{-1} (\cdot):= (R_\phi (\cdot))^{-1}$.

By Lemma \ref{lemma5}, we have that $A_2$ converges in distribution to $N(\bm 0, I_K)$. Hence, we aim to show that $A_1$ converges in probability to $I_K$, and $A_3$ converges in probability to $\bm 0$. Once these are proved, the desired asymptotic normality result follows by applying Slutsky's theorem.

We first establish that 
\begin{align} \label{proof theorem 2: key identity}
    A_1 = \mathbb{V}^{-\frac{1}{2}}
        ( \sum_{j}M_j){R_{\phi}^{-1}\left(\sum_{j}\hat{M}_j - V_j\right)}\mathbb{V}^{\frac{1}{2}} \xrightarrow[]{P} I_{K}
\end{align}
We study the inverse of the left-hand side i.e., $A_1^{-1}$, in \eqref{proof theorem 2: key identity} and write it as follows:
\begin{align*}
  & \mathbb{V}^{-\frac{1}{2}}
        {R_{\phi}\left(\sum_{j}\hat{M}_j - V_j\right)}( \sum_{j}M_j)^{-1}\mathbb{V}^{\frac{1}{2}} = \mathbb{V}^{-\frac{1}{2}}
        \left(\sum_{j} (\hat{M}_j - V_j) + \phi (\sum_{j}\hat{M}_j - V_j)^{-1} \right)( \sum_{j}M_j)^{-1}\mathbb{V}^{\frac{1}{2}} \\
        & = I_K + \mathbb{V}^{-\frac{1}{2}}
        \left(\sum_{j} (\hat{M}_j - (M_j + V_j)) + \phi (\sum_{j}\hat{M}_j - V_j)^{-1}\right)( \sum_{j}M_j)^{-1}\mathbb{V}^{\frac{1}{2}} \\
        & = I_K + \underbrace{\mathbb{V}^{-\frac{1}{2}}
        \left(\sum_{j} (\hat{M}_j - (M_j + V_j))\right)( \sum_{j}M_j)^{-1}\mathbb{V}^{\frac{1}{2}}}_{B_1} + \underbrace{\phi \mathbb{V}^{-\frac{1}{2}}
         (\sum_{j}\hat{M}_j - V_j)^{-1}( \sum_{j}M_j)^{-1}\mathbb{V}^{\frac{1}{2}}}_{B_2}
\end{align*}
For the term $B_1$, we further write it as follows:
\begin{align}\label{proof theorem2: a1 term}
   & \mathbb{V}^{-\frac{1}{2}}
        \left(\sum_{j} (\hat{M}_j - (M_j + V_j))\right)( \sum_{j}M_j)^{-1}\mathbb{V}^{\frac{1}{2}} \nonumber \\
        & =  \mathbb{V}^{-\frac{1}{2}} S_n \left(S_n^{-1}\sum_{j} (\hat{M}_j - (M_j + V_j))S_n^{-T}\right)(S_n^{-1} \sum_{j}M_j S_n^{-T})^{-1}S_n^{-1}\mathbb{V}^{\frac{1}{2}} \nonumber \\
        & = \mathbb{V}^{-\frac{1}{2}} T_n {\rm{{\rm diag}}}(\sqrt{\frac{\mu_{n1}}{\mu_{n1} + p}},...,\sqrt{\frac{\mu_{nK}}{\mu_{nK} + p}}) \left(S_n^{-1}\sum_{j} (\hat{M}_j - (M_j + V_j))S_n^{-T}\right) \nonumber \\
        & (S_n^{-1} \sum_{j}M_j S_n^{-T})^{-1}{\rm{{\rm diag}}}(\sqrt{\frac{\mu_{n1} + p}{\mu_{n1}}},...,\sqrt{\frac{\mu_{nK} + p}{\mu_{nk}}})T_n^{-1}\mathbb{V}^{\frac{1}{2}} 
\end{align}
where $T_n = \tilde S_n {\rm diag}(\sqrt{\mu_{n1} + p},...,\sqrt{\mu_{nk} + p})$. To establish that $B_1$ converges in probability to $\bm 0$, we note that for any $t, s = 1,...,K$, the $(t,s)$ entry of the inner matrix (the matrix between $\mathbb{V}^{-\frac{1}{2}}T_n$ and $T_n^{-1}\mathbb{V}^{\frac{1}{2}}$ in \eqref{proof theorem2: a1 term}) is
\begin{align*}
    \sqrt{\frac{\mu_{nt}}{\mu_{nt} + p}}\sqrt{\frac{\mu_{ns} + p}{\mu_{ns}}} D_{[t,s]} \ \text{where} \ D = \underbrace{\left(S_n^{-1}\sum_{j} (\hat{M}_j - (M_j + V_j))S_n^{-T}\right)}_{D_1}
    \underbrace{(S_n^{-1} \sum_{j}M_j S_n^{-T})^{-1}}_{D_2}
\end{align*}
Note that $D_{[t,s]} = \sum_{l=1}^K D_{1[t,l]}D_{2[l,s]}$. By Assumption 2, $D_{2}$ is bounded. Hence, we aim to show that, for every $l = 1,...,K$,
\begin{align}\label{proof theorem 2: dts op}
    \sqrt{\frac{\mu_{nt}}{\mu_{nt} + p}}\sqrt{\frac{\mu_{ns} + p}{\mu_{ns}}} D_{1[t,l]} \xrightarrow[]{P} 0
\end{align}
From the proof of Lemma \ref{lemma6}, we have $Var(D_{1[t,l]}) \le O((\mu_{nt} + \mu_{nl} + p)/(\mu_{nt}\mu_{nl}))$. Then, for any $\epsilon > 0$,
\begin{align*}
  & P(|\sqrt{\frac{\mu_{nt}}{\mu_{nt} + p}}\sqrt{\frac{\mu_{ns} + p}{\mu_{ns}}} D_{1[t,l]}| \ge \epsilon)  \le \frac{1}{\epsilon} Var(\sqrt{\frac{\mu_{nt}}{\mu_{nt} + p}}\sqrt{\frac{\mu_{ns} + p}{\mu_{ns}}} D_{1[t,l]}) \\
  &\le  O(\frac{\mu_{nt}}{\mu_{nt} + p}\frac{\mu_{ns} + p}{\mu_{ns}}\frac{\mu_{nt} + \mu_{nl} + p}{\mu_{nt}\mu_{nl}})\\
  & = O(\frac{\mu_{ns}\mu_{nt} + \mu_{ns}\mu_{nl} + \mu_{ns}p + \mu_{nt}p + \mu_{nl}p + p^2}{(\mu_{nt}+p)\mu_{ns}\mu_{nl}}) \\
  & = O(\frac{\mu_{nt}}{(\mu_{nt}+p)\mu_{nl}} + \frac{1}{\mu_{nt}+p} + \frac{p}{(\mu_{nt}+p)\mu_{nl}} + \frac{\mu_{nt}p}{(\mu_{nt}+p)\mu_{ns}\mu_{nl}} + \frac{p}{(\mu_{nt}+p)\mu_{ns}} + \frac{p^2}{(\mu_{nt}+p)\mu_{ns}\mu_{nl}})\\
  & \to 0
\end{align*}
 where $\mu_{nk} \rightarrow \infty$ for all $k$ and $p/(\mu_{ns}\mu_{nl}) \rightarrow 0$ for every $s$ and $l$, which are implied by $\mu_{n,min}/\sqrt{p} \rightarrow \infty$, and $\mu_{nt}/(\mu_{nt}+p) = O(1)$, and $p/(\mu_{nt}+p) = O(1)$. Hence, we have $ \sqrt{\frac{\mu_{nt}}{\mu_{nt} + p}}\sqrt{\frac{\mu_{ns} + p}{\mu_{ns}}} D_{[t,s]}$ converges in probability to 0 for any $t,s = 1,...,K$. Lastly, because $\mathbb{V}^{-\frac{1}{2}}T_n$ and $T_n^{-1}\mathbb{V}^{\frac{1}{2}}$ are bounded by Lemma \ref{lemma1}, we have $A_1$ converges in probability to $\bm 0$.

For the term $B_2$, we expand it as follows:
\begin{align*}
    & B_2 = \phi \mathbb{V}^{-\frac{1}{2}}
         (\sum_{j}\hat{M}_j - V_j)^{-1}( \sum_{j}M_j)^{-1}\mathbb{V}^{\frac{1}{2}} = \\  
        &  \underbrace{\sqrt{\phi}\mathbb{V}^{-\frac{1}{2}} (\sum_j M_j)^{-1} \mathbb{V}^{\frac{1}{2}}}_{C_1} \underbrace{\mathbb{V}^{-\frac{1}{2}} (\sum_j M_j) (\sum_j \hat M_j - V_j)^{-1} \mathbb{V}^{\frac{1}{2}}}_{C_2} \underbrace{\sqrt{\phi} \mathbb{V}^{-\frac{1}{2}} (\sum_j M_j)^{-1} \mathbb{V}^{\frac{1}{2}}}_{C_1}
\end{align*}
Note that $C_2^{-1} = I_K + B_1$. We have just shown that $B_1$ converges in probability to $\bm 0$ when $\mu_{n,min}/\sqrt{p} \rightarrow \infty$. Hence, $C_2$ converges in probability to $I_K$ by the continuous mapping theorem. 

It remains to show $C_1 \xrightarrow[]{P} \bm 0$.
Note that
\begin{align}\label{proof theorem2: a2 term}
    & C_1 = \sqrt{\phi} \mathbb{V}^{-\frac{1}{2}} S_n S_n^{-1}S_n^{-T}(S_n^{-1}\sum_j M_j S_n^{-T})^{-1} S_n^{-1} \mathbb{V}^{\frac{1}{2}} \nonumber \\
    & = \sqrt{\phi} \mathbb{V}^{-\frac{1}{2}} T_n {\rm{{\rm diag}}}(\sqrt{\frac{\mu_{n1}}{\mu_{n1} + p}},...,\sqrt{\frac{\mu_{nK}}{\mu_{nK} + p}}) S_n^{-1}S_n^{-T}(S_n^{-1}\sum_j M_j S_n^{-T})^{-1} \nonumber \\
    & \quad \quad {\rm{{\rm diag}}}(\sqrt{\frac{\mu_{n1} + p}{\mu_{n1}}},...,\sqrt{\frac{\mu_{nK} + p}{\mu_{nk}}})T_n^{-1} \mathbb{V}^{\frac{1}{2}} \nonumber \\
    & = \sqrt{\phi} \mathbb{V}^{-\frac{1}{2}} T_n {\rm{{\rm diag}}}(\sqrt{\frac{\mu_{n1}}{\mu_{n1} + p}},...,\sqrt{\frac{\mu_{nK}}{\mu_{nK} + p}}) E\ {\rm{{\rm diag}}}(\sqrt{\frac{\mu_{n1} + p}{\mu_{n1}}},...,\sqrt{\frac{\mu_{nK} + p}{\mu_{nk}}})  T_n^{-1} \mathbb{V}^{\frac{1}{2}} 
\end{align}
Consider any $(t,s)$ entry of the inner matrix (the matrix between $\mathbb{V}^{-\frac{1}{2}}T_n$ and $T_n^{-1}\mathbb{V}^{\frac{1}{2}}$ in \eqref{proof theorem2: a2 term}), which is equal to
\begin{align*}
    \sqrt{\frac{\mu_{nt}}{\mu_{nt} + p}}\sqrt{\frac{\mu_{ns} + p}{\mu_{ns}}} E_{ts} \ \text{where} \ E = \underbrace{S_n^{-1}S_n^{-T}}_{E_1}
    \underbrace{(S_n^{-1} \sum_{j}M_j S_n^{-T})^{-1}}_{E_2}
\end{align*}
Note that $E_{ts} = \sum_{l=1}^K E_{1[t,l]} E_{2[l,s]}$. By Assumption 2, $E_2$ is bounded. Hence, we aim to show that, for every $l = 1,..., K$,
\begin{align}\label{proof theorem 2: ets op}
    \sqrt{\phi} \sqrt{\frac{\mu_{nt}}{\mu_{nt} + p}}\sqrt{\frac{\mu_{ns} + p}{\mu_{ns}}} E_{1[t,l]} \xrightarrow[]{P} 0
\end{align}
By Assumption 2, $E_{1[t,l]} = O(1/\sqrt{\mu_{nt}\mu_{nl}})$. Then, we have
\begin{align*}
    \sqrt{\phi} \sqrt{\frac{\mu_{nt}}{\mu_{nt} + p}}\sqrt{\frac{\mu_{ns} + p}{\mu_{ns}}} E_{1[t,l]} & = \sqrt{\phi O(\frac{\mu_{nt} (\mu_{ns} + p)}{(\mu_{nt} + p)\mu_{ns}\mu_{nt}\mu_{nl}})} \\
    & = \sqrt{\phi O(\frac{1}{(\mu_{nt}+p)\mu_{nl}} + \frac{p}{(\mu_{nt}+p)\mu_{ns}\mu_{nl}})}
\end{align*}
Because $\phi = O_p(\mu_{n,min}+p)$ and $\mu_{n,min}/\sqrt{p} \rightarrow \infty$, \eqref{proof theorem 2: ets op} holds for every $l$. As a result,\\ $\sqrt{\phi}\sqrt{\frac{\mu_{nt}}{\mu_{nt} + p}}\sqrt{\frac{\mu_{ns} + p}{\mu_{ns}}} E_{ts} \xrightarrow[]{P} 0$ for every $(t,s)$. Now, because $\mathbb{V}^{-\frac{1}{2}}T_n$ and $T_n^{-1}\mathbb{V}^{\frac{1}{2}}$ are bounded by Lemma \ref{lemma1}, we have $C_1$ converges in probability to $\bm 0$. As a result, $B_2$ converges in probability to $\bm 0$.

Putting the above results together, we have \eqref{proof theorem 2: key identity} holds for any $\phi = O_p(\mu_{n,min}+p)$. That is, $A_1 \xrightarrow[]{P} I_K$.

{To conclude the proof of asymptotic normality of the SRIVW estimator, it remains to show $A_3 =  \phi \mathbb{V}^{-\frac{1}{2}}(\sum_{j}\hat{M}_j - V_j)^{-1} \xrightarrow[]{P} \bm 0$. Note that
\begin{align*}
    & A_3 = \phi \mathbb{V}^{-\frac{1}{2}}(\sum_{j}\hat{M}_j - V_j)^{-1} \\
    & =  \underbrace{\sqrt{\phi} \mathbb{V}^{-\frac{1}{2}} (\sum_j M_j)^{-1} \mathbb{V}^{\frac{1}{2}}}_{C_1} \underbrace{\mathbb{V}^{-\frac{1}{2}} (\sum_j M_j) (\sum_j \hat M_j - V_j)^{-1} \mathbb{V}^{\frac{1}{2}}}_{C_2} \sqrt{\phi}\mathbb{V}^{-\frac{1}{2}}.
\end{align*}
We have just shown from the analysis of $B_2$ that $C_1 \xrightarrow[]{P} \bm 0$, and $C_2 \xrightarrow[]{P} I_K$. Now, note that
\begin{align*}
    \sqrt{\phi}\mathbb{V}^{-\frac{1}{2}} & = \sqrt{\phi}\mathbb{V}^{-\frac{1}{2}}T_nT_n^{-1} \\
    & = \mathbb{V}^{-\frac{1}{2}}T_n {\rm diag}(\sqrt{\frac{\phi}{\mu_{n1}+p}},...,\sqrt{\frac{\phi}{\mu_{nK}+p}}) \tilde S_n^{-1}
\end{align*}
By Lemma \ref{lemma1}, $\mathbb{V}^{-\frac{1}{2}}T_n$ is bounded, and by Assumption 2, $\tilde S_n$ is also bounded. Combining it with the assumption that $\phi = O_p(\mu_{n,min}+p)$, we have that $\sqrt{\phi}\mathbb{V}^{-\frac{1}{2}} = O_p(1)$. Thus, $A_3 \xrightarrow[]{P} \bm 0$. }

So far, we have shown that under Theorem 2's conditions, $A_1 \xrightarrow[]{P} I_K$, $A_2 \xrightarrow[]{D} N(\bm 0, I_K)$, and $A_3 \xrightarrow[]{P} \bm 0$. Because $\mathbb{V}^{-\frac{1}{2}}
        ( \sum_{j}M_j)\left(\bm{\hat{\bbeta}}_{{\rm SRIVW},\phi} - \bm\bbeta_0\right) = A_1 A_2  - A_1 A_3$, it thus converges in distribution to $N(\bm 0, I_K)$ by  the Slutsky's theorem. Consistency follows directly from Lemma \ref{lemma7} that $\mathcal{V}_{\rm SRIVW} \rightarrow \bm 0$.   
\end{proof}

 \textbf{Consistency of variance estimators}

Lastly, we want to show that $\mathbb{V}^{-\frac{1}{2}}
        ( \sum_{j}M_j)\left(\bm{\hat{\bbeta}}_{{\rm SRIVW},\phi} - \bm\bbeta_0\right) \xrightarrow[]{D} N(\bm 0, I_K)$ holds if we replace $\sum_j M_j$ and $\mathbb{V}$ with respectively $R_{\phi}\left(\sum_{j}\hat{M}_j - V_j\right)$ and 
        \begin{align*}
            \hat{\mathbb{V}}_{\phi} = \sum_{j}\{(1+\hat{\bm\beta}_{\rm SRIVW, \phi}^TV_j\hat{\bm\beta}_{\rm SRIVW, \phi})\hat{M_j}+V_j\hat{\bm\beta}_{\rm SRIVW, \phi}\hat{\bm\beta}_{\rm SRIVW, \phi}^TV_j.
        \end{align*}
Note that
  \begin{align*}
      & \hat{\mathbb{V}}_{\phi}^{-1/2} {R_{\phi}\left(\sum_{j}\hat{M}_j - V_j\right)} \left(\bm{\hat{\bbeta}}_{{\rm SRIVW},\phi} - \bm\bbeta_0\right)\\
      & = \hat{\mathbb{V}}_{\phi}^{-1/2} {R_{\phi}\left(\sum_{j}\hat{M}_j - V_j\right)} (\sum_j M_j)^{-1} \mathbb{V}^{\frac{1}{2}} \mathbb{V}^{-\frac{1}{2}}
        ( \sum_{j}M_j)\left(\bm{\hat{\bbeta}}_{{\rm SRIVW},\phi} - \bm\bbeta_0\right) \\
     & = \underbrace{\hat{\mathbb{V}}_{\phi}^{-1/2} \mathbb{V}^{\frac{1}{2}}}_{F_1} \underbrace{ \mathbb{V}^{-\frac{1}{2}}{R_{\phi}\left(\sum_{j}\hat{M}_j - V_j\right)} (\sum_j M_j)^{-1} \mathbb{V}^{\frac{1}{2}}}_{F_2} \underbrace{\mathbb{V}^{-\frac{1}{2}}
        ( \sum_{j}M_j)\left(\bm{\hat{\bbeta}}_{{\rm SRIVW},\phi} - \bm\bbeta_0\right)}_{F_3}
  \end{align*}
  Using \eqref{proof theorem 2: key identity} and the continuous mapping theorem, we have $F_2 \xrightarrow[]{P} I_K$. We have just shown that $F_3 \xrightarrow[]{D} N(\bm 0, I_K)$. It remains to show $F_1 \xrightarrow[]{P} I_K$. 
  
  Following almost the same argument as in the proof of consistency of the variance estimator for MV-IVW (section S3.9) and using the fact that $\hat{\bm\beta}_{\rm SRIVW, \phi}$ is consistent under Theorem 2's conditions, we can first show that $\mathbb{V}^{-\frac{1}{2}}\hat{\mathbb{V}}_{\phi}\mathbb{V}^{-\frac{1}{2}} \xrightarrow[]{P} I_K$. Then, by the continuous mapping theorem, we have $\mathbb{V}^{\frac{1}{2}}\hat{\mathbb{V}}_{\phi}^{-1}\mathbb{V}^{\frac{1}{2}} \xrightarrow[]{P} I_K$. This implies $F_1 F_1^T \xrightarrow[]{P} I_K$. In other words, $F_1$ is an orthogonal matrix with probability tending 1. As a result, $F_1^{-1} = F_1^T$ with probability tending 1. This implies
  \begin{align*}
      F_1^{-1} =  \mathbb{V}^{-\frac{1}{2}} \hat{\mathbb{V}}_{\phi}^{1/2} = \mathbb{V}^{\frac{1}{2}} \hat{\mathbb{V}}_{\phi}^{-1/2} = F_1^T
  \end{align*}
  with probability tending 1. From this equality, we conclude that $\mathbb{V} = \hat{\mathbb{V}}_{\phi}$ with probability tending 1. Therefore, $F_1 \xrightarrow[]{P} I_K$. This completes the proof.

\subsection{Proof of claims related to SRIVW-pleio}

We show that the SRIVW estimators are still asymptotically normal and consistent under balanced horizontal pleiotropy. The proof basically mirrors the proof of Theorem 2, and is based on slight modifications of several established results.

We first state without proof those results and some important facts under balanced horizontal pleiotropy.  Under the balanced horizontal pleiotropy, we have $\hat{\Gamma}_j \sim N(\bgamma_j^T\bbeta_0, \sigma_{Yj}^2 + \tau_0^2)$ and $\sigma_{Xjk}^2/(\sigma_{Yj}^2 + \tau_0^2) = \Theta(1)$ for every $j, k$. In addition, we have
\begin{align}
    Cov(\sum_j \hat \bgamma_j\hat \Gamma_j\sigma_{Yj}^{-2} - \hat M_j\bbeta_0) = \mathbb{U} \label{eq: Uomega}
\end{align}
where $\mathbb{U} = \sum_{j=1}^p\{(1 + \tau_0^2\sigma_{Yj}^{-2} + \bm{\beta}_0^TV_j\bm{\beta}_0)(M_j + V_j) + V_j\bm{\beta}_0\bm{\beta}_0^TV_j\}$.

Under Theorem 2's conditions and the above balanced horizontal pleiotropy conditions,  we have the following intermediate results:
\begin{enumerate}
    \item [(1)] $T_n^{-1} \mathbb{U} T_n^{-T}$ is bounded with the minimum eigenvalue bounded away from 0.
    \item [(2)] $T_n^{-1}  \{\sum_j (1 + \tau_0^2\sigma_{Yj}^{-2} + \bbeta_0^T V_j \bbeta_0)(\hat M_j - (M_j+V_j))\} T_n^{-T} \xrightarrow[]{P} \bm 0$.
    \item [(3)] When $\max_j \frac{\gamma_{jk}^2}{\sigma_{Xjk}^2}/(\mu_{n,min} + p) \rightarrow 0$ for every $k = 1, ..., K$, $\mathbb{U}^{-\frac{1}{2}}(\sum_j \hat \bgamma_j\hat \Gamma_j\sigma_{Yj}^{-2} - \hat M_j\bbeta_0 + V_j\bbeta_0) \xrightarrow[]{D} N(\bm0, I_{K})$.
    \item [(4)] $\mathcal{V}_{\rm SRIVW} = (\sum_j M_j)^{-1}\mathbb{U}(\sum_j M_j)^{-1} \rightarrow \bm 0$
    \item  [(5)] $\mathbb{U}^{-\frac{1}{2}}( \sum_{j}M_j){R_{\phi}^{-1}\left(\sum_{j}\hat{M}_j - V_j\right)}\mathbb{U}^{\frac{1}{2}} \xrightarrow[]{P} I_{K}$
    \item [(6)] $\phi \mathbb{U}^{-\frac{1}{2}}(\sum_{j}\hat{M}_j - V_j)^{-1} \xrightarrow[]{P} \bm 0$
\end{enumerate}
The proof of these results is almost identical to their counterparts (with $\mathbb{U}$ replaced by $\mathbb{V}$) presented in earlier sections, by simply utilizing the fact that $1 + \tau_0^2 \sigma_{Yj}^{-2} + \bbeta_0^T V_j \bbeta_0 = \Theta(1)$ for all $j$. That is, this additional term $\tau_0^2 \sigma_{Yj}^{-2}$ has negligible impacts on the main arguments of the existing proofs.

Next, we show that 
\begin{align*}
    \mathbb{U}^{-\frac{1}{2}}
        ( \sum_{j}M_j) \left(\bm{\hat{\bbeta}}_{{\rm SRIVW},\phi} - \bm\bbeta_0\right) \xrightarrow[]{D} N(\bm 0, I_K)
\end{align*}
Note that 
\begin{align*}
       & \mathbb{U}^{-\frac{1}{2}}
        ( \sum_{j}M_j)\left(\hat{\bbeta}_{SRIVW} - \bm\bbeta_0\right) \\
        & = \underbrace{\mathbb{U}^{-\frac{1}{2}}
        ( \sum_{j}M_j)
{R_{\phi}^{-1}\left(\sum_{j}\hat{M}_j - V_j\right)}\mathbb{U}^{\frac{1}{2}}}_{A_1} \underbrace{\mathbb{U}^{-\frac{1}{2}}\left( \sum_j \hat \bgamma_j \hat \Gamma_j \sigma_{Yj}^{-2} - \hat M_j\bbeta_0 + V_j\bbeta_0\right)}_{A_2} \\
& - \underbrace{\mathbb{U}^{-\frac{1}{2}}
        ( \sum_{j}M_j)
{R_{\phi}^{-1}\left(\sum_{j}\hat{M}_j - V_j\right)}\mathbb{U}^{\frac{1}{2}}}_{A1}\underbrace{ \mathbb{U}^{-\frac{1}{2}}\phi(\sum_{j}\hat{M}_j - V_j)^{-1}}_{A_3}\bbeta_0
\end{align*}
Using the above results (3), (5) and (6), we have $A_1 \xrightarrow[]{P} I_K$, $A_2 \xrightarrow[]{D} N(\bm 0, I_K)$, and $A_3 \xrightarrow[]{P} \bm 0$. Then, applying Slutsky's theorem, we conclude that the SRIVW estimator is asymptotically normal under balanced horizontal pleiotropy. This result suggests the asymptotic variance of the SRIVW estimator is given by $\mathcal{V}_{\rm SRIVW} = (\sum_j M_j)^{-1}\mathbb{U}(\sum_j M_j)^{-1}$. The consistency of the SRIVW estimator follows from that $\mathcal{V}_{\rm SRIVW} \rightarrow \bm 0$.

For the rest of this section, we show that $\mathbb{U}^{-\frac{1}{2}}
        ( \sum_{j}M_j) \left(\bm{\hat{\bbeta}}_{{\rm SRIVW},\phi} - \bm\bbeta_0\right) \xrightarrow[]{D} N(\bm 0, I_K)$ holds if $\sum_j M_j$ and $\mathbb{U}$ are replaced with respectively $R_{\phi}(\sum_j \hat M_j - V_j)$ and $\hat{\mathbb{U}}_{\phi}$, where
\begin{align*}
     & \hat{\mathbb{U}}_{\phi} = \sum_{j=1}^p\left\{\left(1+\hat{\tau}_{\phi}^2\sigma_{Yj}^{-2}+\hat{\bm\beta}_{{\rm SRIVW},\phi}^TV_j\hat{\bm\beta}_{{\rm SRIVW},\phi}\right)\hat{M_j}+V_j\hat{\bm\beta}_{{\rm SRIVW},\phi}\hat{\bm\beta}_{{\rm SRIVW},\phi}^TV_j\right\}, \quad \text{and} \\
    &  \hat{\tau}^2_{\phi}   = \frac{1}{\sum_{j=1}^p\sigma_{Yj}^{-2}}\left\{\sum_{j=1}^{p} \left( (\hat \Gamma_j - \hat{\bgamma_j}^T\hat{\bbeta}_{{\rm SRIVW}, \phi})^2 - \sigma_{Yj}^2 - \hat{\bm\beta}_{{\rm SRIVW},\phi}^T\Sigma_{Xj}\hat{\bm\beta}_{{\rm SRIVW},\phi} \right)\sigma_{Yj}^{-2} \right\}.
\end{align*} This result implies that a consistent estimator of $\mathcal{V}_{\rm SRIVW}$ is given by $\hat{ \mathcal{V}}_{\rm SRIVW} = R_{\phi}^{-1}(\sum_j \hat M_j - V_j) \hat{\mathbb{U}}_{\phi} R_{\phi}^{-1}(\sum_j \hat M_j - V_j)$.

Note that
  \begin{align*}
      & \hat{\mathbb{U}}_{\phi}^{-1/2} {R_{\phi}\left(\sum_{j}\hat{M}_j - V_j\right)} \left(\bm{\hat{\bbeta}}_{{\rm SRIVW},\phi} - \bm\bbeta_0\right)\\
      & = \hat{\mathbb{U}}_{\phi}^{-1/2} {R_{\phi}\left(\sum_{j}\hat{M}_j - V_j\right)} (\sum_j M_j)^{-1} \mathbb{U}^{\frac{1}{2}} \mathbb{U}^{-\frac{1}{2}}
        ( \sum_{j}M_j)\left(\bm{\hat{\bbeta}}_{{\rm SRIVW},\phi} - \bm\bbeta_0\right) \\
     & = \underbrace{\hat{\mathbb{U}}_{\phi}^{-1/2} \mathbb{U}^{\frac{1}{2}}}_{F_1} \underbrace{ \mathbb{U}^{-\frac{1}{2}}{R_{\phi}\left(\sum_{j}\hat{M}_j - V_j\right)} (\sum_j M_j)^{-1} \mathbb{U}^{\frac{1}{2}}}_{F_2} \underbrace{\mathbb{U}^{-\frac{1}{2}}
        ( \sum_{j}M_j)\left(\bm{\hat{\bbeta}}_{{\rm SRIVW},\phi} - \bm\bbeta_0\right)}_{F_3}
  \end{align*}
  From the above stated intermediate results, we have $F_2 \xrightarrow[]{P} I_K$ and $F_3 \xrightarrow[]{D} N(\bm 0, I_K)$. It remains to show $F_1 \xrightarrow[]{P} I_K$. 

    Before that, we first show $\mathbb{U}^{-\frac{1}{2}}\hat{\mathbb{U}}_{\phi}\mathbb{U}^{-\frac{1}{2}} \xrightarrow[]{P} I_K$. Let $\hat{\mathbb{U}}^{'} = \sum_{j}\{(1+\tau_0^2\sigma_{Yj}^{-2}+\bbeta_0^TV_j\bbeta_0)\hat{M_j}+V_j\bbeta_0 \bbeta_0^TV_j\}$. It is sufficient to show that $\mathbb{U}^{-\frac{1}{2}} \hat{\mathbb{U}}^{'} \mathbb{V}^{-\frac{1}{2}}$ converges in probability to $I_K$, because the desired result follows by replacing $\bbeta_0$ and $\tau_0^2$ with respectively their consistent estimator $\hat \bbeta_{{\rm SRIVW},\phi}$ and $\hat \tau_{\phi}^2$. The consistency of $\hat \bbeta_{{\rm SRIVW},\phi}$ has already been established under the current setting. We will show the consistency of $\hat \tau_{\phi}^2$ at the end of this proof.
    
    To show $\mathbb{U}^{-\frac{1}{2}}\hat{\mathbb{U}}^{'}\mathbb{U}^{-\frac{1}{2}} \xrightarrow[]{P} I_K$, we can expand $\mathbb{U}^{-\frac{1}{2}} \hat{\mathbb{U}}^{'} \mathbb{U}^{-\frac{1}{2}}$ as follows:
    \begin{align*}
        \mathbb{U}^{-\frac{1}{2}} \hat{\mathbb{U}}^{'} \mathbb{U}^{-\frac{1}{2}} = 
        \mathbb{U}^{-\frac{1}{2}}T_n \bigg\{T_n^{-1} \hat{\mathbb{U}}^{'} T_n^{-T}\bigg\} T_n^T  \mathbb{U}^{-\frac{1}{2}}
    \end{align*}
    Note that
    \begin{align*}
        T_n^{-1} \hat{\mathbb{U}}^{'} T_n^{-T} = T_n^{-1} \mathbb{U} T_n^{-T} + T_n^{-1}  \{\sum_j (1 + \tau_0^2\sigma_{Yj}^{-2} + \bbeta_0^T V_j \bbeta_0)(\hat M_j - (M_j+V_j))\} T_n^{-T}
    \end{align*}
    Then,
    \begin{align*}
        \mathbb{U}^{-\frac{1}{2}} \hat{\mathbb{U}}^{'} \mathbb{U}^{-\frac{1}{2}} = 
        I_K + \mathbb{U}^{-\frac{1}{2}}T_n \bigg\{T_n^{-1}  \{\sum_j (1 + \tau_0^2\sigma_{Yj}^{-2} + \bbeta_0^T V_j \bbeta_0)(\hat M_j - (M_j+V_j))\} T_n^{-T}\bigg\} T_n^T  \mathbb{U}^{-\frac{1}{2}}
    \end{align*}
    Since $(1 + \tau_0^2\sigma_{Yj}^{-2} + \bbeta_0^T V_j \bbeta_0) = \Theta(1)$, by the same argument as in the proof of Lemma \ref{lemma2}, we can show that $T_n^{-1}  \{\sum_j (1 + \tau_0^2\sigma_{Yj}^{-2} + \bbeta_0^T V_j \bbeta_0)(\hat M_j - (M_j+V_j))\} T_n^{-T}$ converges in probability to $\bm 0$. Because $\mathbb{U}^{-\frac{1}{2}}T_n$ is bounded, $\mathbb{U}^{-\frac{1}{2}} \hat{\mathbb{U}}^{'} \mathbb{U}^{-\frac{1}{2}}$ converges in probability to $I_K$. Hence, given that $\hat \bbeta_{{\rm SRIVW},\phi}$ and $\hat \tau_{\phi}^2$ are consistent (the consistency of $\hat \tau_{\phi}^2$ will be proved soon), $\mathbb{U}^{-\frac{1}{2}} \hat{\mathbb{U}}_{\phi} \mathbb{U}^{-\frac{1}{2}} \xrightarrow[]{P} I_K$. 
  
  Then, by the continuous mapping theorem, we have $\mathbb{U}^{\frac{1}{2}}\hat{\mathbb{U}}_{\phi}^{-1}\mathbb{U}^{\frac{1}{2}} \xrightarrow[]{P} I_K$. This implies $F_1 F_1^T \xrightarrow[]{P} I_K$. In other words, $F_1$ is an orthogonal matrix with probability tending 1. As a result, $F_1^{-1} = F_1^T$ with probability tending 1. This implies
  \begin{align*}
      F_1^{-1} =  \mathbb{U}^{-\frac{1}{2}} \hat{\mathbb{U}}_{\phi}^{1/2} = \mathbb{U}^{\frac{1}{2}} \hat{\mathbb{U}}_{\phi}^{-1/2} = F_1^T
  \end{align*}
  with probability tending 1. From this equality, we conclude that $\mathbb{U} = \hat{\mathbb{U}}_{\phi}$ with probability tending 1. Therefore, $F_1 \xrightarrow[]{P} I_K$.

    To complete the proof, it remains to show that $\hat{\tau}^2_\phi$ is a consistent estimator of $\tau_0^2$. We only prove consistency in the case where $\phi = 0$, because the SRIVW estimator is consistent as long as $\phi = O_p (\mu_{n,min}+p)$. We hereafter drop the subscript $\phi$ in $\hat \tau_{\phi}^2$ to ease the notation. Recall that
    \begin{align*}
        \hat{\tau}^2   = \frac{1}{\sum_{j}\sigma_{Yj}^{-2}}\left\{\sum_j \left( (\hat \Gamma_j - \hat{\bgamma_j}^T\hat{\bbeta}_{\rm SRIVW})^2 - \sigma_{Yj}^2 - \hat{\bm\beta}_{\rm SRIVW}^T\Sigma_{Xj}\hat{\bm\beta}_{\rm SRIVW} \right)\sigma_{Yj}^{-2} \right\}
    \end{align*}
     We then define $\tilde  \tau $ as an analogue of $\hat \tau^2$ with $\hat \bbeta_{{\rm SRIVW}}$ replaced by $\bbeta_0$ i.e.
    \begin{align*}
        \tilde {\tau}^2   = \frac{1}{\sum_{j}\sigma_{Yj}^{-2}}\left\{\sum_j \left( (\hat \Gamma_j - \hat{\bgamma_j}^T\bbeta_0)^2 - \sigma_{Yj}^2 - \bbeta_0^T\Sigma_{Xj}\bbeta_0 \right)\sigma_{Yj}^{-2} \right\}
    \end{align*}
    Some algebra shows that
    \begin{align*}
        & E(\tilde \tau^2) = \tau_0^2 \\
        & Var(\tilde \tau^2) = \frac{2 \sum_j (1 + \bbeta_0^T V_j \bbeta_0 + \tau_0^2 \sigma_{Yj}^{-2})^2}{(\sum_j \sigma_{Yj}^{-2})^2}= \frac{\Theta(p)}{(\sum_j \sigma_{Yj}^{-2})^2} 
    \end{align*}
    In this proof, for a random variable X with finite second moments, we can write
    $X = E(X)+ O_P ((Var(X))^{1/2})$. With this, we can then write
    \begin{align*}
        (\sum_j \sigma_{Yj}^{-2}) (\tilde \tau^2 - \tau_0^2) = O_p(\sqrt{p})
    \end{align*}
    Furthermore, some algebra reveals that
    \begin{align*}
        & (\sum_j \sigma_{Yj}^{-2}) (\hat \tau^2 - \tilde \tau^2)  = -2(\hat \bbeta_{\rm SRIVW} - \bbeta_0)^T\bigg\{\sum_j \hat \bgamma_j \hat \Gamma_j \sigma_{Yj}^{-2} - \hat M_j\bbeta_0 + V_j\bbeta_0\bigg\}\\
        & + (\hat \bbeta_{\rm SRIVW} - \bbeta_0)^T (\sum_j \hat M_j - V_j) (\hat \bbeta_{\rm SRIVW} - \bbeta_0) \\
        & = - \bigg\{\sum_j \hat \bgamma_j \hat \Gamma_j \sigma_{Yj}^{-2} - \hat M_j\bbeta_0 + V_j\bbeta_0\bigg\}^T (\sum_j \hat M_j - V_j)^{-1} \bigg\{\sum_j \hat \bgamma_j \hat \Gamma_j \sigma_{Yj}^{-2} - \hat M_j\bbeta_0 + V_j\bbeta_0\bigg\}\\
        & = - \bigg\{\sum_j \hat \bgamma_j \hat \Gamma_j \sigma_{Yj}^{-2} - \hat M_j\bbeta_0 + V_j\bbeta_0\bigg\}^T \mathbb{U}^{-\frac{1}{2}}  \mathbb{U}^{\frac{1}{2}} (\sum_j \hat M_j - V_j)^{-1} \mathbb{U}^{\frac{1}{2}} \mathbb{U}^{-\frac{1}{2}} \bigg\{\sum_j \hat \bgamma_j \hat \Gamma_j \sigma_{Yj}^{-2} - \hat M_j\bbeta_0 + V_j\bbeta_0\bigg\}
    \end{align*}
    We know $\mathbb{U}^{-\frac{1}{2}} \bigg\{\sum_j \hat \bgamma_j \hat \Gamma_j \sigma_{Yj}^{-2} - \hat M_j\bbeta_0 + V_j\bbeta_0\bigg\}$ is asymptotically normal and hence $O_p(1)$. It remains to study $\mathbb{U}^{\frac{1}{2}} (\sum_j \hat M_j - V_j)^{-1} \mathbb{U}^{\frac{1}{2}}$. Expand it as follows:
    \begin{align*}
    & \mathbb{U}^{\frac{1}{2}} S_n^{-T} \bigg(S_n^{-1}(\sum_j \hat M_j - (M_j + V_j))S_n^{-T} + S_n^{-1} \sum_j M_j S_n^{-T}\bigg)^{-1} S_n^{-1}\mathbb{U}^{\frac{1}{2}}\\
    & = \mathbb{U}^{\frac{1}{2}} T_n^{-T} {\rm{{\rm diag}}}(\sqrt{\frac{\mu_{n1} + p}{\mu_{n1}}},...,\sqrt{\frac{\mu_{nK} + p}{\mu_{nk}}}) \underbrace{\bigg(S_n^{-1}(\sum_j \hat M_j - (M_j + V_j))S_n^{-T} + S_n^{-1} \sum_j M_j S_n^{-T}\bigg)^{-1}}_{B}\\
    & {\rm{{\rm diag}}}(\sqrt{\frac{\mu_{n1} + p}{\mu_{n1}}},...,\sqrt{\frac{\mu_{nK} + p}{\mu_{nk}}}) T_n^{-1}\mathbb{U}^{\frac{1}{2}}
    \end{align*}
    Consider any $(t,s)$ entry of the inner matrix (the matrix between $\mathbb{U}^{\frac{1}{2}}T_n^{-T}$ and $T_n^{-1}\mathbb{U}^{\frac{1}{2}}$ in the above display), which is equal to
\begin{align*}
    \sqrt{\frac{(\mu_{nt}+p)(\mu_{ns} + p)}{\mu_{nt}\mu_{ns}}} B_{ts}
\end{align*}
     By Lemma \ref{lemma6} and Assumption 2, $B$ is bounded in probability. Hence, when $\mu_{n,min}/\sqrt{p} \rightarrow \infty$,
     \begin{align}
      \sqrt{\frac{(\mu_{nt}+p)(\mu_{ns} + p)}{\mu_{nt}\mu_{ns}}} B_{ts} = O_p(\sqrt{\frac{(\mu_{nt}+p)(\mu_{ns} + p)}{\mu_{nt}\mu_{ns}}}) = o_p(\sqrt{p})   
     \end{align}
     That is, $\mathbb{U}^{\frac{1}{2}} (\sum_j \hat M_j - V_j)^{-1} \mathbb{U}^{\frac{1}{2}} = o_p(\sqrt{p})$. Because $T_n^{-1}\mathbb{U}^{\frac{1}{2}}$ is bounded, we have $(\sum_j \sigma_{Yj}^{-2}) (\hat \tau^2 - \tilde \tau^2) = o_p(\sqrt{p})$. Consequently,
     \begin{align*}
         \sum_j \sigma_{Yj}^{-2} (\hat \tau^2 - \tau_0^2) =\sum_j \sigma_{Yj}^{-2} (\hat \tau^2 - \tilde \tau_0^2) + \sum_j \sigma_{Yj}^{-2} (\tilde \tau^2 - \tau_0^2) = o_p(\sqrt{p}) +  O_p(\sqrt{p}) = O_p(\sqrt{p})
     \end{align*} 
    
    Finally, because $\sigma_{Yj}^2 = \Theta(\frac{1}{n})$ for every $j$ (see \ref{sec: rate sigma}), $\sum_j \sigma_{Yj}^{-2} = \Theta(np)$, which leads to
    \begin{align*}
        \hat \tau^2 - \tau_0^2 = O_p(\frac{1}{n\sqrt{p}}) = o_p(1)
    \end{align*}
    This completes the proof.

\section{Additional simulation details and results}\label{sec: additional sim}

\subsection{Main simulation study: three exposures and summary-data}\label{exposure-simulation-results.-all-beta.exposures-by-d.}

In the following, we present the additional simulation results not shown in the main text.

\begin{table}[H]

\caption{Simulation results for six MVMR estimators in the main simulation study with 10,000 repetitions, when Factor 1 = (ii), Factor 2 = (i), Factor 3 = (i). In other words, $\bbeta_0 = (\beta_{01}, \beta_{02}, \beta_{03}) = (0.1, -0.5, -0.9)$, and the first exposure has much weaker IV strength than the other two. $p = 145$. $\hat \lambda_{\min}$ is the minimum eigenvalue of the sample IV strength matrix $\sum_{j=1}^{p} \Omega_j^{-1} \hat\bgamma_j\hat \bgamma_j^T\Omega_j^{-T} - pI_K$.}
\centering
\resizebox{\linewidth}{!}{
\begin{tabular}[t]{ccccccccccccc}
\toprule
\multicolumn{1}{c}{ } & \multicolumn{4}{c}{$\beta_{01}$ = 0.1} & \multicolumn{4}{c}{$\beta_{02}$ = -0.5} & \multicolumn{4}{c}{$\beta_{03}$ = -0.9} \\
\cmidrule(l{3pt}r{3pt}){2-5} \cmidrule(l{3pt}r{3pt}){6-9} \cmidrule(l{3pt}r{3pt}){10-13}
estimator & Est & SD & SE & CP & Est & SD & SE & CP & Est & SD & SE & CP\\
\midrule
\addlinespace[0.3em]
\multicolumn{13}{l}{\textit{Mean $\hat \lambda_{\min}/\sqrt{p}$ = 103.4, mean conditional $F$-statistics = 9.9, 38.3, 23.5}}\\
\hspace{1em}\cellcolor{gray!6}{IVW} & \cellcolor{gray!6}{0.099} & \cellcolor{gray!6}{0.035} & \cellcolor{gray!6}{0.035} & \cellcolor{gray!6}{0.952} & \cellcolor{gray!6}{-0.495} & \cellcolor{gray!6}{0.011} & \cellcolor{gray!6}{0.011} & \cellcolor{gray!6}{0.925} & \cellcolor{gray!6}{-0.883} & \cellcolor{gray!6}{0.016} & \cellcolor{gray!6}{0.016} & \cellcolor{gray!6}{0.815}\\
\hspace{1em}MV-Egger & 0.099 & 0.046 & 0.046 & 0.946 & -0.495 & 0.011 & 0.010 & 0.899 & -0.883 & 0.016 & 0.016 & 0.814\\
\hspace{1em}\cellcolor{gray!6}{MV-Median} & \cellcolor{gray!6}{0.099} & \cellcolor{gray!6}{0.042} & \cellcolor{gray!6}{0.045} & \cellcolor{gray!6}{0.968} & \cellcolor{gray!6}{-0.494} & \cellcolor{gray!6}{0.014} & \cellcolor{gray!6}{0.016} & \cellcolor{gray!6}{0.952} & \cellcolor{gray!6}{-0.884} & \cellcolor{gray!6}{0.021} & \cellcolor{gray!6}{0.023} & \cellcolor{gray!6}{0.910}\\
\hspace{1em}GRAPPLE & 0.099 & 0.039 & 0.040 & 0.958 & -0.500 & 0.011 & 0.011 & 0.952 & -0.900 & 0.017 & 0.017 & 0.956\\
\hspace{1em}\cellcolor{gray!6}{MRBEE} & \cellcolor{gray!6}{0.100} & \cellcolor{gray!6}{0.040} & \cellcolor{gray!6}{0.040} & \cellcolor{gray!6}{0.944} & \cellcolor{gray!6}{-0.500} & \cellcolor{gray!6}{0.012} & \cellcolor{gray!6}{0.012} & \cellcolor{gray!6}{0.937} & \cellcolor{gray!6}{-0.901} & \cellcolor{gray!6}{0.017} & \cellcolor{gray!6}{0.018} & \cellcolor{gray!6}{0.948}\\
\hspace{1em}SRIVW & 0.100 & 0.039 & 0.039 & 0.953 & -0.500 & 0.012 & 0.012 & 0.949 & -0.901 & 0.017 & 0.017 & 0.950\\
\addlinespace[0.3em]
\multicolumn{13}{l}{\textit{Mean $\hat \lambda_{\min}/\sqrt{p}$ = 21.7, mean conditional $F$-statistics = 2.9, 28.0, 14.7}}\\
\hspace{1em}\cellcolor{gray!6}{MV-IVW} & \cellcolor{gray!6}{0.060} & \cellcolor{gray!6}{0.065} & \cellcolor{gray!6}{0.065} & \cellcolor{gray!6}{0.906} & \cellcolor{gray!6}{-0.495} & \cellcolor{gray!6}{0.011} & \cellcolor{gray!6}{0.011} & \cellcolor{gray!6}{0.928} & \cellcolor{gray!6}{-0.886} & \cellcolor{gray!6}{0.016} & \cellcolor{gray!6}{0.016} & \cellcolor{gray!6}{0.857}\\
\hspace{1em}MV-Egger & 0.065 & 0.091 & 0.090 & 0.924 & -0.495 & 0.011 & 0.010 & 0.898 & -0.886 & 0.016 & 0.016 & 0.858\\
\hspace{1em}\cellcolor{gray!6}{MV-Median} & \cellcolor{gray!6}{0.061} & \cellcolor{gray!6}{0.081} & \cellcolor{gray!6}{0.081} & \cellcolor{gray!6}{0.923} & \cellcolor{gray!6}{-0.494} & \cellcolor{gray!6}{0.014} & \cellcolor{gray!6}{0.016} & \cellcolor{gray!6}{0.954} & \cellcolor{gray!6}{-0.887} & \cellcolor{gray!6}{0.020} & \cellcolor{gray!6}{0.022} & \cellcolor{gray!6}{0.926}\\
\hspace{1em}GRAPPLE & 0.097 & 0.101 & 0.103 & 0.958 & -0.500 & 0.011 & 0.012 & 0.955 & -0.900 & 0.017 & 0.018 & 0.955\\
\hspace{1em}\cellcolor{gray!6}{MRBEE} & \cellcolor{gray!6}{0.104} & \cellcolor{gray!6}{0.108} & \cellcolor{gray!6}{0.111} & \cellcolor{gray!6}{0.955} & \cellcolor{gray!6}{-0.500} & \cellcolor{gray!6}{0.012} & \cellcolor{gray!6}{0.012} & \cellcolor{gray!6}{0.938} & \cellcolor{gray!6}{-0.901} & \cellcolor{gray!6}{0.018} & \cellcolor{gray!6}{0.019} & \cellcolor{gray!6}{0.950}\\
\hspace{1em}SRIVW & 0.102 & 0.107 & 0.107 & 0.958 & -0.500 & 0.012 & 0.012 & 0.950 & -0.900 & 0.018 & 0.018 & 0.951\\
\addlinespace[0.3em]
\multicolumn{13}{l}{\textit{Mean $\hat \lambda_{\min}/\sqrt{p}$ = 7.6, mean conditional $F$-statistics = 1.7, 28.7, 17.3}}\\
\hspace{1em}\cellcolor{gray!6}{MV-IVW} & \cellcolor{gray!6}{0.011} & \cellcolor{gray!6}{0.087} & \cellcolor{gray!6}{0.086} & \cellcolor{gray!6}{0.820} & \cellcolor{gray!6}{-0.496} & \cellcolor{gray!6}{0.011} & \cellcolor{gray!6}{0.011} & \cellcolor{gray!6}{0.934} & \cellcolor{gray!6}{-0.887} & \cellcolor{gray!6}{0.015} & \cellcolor{gray!6}{0.015} & \cellcolor{gray!6}{0.867}\\
\hspace{1em}MV-Egger & 0.018 & 0.129 & 0.122 & 0.878 & -0.495 & 0.011 & 0.010 & 0.906 & -0.887 & 0.015 & 0.015 & 0.871\\
\hspace{1em}\cellcolor{gray!6}{MV-Median} & \cellcolor{gray!6}{0.015} & \cellcolor{gray!6}{0.107} & \cellcolor{gray!6}{0.100} & \cellcolor{gray!6}{0.846} & \cellcolor{gray!6}{-0.494} & \cellcolor{gray!6}{0.014} & \cellcolor{gray!6}{0.016} & \cellcolor{gray!6}{0.954} & \cellcolor{gray!6}{-0.888} & \cellcolor{gray!6}{0.020} & \cellcolor{gray!6}{0.022} & \cellcolor{gray!6}{0.931}\\
\hspace{1em}GRAPPLE & 0.098 & 0.227 & 0.228 & 0.958 & -0.500 & 0.011 & 0.012 & 0.960 & -0.900 & 0.020 & 0.020 & 0.958\\
\hspace{1em}\cellcolor{gray!6}{MRBEE} & \cellcolor{gray!6}{0.127} & \cellcolor{gray!6}{0.299} & \cellcolor{gray!6}{0.354} & \cellcolor{gray!6}{0.976} & \cellcolor{gray!6}{-0.500} & \cellcolor{gray!6}{0.014} & \cellcolor{gray!6}{0.013} & \cellcolor{gray!6}{0.949} & \cellcolor{gray!6}{-0.899} & \cellcolor{gray!6}{0.025} & \cellcolor{gray!6}{0.026} & \cellcolor{gray!6}{0.969}\\
\hspace{1em}SRIVW & 0.104 & 0.223 & 0.223 & 0.962 & -0.500 & 0.012 & 0.012 & 0.956 & -0.900 & 0.019 & 0.019 & 0.961\\
\bottomrule
\multicolumn{13}{l}{\small Abbreviations: Est = estimated causal effect; SD = standard deviation; SE = standard error; CP = coverage probability.}
\end{tabular}}
\end{table}

\begin{table}[H]
\caption{Simulation results for six MVMR estimators in the main simulation study with 10,000 repetitions, when Factor 1 = (i), Factor 2 = (ii), Factor 3 = (ii). In other words, $\bbeta_0 = (\beta_{01}, \beta_{02}, \beta_{03}) = (0.8, 0.4, 0)$,  and all three exposures have similar IV strength. $p = 145$. $\hat \lambda_{\min}$ is the minimum eigenvalue of the sample IV strength matrix $\sum_{j=1}^{p} \Omega_j^{-1} \hat\bgamma_j \hat\bgamma_j^T \Omega_j^{-T} - pI_K$.}
\centering
\resizebox{\linewidth}{!}{
\begin{tabular}[t]{c|cccc|cccc|cccc}
\toprule
\multicolumn{1}{c}{ } & \multicolumn{4}{c}{$\beta_{01}$ = 0.8} & \multicolumn{4}{c}{$\beta_{02}$ = 0.4} & \multicolumn{4}{c}{$\beta_{03}$ = 0} \\
\cmidrule(l{3pt}r{3pt}){2-5} \cmidrule(l{3pt}r{3pt}){6-9} \cmidrule(l{3pt}r{3pt}){10-13}
Estimator & Est & SD & SE & CP & Est & SD & SE & CP & Est & SD & SE & CP\\
\midrule
\addlinespace[0.3em]
\multicolumn{13}{l}{\textit{Mean $\hat \lambda_{\min}/\sqrt{p}$ = 109.1, mean conditional $F$-statistics = 13.3, 19.0, 11.1}}\\
\hspace{1em}\cellcolor{gray!6}{MV-IVW} & \cellcolor{gray!6}{0.746} & \cellcolor{gray!6}{0.022} & \cellcolor{gray!6}{0.023} & \cellcolor{gray!6}{0.327} & \cellcolor{gray!6}{0.383} & \cellcolor{gray!6}{0.017} & \cellcolor{gray!6}{0.018} & \cellcolor{gray!6}{0.832} & \cellcolor{gray!6}{-0.023} & \cellcolor{gray!6}{0.025} & \cellcolor{gray!6}{0.025} & \cellcolor{gray!6}{0.840}\\
\hspace{1em}MV-Egger & 0.762 & 0.029 & 0.030 & 0.755 & 0.382 & 0.017 & 0.017 & 0.795 & -0.025 & 0.025 & 0.025 & 0.824\\
\hspace{1em}\cellcolor{gray!6}{MV-Median} & \cellcolor{gray!6}{0.749} & \cellcolor{gray!6}{0.029} & \cellcolor{gray!6}{0.034} & \cellcolor{gray!6}{0.690} & \cellcolor{gray!6}{0.381} & \cellcolor{gray!6}{0.022} & \cellcolor{gray!6}{0.026} & \cellcolor{gray!6}{0.918} & \cellcolor{gray!6}{-0.021} & \cellcolor{gray!6}{0.033} & \cellcolor{gray!6}{0.037} & \cellcolor{gray!6}{0.938}\\
\hspace{1em}GRAPPLE & 0.799 & 0.025 & 0.025 & 0.954 & 0.399 & 0.019 & 0.019 & 0.954 & -0.001 & 0.028 & 0.029 & 0.952\\
\hspace{1em}\cellcolor{gray!6}{MRBEE} & \cellcolor{gray!6}{0.804} & \cellcolor{gray!6}{0.026} & \cellcolor{gray!6}{0.027} & \cellcolor{gray!6}{0.942} & \cellcolor{gray!6}{0.401} & \cellcolor{gray!6}{0.019} & \cellcolor{gray!6}{0.020} & \cellcolor{gray!6}{0.941} & \cellcolor{gray!6}{0.001} & \cellcolor{gray!6}{0.029} & \cellcolor{gray!6}{0.030} & \cellcolor{gray!6}{0.948}\\
\hspace{1em}SRIVW & 0.802 & 0.026 & 0.026 & 0.952 & 0.400 & 0.019 & 0.019 & 0.954 & 0.001 & 0.029 & 0.029 & 0.950\\
\addlinespace[0.3em]
\multicolumn{13}{l}{\textit{Mean $\hat \lambda_{\min}/\sqrt{p}$ = 21.1, mean conditional $F$-statistics = 3.6, 4.7, 3.1}}\\
\hspace{1em}\cellcolor{gray!6}{MV-IVW} & \cellcolor{gray!6}{0.588} & \cellcolor{gray!6}{0.038} & \cellcolor{gray!6}{0.041} & \cellcolor{gray!6}{0.002} & \cellcolor{gray!6}{0.327} & \cellcolor{gray!6}{0.031} & \cellcolor{gray!6}{0.032} & \cellcolor{gray!6}{0.380} & \cellcolor{gray!6}{-0.070} & \cellcolor{gray!6}{0.043} & \cellcolor{gray!6}{0.042} & \cellcolor{gray!6}{0.609}\\
\hspace{1em}MV-Egger & 0.622 & 0.054 & 0.057 & 0.129 & 0.328 & 0.032 & 0.032 & 0.378 & -0.073 & 0.044 & 0.045 & 0.635\\
\hspace{1em}\cellcolor{gray!6}{MV-Median} & \cellcolor{gray!6}{0.585} & \cellcolor{gray!6}{0.055} & \cellcolor{gray!6}{0.056} & \cellcolor{gray!6}{0.036} & \cellcolor{gray!6}{0.323} & \cellcolor{gray!6}{0.042} & \cellcolor{gray!6}{0.043} & \cellcolor{gray!6}{0.569} & \cellcolor{gray!6}{-0.069} & \cellcolor{gray!6}{0.058} & \cellcolor{gray!6}{0.058} & \cellcolor{gray!6}{0.769}\\
\hspace{1em}GRAPPLE & 0.792 & 0.062 & 0.062 & 0.941 & 0.398 & 0.045 & 0.046 & 0.955 & -0.004 & 0.073 & 0.074 & 0.955\\
\hspace{1em}\cellcolor{gray!6}{MRBEE} & \cellcolor{gray!6}{0.821} & \cellcolor{gray!6}{0.078} & \cellcolor{gray!6}{0.080} & \cellcolor{gray!6}{0.961} & \cellcolor{gray!6}{0.406} & \cellcolor{gray!6}{0.049} & \cellcolor{gray!6}{0.051} & \cellcolor{gray!6}{0.949} & \cellcolor{gray!6}{0.010} & \cellcolor{gray!6}{0.083} & \cellcolor{gray!6}{0.086} & \cellcolor{gray!6}{0.962}\\
\hspace{1em}SRIVW & 0.813 & 0.076 & 0.075 & 0.957 & 0.404 & 0.049 & 0.049 & 0.956 & 0.007 & 0.081 & 0.081 & 0.957\\
\addlinespace[0.3em]
\multicolumn{13}{l}{\textit{Mean $\hat \lambda_{\min}/\sqrt{p}$ = 7.2, mean conditional $F$-statistics = 2.0, 2.4, 1.8}}\\
\hspace{1em}\cellcolor{gray!6}{MV-IVW} & \cellcolor{gray!6}{0.406} & \cellcolor{gray!6}{0.046} & \cellcolor{gray!6}{0.048} & \cellcolor{gray!6}{0.000} & \cellcolor{gray!6}{0.246} & \cellcolor{gray!6}{0.039} & \cellcolor{gray!6}{0.039} & \cellcolor{gray!6}{0.038} & \cellcolor{gray!6}{-0.089} & \cellcolor{gray!6}{0.053} & \cellcolor{gray!6}{0.048} & \cellcolor{gray!6}{0.541}\\
\hspace{1em}MV-Egger & 0.442 & 0.072 & 0.073 & 0.003 & 0.246 & 0.039 & 0.042 & 0.043 & -0.090 & 0.053 & 0.054 & 0.623\\
\hspace{1em}\cellcolor{gray!6}{MV-Median} & \cellcolor{gray!6}{0.395} & \cellcolor{gray!6}{0.066} & \cellcolor{gray!6}{0.060} & \cellcolor{gray!6}{0.000} & \cellcolor{gray!6}{0.238} & \cellcolor{gray!6}{0.053} & \cellcolor{gray!6}{0.051} & \cellcolor{gray!6}{0.122} & \cellcolor{gray!6}{-0.087} & \cellcolor{gray!6}{0.069} & \cellcolor{gray!6}{0.062} & \cellcolor{gray!6}{0.692}\\
\hspace{1em}GRAPPLE & 0.781 & 0.125 & 0.123 & 0.933 & 0.393 & 0.087 & 0.089 & 0.959 & -0.007 & 0.156 & 0.158 & 0.957\\
\hspace{1em}\cellcolor{gray!6}{MRBEE} & \cellcolor{gray!6}{0.881} & \cellcolor{gray!6}{0.202} & \cellcolor{gray!6}{0.288} & \cellcolor{gray!6}{0.982} & \cellcolor{gray!6}{0.419} & \cellcolor{gray!6}{0.112} & \cellcolor{gray!6}{0.126} & \cellcolor{gray!6}{0.981} & \cellcolor{gray!6}{0.050} & \cellcolor{gray!6}{0.226} & \cellcolor{gray!6}{0.306} & \cellcolor{gray!6}{0.983}\\
\hspace{1em}SRIVW & 0.798 & 0.169 & 0.178 & 0.959 & 0.405 & 0.126 & 0.108 & 0.972 & -0.017 & 0.239 & 0.191 & 0.961\\
\bottomrule
\multicolumn{13}{l}{\small Abbreviations: Est = estimated causal effect; SD = standard deviation; SE = standard error; CP = coverage probability.}
\end{tabular}}
\end{table}

\clearpage

\begin{table}[H]

\caption{Simulation results for six MVMR estimators in simulation study 1 with 10,000 repetitions, when Factor 1 = (ii), Factor 2 = (ii), Factor 3 = (ii). In other words, $\bbeta_0 = (\beta_{01}, \beta_{02}, \beta_{03}) = (0.1, -0.5, -0.9)$, and all three exposures have similar IV strength. $p = 145$. $\hat \lambda_{\min}$ is the minimum eigenvalue of the sample IV strength matrix $\sum_{j=1}^{p} \Omega_j^{-1} \hat\bgamma_j\hat \bgamma_j^T\Omega_j^{-T} - pI_K$.}
\centering
\resizebox{\linewidth}{!}{
\begin{tabular}[t]{cccccccccccccc}
\toprule
\multicolumn{1}{c}{ } & \multicolumn{4}{c}{$\beta_{01}$ = 0.1} & \multicolumn{4}{c}{$\beta_{02}$ = -0.5} & \multicolumn{4}{c}{$\beta_{03}$ = -0.9} \\
\cmidrule(l{3pt}r{3pt}){2-5} \cmidrule(l{3pt}r{3pt}){6-9} \cmidrule(l{3pt}r{3pt}){10-13}
Estimator & Est & SD & SE & CP & Est & SD & SE & CP & Est & SD & SE & CP\\
\midrule
\addlinespace[0.3em]
\multicolumn{13}{l}{\textit{Mean $\hat \lambda_{\min}/\sqrt{p}$ = 109.1, mean Conditional $F$-statistics = 13.3, 19.0, 11.1}}\\
\hspace{1em}\cellcolor{gray!6}{MV-IVW} & \cellcolor{gray!6}{\red 0.118} & \cellcolor{gray!6}{0.027} & \cellcolor{gray!6}{0.027} & \cellcolor{gray!6}{0.895} & \cellcolor{gray!6}{-0.480} & \cellcolor{gray!6}{0.021} & \cellcolor{gray!6}{0.022} & \cellcolor{gray!6}{0.833} & \cellcolor{gray!6}{-0.828} & \cellcolor{gray!6}{0.030} & \cellcolor{gray!6}{0.031} & \cellcolor{gray!6}{0.348}\\
\hspace{1em}MV-Egger & 0.113 & 0.036 & 0.036 & 0.938 & -0.480 & 0.021 & 0.020 & 0.798 & -0.828 & 0.030 & 0.031 & 0.340\\
\hspace{1em}\cellcolor{gray!6}{MV-Median} & \cellcolor{gray!6}{0.117} & \cellcolor{gray!6}{0.033} & \cellcolor{gray!6}{0.036} & \cellcolor{gray!6}{0.935} & \cellcolor{gray!6}{-0.477} & \cellcolor{gray!6}{0.027} & \cellcolor{gray!6}{0.029} & \cellcolor{gray!6}{0.896} & \cellcolor{gray!6}{-0.827} & \cellcolor{gray!6}{0.041} & \cellcolor{gray!6}{0.045} & \cellcolor{gray!6}{0.647}\\
\hspace{1em}GRAPPLE & 0.101 & 0.030 & 0.031 & 0.959 & -0.500 & 0.023 & 0.023 & 0.952 & -0.898 & 0.034 & 0.035 & 0.954\\
\hspace{1em}\cellcolor{gray!6}{MRBEE} & \cellcolor{gray!6}{0.099} & \cellcolor{gray!6}{0.031} & \cellcolor{gray!6}{0.032} & \cellcolor{gray!6}{0.944} & \cellcolor{gray!6}{-0.502} & \cellcolor{gray!6}{0.024} & \cellcolor{gray!6}{0.024} & \cellcolor{gray!6}{0.937} & \cellcolor{gray!6}{-0.904} & \cellcolor{gray!6}{0.036} & \cellcolor{gray!6}{0.037} & \cellcolor{gray!6}{0.949}\\
\hspace{1em}SRIVW & 0.100 & 0.031 & 0.031 & 0.955 & -0.501 & 0.024 & 0.024 & 0.949 & -0.902 & 0.036 & 0.036 & 0.948\\
\addlinespace[0.3em]
\multicolumn{13}{l}{\textit{Mean $\hat \lambda_{\min}/\sqrt{p}$ = 21.1, mean conditional $F$-statistics = 3.6, 4.7, 3.1}}\\
\hspace{1em}\cellcolor{gray!6}{MV-IVW} & \cellcolor{gray!6}{\red 0.148} & \cellcolor{gray!6}{0.047} & \cellcolor{gray!6}{0.046} & \cellcolor{gray!6}{0.799} & \cellcolor{gray!6}{-0.410} & \cellcolor{gray!6}{0.037} & \cellcolor{gray!6}{0.038} & \cellcolor{gray!6}{0.345} & \cellcolor{gray!6}{-0.630} & \cellcolor{gray!6}{0.049} & \cellcolor{gray!6}{0.053} & \cellcolor{gray!6}{0.002}\\
\hspace{1em}MV-Egger & 0.136 & 0.067 & 0.069 & 0.923 & -0.410 & 0.038 & 0.039 & 0.343 & -0.630 & 0.050 & 0.054 & 0.001\\
\hspace{1em}\cellcolor{gray!6}{MV-Median} & \cellcolor{gray!6}{0.146} & \cellcolor{gray!6}{0.058} & \cellcolor{gray!6}{0.056} & \cellcolor{gray!6}{0.858} & \cellcolor{gray!6}{-0.401} & \cellcolor{gray!6}{0.050} & \cellcolor{gray!6}{0.049} & \cellcolor{gray!6}{0.465} & \cellcolor{gray!6}{-0.599} & \cellcolor{gray!6}{0.077} & \cellcolor{gray!6}{0.073} & \cellcolor{gray!6}{0.024}\\
\hspace{1em}GRAPPLE & 0.103 & 0.077 & 0.080 & 0.958 & -0.497 & 0.055 & 0.056 & 0.958 & -0.889 & 0.086 & 0.086 & 0.948\\
\hspace{1em}\cellcolor{gray!6}{MRBEE} & \cellcolor{gray!6}{0.092} & \cellcolor{gray!6}{0.087} & \cellcolor{gray!6}{0.090} & \cellcolor{gray!6}{0.960} & \cellcolor{gray!6}{-0.508} & \cellcolor{gray!6}{0.061} & \cellcolor{gray!6}{0.062} & \cellcolor{gray!6}{0.950} & \cellcolor{gray!6}{-0.929} & \cellcolor{gray!6}{0.109} & \cellcolor{gray!6}{0.112} & \cellcolor{gray!6}{0.966}\\
\hspace{1em}SRIVW & 0.095 & 0.085 & 0.085 & 0.957 & -0.505 & 0.060 & 0.060 & 0.958 & -0.917 & 0.104 & 0.105 & 0.960\\
\addlinespace[0.3em]
\multicolumn{13}{l}{\textit{Mean $\hat \lambda_{\min}/\sqrt{p}$ = 7.2, mean conditional $F$-statistics = 2.0, 2.4, 1.8}}\\
\hspace{1em}\cellcolor{gray!6}{MV-IVW} & \cellcolor{gray!6}{\red 0.144} & \cellcolor{gray!6}{0.056} & \cellcolor{gray!6}{0.051} & \cellcolor{gray!6}{0.836} & \cellcolor{gray!6}{-0.310} & \cellcolor{gray!6}{0.045} & \cellcolor{gray!6}{0.045} & \cellcolor{gray!6}{0.021} & \cellcolor{gray!6}{-0.418} & \cellcolor{gray!6}{0.058} & \cellcolor{gray!6}{0.059} & \cellcolor{gray!6}{0.000}\\
\hspace{1em}MV-Egger & 0.140 & 0.084 & 0.086 & 0.929 & -0.310 & 0.045 & 0.049 & 0.023 & -0.418 & 0.059 & 0.064 & 0.000\\
\hspace{1em}\cellcolor{gray!6}{MV-Median} & \cellcolor{gray!6}{0.139} & \cellcolor{gray!6}{0.069} & \cellcolor{gray!6}{0.061} & \cellcolor{gray!6}{0.866} & \cellcolor{gray!6}{-0.291} & \cellcolor{gray!6}{0.061} & \cellcolor{gray!6}{0.056} & \cellcolor{gray!6}{0.057} & \cellcolor{gray!6}{-0.364} & \cellcolor{gray!6}{0.086} & \cellcolor{gray!6}{0.073} & \cellcolor{gray!6}{0.000}\\
\hspace{1em}GRAPPLE & 0.104 & 0.164 & 0.168 & 0.962 & -0.493 & 0.105 & 0.108 & 0.957 & -0.879 & 0.175 & 0.177 & 0.945\\
\hspace{1em}\cellcolor{gray!6}{MRBEE} & \cellcolor{gray!6}{0.057} & \cellcolor{gray!6}{0.224} & \cellcolor{gray!6}{0.285} & \cellcolor{gray!6}{0.986} & \cellcolor{gray!6}{-0.524} & \cellcolor{gray!6}{0.136} & \cellcolor{gray!6}{0.153} & \cellcolor{gray!6}{0.982} & \cellcolor{gray!6}{-1.010} & \cellcolor{gray!6}{0.265} & \cellcolor{gray!6}{0.379} & \cellcolor{gray!6}{0.986}\\
\hspace{1em}SRIVW & 0.119 & 0.180 & 0.185 & 0.966 & -0.518 & 0.116 & 0.123 & 0.972 & -0.896 & 0.201 & 0.238 & 0.961\\
\bottomrule
\multicolumn{13}{l}{\small Abbreviations: Est = estimated causal effect; SD = standard deviation; SE = standard error; CP = coverage probability.}
\end{tabular}}
\end{table}

Across all settings in this simulation study, we use the following shared correlation matrix
\begin{align*}
    \Sigma = \begin{bmatrix}
        1 & -0.1 & -0.05\\
     -0.1 &    1 & 0.2 \\
    -0.05 &  0.2 & 1
\end{bmatrix}.
\end{align*}
The rows correspond to LDL-C, HDL-C and TG.

\subsection{Simulation study under balanced horizontal pleiotropy}\label{exposure-simulation-results.-all-beta.exposures-by-d.-balanced-horizontal-pleiotropy.}

The simulation setup is identical to the simulation study presented in the main Table 1, except now we only consider the \textit{very small} IV strength scenario by dividing the true $\gamma_{j1}$'s by 8. To introduce balanced horizontal pleiotropy, we let $\Gamma_j = \bbeta_0^T \bgamma_j + \alpha_j$ for $j = 1, ..., p$, where $\alpha_j \sim N(0,\tau_0^2)$ with $\tau_0 = 2p^{-1}\sum_{j=1}^p\sigma_{Yj}$.

From the Table S4, we can see that when balanced horizontal pleiotropy is present, the SRIVW estimator still has the smallest bias of $\beta_{01}$ among all the methods, and it effectively accounts for the additional variability, yielding SEs that adequately capture the estimation uncertainty. In comparison, GRAPPLE has slight bias and under-coverage of $\beta_{01}$ and MRBEE is noticeably biased.

\begin{table}[H]

\caption{Simulation results for six MVMR estimators in simulation study 1 with 10,000 repetitions, when $\bbeta_0 = (\beta_{01}, \beta_{02}, \beta_{03}) = (0.8, 0.4, 0)$, and the first exposure has weaker IV strength than the other two. $p = 145$. The simulation mean $\hat \lambda_{\min}/\sqrt{p}$ is 10.2 and the mean conditional $F$-statistics are respectively 1.9, 28.9, 15.8. $\hat \lambda_{\min}$ is the minimum eigenvalue of the sample IV strength matrix $\sum_{j=1}^{p} \Omega_j^{-1} \hat\bgamma_j\hat \bgamma_j^T\Omega_j^{-T} - pI_K$.}
\centering
\resizebox{\linewidth}{!}{
\begin{tabular}[t]{cccccccccccccc}
\toprule
\multicolumn{1}{c}{ } & \multicolumn{4}{c}{$\beta_{01}$ = 0.8} & \multicolumn{4}{c}{$\beta_{02}$ = 0.4} & \multicolumn{4}{c}{$\beta_{03}$ = 0} \\
\cmidrule(l{3pt}r{3pt}){2-5} \cmidrule(l{3pt}r{3pt}){6-9} \cmidrule(l{3pt}r{3pt}){10-13}
Estimator & Est & SD & SE & CP & Est & SD & SE & CP & Est & SD & SE & CP\\
\midrule
\cellcolor{gray!6}{MV-IVW} & \cellcolor{gray!6}{0.385} & \cellcolor{gray!6}{0.096} & \cellcolor{gray!6}{0.053} & \cellcolor{gray!6}{0.002} & \cellcolor{gray!6}{0.393} & \cellcolor{gray!6}{0.011} & \cellcolor{gray!6}{0.007} & \cellcolor{gray!6}{0.707} & \cellcolor{gray!6}{-0.024} & \cellcolor{gray!6}{0.018} & \cellcolor{gray!6}{0.010} & \cellcolor{gray!6}{0.384}\\
MV-Egger & 0.425 & 0.139 & 0.140 & 0.237 & 0.393 & 0.011 & 0.013 & 0.951 & -0.024 & 0.018 & 0.019 & 0.768\\
\cellcolor{gray!6}{MV-Median} & \cellcolor{gray!6}{0.386} & \cellcolor{gray!6}{0.129} & \cellcolor{gray!6}{0.091} & \cellcolor{gray!6}{0.040} & \cellcolor{gray!6}{0.394} & \cellcolor{gray!6}{0.016} & \cellcolor{gray!6}{0.012} & \cellcolor{gray!6}{0.833} & \cellcolor{gray!6}{-0.024} & \cellcolor{gray!6}{0.024} & \cellcolor{gray!6}{0.017} & \cellcolor{gray!6}{0.637}\\
GRAPPLE & 0.838 & 0.271 & 0.226 & 0.917 & 0.400 & 0.013 & 0.013 & 0.950 & 0.002 & 0.024 & 0.023 & 0.939\\
\cellcolor{gray!6}{MRBEE} & \cellcolor{gray!6}{0.872} & \cellcolor{gray!6}{0.295} & \cellcolor{gray!6}{0.316} & \cellcolor{gray!6}{0.970} & \cellcolor{gray!6}{0.400} & \cellcolor{gray!6}{0.016} & \cellcolor{gray!6}{0.015} & \cellcolor{gray!6}{0.960} & \cellcolor{gray!6}{0.004} & \cellcolor{gray!6}{0.026} & \cellcolor{gray!6}{0.028} & \cellcolor{gray!6}{0.965}\\
SRIVW & 0.808 & 0.237 & 0.250 & 0.952 & 0.400 & 0.013 & 0.013 & 0.953 & 0.000 & 0.023 & 0.024 & 0.954\\
\bottomrule
\multicolumn{13}{l}{\small Abbreviations: Est = estimated causal effect; SD = standard deviation; SE = standard error; CP = coverage probability.}
\end{tabular}}
\end{table}

\subsection{Simulation study based on individual-level data}\label{subsec: est sigma xj}

In this simulation study, the goal is to closely mimic the process of generating and analyzing summary-level data in a real MR study. We perform the simulation and analysis as follows:
\begin{enumerate}
    \item \textbf{Generation of independent SNPs}: A set of $p = 2000$ independent SNPs are simulated from a multinomial distribution with $P(Z_j = 0) = 0.25$, $P(Z_j = 1) = 0.5$, $P(Z_j = 2) = 0.25$ with sample size $n = 10000$.
    \item \textbf{Generation of exposures and outcome}: Three exposure variables ($X_1$, $X_2$, $X_3$) and an outcome variable ($Y$) are generated according to the following model: $
	    X_k  = \sum_{j=1}^p\gamma_{jk}Z_j + \eta_{X}U + E_X, k=1,\dots, K$, and  $
	    Y = 10+\bm X^T \bbeta_0 + \eta_{Y}U + E_Y $,
where $\eta_X = \eta_Y = 1$, $\bbeta_0 = (1, -1, 0.5)$, $U \sim N(0, 0.6(1-h^2))$ and $E_X, E_Y \sim N(0, 0.4(1-h^2))$ and $K = 3$. For each $k = 1, 2, 3$, we construct $\gamma_{jk} = \phi_j\sqrt{2h^2/s}$ for $j = 1,...,s$ and $\gamma_{jk} = 0$ for $j = s+1,...,p$, where $s$ is the number of nonnull SNPs, $h^2$ is the proportion of variance in $X$ that is explained by the $s$ nonnull SNPs, and $\phi_j$'s are constants generated once from a standard normal distribution. In this simulation study, we set $s = 1000$ (i.e. 50\% null IVs) and  $h^2=0.1$.
  \item \textbf{Generation of individual-level data}: Three individual-level datasets are generated according to the data-generating process in Steps 1-2, which are respectively designated as the exposure, outcome, and selection datasets. 
    \item \textbf{Generation of summary statistics}: We generate summary statistics as follows: (i) $\hat \gamma_{jk}, \hat \sigma_{Xjk}, $ for $j = 1,...,p$ and $ k=1, \dots, K$ from the exposure dataset; (ii) $\hat \Gamma_j, \hat\sigma_{Yj},$ for $ j = 1,...,p$ from the outcome dataset; (iii) $\hat \gamma_{jk}^{*}, \hat \sigma_{Xjk}^{*} $,  $j = 1,...,p$ and $ k=1, \dots, K$ from the selection dataset. All the summary statistics are computed through simple linear regression from the three datasets, respectively, where $\hat \sigma_{Xjk}$, $\hat \sigma_{Yj}$ and $\hat \sigma_{Xjk}^{*}$ are the corresponding SEs.
    \item \textbf{IV selection}: The selection dataset is used for IV selection by applying a p-value thresholding approach. A SNP is selected for MR estimation if it is associated with at least one of the three exposures (with a p-value threshold of 0.01/$K$).
    \item \textbf{Estimation of variance matrices}: We again use the selection dataset to select SNPs whose selection p-values for all exposures are larger than 0.5. Their summary statistics from the exposure dataset are used to estimate the shared correlation matrix $\Sigma$ using the approach in \citeSupp{Wang2021grapple}. The variances $\Sigma_{Xj}$'s for SNP-exposure associations are then obtained according to Assumption 1(iii). See more details below.
    \item \textbf{Causal effect estimation}: The exposure and outcome datasets are restricted to the SNPs selected during Step 5. Causal effects are then estimated using SRIVW and the other five methods. 
\end{enumerate}
The above procedure (Steps 1-7) is carried out for each simulation run. The results are presented in Table S5. Again, SRIVW performs the best in terms of all metrics. It is worth noting that its SEs adequately capture the variability of the estimates, indicating that ignoring the uncertainty in estimating $\sigma_{Xjk}$'s, $\sigma_{Yj}$'s, and $\Sigma$ has negligible impact on inference.

\begin{table}[ht]

\caption{\protect\label{tab: ind lvl} Simulation results for six MVMR estimators in the individual-level data simulation with 10,000 repetitions. $\bbeta_0 = (\beta_{01},\beta_{02},\beta_{03}) = (1,-0.5,0.5)$.  $p = 2000$ (including 50\% null IVs). The mean and SD for the number of SNPs selected for causal effect estimation in the simulations are  111 and 10, respectively. Among the selected SNPs, the simulation mean of $\hat\lambda_{\min}/\sqrt{p}\approx 8$. The mean conditional $F$-statistics for the three exposures are respectively 3.0, 3.0, 3.4.}
\centering
\resizebox{\linewidth}{!}{
\begin{tabular}[t]{c|cccc|cccc|cccc}
\toprule
\multicolumn{1}{c}{ } & \multicolumn{4}{c}{$\beta_{01}$ = 1} & \multicolumn{4}{c}{$\beta_{02}$ = -0.5} & \multicolumn{4}{c}{$\beta_{03}$ = 0.5} \\
\cmidrule(l{3pt}r{3pt}){2-5} \cmidrule(l{3pt}r{3pt}){6-9} \cmidrule(l{3pt}r{3pt}){10-13}
Estimator & Est & SD & SE & CP & Est & SD & SE & CP & Est & SD & SE & CP\\
\midrule
\cellcolor{gray!6}{MV-IVW} & \cellcolor{gray!6}{0.665} & \cellcolor{gray!6}{0.135} & \cellcolor{gray!6}{0.132} & \cellcolor{gray!6}{0.285} & \cellcolor{gray!6}{-0.525} & \cellcolor{gray!6}{0.132} & \cellcolor{gray!6}{0.130} & \cellcolor{gray!6}{0.944} & \cellcolor{gray!6}{0.298} & \cellcolor{gray!6}{0.132} & \cellcolor{gray!6}{0.130} & \cellcolor{gray!6}{0.644}\\
MV-Egger & 0.678 & 0.219 & 0.219 & 0.677 & -0.525 & 0.133 & 0.134 & 0.944 & 0.298 & 0.132 & 0.133 & 0.665\\
\cellcolor{gray!6}{MV-Median} & \cellcolor{gray!6}{0.668} & \cellcolor{gray!6}{0.167} & \cellcolor{gray!6}{0.170} & \cellcolor{gray!6}{0.494} & \cellcolor{gray!6}{-0.525} & \cellcolor{gray!6}{0.167} & \cellcolor{gray!6}{0.169} & \cellcolor{gray!6}{0.982} & \cellcolor{gray!6}{0.298} & \cellcolor{gray!6}{0.166} & \cellcolor{gray!6}{0.169} & \cellcolor{gray!6}{0.775}\\
GRAPPLE & 0.937 & 0.249 & 0.238 & 0.921 & -0.520 & 0.254 & 0.241 & 0.946 & 0.459 & 0.227 & 0.219 & 0.936\\
\cellcolor{gray!6}{MRBEE} & \cellcolor{gray!6}{1.060} & \cellcolor{gray!6}{0.292} & \cellcolor{gray!6}{0.346} & \cellcolor{gray!6}{0.969} & \cellcolor{gray!6}{-0.508} & \cellcolor{gray!6}{0.282} & \cellcolor{gray!6}{0.335} & \cellcolor{gray!6}{0.960} & \cellcolor{gray!6}{0.534} & \cellcolor{gray!6}{0.251} & \cellcolor{gray!6}{0.288} & \cellcolor{gray!6}{0.965}\\
SRIVW & 0.991 & 0.226 & 0.245 & 0.964 & -0.511 & 0.237 & 0.236 & 0.953 & 0.498 & 0.210 & 0.219 & 0.960\\
\bottomrule
\multicolumn{13}{l}{\footnotesize Abbreviations: Est = estimated causal effect; SD = standard deviation; SE = standard error; CP = coverage probability.}
\end{tabular}}
\end{table}

\section{Additional details for real data applications}
\subsection{Data sources}\label{sec: additional real data}

Table \ref{supp table: data source} summarizes data sources for the real data applications.

{\small
\begin{landscape}
\begin{longtable}{p{3cm} p{2cm} p{2.5cm} p{3cm} p{3cm} p{8cm}}
\caption{Data sources for exposures and outcomes.} \\
\toprule
\textbf{Phenotype} & \textbf{Sample size} & \textbf{Ancestry} & \textbf{Source} & \textbf{Reference} & \textbf{Download link} \\
\midrule
\endfirsthead

\multicolumn{6}{c}{{\textit{(Continued from previous page)}}} \\
\toprule
\textbf{Exposure} & \textbf{Sample size} & \textbf{Ancestry} & \textbf{Source} & \textbf{Reference} & \textbf{Download link} \\
\midrule
\endhead

\midrule
\multicolumn{6}{r}{{\textit{Continued on next page}}} \\
\endfoot

\bottomrule \label{supp table: data source}
\endlastfoot

Lipoprotein (a) & 284,044 & Mostly European & GWAS Catalog & \citeSupp{sinnott2021genetics} & \texttt{https://www.ebi.ac.uk/gwas/studies/GCST90019513} \\
Body fat mass & 337,196 & European & GWAS Catalog & \citeSupp{harris2024new} & \texttt{https://www.ebi.ac.uk/gwas/studies/GCST90428121} \\
Body fat-free mass & 337,196 & European & GWAS Catalog & \citeSupp{harris2024new} & \texttt{https://www.ebi.ac.uk/gwas/studies/GCST90428120} \\
Adult BMI & 806,834 & European & UKB + GIANT & \citeSupp{pulit2019meta} & \texttt{https://zenodo.org/records/1251813} \\
Childhood BMI & 39,620 & European & GWAS Catalog & \citeSupp{vogelezang2020novel} & \texttt{https://www.ebi.ac.uk/gwas/studies/GCST90002409} \\
8-year-old BMI & 28,681 & European & MoBa & \citeSupp{helgeland2022characterization} & \texttt{https://www.fhi.no/en/ch/studies/moba/} \\
Visceral adipose tissue & 9,076 & European & GWAS Catalog & \citeSupp{agrawal2022inherited} & \texttt{https://cvd.hugeamp.org} \\
Gluteofemoral adipose tissue & 9,076 & European & GWAS Catalog & \citeSupp{agrawal2022inherited} & \texttt{https://cvd.hugeamp.org} \\
Traditional lipid traits & 94,674 & Multi-ethnic & GWAS Catalog & \citeSupp{hoffmann2018large} & \texttt{https://www.ebi.ac.uk/gwas/publications/29507422} \\
 & 188,578 & European & GLGC & \citeSupp{global2013discovery} & \texttt{http://csg.sph.umich.edu/abecasis/public/lipids2013/} \\
Lipoprotein subfraction traits & 8,372 & Finnish &  & \citeSupp{davis2017common}  & \texttt{http://csg.sph.umich.edu/boehnke/public/} \\
 & 24,925 & European & IEU & \citeSupp{kettunen2016genome} & \texttt{https://gwas.mrcieu.ac.uk/datasets/} \\

Coronary artery disease &  & Mostly European & CARDIoGRAM-plusC4D + UKB & \citeSupp{nelson2017association} & \texttt{https://www.cardiogramplusc4d.org/data-downloads/} \\
Atrial fibrillation & 60,620/970,216 & \ European &  & \citeSupp{nielsen2018biobank} & \texttt{https://csg.sph.umich.edu/willer/public/afib2018/} \\
Breast Cancer & 76,192/63,082 & European & GWAS Catalog & \citeSupp{michailidou2017association} & \texttt{https://www.ebi.ac.uk/gwas/studies/GCST004988} \\

\midrule
\multicolumn{6}{l}{\footnotesize Note: for binary traits, sample sizes are reported as cases / controls}

\end{longtable}
\end{landscape}
}

\subsection{List of lipoprotein subfraction traits in Example 5}

Table \ref{supp table: list of subfrac} includes the 19 lipoprotein subfractions in Example 5.  A genetic correlation matrix among the 19 lipoprotein subfractions and the three traditional lipid traits are available in the Figure 1 of \citeSupp{zhao2021Mendelian}.

\begin{table}[h]
\centering
\caption{Lipoprotein subfractions in the real data application}
\begin{tabular}{ll}
\toprule
\textbf{Abbreviation} & \textbf{Description} \\
\midrule
XS-VLDL-PL  & Extra small VLDL phospholipids \\
LDL-D       & Mean diameter for LDL particles \\
L-LDL-PL    & Large LDL phospholipids \\
L-LDL-P     & Large LDL particle concentration \\
L-LDL-FC    & Large LDL free cholesterol \\
L-LDL-C     & Large LDL total cholesterol \\
M-LDL-P     & Medium LDL particle concentration \\
M-LDL-CE    & Medium LDL cholesterol esters \\
S-LDL-C     & Small LDL total cholesterol \\
S-LDL-P     & Small LDL particle concentration \\
IDL-FC      & IDL free cholesterol \\
IDL-C       & IDL total cholesterol \\
HDL-D       & Mean diameter for HDL particles \\
XL-HDL-TG   & Extra large HDL triglycerides \\
L-HDL-PL    & Large HDL phospholipids \\
M-HDL-PL    & Medium HDL phospholipids \\
M-HDL-P     & Medium HDL particle concentration \\
M-HDL-CE    & Medium HDL cholesterol esters \\
S-HDL-P     & Small HDL particle concentration \\
\bottomrule \label{supp table: list of subfrac}
\end{tabular}
\end{table}

\subsection{Estimation of the shared correlation matrix}

In practice, when the shared correlation matrix $\Sigma$ in Assumption 1 is not available, we adopt the approach from  \citeSupp{Wang2021grapple} to estimate it. As shown in \citeSupp{bulik2015atlas} and  S2 Text in the supporting information of \citeSupp{Wang2021grapple}, when each genetic effect is very small, 
\begin{align*}
    {\rm Cor}(\hat \gamma_{jt}, \hat \gamma_{js}) \approx \frac{n_{ts}}{\sqrt{ n_tn_s}} {\rm Cor} (X_t, X_s)
\end{align*}
where $ n_s $ is the sample size for the $ s $-th exposure variable, $ n_{ts} $ is the number of shared samples between the datasets for exposure variables $X_t$ and $X_s$. This suggests that $\hat \bgamma_j$'s all share the same correlation, which makes it reasonable to assume 
\begin{align*}
\Sigma_{Xj} = 
	 {\rm {\rm diag}}(\sigma_{Xj1}, \dots, \sigma_{Xjk}) \ \Sigma \  {\rm {\rm diag}}(\sigma_{Xj1}, \dots, \sigma_{Xjk}) 
\end{align*}
in our Assumption 1.

We adopt the approach in \citeSupp{Wang2021grapple} to estimate the shared correlation matrix $\Sigma$ from summary statistics. We first choose SNPs whose selection p-values $p_{jk} \ge 0.5$ for all exposures, where $p_{jk}$ is the p-value for the marginal SNP-exposure association between SNP $j$ and exposure $X_k$. For these selected SNPs, denote the corresponding Z-values of $\hat \gamma_{jk}$'s in the exposure dataset as matrix $Z_{T\times K}$, where $T$ is the number of selected SNPs. Then we calculate the sample correlation matrix of $Z_{T\times K}$ as $\hat \Sigma$. Lastly, for any SNP $j$ selected for MVMR estimation, we obtain $\hat \Sigma_{Xj}$ as
\begin{align}\label{Sigma_xj}
	\hat \Sigma_{Xj}  = 
	 {\rm {\rm diag}}(\hat \sigma_{Xj1}, \dots, \hat \sigma_{Xjk}) \ \hat \Sigma \  {\rm {\rm diag}}(\hat \sigma_{Xj1}, \dots, \hat \sigma_{Xjk}) .
\end{align}
where $\hat \sigma_{Xjk}$ is the standard error estimate of $\hat \gamma_{jk}$ from the marginal linear regression of exposure $k$ on $Z_j$.

\subsection{Additional results and discussion for Examples 1-5}

Below, we present and discuss the results of each example in detail. Detailed data sources for the five examples can be found in Section S5.1.

\vspace{3mm}
\textbf{Data preprocessing}

For each analysis, we focus on variants with a minor allele frequency exceeding 0.5\% and an imputation score greater than 0.3 when such information is available. In cases where duplicate variants appear in the summary datasets, we retain only the instance with the smallest p-value. We use the \textsf{getInput} function from the \textsf{GRAPPLE} package to perform LD clumping and harmonization. The shared correlation matrix $\Sigma$ is also estimated based on summary statistics using the \textsf{getInput} function. To quantify IV strength, we calculate both conditional F-statistics using the \textsf{strength\_mvmr} function from the \textsf{MVMR} package and the IV strength parameter, consistently defined across all analyses as $\hat \lambda_{\min} / \sqrt{p}$, where $\hat \lambda_{\min}$ is the minimum eigenvalue of the sample IV strength matrix (see Section 3.3 of the main text), and $p$ is the number of SNPs. 

\vspace{6mm}
\textbf{Example 1: Lp(a) and LDL-C on CAD}

In this example, the exposures are the concentrations of Lp(a) and LDL-C and the outcome is coronary artery disease (CAD). We use summary statistics from \citeSupp{sinnott2021genetics} for Lp(a) and \citeSupp{global2013discovery} for LDL-C, which are obtained through GWAS Catalog with study accession number GCST90019513, and Global Lipids Genetics Consortium, respectively. The summary statistics for CAD are from \citeSupp{nelson2017association}. The results are presented in Figure \ref{fig lpa}. After SNP selection and harmonization, we retain 74 SNPs with an IV strength parameter of 795.3. The conditional F-statistics are 87.1 for Lp(a) and 134.3 for LDL-C. Both the IV strength parameter and conditional F-statistics indicate strong IV strength. As expected, all estimators provide very similar point estimates, suggesting that Lp(a) and LDL-C independently have harmful effects on CAD, consistent with previous findings from \citeSupp{burgess2018association}. When IV strength is strong, IVW and SRIVW yield nearly identical point estimates and inferences. Note that both IVW and the default SRIVW (Equation (4) in the main text) assume no pleiotropy, leading to narrower confidence intervals (CIs) compared to GRAPPLE, MRBEE, and SRIVW-pleio (see Section S2), which allow for balanced horizontal pleiotropy. Among these three estimators, MRBEE has slightly wider CIs than GRAPPLE and SRIVW-pleio.
\begin{figure}[h]
    \centering
    \includegraphics[width=13cm]{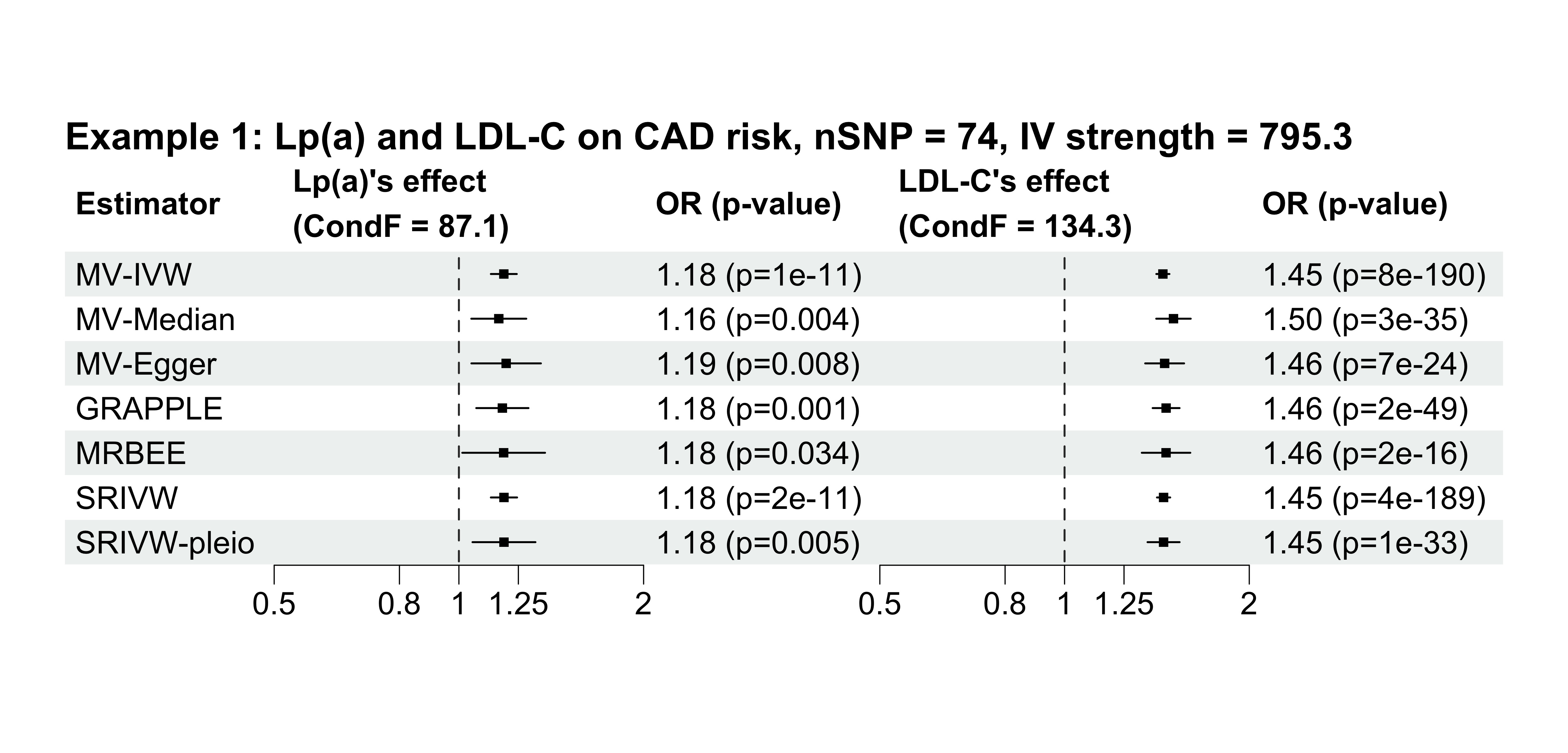}
    \caption{{Estimated effects of Lipoprotein(a) (Lp(a)) and low-density lipoprotein cholesterol (LDL-C) on coronary artery disease (CAD) across different estimators. The odds ratios (ORs) and corresponding p-values are reported for each estimator, with the horizontal bars representing 95\% confidence intervals. IV strength is calculated as $\hat \lambda_{\mathrm{min}}/\sqrt{p}$. Abbreviations: nSNP (number of SNPs); CondF (conditional $F$-statistic).}} \label{fig lpa}
\end{figure}

\vspace{6mm}
\textbf{Example 2: Body fat mass and body fat-free mass on AF}

In this example, the exposures are body fat mass and body fat-free mass and the outcome is AF. We used summary statistics from \citeSupp{harris2024new} for body fat mass and body fat-free mass, which are obtained through GWAS Catalog with study accession number GCST90428121 and GCST90428120, respectively. The summary statistics for AF are from \citeSupp{nielsen2018biobank}. The results are presented in Figure \ref{fig fat mass}. After SNP selection and harmonization, we retain 437 SNPs with an IV strength parameter of 203.8. The conditional F-statistics for body fat mass and body fat-free mass are 11.5 and 12.1, respectively. This example represents a scenario with moderate IV strength. All estimators consistently indicate that both body fat mass and body fat-free mass have independent harmful effects, aligning with findings from \citeSupp{tikkanen2019body}. However, there are slight differences in the point estimates of body fat-free mass's effect across different methods. Specifically, the IVW and SRIVW estimator provide identical estimates, while GRAPPLE and MRBEE yield slightly larger estimates in magnitude.

\begin{figure}[h]
    \centering
    \includegraphics[width=14cm]{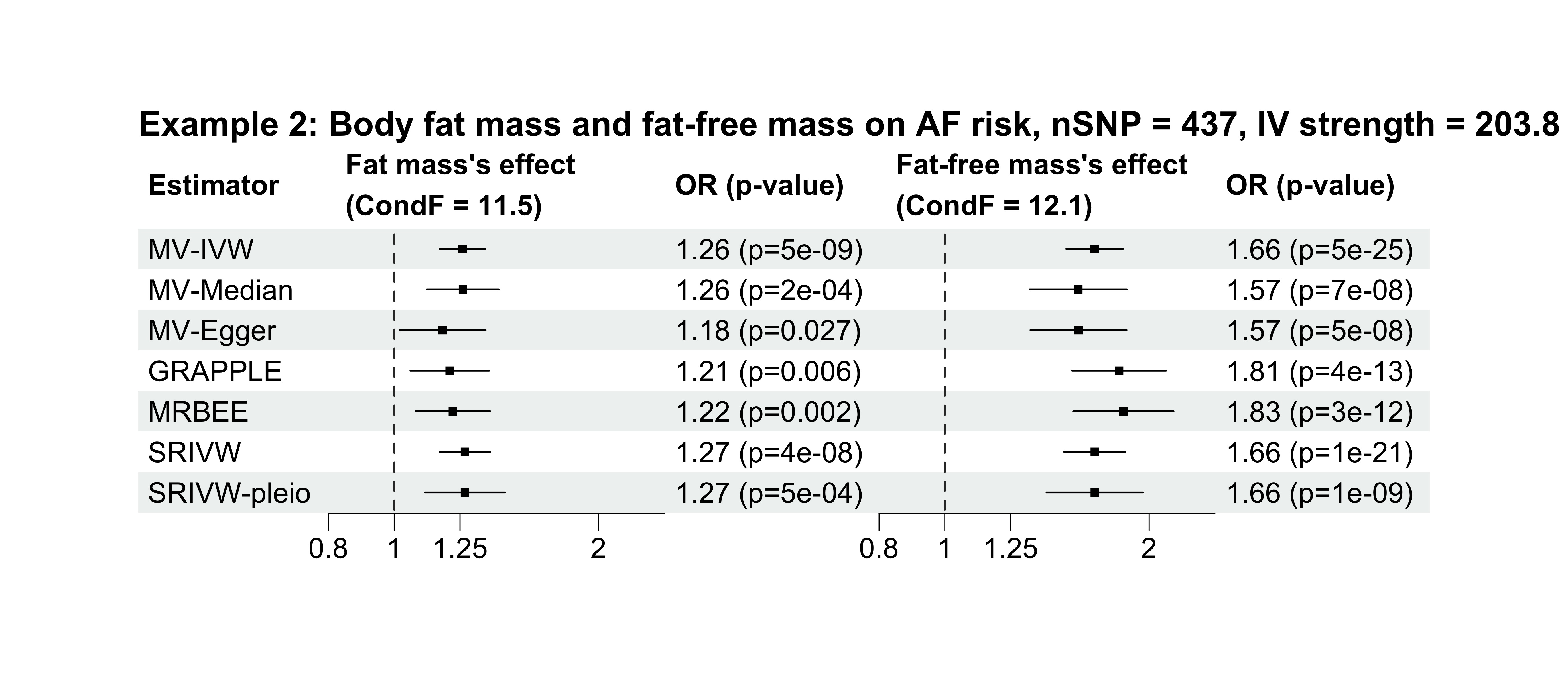}
    \caption{{Estimated effects of body fat mass and body fat-free mass on atrial fibrillation (AF) across different estimators. The odds ratios (ORs) and corresponding p-values are reported for each estimator with the horizontal bars representing 95\% confidence intervals. IV strength is calculated as $\hat \lambda_{\mathrm{min}}/\sqrt{p}$.   Abbreviations: nSNP (number of SNPs); CondF (conditional $F$-statistic).}}\label{fig fat mass}
\end{figure}

\vspace{6mm}
\textbf{Example 3: Adult and childhood BMI on breast cancer}

In this example, we examine the effects of adult and childhood BMI on breast cancer. We used summary statistics from \citeSupp{pulit2019meta} for adult BMI and from \citeSupp{vogelezang2020novel} for childhood BMI. The summary statistics for childhood BMI are available through GWAS Catalog with accession number GCST90002409. After SNP selection and harmonization, we retain 510 SNPs with an IV strength parameter of 19.0. The conditional F-statistics for adult and childhood BMI are 3.5 and 1.9, respectively. This is an example with weak IV strength. The results, shown in the upper row of Figure \ref{bmi breast cancer fig}, reveal that the IVW, MVMR-Median, and MVMR-Egger surprisingly suggest a protective causal effect of adult BMI on breast cancer. In contrast, GRAPPLE and SRIVW, methods that are more robust to weak IVs, indicate a harmful effect. Although we do not know the true direct causal effect of BMI on breast cancer, the IVW estimate is likely directionally biased. Furthermore, IVW suggests that the protective effect of BMI is statistically significant. This supports our theoretical analysis that the IVW method can produce overly narrow confidence intervals, which may yield misleading results when IV strength is weak. 

Regarding the direct effect of childhood BMI, all estimators suggest a protective effect, consistent with previous findings (\citeSupp{richardson2020use, hao2023reassessing}). However, IVW, MVMR-Median, and MVMR-Egger estimates are closer to the null, likely due to weak IV bias.

A significant protective effect of adult BMI on breast cancer by the IVW method is also observed in \citeSupp{wu2024causal} where the authors examined the effects of adult BMI and BMI at age 8 on breast cancer. To further investigate this observation, we re-run our analysis, replacing childhood BMI with 8-year-old BMI, using the same summary statistics for 8-year-old BMI as in \citeSupp{wu2024causal}. The results are presented in the lower row of Figure \ref{bmi breast cancer fig}. Due to the smaller sample size in the GWAS of 8-year-old BMI compared to childhood BMI, the IV strength parameter decreases from 19 to 8.4. While the conditional F-statistic for adult BMI increases by a small amount in this analysis, the IVW method still has a directional bias and an overly narrow confidence interval. Moreover, as overall IV strength decreases, the issue with IVW becomes more pronounced. This additional analysis further supports our claim that examining the conditional F-statistic for a single exposure alone is not sufficient for valid inference of its causal effect in MVMR.

\begin{figure}[h]
    \centering
    \includegraphics[width=\textwidth]{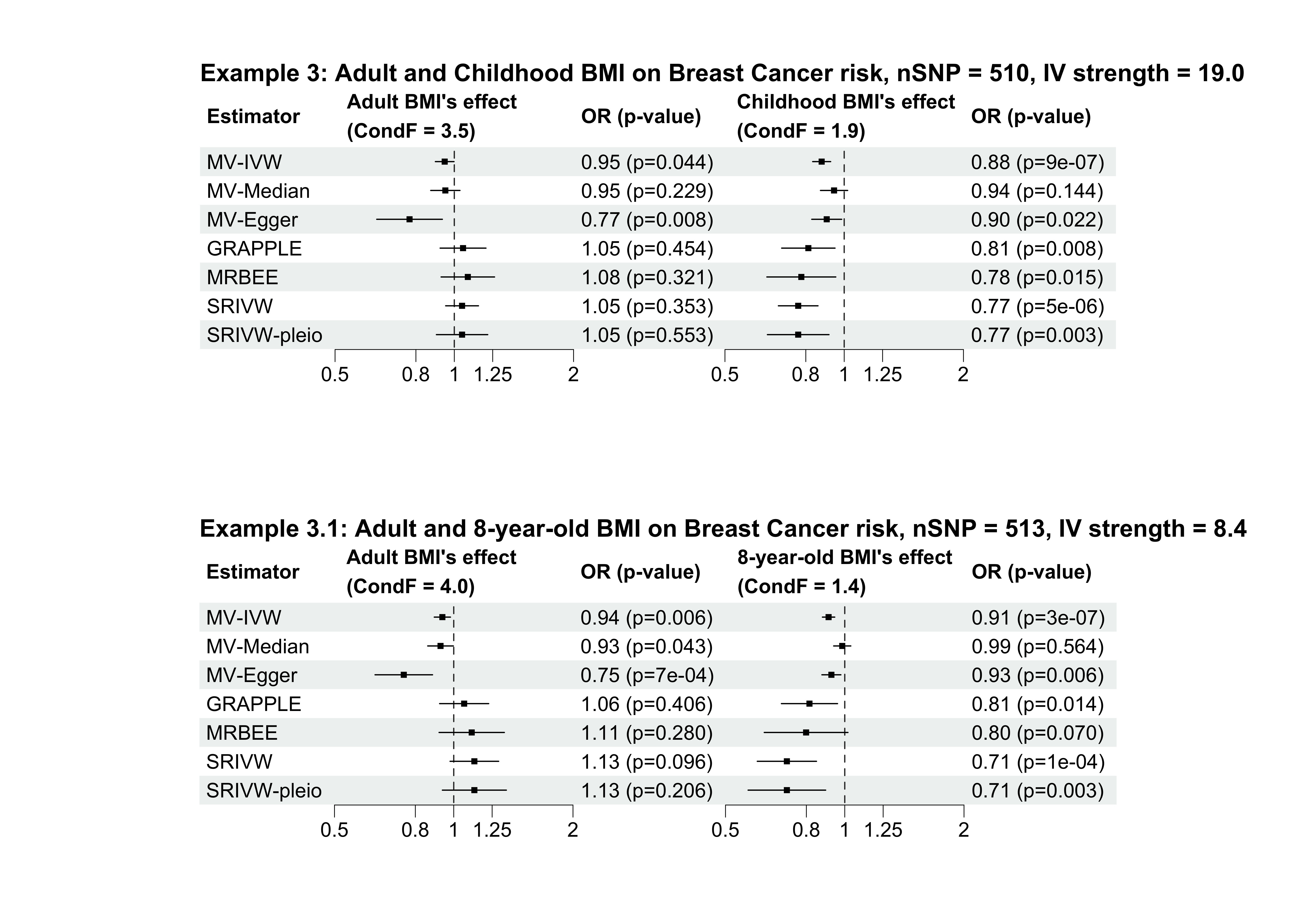}
    \caption{{Estimated effects of adult and childhood body size on breast cancer across different estimators. In the upper row, the analysis uses childhood BMI to represent childhood body size, while in the lower row, 8-year-old BMI is used. The odds ratios (ORs) and corresponding p-values are reported for each estimator with the horizontal bars representing 95\% confidence intervals. IV strength is calculated as $\hat \lambda_{\mathrm{min}}/\sqrt{p}$.  Abbreviations: nSNP (number of SNPs); CondF (conditional $F$-statistic).}}\label{bmi breast cancer fig}
\end{figure}

\vspace{6mm}
\textbf{Example 4: Body fat distribution on CAD}

In this example, we examine the direct causal effects of each individual body fat distribution trait, namely, visceral adipose tissue (VAT), gluteofemoral adipose tissue (GFAT) on CAD in BMI-controlled MVMR analyses. Genetic variants associated with VAT and GFAT are identified from a GWAS conducted by \citeSupp{agrawal2022inherited}; genetic variants associated with CAD are obtained from \citeSupp{nelson2017association}. After SNP selection and harmonization, we end up with 509 SNPs for the VAT analysis, and 511 SNPs for the GFAT analysis. The IV strength parameters are 7.9 and 11.1, respectively, indicating weak IV strength. The results are presented in the top two rows of Figure \ref{fig vat gfat}. 

In the VAT analysis, the conditional F-statistic is 1.3 for VAT and 2.7 for BMI. As observed in Example 3, the confidence intervals by the IVW method are much narrower than those of GRAPPLE, MRBEE and SRIVW, failing to account for the greater uncertainty due to weak IVs. Additionally, the IVW estimates are likely biased, with BMI's estimated effect deviating from null. MVMR-Median and MVMR-Egger have similar issues. The results from MRBEE and SRIVW/SRIVW-pleio suggest that when controlling for VAT, BMI has no direct effect on CAD risk. In comparison, GRAPPLE indicates that BMI still has a direct effect, likely due to its slightly larger point estimate for BMI's direct effect. 

In the GFAT analysis, IVW, MVMR-Median and MVMR-Egger observe similar issues as in the VAT analysis due to weak IV strength. However, unlike in the VAT analysis, their estimates are now closer to the null than those of GRAPPLE, MRBEE and SRIVW. The results from SRIVW suggest that GFAT has a protective direct effect on CAD and BMI has a harmful direct effect.

The observed harmful effect of VAT and protective effect of GFAT are consistent with existing findings (\citeSupp{chen2022effect}).

We also perform an MVMR analysis with VAT and GFAT as exposures, without controlling for BMI. This analysis has greater IV strength and all estimators produce more consistent results. As shown in the last row of Figure \ref{fig vat gfat}, the findings still suggest that VAT may be harmful while GFAT is protective. However, there is likely uncontrolled pleiotropy, as indicated by the results from MVMR-Egger, GRAPPLE, MRBEE and SRIVW-pleio. The notably wider confidence intervals from SRIVW-pleio, compared to SRIVW, demonstrate its ability to account for pleiotropy in practice. 

We attempt to perform an MVMR analysis with BMI as an additional exposure. However, this analysis has an IV strength parameter of -0.9, with conditional F-statistics of 1.1, 1.2, and 2.4 for VAT, GFAT, and BMI, respectively, indicating too weak IV strength. Therefore, we do not proceed further with this analysis.

\begin{figure}[h]
    \centering
    \includegraphics[width=\textwidth]{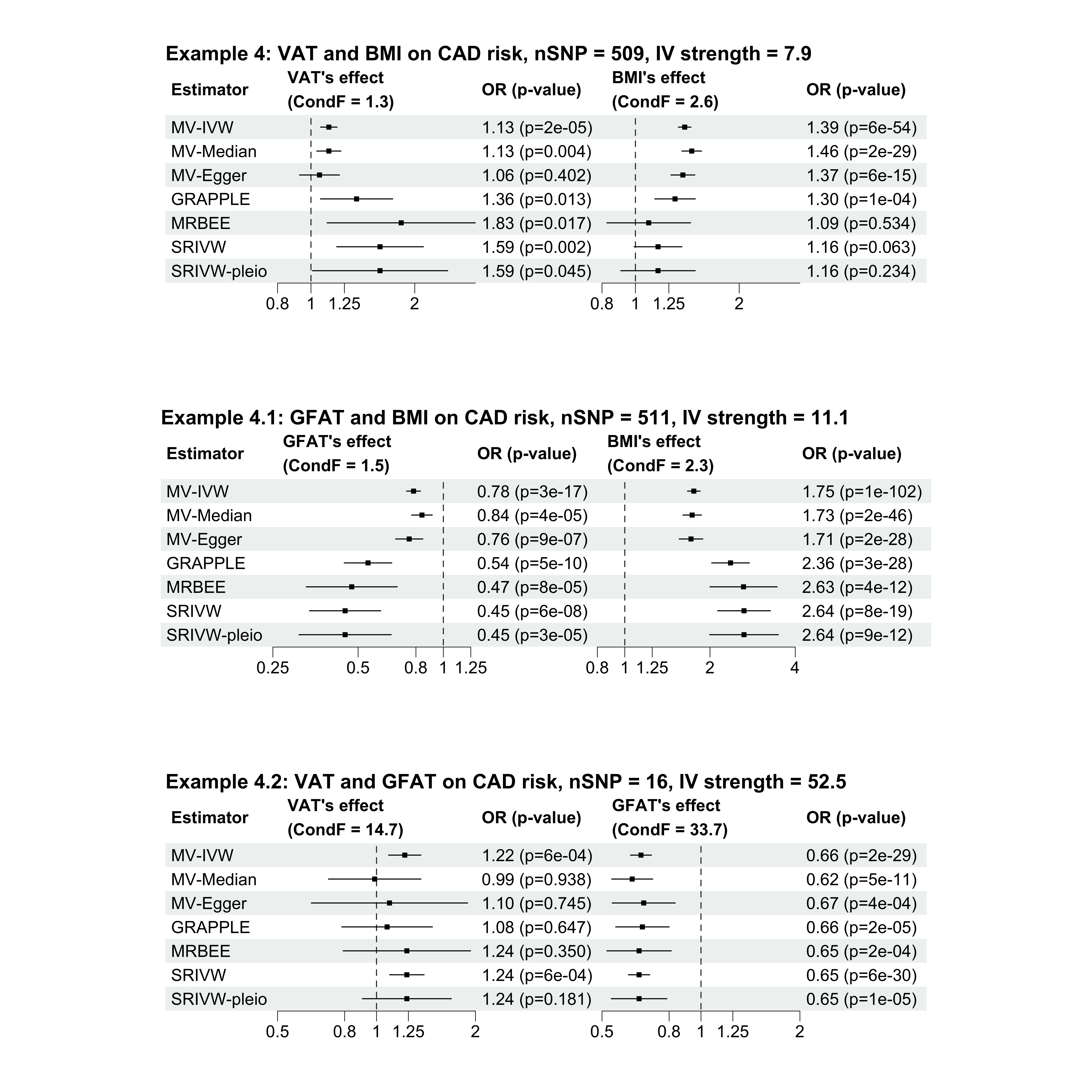}
    \caption{{Estimated effects of visceral adipose tissue (VAT) and gluteofemoral adipose tissue (GFAT) on coronary artery disease (CAD) across different estimators. In the upper two rows, we perform BMI-controlled MVMR analyses to estimate the direct effects of VAT and GFAT, respectively. In the bottom row, we conduct an MVMR analysis with VAT and GFAT as exposures without controlling for BMI. The odds ratios (ORs) and corresponding p-values are reported for each estimator with the horizontal bars representing 95\% confidence intervals. IV strength is calculated as $\hat \lambda_{\mathrm{min}}/\sqrt{p}$. Abbreviations: nSNP (number of SNPs); CondF (conditional $F$-statistic).}}\label{fig vat gfat}
\end{figure}

\vspace{6mm}
\textbf{Example 5: lipoprotein subfraction traits on CAD}

In this example, we estimate the direct effects of 19 lipoprotein subfraction traits on CAD, while controlling for HDL-C, LDL-C, and TG levels. We follow the three-sample design used in \citeSupp{zhao2021Mendelian} and use publicly available GWAS summary statistics to construct the selection, exposure, and outcome datasets. Further details on the list of the 19 lipoprotein subfractions are provided in Section S4.2 of the Supplement.

The upper row of Figure \ref{fig lipid sub} presents the results for M-HDL-P, a medium-sized HDL particle. This analysis uses a p-value threshold of 1e-4, resulting in 299 SNPs with an IV strength parameter of 8.1, indicating weak IV strength. The conditional F-statistics are 1.5 and 3.7 for M-HDL-P and HDL-C, respectively. The lower row of Figure \ref{fig lipid sub} presents the MVMR results when M-HDL-P is excluded (using a more stringent p-value threshold of 5e-8). In this setting, the analysis of the traditional lipid profile has strong IV strength with an IV strength parameter of 525.5. All estimators consistently suggest that both LDL-C and TG have harmful effects on CAD, while HDL-C has little to no effect. However, due to the use of a less stringent p-value threshold and high correlation between M-HDL-P and HDL-C, the inclusion of M-HDL-P introduces more weak IVs and substantially reduces the IV strength. 

Consistent with findings in \citeSupp{zhao2021Mendelian}, results from GRAPPLE, MRBEE and SRIVW indicate that M-HDL-P has a protective direct effect, and that HDL-C appears to have a harmful effect on CAD, controlling for M-HDL-P. Although the point estimates of M-HDL-P and HDL-C from IVW, MVMR-Median and MVMR-Egger are in the same directions, their magnitudes are quite different, likely due to weak IV bias. As observed in Examples 3 and 4, the confidence intervals by the IVW method are much narrower than those from GRAPPLE, MRBEE, and SRIVW.

Similar patterns are observed for the other lipoprotein subfractions, with results presented in Section S4.4 of the Supplement. Due to overly narrow confidence intervals, MV-IVW produces more significant findings than GRAPPLE, MRBEE, and SRIVW, potentially leading to spurious findings. MV-Egger and MV-Median have similar issues. 

Among GRAPPLE, MRBEE, and SRIVW-pleio- estimators that account for balanced horizontal pleiotropy-MRBEE tends to provide the widest confidence intervals, a pattern observed in across all five examples.

\begin{figure}[h]
    \centering
    \includegraphics[width=15cm]{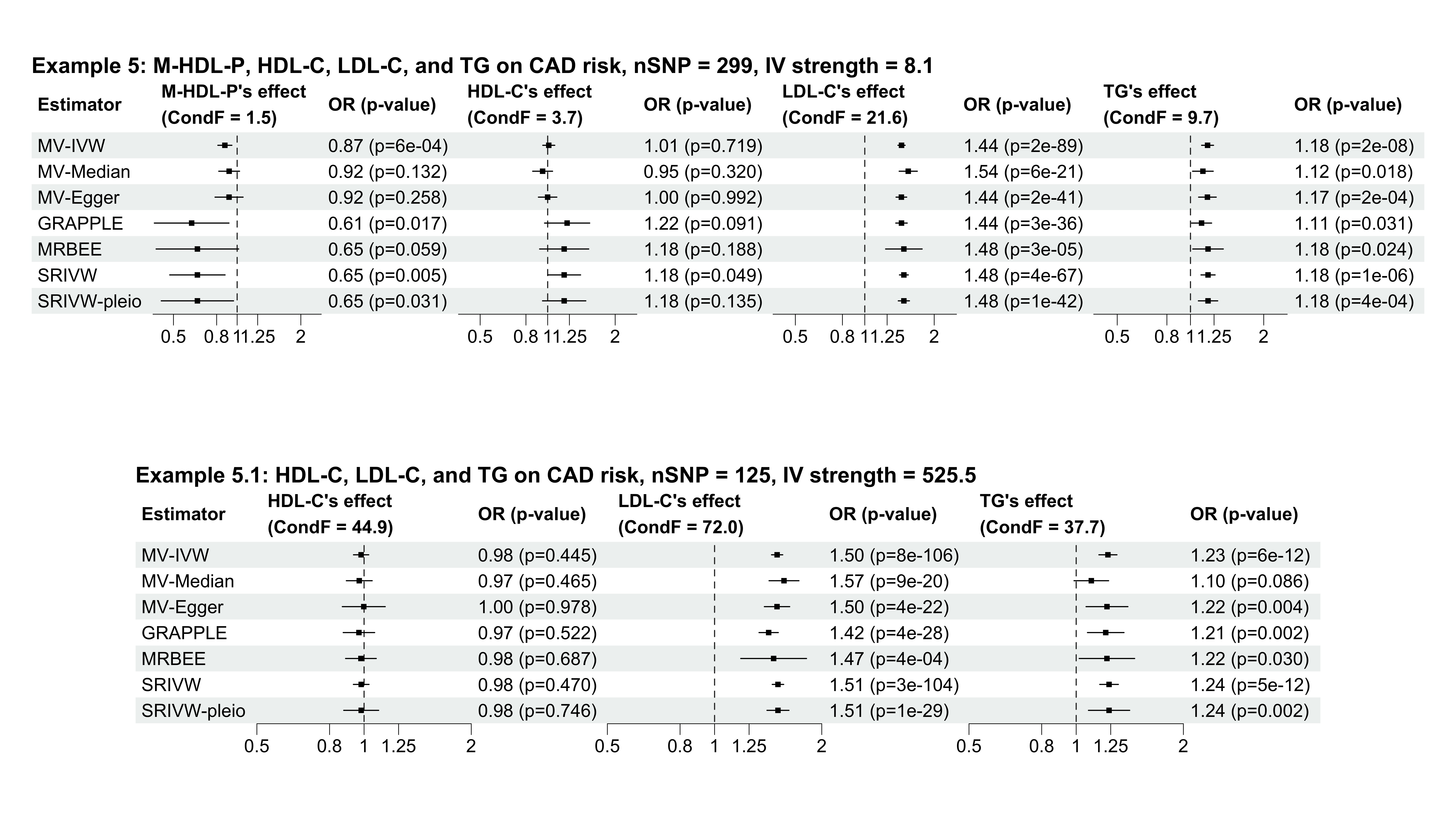}
    \caption{{Estimated effects of medium HDL particle (M-HDL-P) and traditional lipid profile on coronary artery disease (CAD) across different estimators. In the bottom row, we perform an MVMR analysis of HDL-C, LDL-C and TG. In the upper row, we additionally include M-HDL-P as an exposure. The odds ratios (ORs) and corresponding p-values are reported for each estimator with the horizontal bars representing 95\% confidence intervals. IV strength is calculated as $\hat \lambda_{\mathrm{min}}/\sqrt{p}$. Abbreviations: nSNP (number of SNPs); CondF (conditional $F$-statistic).}} \label{fig lipid sub}
\end{figure}

\clearpage

The results for all 19 lipoprotein subfractions are presented in Figure \ref{supp fig: example 6.1}-\ref{supp fig: example 6.5}.

\begin{figure}[h]
    \centering
    \includegraphics[width=\textwidth]{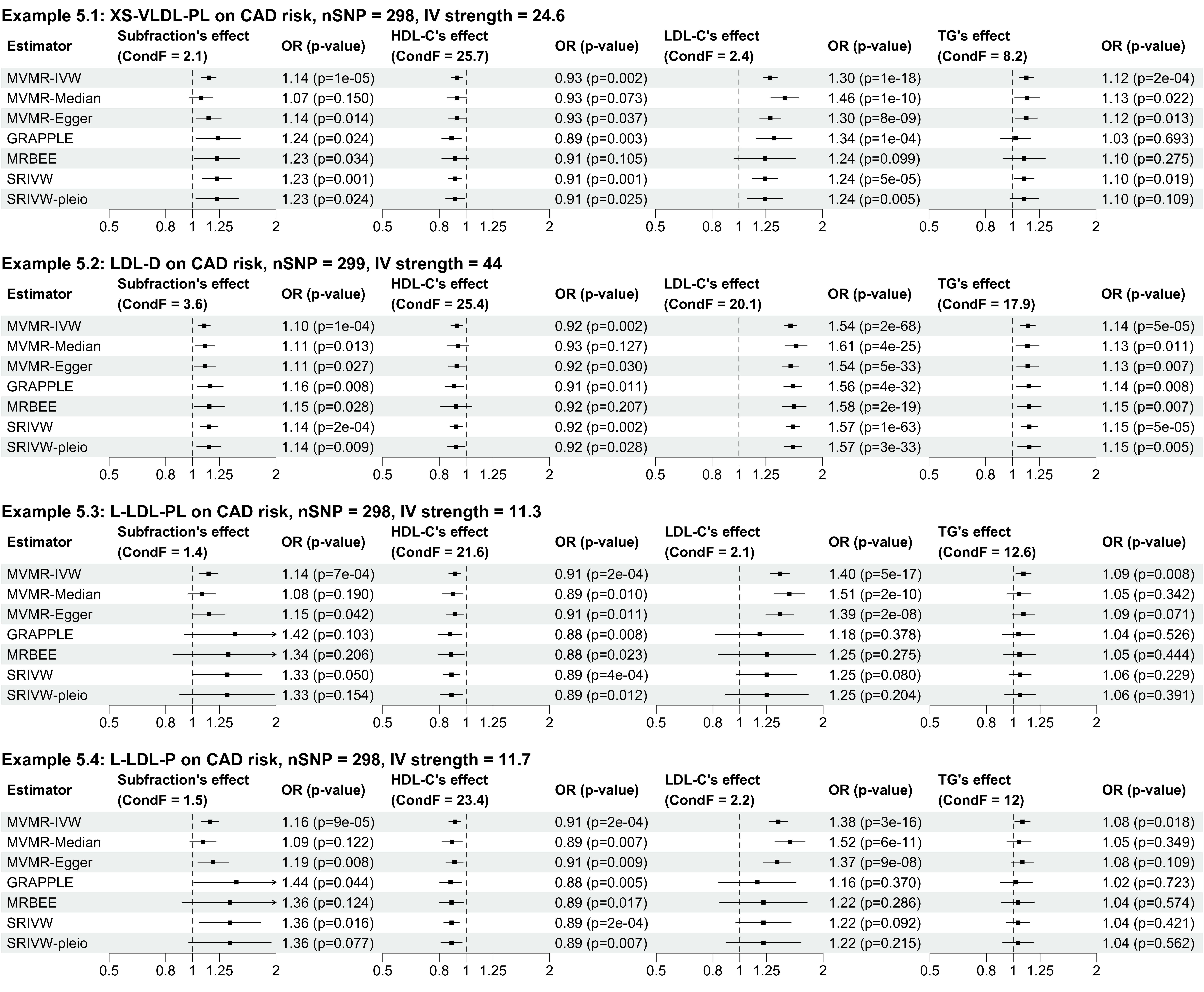}
    \caption{{Estimated effects of lipoprotein subfractions and traditional lipid profile on coronary artery disease from Example 5 across different estimators. The odds ratios (ORs) and corresponding p-values are reported with the horizontal bars representing 95\% confidence intervals. IV strength is calculated as $\hat \lambda_{\mathrm{min}}/\sqrt{p}$, where $\hat \lambda_{\mathrm{min}}$ is the minimum eigenvalue of the sample IV strength matrix and $p$ is the number of SNPs. ``condF'' means conditional $F$-statistic. Abbreviations: CAD (coronary artery disease), nSNP (number of SNPs).}} \label{supp fig: example 6.1}
\end{figure}

\begin{figure}[h]
    \centering
    \includegraphics[width=\textwidth]{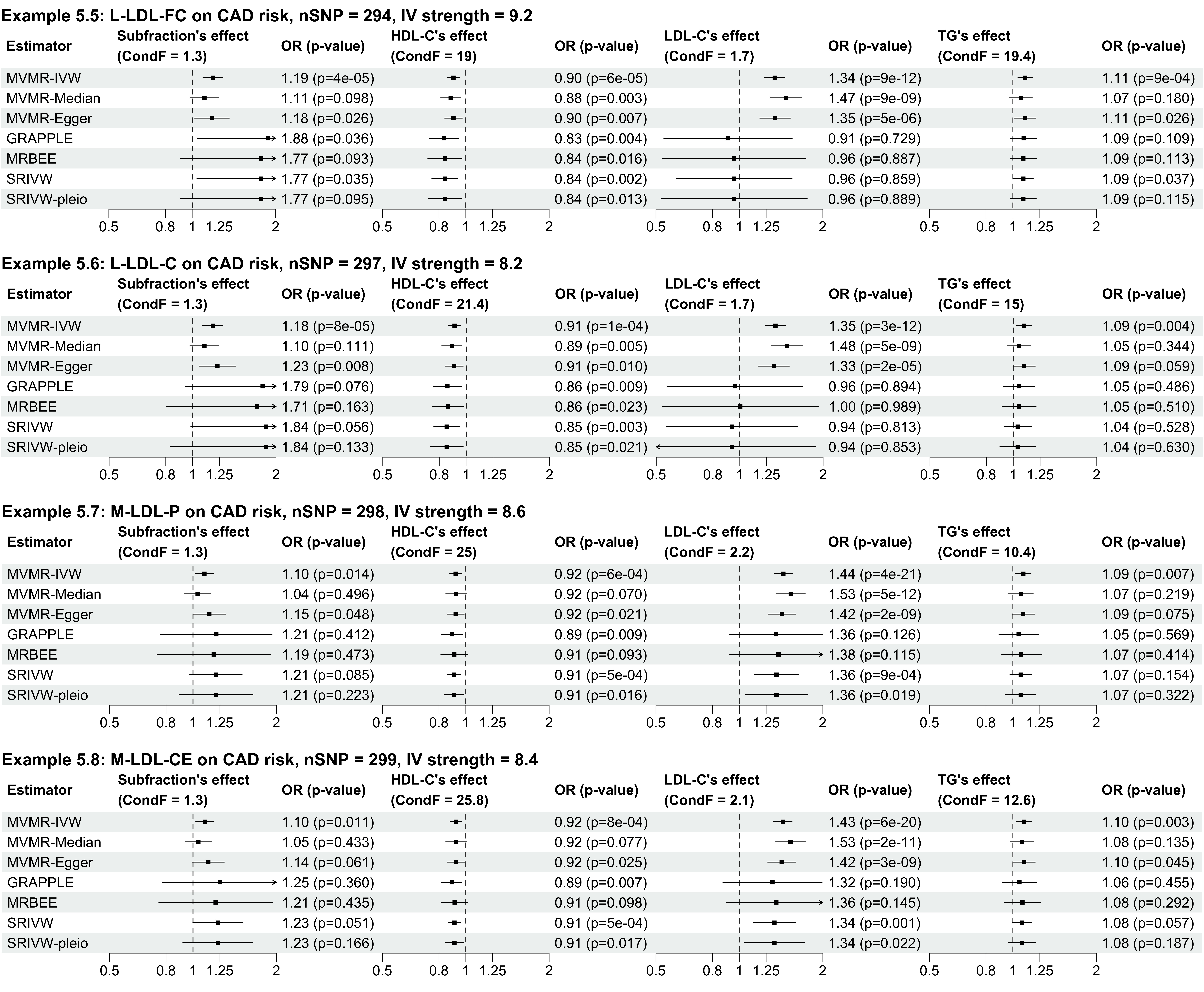}
    \caption{{Estimated effects of lipoprotein subfractions and traditional lipid profile on coronary artery disease from Example 5 across different estimators. The odds ratios (ORs) and corresponding p-values are reported with the horizontal bars representing 95\% confidence intervals. IV strength is calculated as $\hat \lambda_{\mathrm{min}}/\sqrt{p}$, where $\hat \lambda_{\mathrm{min}}$ is the minimum eigenvalue of the sample IV strength matrix and $p$ is the number of SNPs. ``condF'' means conditional $F$-statistic. Abbreviations: CAD (coronary artery disease), nSNP (number of SNPs).}} \label{supp fig: example 6.2}
\end{figure}

\begin{figure}[h]
    \centering
    \includegraphics[width=\textwidth]{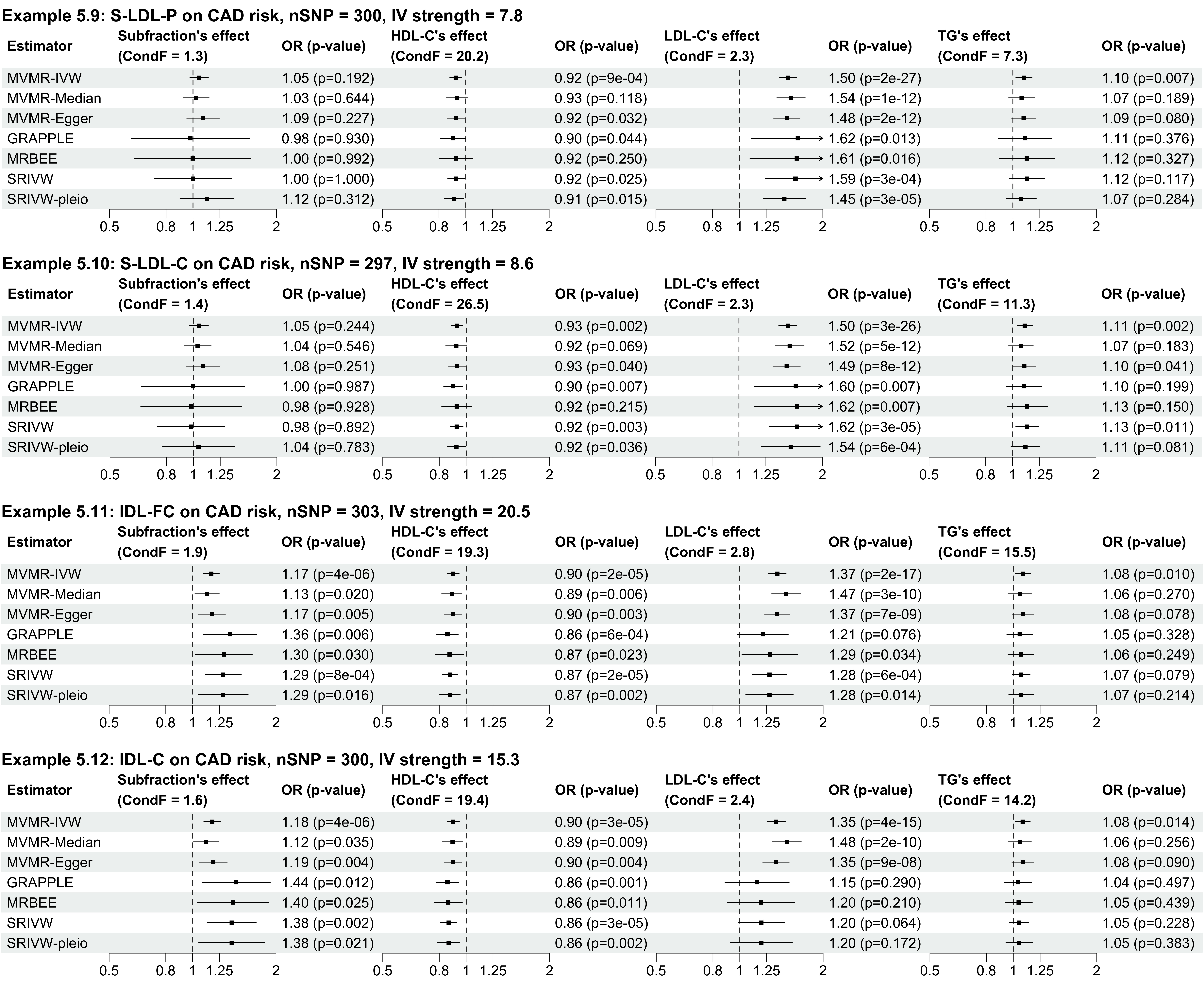}
    \caption{{Estimated effects of lipoprotein subfractions and traditional lipid profile on coronary artery disease from Example 5 across different estimators. The odds ratios (ORs) and corresponding p-values are reported with the horizontal bars representing 95\% confidence intervals. IV strength is calculated as $\hat \lambda_{\mathrm{min}}/\sqrt{p}$, where $\hat \lambda_{\mathrm{min}}$ is the minimum eigenvalue of the sample IV strength matrix and $p$ is the number of SNPs. ``condF'' means conditional $F$-statistic. Abbreviations: CAD (coronary artery disease), nSNP (number of SNPs).}} \label{supp fig: example 6.3}
\end{figure}

\begin{figure}[h]
    \centering
    \includegraphics[width=\textwidth]{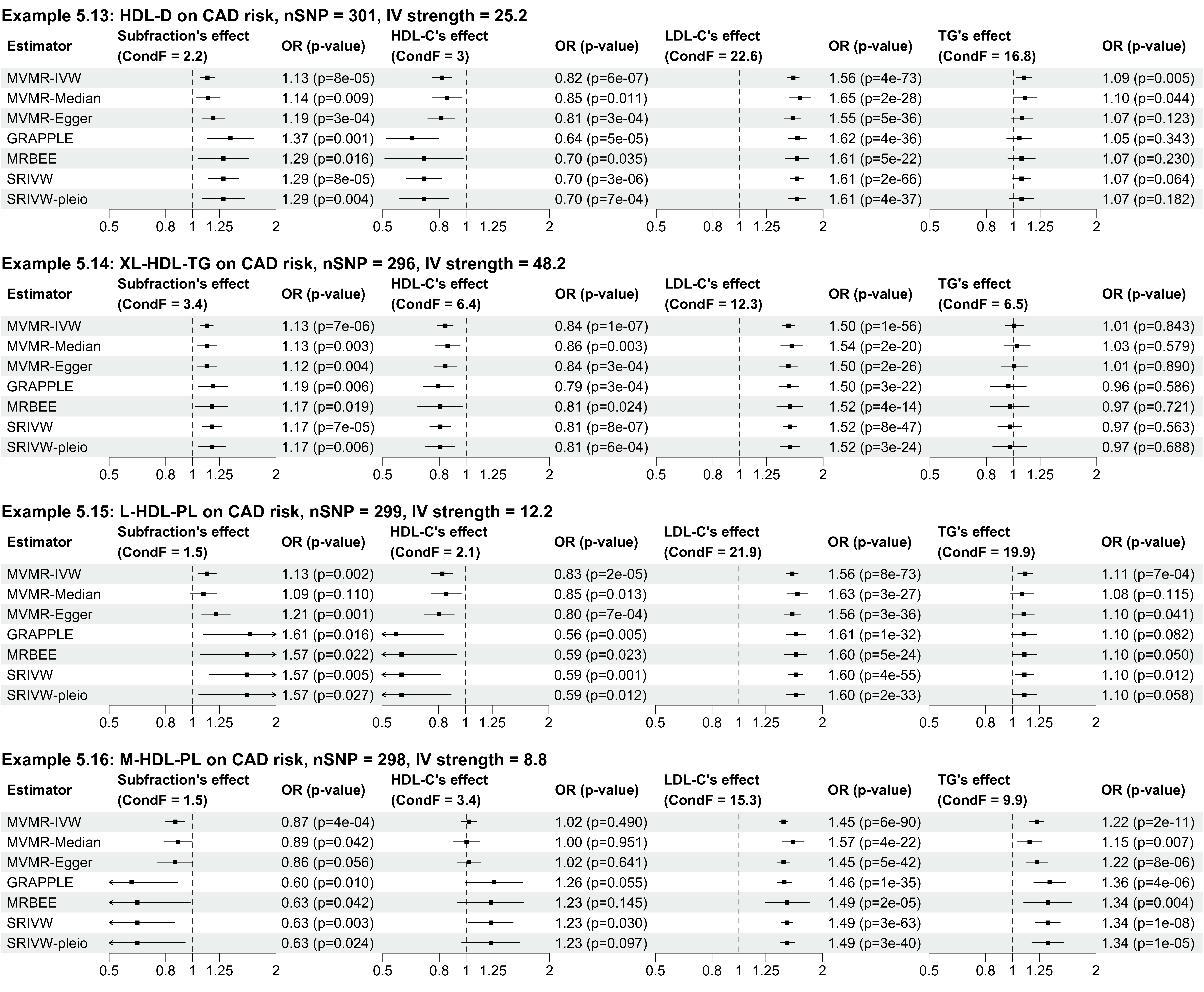}
    \caption{{Estimated effects of lipoprotein subfractions and traditional lipid profile on coronary artery disease from Example 5 across different estimators. The odds ratios (ORs) and corresponding p-values are reported with the horizontal bars representing 95\% confidence intervals. IV strength is calculated as $\hat \lambda_{\mathrm{min}}/\sqrt{p}$, where $\hat \lambda_{\mathrm{min}}$ is the minimum eigenvalue of the sample IV strength matrix and $p$ is the number of SNPs. ``condF'' means conditional $F$-statistic. Abbreviations: CAD (coronary artery disease), nSNP (number of SNPs).}} \label{supp fig: example 6.4}
\end{figure}

\begin{figure}[h]
    \centering
    \includegraphics[width=\textwidth]{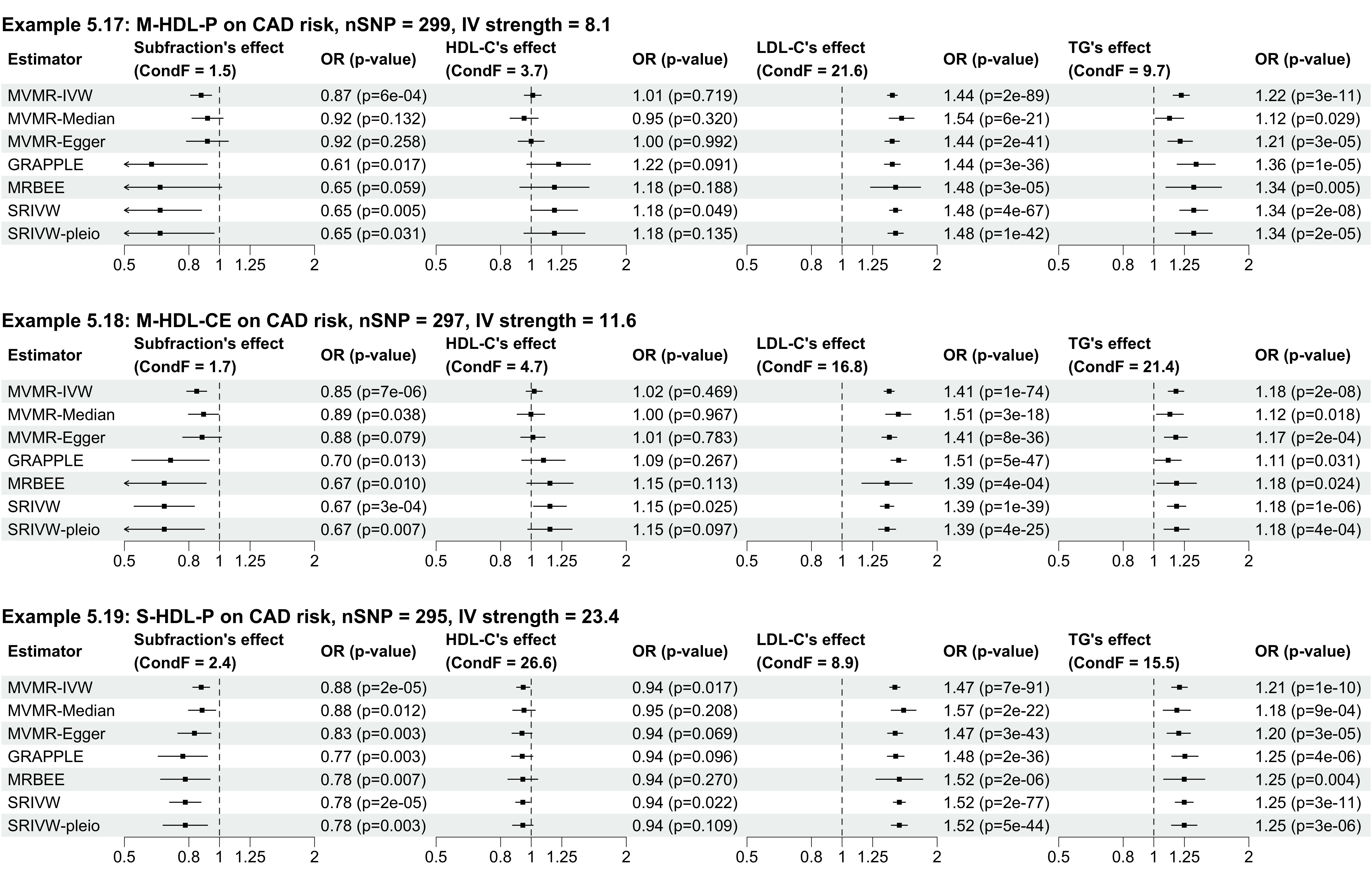}
    \caption{{Estimated effects of lipoprotein subfractions and traditional lipid profile on coronary artery disease from Example 5 across different estimators. The odds ratios (ORs) and corresponding p-values are reported with the horizontal bars representing 95\% confidence intervals. IV strength is calculated as $\hat \lambda_{\mathrm{min}}/\sqrt{p}$, where $\hat \lambda_{\mathrm{min}}$ is the minimum eigenvalue of the sample IV strength matrix and $p$ is the number of SNPs. ``condF'' means conditional $F$-statistic. Abbreviations: CAD (coronary artery disease), nSNP (number of SNPs).}} \label{supp fig: example 6.5}
\end{figure}

\clearpage

\section{Extension to non-independent exposure and outcome datasets}\label{sec: extension overlap}

Our proposed methods can be easily extended to non-independent exposure and outcome datasets. In this section, we sketch this direction of extension.

First, we relax Assumption 1 to allow for correlation between $\hat \Gamma_j$ and $\hat \bgamma_j$. Specifically, we assume
\begin{align}\label{eq: dgp overlap}
\begin{pmatrix}
\hat \bgamma_j \\
\hat \Gamma_j
\end{pmatrix}\sim N\left(\begin{pmatrix}
\bgamma_j \\
\Gamma_j
\end{pmatrix},\Xi_j\right),
\end{align}
where 
\begin{align*}
    \Xi_j = {\rm{{\rm diag}}}(\sigma_{Xj1},...,\sigma_{XjK},\sigma_{Yj})\ \Sigma_{\rm overlap}\ {\rm{{\rm diag}}}(\sigma_{Xj1},...,\sigma_{XjK},\sigma_{Yj}) = \begin{pmatrix}
\Sigma_{Xj} & \Sigma_{XYj} \\
\Sigma_{XYj}^T & \sigma_{Yj}^2
\end{pmatrix} ,
\end{align*}
$\Sigma_{\rm overlap}$ is the positive definite correlation matrix shared across $j$ and $\Sigma_{XYj}$ is a vector representing the known covariance between $\hat \bgamma_j$ and $\hat \Gamma_j$.

We then note that due to the correlation between $\hat \bgamma_j$ and $\hat \Gamma_j$, $E\bigg(\sum_{j}\hat \bgamma_j\hat \Gamma_j\sigma_{Yj}^{-2}\bigg)$ no longer equals $\sum_j M_j \bbeta_0$. Instead, we have
\begin{align}
    E\bigg(\sum_{j}\hat \bgamma_j\hat \Gamma_j\sigma_{Yj}^{-2}\bigg) =  \sum_j\Sigma_{XYj}\sigma_{Yj}^{-2} + \sum_j M_j\bbeta_0
\end{align}
This induces bias to the ``numerator'' of the SRIVW estimator. A straightforward way to correct for this bias is modifying the ``numerator'' to be $\sum_{j}(\hat \bgamma_j\hat \Gamma_j - \Sigma_{XYj})\sigma_{Yj}^{-2}$, leading to the following general formulation of the SRIVW estimator:
\begin{align}\label{eq: SRIVW overlap}
    \hat\bbeta_{\rm SRIVW, \phi} &= \bigg\{ R_{\phi}\bigg(\sum_{j=1}^{p} \hat M_j - V_j\bigg) \bigg\}^{-1}\bigg\{\sum_{j=1}^{p} (\hat \bgamma_j \hat \Gamma_j -\Sigma_{XYj})\sigma_{Yj}^{-2} \bigg\}. 
\end{align} 
with the variance-covariance matrix given by
\begin{align*}    
        {\mathcal{V}}_{\rm SRIVW}  & =  \bigg\{ \sum_{j=1}^{p}M_j \bigg\}^{-1}  \bigg[  \sum_{j=1}^{p}\big\{  (1+ \bm{\beta}_0^TV_j\bm{\beta}_0 - 2\bbeta_0^T W_j) (M_j+V_j)+V_j\bm{\beta}_0\bm{\beta}_0^TV_j \\
        & - V_j\bbeta_0W_j^T - W_j\bbeta_0^TV_j + (1+4\bbeta_0^TW_j)W_jW_j^T\big\} \bigg]  \bigg\{ \sum_{j=1}^{p}M_j \bigg\}^{-1} . \nonumber
\end{align*}
where $W_j = \Sigma_{XYj}\sigma_{Yj}^{-2}$. When there is no overlap between exposure and outcome datasets, i.e., $\Sigma_{XYj} = 0$, the above formula reduces to the formulas in the main article. A consistent estimator of ${\mathcal{V}}_{\rm SRIVW}$ can be obtained by replacing $\sum_{j=1}^{p}M_j$ and $\bbeta_0$ with respectively $R_{\phi}(\sum_{j=1}^{p}\hat M_j - V_j)$ and $\hat \bbeta_{SRIVW, \phi}$.

In practice,  $\Sigma_{Xj}$'s and $\Sigma_{XYj}$'s are typically unknown. We can follow the strategy outlined in \ref{subsec: est sigma xj} to estimate it based on summary data (\citeSupp{Wang2021grapple}). To choose $\phi$ with overlapping datasets, we can minimize the following objective function:
\begin{align}
    J_{\times}(\phi) = \sum_{j=1}^{p} \frac{(\hat \Gamma_j - \hat \bgamma_j^T\hat\bbeta_{\rm SRIVW, \phi})^2}{\sigma_{Yj}^2 + \hat\bbeta_{\rm SRIVW, \phi}^T \Sigma_{Xj} \hat\bbeta_{\rm SRIVW, \phi} - 2 \hat\bbeta_{\rm SRIVW, \phi}^T\Sigma_{XYj}},\ \text{ subject to $\phi \in B_n$}, \nonumber
\end{align}
where $B_n$ can take the same form as described in the main text.

We now perform some simulations to evaluate the performance of this new estimator. We follow the same simulation setup as in the main Table 1 except that we now generate summary statistics from \eqref{eq: dgp overlap} with $\Sigma_{\rm overlap}$ taking the following form.
\begin{align*}
    \Sigma_{\rm overlap} = 
    \begin{bmatrix}
        1 & -0.1 & -0.05 & -0.2\\
     -0.1 &    1 & 0.2   & 0.5\\
    -0.05 &  0.2 & 1     & 0.4\\
    -0.2  & 0.5  &  0.4  & 1
\end{bmatrix}
\end{align*}
We compare the SRIVW estimator presented in Equation (9) of the main paper (ignoring the overlap), the new SRIVW estimator in \eqref{eq: SRIVW overlap} and GRAPPLE which can also handle overlapping datasets (\citeSupp{Wang2021grapple}). 

From Table \ref{tab:overlap}, we can see that ignoring the overlap between exposure and outcome datasets can lead to biased estimates and poor coverage probabilities of 95\% confidence intervals from the standard SRIVW estimator in the presence of weak IVs. In comparison, the new SRIVW estimator effectively corrects for the bias due to the overlap and yields coverage probabilities close to the nominal level of 0.95. GRAPPLE has slightly larger bias and worse coverage probabilities than the new SRIVW estimator.

\begin{table}[h]

\caption{Simulation results for three MVMR estimators in simulation study 1 with 10,000 repetitions, when Factor 1 = (i), Factor 2 = (i), Factor 3 = (i) and when exposure and outcome datasets are overlapping. In other words, $\bbeta_0 = (\beta_{01}, \beta_{02}, \beta_{03}) = (0.8, 0.4, 0)$,  and the first exposure has weaker IV strength than the other two. $p = 145$. $\lambda_{\min}$ is the minimum eigenvalue of the true IV strength matrix $\sum_{j=1}^{p} \Omega_j^{-1} \bgamma_j\bgamma_j^T\Omega_j^{-T}$.}\label{tab:overlap}
\centering
\resizebox{\linewidth}{!}{
\begin{tabular}[t]{c|cccc|cccc|cccc}
\toprule
\multicolumn{1}{c}{ } & \multicolumn{4}{c}{$\beta_{01}$ = 0.8} & \multicolumn{4}{c}{$\beta_{02}$ = 0.4} & \multicolumn{4}{c}{$\beta_{03}$ = 0} \\
\cmidrule(l{3pt}r{3pt}){2-5} \cmidrule(l{3pt}r{3pt}){6-9} \cmidrule(l{3pt}r{3pt}){10-13}
Estimator & Est & SD & SE & CP & Est & SD & SE & CP & Est & SD & SE & CP\\
\midrule
\addlinespace[0.3em]
\multicolumn{13}{l}{\textit{Mean $\hat \lambda_{\min}/\sqrt{p}$ = 103.4, mean conditional $F$-statistics = 9.8, 38.4, 23.5}}\\
\hspace{1em}\cellcolor{gray!6}{SRIVW in (5)} & \cellcolor{gray!6}{0.796} & \cellcolor{gray!6}{0.032} & \cellcolor{gray!6}{0.033} & \cellcolor{gray!6}{0.953} & \cellcolor{gray!6}{0.402} & \cellcolor{gray!6}{0.009} & \cellcolor{gray!6}{0.009} & \cellcolor{gray!6}{0.954} & \cellcolor{gray!6}{0.003} & \cellcolor{gray!6}{0.013} & \cellcolor{gray!6}{0.014} & \cellcolor{gray!6}{0.958}\\
\hspace{1em}SRIVW in (S19) & 0.804 & 0.032 & 0.033 & 0.953 & 0.400 & 0.009 & 0.009 & 0.954 & 0.001 & 0.014 & 0.014 & 0.954\\
\hspace{1em}\cellcolor{gray!6}{GRAPPLE} & \cellcolor{gray!6}{0.798} & \cellcolor{gray!6}{0.030} & \cellcolor{gray!6}{0.031} & \cellcolor{gray!6}{0.950} & \cellcolor{gray!6}{0.400} & \cellcolor{gray!6}{0.009} & \cellcolor{gray!6}{0.009} & \cellcolor{gray!6}{0.954} & \cellcolor{gray!6}{0.000} & \cellcolor{gray!6}{0.013} & \cellcolor{gray!6}{0.014} & \cellcolor{gray!6}{0.957}\\
\addlinespace[0.3em]
\multicolumn{13}{l}{\textit{Mean $\hat \lambda_{\min}/\sqrt{p}$ = 21.7, mean conditional $F$-statistics = 2.9, 28.3, 14.7}}\\
\hspace{1em}SRIVW in (5) & 0.765 & 0.092 & 0.094 & 0.900 & 0.402 & 0.009 & 0.009 & 0.953 & 0.001 & 0.014 & 0.015 & 0.961\\
\hspace{1em}\cellcolor{gray!6}{SRIVW in (S19)} & \cellcolor{gray!6}{0.819} & \cellcolor{gray!6}{0.099} & \cellcolor{gray!6}{0.098} & \cellcolor{gray!6}{0.960} & \cellcolor{gray!6}{0.400} & \cellcolor{gray!6}{0.009} & \cellcolor{gray!6}{0.009} & \cellcolor{gray!6}{0.953} & \cellcolor{gray!6}{0.001} & \cellcolor{gray!6}{0.015} & \cellcolor{gray!6}{0.015} & \cellcolor{gray!6}{0.954}\\
\hspace{1em}GRAPPLE & 0.785 & 0.073 & 0.071 & 0.928 & 0.400 & 0.009 & 0.009 & 0.951 & -0.001 & 0.014 & 0.014 & 0.950\\
\addlinespace[0.3em]
\multicolumn{13}{l}{\textit{Mean $\hat \lambda_{\min}/\sqrt{p}$ = 7.6, mean conditional $F$-statistics = 1.7, 29.3, 17.2}}\\
\hspace{1em}\cellcolor{gray!6}{SRIVW in (5)} & \cellcolor{gray!6}{0.595} & \cellcolor{gray!6}{0.115} & \cellcolor{gray!6}{0.173} & \cellcolor{gray!6}{0.760} & \cellcolor{gray!6}{0.401} & \cellcolor{gray!6}{0.008} & \cellcolor{gray!6}{0.008} & \cellcolor{gray!6}{0.973} & \cellcolor{gray!6}{-0.007} & \cellcolor{gray!6}{0.012} & \cellcolor{gray!6}{0.014} & \cellcolor{gray!6}{0.928}\\
\hspace{1em}SRIVW in (S19) & 0.779 & 0.146 & 0.218 & 0.947 & 0.400 & 0.009 & 0.009 & 0.965 & -0.001 & 0.014 & 0.016 & 0.956\\
\hspace{1em}\cellcolor{gray!6}{GRAPPLE} & \cellcolor{gray!6}{0.763} & \cellcolor{gray!6}{0.145} & \cellcolor{gray!6}{0.136} & \cellcolor{gray!6}{0.890} & \cellcolor{gray!6}{0.400} & \cellcolor{gray!6}{0.009} & \cellcolor{gray!6}{0.009} & \cellcolor{gray!6}{0.957} & \cellcolor{gray!6}{-0.002} & \cellcolor{gray!6}{0.014} & \cellcolor{gray!6}{0.014} & \cellcolor{gray!6}{0.943}\\
\bottomrule
\multicolumn{13}{l}{\footnotesize Abbreviations: Est = estimated causal effect; SD = standard deviation; SE = standard error; CP = coverage probability.}
\end{tabular}}
\end{table}

\spacingset{1.0}
\bibliographystyleSupp{apalike}
\bibliographySupp{reference}

\end{document}